\documentclass[11pt]{article}
\usepackage{hyperref} 
\hypersetup{colorlinks=true, linkcolor=blue!90!black, citecolor=orange!80!black, urlcolor=blue!90!black} 
\usepackage[english]{babel}
\usepackage{mathtools, amssymb} 

\usepackage{tocloft} 
 
\usepackage{mathtext} 
\usepackage{bm} 
\usepackage{amsthm} 
\usepackage{dsfont} 
\usepackage{tikz,subcaption, float, pgfplots}
\usetikzlibrary{decorations.markings, arrows.meta, snakes}

\DeclareSymbolFontAlphabet{\mathbb}{AMSb}

\definecolor{spec}{RGB}{0,50,230} 
\definecolor{spec2}{RGB}{229, 113, 27} 

\tikzset{arrow/.style={-{To[length=1.2mm, width=2.2mm]}}}
\tikzset{->-/.style={line width=0.8pt,decoration={
			markings,
			mark=at position 0.56 with {\arrow{Stealth[length=1.7mm,width=1.3mm]}}},postaction={decorate}}}

\numberwithin{equation}{section}

\renewcommand{\phi}{\varphi}
\renewcommand{\imath}{\mathrm{i}}
\renewcommand{\epsilon}{\varepsilon}
\DeclareMathOperator{\sh}{sh}

\theoremstyle{definition}
\newtheorem{remark}{Remark}
\newtheorem{example}{Example}

\theoremstyle{plain}
\newtheorem{theorem}{Theorem}[section]
\newtheorem{corollary}{Corollary}[section]
\newtheorem{proposition}{Proposition}[section]
\newtheorem{lemma}{Lemma}[section]

\newtheorem{innercustomthm}{Theorem}
\newenvironment{customthm}[1]
	{\renewcommand\theinnercustomthm{#1}\innercustomthm}
	{\endinnercustomthm}
	
\newtheorem{innercustomprop}{Proposition}
\newenvironment{customprop}[1]
	{\renewcommand\theinnercustomprop{#1}\innercustomprop}
	{\endinnercustomprop}
	
\newtheorem{innercustomcor}{Corollary}
\newenvironment{customcor}[1]
	{\renewcommand\theinnercustomcor{#1}\innercustomcor}
	{\endinnercustomcor}
	
\newtheorem{innercustomprops}{Propositions}
\newenvironment{customprops}[1]
	{\renewcommand\theinnercustomprops{#1}\innercustomprops}
	{\endinnercustomprops}

\renewcommand{\epsilon}{\varepsilon}

\newcommand{\bbLambda}{\reflectbox{\raisebox{\depth}{\scalebox{1}[-1]{$\mathbb V$}}}}

\newcommand{\T}{\mathbb{T}}
\newcommand{\A}{\mathbb{A}}
\newcommand{\B}{\mathbb{B}}
\newcommand{\C}{\mathbb{C}}
\newcommand{\D}{\mathbb{D}}
\newcommand{\LLambda}{\bbLambda}
\newcommand{\QQ}{\mathbb{Q}}
\newcommand{\QQr}{\mathbb{Q}'}

\def\a{\alpha}
\def\b{\beta}

\def\bx{\bm{x}}
\def\by{\bm{y}}
\def\bl{\bm{\lambda}}
\def\bg{\bm{\gamma}}
\def\bt{\bm{t}}
\def\bgg{\underline{\bm{\gamma}} }
\def\beq{\begin{equation}}
	\def\eeq{\end{equation}}
\def\beqq{\begin{equation*}}
	\def\eeqq{\end{equation*}}

\def\d{\partial}
\def\g{\gamma}

\def\l{\lambda}
\def\R{\mathbb{R}}
\def\Res{\operatorname{Res}}
\newcommand{\rf}[1]{(\ref{#1})}
\newcommand{\bcdmu}{\hat{{\mu}}_{BC}}

\newcommand{\bdelta}{\delta}
\def\HA{\mathrm{H}}
\def\Hdual{\hat{\mathrm{H}}}
\def\HBCdual{\hat{\mathbb{H}}}
\newtheorem*{definition*}{Definition}
\def\ve{\varepsilon}
\def\vf{\varphi}

\let\Re\relax
\let\Im\relax
\DeclareMathOperator{\Re}{Re}
\DeclareMathOperator{\Im}{Im}

\begin{document}
	
\begin{center}
	
	{\bf \Large $BC$ Toda chain II: symmetries. Dual picture}
	
	\vspace{0.4cm}
	
	{N. Belousov$^{\dagger}$, S. Derkachov$^{\diamond\dagger}$, S. Khoroshkin$^{\ast\circ\dagger}$}
	
	\vspace{0.4cm}
	
	{\small \it
		$^\dagger$Beijing Institute of Mathematical Sciences and Applications, \\
		Huairou district, Beijing, 101408, China \vspace{0.2cm}\\
		$^\diamond$Steklov Mathematical Institute, Fontanka 27, \\St.~Petersburg, 191023, Russia \vspace{0.2cm}\\
		$^\ast$Department of Mathematics, Technion,
		Haifa, Israel\vspace{0.2cm}\\
		$^\circ$Skolkovo Institute of Science and Technology,\\Skolkovo, 121205, Russia
	}
	
\end{center}

\begin{abstract} 
	In the previous paper we derived Gauss--Givental integral representation for the wave functions of quantum $BC$ Toda chain and also introduced Baxter operators for this model. In the present paper we prove commutativity of Baxter operators, as well as show that the constructed wave functions are symmetric with respect to signed permutations of spectral parameters and diagonalize Baxter operators. Furthermore, we derive Mellin--Barnes integral representation for the wave functions. With its help we show that wave functions satisfy dual system of difference equations with respect to spectral parameters and coincide with hyperoctahedral Whittaker functions. Finally, we give heuristic proofs of orthogonality and completeness of the wave functions. 
\end{abstract}

\vspace{-0.3cm}

\tableofcontents

\section{Introduction} 

The quantum Toda chain of $BC_n$ type is governed by the Hamiltonian~\cite{Skl, I}
\begin{align}\label{H-bc}
	\mathbb{H}_{BC} = - \sum_{j = 1}^n \partial_{x_j}^2 + 2 \sum_{j = 1}^{n - 1} e^{x_j - x_{j + 1}} + 2\alpha \, e^{-x_1} + \beta^2  e^{-2x_1},
\end{align}
which acts on functions of $n$ spatial variables $x_j \in \mathbb{R}$ and contains two parameters $\alpha, \beta$.
In~\cite{BDK} we found the reflection operator, which acts on functions of one variable and can be used to diagonalize $BC_1$ Toda Hamiltonian. Furthermore, combining reflection operator with Sklyanin's intertwining operators~\cite{Skl2} we constructed the monodromy operators satisfying reflection equation associated with $BC_n$ system. Reductions of monodromy operators produce the so-called raising and Baxter operators. With the help of raising operators we derived the explicit expression for the wave function of $BC_n$ system in a form of Gauss--Givental iterated integrals. Besides, we proved the commutativity of Baxter operators with $BC_n$ Toda Hamiltonians and Baxter equation characterising their spectra.

The same technique was used in \cite{ADV, DKM, BKP} for the analysis of $q$-Toda chain and Heisenberg spin chains, while for $GL_n$ Toda chain ($\alpha = \beta = 0$) the analogous construction reproduces the known results of \cite{GLO} obtained using representation theory of classical Lie groups $GL(n,\R)$. 
 
In the present paper we further study the wave functions of $BC_n$ Toda chain using other tools known in the theory of quantum integrable systems. First, we prove commutativity and exchange relations for Baxter and raising operators constructed in \cite{BDK}. These relations correspond to certain integral identities, which can be described and 
derived in a graphical language using a few basic transformations called star-triangle and flip relations. In addition, we show that our raising and Baxter operators, as well as their products, are well defined on the spaces of polynomially bounded continuous functions.

Second, using the relations between the Baxter and raising operators, we diagonalize the Baxter operators and establish the symmetry of the wave functions with respect to signed permutations of the spectral parameters.
 
Next, we compute the scalar product between $GL_n$ and $BC_n$ wave functions. This computation leads us to the Mellin--Barnes integral representation of $BC_n$ wave functions, generalizing Iorgov and Shadura results~\cite{IS}  for $B_n$ system ($\beta = 0$). Further, we show that Gauss--Givental and Mellin--Barnes integral representations are essentially sufficient to prove the orthogonality and completeness relations for the $BC_n$ wave functions.
 
It is known that properly normalized wave functions of $BC_n$ Toda chain, also called hyperoctahedral Whittaker functions, enjoy difference equations in spectral parameters~\cite{DE}. To verify the dual equations for our wave functions we note that the kernel of Mellin--Barnes representation can be regarded as Ruijsenaars $BC_n$ --- $GL_n$ kernel function, which satisfies corresponding difference equations. In addition, we found the second kernel function and another Mellin--Barnes integral representation for $BC_n$ wave functions.
  
Finally, following the strategy of \cite{HR3,BCDK} we establish asymptotics of our wave function and identify it with the hyperoctahedral Whittaker function by van Diejen and Emsiz. The Mellin--Barnes iterative integral can be calculated by residues and the result is the precise formula for the corresponding Harish-Chandra series.

In the following subsections we briefly recall the construction of commuting Hamiltonians and their eigenfunctions for both $GL_n$ and $BC_n$ Toda chains (see~\cite{BDK} and references therein for details), and then state the main results of this paper.
 
\subsection{Hamiltonians}

For the description of quantum Toda chains of $GL$ and $BC$ types we need two basic blocks: Lax matrix associated with the $j$-th particle~\cite{F, Gaud, S}
\begin{align}\label{QI1}
	L_j(u) = 
	\begin{pmatrix}
		u + \imath \partial_{x_j} & e^{-x_j} \\[4pt]
		-e^{x_j} & 0
	\end{pmatrix}
\end{align}
and the reflection (boundary) matrix 
\begin{align}\label{QI2}
	K(u) =  
	\begin{pmatrix}
		-\alpha & u - \frac{\imath}{2} \\[6pt]
		- \beta^2 \bigl(u - \frac{\imath}{2} \bigr) & -\alpha
	\end{pmatrix}.
\end{align}
With their help we construct monodromy matrix $T_n(u)$ 
\begin{align}\label{QI3}
	T_n(u) = L_n(u) \cdots L_1(u) = 
	\begin{pmatrix}
		A_n(u) & B_n(u) \\[4pt]
		C_n(u) & D_n(u)
	\end{pmatrix},
\end{align}
which serves for $GL$ Toda chain, and monodromy matrix $\T_n(u)$
\begin{align}\label{QI4}
	\begin{aligned}
		\T_n(u) 
		 = T_n(u) \, K(u) \, \sigma_2 \, T_n^t(-u) \, \sigma_2 =
		\begin{pmatrix}
			\A_n(u) & \B_n(u) \\[4pt]
			\C_n(u) & \D_n(u)
		\end{pmatrix},\qquad \sigma_2=\begin{pmatrix}
		0& -\imath \\[4pt]\imath&0\end{pmatrix}
	\end{aligned},
\end{align}
which desribes $BC$ Toda system (to distinguish objects related to $BC$ system we often use the {\tt mathbb} font). The elements $A_n(u)$ and $\B_n(u)$ are generating functions for commuting Hamiltonians of $GL$ and $BC$ Toda chains correspondingly
\begin{align}
	& A_n(u) = u^n + \sum_{s = 1}^n u^{n - s} \, \HA_s, && \hspace{-1cm} [\HA_s, \HA_r] = 0, \\[6pt]
	& 	\B_n(u) = (-1)^n \biggl(u - \frac{\imath}{2} \biggr) \biggl( u^{2n} + \sum_{s = 1}^n u^{2(n - s)} \, \mathbb{H}_s \biggr), && \hspace{-1cm}  [\mathbb{H}_s, \mathbb{H}_r] = 0.
\end{align}
The first coefficients are related to the quadratic Hamiltonian~\eqref{H-bc}, namely, $\mathbb{H}_1 = -\mathbb{H}_{BC}$ and
\begin{align}
	\HA_1 = \sum_{j = 1}^n \imath \partial_{x_j}, \qquad \HA_2 = \frac{1}{2} \bigl( \HA_1^2 - \HA_{GL} \bigr),
\end{align}
where
\begin{align}\label{HA}
	\HA_{GL} = \mathbb{H}_{BC} \bigr|_{\alpha = \beta = 0} = - \sum_{j = 1}^n \partial_{x_j}^2 + 2 \sum_{j = 1}^{n - 1} e^{x_j - x_{j + 1}}.
\end{align}
Two other special cases of $BC$ Toda system are $B$ Toda chain ($\alpha \not= 0$, $\beta =0 $) and $C$ Toda chain ($\alpha = 0$, $\beta \not= 0$).

\subsection{Wave functions and Baxter operators}

Denote tuples of $n$ variables and sums of their components
\begin{align}
	\bm{x}_n = (x_1, \dots, x_n), \qquad \underline{\bm{x}}_n = x_1 + \ldots + x_n.
\end{align}
The wave functions of $GL$ and $BC$ systems are parametrized by zeroes of the generating functions eigenvalues
\begin{align}\label{A-eigen}
	& A_n(u) \, \Phi_{\bm{\lambda}_n}(\bm{x}_n) = \prod_{j = 1}^n (u - \lambda_j) \,  \Phi_{\bm{\lambda}_n}(\bm{x}_n), \\[6pt]
\label{B-eigen}
	& \B_n(u) \, \Psi_{\bm{\lambda}_n}(\bm{x}_n) = (-1)^n \biggl(u - \frac{\imath}{2} \biggr) \prod_{j = 1}^n (u^2 - \lambda_j^2) \, \Psi_{\bm{\lambda}_n}(\bm{x}_n).
\end{align}
As shown in~\cite{BDK}, they can be constructed using the so-called raising operators. To introduce these operators denote 
\begin{align}
	g = \frac{1}{2} + \frac{\alpha}{\beta}.
\end{align}
Besides, in what follows we always assume conditions 
\begin{align}
	g > 0, \qquad \beta > 0.
\end{align}
The raising operators are integral operators acting on functions $\phi(\bm{x}_{n - 1})$ by the formulas
\begin{align}
	 & \bigl[ \Lambda_{n}(\lambda) \, \phi \bigr] (\bm{x}_{n})= \int_{\mathbb{R}^{n - 1}} d\bm{y}_{n - 1} \; \Lambda_\lambda(\bm{x}_{n} | \bm{y}_{n - 1}) \, \phi(\bm{y}_{n - 1}), \\[6pt]
	 & \bigl[ \LLambda_n(\lambda) \, \phi \bigr] (\bm{x}_n) = \int_{\mathbb{R}^{n - 1}} d\bm{y}_{n - 1} \; \LLambda_\lambda(\bm{x}_{n} | \bm{y}_{n - 1}) \, \phi(\bm{y}_{n - 1}),
\end{align}
where the kernels are given by
\begin{align} \label{Lker-expl}
	& \Lambda_\lambda(\bm{x}_{n} | \bm{y}_{n - 1}) =  \exp \biggl( \imath \lambda \bigl(\underline{\bm{x}}_{n} - \underline{\bm{y}}_{n - 1} \bigr) 
	- \sum_{j = 1}^{n - 1} (e^{x_j - y_j} + e^{y_j - x_{j + 1}}) \biggr), \\[10pt]
	& \begin{aligned}
		& \LLambda_\lambda(\bm{x}_n | \bm{y}_{n - 1}) = \frac{(2\beta)^{\imath \lambda}}{\Gamma(g - \imath \lambda)} \int_{\mathbb{R}^n} d\bm{z}_n \;
		(1 + \beta e^{-z_1})^{- \imath \lambda - g} \; (1 - \beta e^{-z_1})^{- \imath \lambda + g - 1}\, \theta(z_1 - \ln \beta) \\[6pt]
		& \quad \times  \exp \biggl( \imath \lambda \bigl( \underline{\bm{x}}_n + \underline{\bm{y}}_{n - 1} - 2 \underline{\bm{z}}_n \bigr) - \sum_{j = 1}^{n - 1} (e^{z_j - x_j} + e^{z_j - y_j} + e^{x_j - z_{j + 1}} + e^{y_j - z_{j + 1}} )  - e^{z_n - x_n} \biggr).
	\end{aligned}
\end{align}
Here and in what follows by $\theta(x)$ we denote the Heaviside step function
\begin{align}
	\theta(x) = \left\{ 
	\begin{aligned}
		& 1, && x \geq 0, \\[4pt]
		& 0, && x < 0.
	\end{aligned}
	\right.
\end{align}
 
The wave functions are defined by iterative integrals
\begin{align} \label{Phi-GG-0}
	& \Phi_{\bm{\lambda}_n}(\bm{x}_n) =  \Lambda_n(\lambda_n) \cdots \Lambda_1(\lambda_1) \cdot 1, \\[6pt] \label{Psi-GG-0}
	& \Psi_{\bm{\lambda}_n}(\bm{x}_n) =  \LLambda_n(\lambda_n) \cdots \LLambda_1(\lambda_1) \cdot 1,
\end{align}
which by tradition are called Gauss--Givental representations. Note that from the explicit formula~\eqref{Lker-expl} we have
\begin{align}
	\Lambda_n(\lambda + \rho) = e^{\imath \rho \underline{\bm{x}}_{n}} \, \Lambda_n(\lambda) \, e^{- \imath \rho \underline{\bm{x}}_{n - 1}},
\end{align}
and consequently,
\begin{align} \label{Phi-shift}
	\Phi_{\bm{\lambda}_n + \rho \bm{e}_n}(\bm{x}_n) = e^{\imath \rho \underline{\bm{x}}_{n}} \, \Phi_{\bm{\lambda}_n}(\bm{x}_n) , \qquad \bm{e}_n = (1, \dots, 1).
\end{align}

\begin{example} \label{ex:PhiPsi1}
	In the case $n = 1$ we have $\Phi_{\lambda}(x) = e^{\imath \lambda x}$ and
	\begin{align}
		\begin{aligned}
			\Psi_{\lambda}(x) & = \frac{e^{\frac{x}{2}}}{\sqrt{2\beta}} \, W_{\frac{1}{2}- g, - \imath \lambda}(2\beta e^{-x}) \\[6pt]
			& = \frac{(2\beta)^{\imath \lambda}}{\Gamma ( g - \imath \lambda )} \, \int_{\ln \beta}^\infty dz \; e^{\imath \lambda(x - 2z) - e^{z - x}} \, (1 + \beta e^{-z})^{-\imath \lambda - g} \, (1-\beta e^{-z})^{-\imath \lambda + g - 1},
		\end{aligned}
	\end{align}
	where $W_{\kappa,\mu}(z)$ is the Whittaker function~\cite[\href{https://dlmf.nist.gov/13}{Chapter 13}]{DLMF}. The above integral representation for it is well known~\cite[\href{http://dlmf.nist.gov/13.16.E5}{(13.16.5)}]{DLMF}.
\end{example}

The close relatives of the raising operators are Baxter operators. These are integral operators acting on functions $\phi(\bm{x}_n)$ by the formulas
\begin{align}
	& \bigl[ Q_n(\lambda) \, \phi \bigr] (\bm{x}_n) = \int_{\mathbb{R}^{n}} d\bm{y}_{n} \; Q_\lambda(\bm{x}_n | \bm{y}_{n}) \, \phi(\bm{y}_{n}), \\[6pt]
	& \bigl[ \QQ_n(\lambda) \, \phi \bigr] (\bm{x}_n) = \int_{\mathbb{R}^{n}} d\bm{y}_{n} \; \QQ_\lambda(\bm{x}_n | \bm{y}_{n}) \, \phi(\bm{y}_{n}),
\end{align}
where the kernels are given by
\begin{align} \label{Q-ker}
	& Q_\lambda(\bm{x}_n | \bm{y}_n) = \exp \biggl( \imath \lambda \bigl(\underline{\bm{x}}_n - \underline{\bm{y}}_{n} \bigr) 
	- \sum_{j = 1}^{n - 1} (e^{x_j - y_j} + e^{y_j - x_{j + 1}}) - e^{x_n - y_n} \biggr), \\[10pt]
	& \begin{aligned}
		\QQ_\lambda(\bm{x}_n | \bm{y}_{n}) & = \frac{(2\beta)^{\imath \lambda}}{\Gamma(g - \imath \lambda)} \int_{\mathbb{R}^{n + 1}} d\bm{z}_{n + 1}
		\; (1 + \beta e^{-z_1})^{- \imath \lambda - g} \; (1 - \beta e^{-z_1})^{- \imath \lambda + g - 1}\,	\theta(z_1 - \ln \beta)
		 \\[6pt]
		&  \times \exp \biggl( \imath \lambda \bigl( \underline{\bm{x}}_n + \underline{\bm{y}}_{n} - 2 \underline{\bm{z}}_{n + 1} \bigr) - \sum_{j = 1}^{n} (e^{z_j - x_j} + e^{z_j - y_j} + e^{x_j - z_{j + 1}} + e^{y_j - z_{j + 1}} ) \biggr).
	\end{aligned}
\end{align}
These operators commute with the Hamiltonians of the corresponding systems
\begin{align}
	\bigl[ Q_n(\lambda), A_n(u) \bigr] = 0, \qquad \bigl[ \QQ_n(\lambda), \B_n(u) \bigr] = 0,
\end{align}
and also satisfy Baxter equations
\begin{align} \label{Bax-eq}
	Q_n(\lambda) \, A_n(\lambda) = \imath^{-n} \, Q_n(\lambda - \imath), \quad \quad \QQ_n(\lambda) \, \B_n(\lambda) = - \frac{\beta ( g + \imath \lambda)}{2\lambda} \, \QQ_n(\lambda - \imath),
\end{align}
see~\cite{BDK}. 

As shown in Appendix~\ref{sec:GG-bounds}, all of the above integral operators are well defined on the spaces of polynomially bounded continuous functions
\begin{align} 
	\mathcal{P}_n = \bigl\{ \phi \in C(\mathbb{R}^n) \colon \quad | \phi(\bm{x}_n) | \leq P(|x_1|, \dots, |x_n|), \quad P \text{ --- polynomial} \bigr\}.
\end{align}
Namely, by Propositions~\ref{prop:QL-A-space},~\ref{prop:LQQr-space} we have
\begin{align}
	& \Lambda_n(\lambda) \colon \; \mathcal{P}_{n - 1} \; \to \; \mathcal{P}_n, && \hspace{-2cm} \lambda \in \mathbb{R}, \\[6pt]
	& \LLambda_n(\lambda) \colon \; \mathcal{P}_{n - 1} \; \to \; \mathcal{P}_n, &&  \hspace{-2cm} \lambda \in \mathbb{R}, \\[6pt]
	& Q_n(\lambda) \colon \; \mathcal{P}_{n} \; \to \; \mathcal{P}_n, &&  \hspace{-2cm} \Im \lambda < 0, \\[6pt]
	& \QQ_n(\lambda) \colon \; \mathcal{P}_{n} \; \to \; \mathcal{P}_n, &&  \hspace{-2cm} \Im \lambda \in (-g, 0).
\end{align}
As a result, the products of these operators are well defined. In particular, the Gauss--Givental representations~\eqref{Phi-GG-0},~\eqref{Psi-GG-0} converge and wave functions $\Phi_{\bm{\lambda}_n}(\bm{x}_n), \Psi_{\bm{\lambda}_n}(\bm{x}_n)$ are polynomially bounded. 

In the previous paper~\cite{BDK} we also proved that the wave function $\Psi_{\bm{\lambda}_n}(\bm{x}_n)$ decays rapidly in the classically forbidden regions $x_{j + 1} \ll x_{j}$, as well as $x_1 \ll 0$, see Proposition~\ref{prop:bc-bound} here for the precise statement.
 
\begin{remark}
In~\cite{BDK} all relations with operators hold on the spaces of exponentially tempered smooth functions, which are suited for the action of monodromy matrix entries. In this paper we work with the spaces $\mathcal{P}_{n}$ that are more suitable for the products of integral operators: note that the Baxter operators do not preserve the space $\mathcal{E}_n(\mathbb{R}^n)$ considered in~\cite[Corollary~5]{BDK}.
\end{remark}

\subsection{Main results} \label{sec:results}

In what follows we use shorthand notation for the products of gamma functions
\begin{align}
	\Gamma(a \pm b) = \Gamma(a + b) \, \Gamma(a - b), \qquad \Gamma(\pm a \pm b) = \Gamma(a + b) \, \Gamma(a - b) \, \Gamma(-a + b) \, \Gamma(-a-b).
\end{align}
In Section \ref{section2} using diagram technique we deduce the following relations for $BC$ raising and Baxter operators.

\begin{customprops}{\ref{prop:Lrefl}, \ref{prop:QL-rel}}
	The relations
	\begin{align*}
		& \LLambda_{n}(\lambda) = \LLambda_{n}(-\lambda), && \hspace{-1cm} \lambda \in \mathbb{R}, \\[6pt]
		& \LLambda_{n + 1}(\lambda) \, \LLambda_{n}(\rho) = \LLambda_{n + 1}(\rho) \, \LLambda_{n}(\lambda), && \hspace{-1cm} \lambda, \rho \in \mathbb{R}, \\[6pt]
		& \QQ_n(\lambda) \, \QQ_n(\rho) = \QQ_n(\rho) \, \QQ_n(\lambda), && \hspace{-1cm} \Im \lambda, \, \Im \rho \in (-g, 0), \\[6pt]
		& \QQ_n(\lambda) \, \LLambda_n(\rho) = \Gamma(\imath \lambda \pm \imath \rho) \, \LLambda_n(\rho) \, \QQ_{n - 1}(\lambda), && \hspace{-1cm} \Im \lambda \in (-g, 0), \, \rho \in \mathbb{R},
	\end{align*}
hold on the spaces $\mathcal{P}_{n - 1}$ and $\mathcal{P}_n$ correspondingly. 
\end{customprops}

In the last identity for $n = 1$ we denote
\begin{align}
	\QQ_0(\lambda) = \frac{(2\beta)^{-\imath \lambda} \, \Gamma(2\imath \lambda)}{\Gamma(g + \imath \lambda)} \,  \mathrm{Id}.
\end{align}
The above relations imply that the $BC$ wave function~\eqref{Psi-GG-0} is the eigenfuction of Baxter operators, symmetric with respect to the action of the Weyl group of the root system $BC_n$.

\begin{customthm}{\ref{thm:Psi-sym}} Let $\sigma \in S_n$, $\bm{\epsilon}_n \in \{1, -1\}^n$. Then
	\begin{align} \label{Psi-sym-0}
		\Psi_{\l_1,\ldots, \l_n}(\bm{x}_n) = \Psi_{\ve_1\l_{\sigma(1)},\ldots, \ve_n\l_{\sigma(n)}}(\bm{x}_n).
	\end{align}
\end{customthm}
 
\begin{customthm}{\ref{theorem2.2}}
	Let $\bm{\lambda}_n \in \mathbb{R}^n$ and $\Im \lambda \in (-g, 0)$. Then
	\begin{align*}
		& \QQ_n(\lambda) \, \Psi_{\bm{\lambda}_n}(\bm{x}_n) = \frac{(2\beta)^{-\imath \lambda} \, \Gamma(2\imath \lambda) }{\Gamma(g + \imath \lambda)} \, \prod_{j = 1}^n \Gamma(\imath \lambda \pm \imath \lambda_j) \, \Psi_{\bm{\lambda}_n}(\bm{x}_n).
	\end{align*}
\end{customthm}

Note that the above symmetry implies non-degeneracy of the Hamiltonians eigenvalues~\eqref{B-eigen}, while the last formula is in accordance with the Baxter equation~\eqref{Bax-eq}. Besides, for comparison, let us write the relations between $GL$ raising and Baxter operators
\begin{align}
	& \Lambda_{n + 1}(\lambda) \, \Lambda_n(\rho) = \Lambda_{n + 1}(\lambda) \, \Lambda_n(\rho), \\[6pt] \label{QQ-gl}
	& Q_n(\lambda) \, Q_n(\rho) = Q_n(\rho) \, Q_n(\lambda), \\[6pt]
	& Q_n(\lambda) \, \Lambda_n(\rho) = \Gamma(\imath \lambda - \imath \rho) \, \Lambda_n(\rho) \, Q_{n - 1} (\lambda), 
\end{align}
which lead to the analogous statements for the $GL$ wave functions~\eqref{Phi-GG-0}
\begin{align}
	& \Phi_{\lambda_1, \dots, \lambda_n}(\bm{x}_n) = \Phi_{\lambda_{\sigma(1)}, \dots, \lambda_{\sigma(n)}}(\bm{x}_n), \qquad \sigma \in S_n, \\[6pt] \label{Q-Phi}
	& Q_n(\lambda) \, \Phi_{\bm{\lambda}_n}(\bm{x}_n) = \prod_{j = 1}^n \Gamma(\imath \lambda - \imath \lambda_j) \, \Phi_{\bm{\lambda}_n}(\bm{x}_n),
\end{align}
see~\cite[Theorem 2.3]{GLO1}.

With the help of the above Baxter operators and diagram technique in Section~\ref{sec:gl-bc-prod} we calculate the following scalar product between $GL$ and $BC$ wave functions. 

\begin{customprop}{\ref{prop:gl-bc-prod}}
	Let $\bm{\lambda}_n \in \mathbb{R}^n$, $\bm{\g}_n \in \mathbb{C}^n$ such that $	\Im \g_1 = \ldots = \Im \g_n < 0$. Then 
	\begin{align*} 
		\int_{\mathbb{R}^n} d\bm{x}_n \; \Phi_{-\bm{\g}_n}(\bm{x}_n)  \, \Psi_{\bm{\lambda}_n}(\bm{x}_n) \, \exp (- \beta e^{-x_1}) 
		=   \frac{(2\beta)^{-\imath \bgg_n} \prod\limits_{j, k=1}^n \Gamma(\imath \gamma_j \pm \imath \lambda_k) }{ \prod\limits_{1\leq j < k\leq n} \Gamma(\imath \gamma_j + \imath \gamma_k) \, \prod\limits_{j = 1}^n \Gamma (g + \imath \gamma_j ) }.
	\end{align*}
\end{customprop}

It is known that $GL$ wave functions satisfy orthogonality and completeness relations
\begin{align} \label{gl-orth}
	& \int_{\mathbb{R}^n} d\bm{x}_n \; \overline{\Phi_{\bm{\lambda}_n}(\bm{x}_n)} \, \Phi_{\bm{\rho}_n}(\bm{x}_n) = \frac{1}{\hat{\mu}(\bm{\lambda}_n)} \, \frac{1}{n!} \sum_{\sigma \in S_n} \delta(\lambda_1 - \rho_{\sigma(1)}) \cdots \delta(\lambda_n - \rho_{\sigma(n)}), \\[6pt] \label{gl-compl}
	& \int_{\mathbb{R}^n} d\bm{\lambda}_n \; \hat{\mu}(\bm{\lambda}_n) \, \overline{\Phi_{\bm{\lambda}_n}(\bm{x}_n)} \, \Phi_{\bm{\lambda}_n}(\bm{y}_n) = \delta(x_1 - y_1) \cdots \delta(x_n - y_n)
\end{align}
with the spectral measure
\begin{align} \label{gl-measure-0}
	\hat{\mu}(\bm{\lambda}_n) = \frac{1}{n! \, (2\pi)^n} \prod_{1 \leq j \not= k \leq n} \frac{1}{\Gamma(\imath \lambda_j - \imath \lambda_k)},
\end{align}
see~\cite{STS, W, Kozl}. Combining completeness relation~\eqref{gl-compl} with the above proposition we obtain Mellin--Barnes integral representation for $BC$ wave functions, the details are given in Section~\ref{sec:gg-mb-equiv}.

\begin{customthm}{\ref{theorem2.3}}
	For $\bm{\lambda}_n \in \mathbb{R}^n$ and $\epsilon > 0$
	\begin{align} \label{Psi-MB-0} 
		\Psi_{\bm{\lambda}_n}(\bm{x}_n)  = e^{\beta e^{-x_1}}  \int_{(\mathbb{R} - \imath \epsilon)^n}  d\bm{\gamma}_n \; \hat{\mu}(\bm{\gamma}_n) \, \frac{(2\beta)^{-\imath \bgg_n} \prod\limits_{j, k=1}^n \Gamma(\imath \gamma_j \pm \imath \lambda_k) }{ \prod\limits_{1\leq j < k\leq n} \Gamma(\imath \gamma_j + \imath \gamma_k) \, \prod\limits_{j = 1}^n \Gamma (g + \imath \gamma_j ) } \, \Phi_{\bm{\gamma}_n}(\bm{x}_n).
	\end{align}
\end{customthm}

Note that the symmetry with respect to signed permutations of spectral parameters~\eqref{Psi-sym-0} becomes evident from this representation.

The formula~\eqref{Psi-MB-0} generalizes Iorgov and Shadura result for $B$ Toda chain~\cite[(26)]{IS}, which can be obtained in the limit $\beta \to 0$. In this limit the integrand simplifies
\begin{align}
	\frac{(2\beta)^{-\imath \bgg_n}}{\prod\limits_{j = 1}^n \Gamma (g + \imath \gamma_j ) } = \frac{(2\beta)^{-\imath \bgg_n}}{\prod\limits_{j = 1}^n \Gamma \bigl( \frac{1}{2} + \frac{\alpha}{\beta} + \imath \gamma_j \bigr) }  \sim \biggl[ \frac{ (e\beta/\alpha)^{\alpha/\beta} }{\sqrt{2\pi}} \biggr]^n \, (2\alpha)^{-\imath \bgg_n}, \qquad \beta \to 0,
\end{align}
due to Stirling formula. 

\begin{example} \label{ex:MB1-1}
	In the case $n = 1$ we have
	\begin{align}
		\Psi_{\lambda}(x) = \frac{e^{\frac{x}{2}}}{\sqrt{2\beta}}  \, W_{\frac{1}{2} - g, - \imath \lambda}(2\beta e^{-x}) = e^{\beta e^{-x}} \int_{\mathbb{R} - \imath \epsilon} \frac{d\gamma}{2\pi} \; \frac{(2\beta)^{-\imath \gamma} \, \Gamma(\imath \gamma + \imath \lambda)  \, \Gamma(\imath \gamma - \imath \lambda)}{\Gamma(g + \imath \gamma)} \; e^{\imath \gamma x},
	\end{align}
	which is well-known formula for Whittaker function~\cite[\href{http://dlmf.nist.gov/13.16.E12}{(13.16.12)}]{DLMF}.
\end{example}

The obtained Mellin--Barnes representation for $BC$ wave functions is somewhat different from the one for $GL$ wave functions. The latter is defined by recursive formula~\cite[Theorem~4.1]{KL2}
\begin{align} \label{Phi-MB-0}
	\Phi_{\bm{\lambda}_n}(\bm{x}_n) = \int_{(\mathbb{R} + \imath \epsilon)^{n - 1}} d\bm{\gamma}_{n - 1} \; \hat{\mu}(\bm{\gamma}_{n - 1}) \, e^{\imath x_n(\underline{\bm{\lambda}}_n - \underline{\bm{\gamma}}_{n - 1})} \, \prod_{j = 1}^n \prod_{k = 1}^{n - 1} \Gamma(\imath \lambda_j - \imath \gamma_k) \, \Phi_{\bm{\gamma}_{n - 1}}(\bm{x}_{n - 1}),
\end{align}
which starts from the one-particle function $\Phi_{\lambda_1}(x_1) = e^{\imath \lambda_1 x_1}$. The recursive Mellin--Barnes formula also exists for $B$ Toda chain~\cite[(4.11)]{GaKL}, while for $BC$ model its construction (as well as the construction of dual $Q$-operators) remains an interesting open problem. 

Let us remark that the equivalence of Mellin--Barnes and Gauss--Givental representations for $GL$ Toda chain was established in~\cite[Section~3]{GLO1},~\cite[Appendix~B]{Kozl}.

In Sections~\ref{sec:orth} and~\ref{sec:compl} we prove orthogonality and completeness relations for $BC$ wave functions
\begin{align} \label{ort1} 
	& \int_{\R^n}\ d\bx_n \; \overline{\Psi_{\bm{\lambda}_n}(\bx_n)} \, \Psi_{\bm{\rho}_n}(\bx_n) = \frac{1}{\bcdmu(\bm{\lambda}_n) } \, \bdelta_{\mathrm{sym}}(\bm{\lambda}_n, \bm{\rho}_n), \\[6pt]
 \label{ort2}
	& \int_{\mathbb{R}^n} d\bm{\lambda}_n \; \bcdmu(\bm{\lambda}_n) \, \overline{\Psi_{\bm{\lambda}_n}(\bm{x}_n)} \, \Psi_{\bm{\lambda}_n}(\bm{y}_n) = \delta(x_1 - y_1) \cdots \delta(x_n - y_n),
\end{align}
using Gauss--Givental~\eqref{Psi-GG-0} and Mellin--Barnes~\eqref{Psi-MB-0} representations correspondingly. Here $\bcdmu(\bm{\lambda}_n)$ is $BC$ spectral measure
\begin{align}
	\bcdmu(\bm{\lambda}_n) = \frac{1}{n! \, (4\pi)^n } \prod_{1 \leq j < k \leq n} \frac{1}{\Gamma(\pm \imath \lambda_j \pm \imath \lambda_k) } \, \prod_{j = 1}^n \frac{ \Gamma(g \pm \imath \lambda_j ) }{ \Gamma(\pm2 \imath \lambda_j) }
\end{align}
and in the orthogonality relation we have delta function symmetric with respect to signed permutations
\begin{align*}
	\bdelta_{\mathrm{sym}} (\bm{\lambda}_n, \bm{\rho}_n) = \frac{1}{n! \, 2^n} \sum_{\substack{\sigma \in S_n \\[2pt] \bm{\epsilon}_n \in \{1, -1\}^n}}\delta \bigl( \lambda_1 - \epsilon_1 \rho_{\sigma(1)} \bigr) \cdots \delta \bigl( \lambda_n - \epsilon_n \rho_{\sigma(n)} \bigr).
\end{align*}
Our derivaion of \rf{ort1} and \rf{ort2} is rather heuristic. However, we expect that it can be made rigorous by arguments similar to~\cite{Kozl},~\cite{DKM2},~\cite{BK}.

Notice that poles of the integrand in Mellin--Barnes formula~\eqref{Psi-MB-0}
\begin{align}
	\gamma_j = \pm \lambda_k + \imath m, \qquad m \in \mathbb{N}_0
\end{align}
are determined by spectral parameters $\lambda_k$. By shifting integration contours one can show that $BC$ wave function is entire in spectral parameters, see Appendix~\ref{sec:MB-bounds}, where we also prove absolute convergence of all Mellin--Barnes integrals.

\begin{customcor}{\ref{cor:Psi-an}}
	The function $\Psi_{\bm{\lambda}_n}(\bm{x}_n)$ can be analytically continued to $\bm{\lambda}_n \in \mathbb{C}^n$.
\end{customcor}

The dependence of wave functions over spectral parameters can be described as eigenvalue problem with respect to dual Hamiltonians, obtained by van Diejen and Emsiz~\cite[Theorem~3]{DE}. The simplest of them is 
\begin{align}
	\HBCdual_1 = 2\beta\sum_{ j = 1}^n \,  \prod_{ \substack{k = 1 \\ k\not=j} }^n \frac{1}{\lambda_j^2 - \lambda_k^2}  \biggl( \, \frac{\imath \lambda_j +g }{2 \lambda_j (2 \lambda_j - \imath)} \,  \bigl( e^{-\imath \partial_{\lambda_j}} - 1 \bigr) 
	+  \frac{-\imath \lambda_j + g }{2 \lambda_j (2 \lambda_j + \imath)}\,  \bigl( e^{\imath \partial_{\lambda_j}} - 1 \bigr) \biggr),
\end{align}
the others are given by the formula \rf{HBC}. In Section~\ref{sec:MB-int} we prove that our wave functions are indeed the eigenfunctions of van Diejen--Emsiz Hamiltonians.

\begin{customthm}{\ref{theoremMB4}}
	The wave function $	\Psi_{\bm{\lambda}_n}(\bm{x}_n)$ satisfies van Diejen--Emsiz equations
	\begin{align*}
		\HBCdual_s \,	\Psi_{\bm{\lambda}_n}(\bm{x}_n) = e^{x_{n - s + 1} + \ldots + x_n} \, \Psi_{\bm{\lambda}_n}(\bm{x}_n)
	\end{align*}   
	for $s = 1, \dots, n$.
\end{customthm}

This result is essentially equivalent to the statement that the kernel of Mellin--Barnes representation~\eqref{Psi-MB-0}
\begin{align} \label{K1}
	K_1( \bm{\lambda}_n,  \bm{\gamma}_n)  = \frac{(2\beta)^{-\imath \bgg_n} \prod\limits_{j, k=1}^n \Gamma(\imath \gamma_j \pm \imath \lambda_k) }{ \prod\limits_{1\leq j < k\leq n} \Gamma(\imath \gamma_j + \imath \gamma_k) \, \prod\limits_{j = 1}^n \Gamma (g + \imath \gamma_j ) }
\end{align} 
intertwines dual Hamiltonians of $BC$ and $GL$ models
\begin{align}
	\HBCdual_s(\bm{\lambda}_n) \, K_1( \bm{\lambda}_n,  \bm{\gamma}_n) = \Hdual_s(-\bm{\gamma}_n) \, K_1( \bm{\lambda}_n, \bm{\gamma}_n),\qquad s=1,\ldots,n,
\end{align}
where in the arguments of operators we emphasized on which variables they act. Here $\Hdual_s$ are dual $GL$ Toda Hamiltonians derived by Babelon~\cite{B}, the first one is
\begin{align}
	\Hdual_1 = \sum_{j = 1}^n \, \prod_{ \substack{k = 1 \\ k \not= j} }^n \frac{1}{\imath \lambda_j - \imath \lambda_k} \, e^{- \imath \partial_{\lambda_j} },
\end{align}
and the others are given by the formula~\eqref{Hdual}. In addition, we also found another kernel function $K_2(\bm{\lambda}_n, \bm{\gamma}_n)$ \rf{MB1} that leads to the second Mellin--Barnes representation of $BC$ wave functions. 

\begin{customprop}{\ref{propK2}}
	For $\bm{\lambda}_n \in \mathbb{R}^n$ and $\epsilon \in (0, g)$
	\begin{align*} 
		\Psi_{\bm{\lambda}_n}(\bm{x}_n)  = e^{-\beta e^{-x_1}}  \int_{(\mathbb{R} - \imath \epsilon)^n}  d\bm{\gamma}_n \; \hat{\mu}(\bm{\gamma}_n) \, \frac{(2\beta)^{-\imath \bgg_n} \prod\limits_{j, k=1}^n \Gamma(\imath \gamma_j \pm \imath \lambda_k) \prod\limits_{j = 1}^n \Gamma (g - \imath \gamma_j )  }{ \prod\limits_{1\leq j < k\leq n} \Gamma(\imath \gamma_j + \imath \gamma_k) \, \prod\limits_{j = 1}^n \Gamma (g \pm \imath \lambda_j ) } \, \Phi_{\bm{\gamma}_n}(\bm{x}_n).
	\end{align*}
\end{customprop}

In the limit $\beta \to 0$, which corresponds to $B$ Toda chain, two Mellin--Barnes representations coincide. The equality of these representations in the general case is established in Section~\ref{sec:MB-int} with the help of Gustafson type integral calculated in~\cite{DM}. 

\begin{example} \label{ex:MB2-1}
	In the case $n = 1$ the last formula equals
	\begin{align*}
		\Psi_{\lambda}(x) = \frac{e^{\frac{x}{2}}}{\sqrt{2\beta}}  \, W_{\frac{1}{2} - g, - \imath \lambda}(2\beta e^{-x}) = e^{-\beta e^{-x}} \int_{\mathbb{R} - \imath \epsilon} \frac{d\gamma}{2\pi} \; \frac{(2\beta)^{-\imath \gamma} \, \Gamma(\imath \gamma + \imath \lambda)  \, \Gamma(\imath \gamma - \imath \lambda) \, \Gamma(g - \imath \gamma)}{\Gamma(g + \imath \lambda) \, \Gamma(g - \imath \lambda)} \; e^{\imath \gamma x},
	\end{align*}
	which coincides with the known expression for Whittaker function~\cite[\href{http://dlmf.nist.gov/13.16.E11}{(13.16.11)}]{DLMF}.
\end{example}

In Section~\ref{sec:QISM} following the approach of~\cite{KL, KL2, IS} we independently check that the wave function in Mellin--Barnes representation~\eqref{Psi-MB-0} diagonalizes Toda chain Hamiltonians~\eqref{B-eigen}
\begin{align}
	\B_n(u) \, \Psi_{\bm{\lambda}_n}(\bm{x}_n) = (-1)^n \biggl(u - \frac{\imath}{2} \biggr) \prod_{j = 1}^n (u^2 - \lambda_j^2) \, \Psi_{\bm{\lambda}_n}(\bm{x}_n),
\end{align}
and also calculate the action of the operator $\D_n(u)$ from monodromy matrix~\eqref{QI4} 
\begin{align} \label{D-Psi}
	\D_n(\pm \lambda_j) \, \Psi_{\bm{\lambda}_n}(\bm{x}_n) = -\beta (g \pm \imath \lambda_j) \, \Psi_{\lambda_1, \dots, \lambda_j \mp \imath, \dots, \lambda_n}(\bm{x}_n), \qquad j = 1, \dots ,n.
\end{align}
As follows from definition~\eqref{QI4}, $\D_n(\imath/2) = -\alpha$ and $\D_n(u)$ is polynomial of degree $2n$. Thus, we have exactly $2n + 1$ points to interpolate its action at arbitrary point.
Notice that the formula~\eqref{D-Psi} is in consistency with the symmetry~\eqref{Psi-sym-0} and relation
\begin{align}
	\D_n(\lambda) \, \LLambda_n(\lambda) = - \beta (g + \imath \lambda) \, \LLambda_n(\lambda - \imath),
\end{align}
derived in our previous paper~\cite{BDK}.

To compare our wave function with hyperoctahedral Whittaker function from~\cite{DE} analysed through series representation, in Section~\ref{sec:asymp} we calculate the asymptotics of $\Psi_{\bm{\lambda}_n}(\bm{x}_n)$ in the domain
$$0 \ll x_1\ll x_2\ll\ldots\ll x_n$$
using Mellin--Barnes representation. It is given by~\eqref{AS17}
\begin{align}
	\Psi^{as}_{\bl_n}(\bx_n) = \sum_{w\in W_n}\prod_{k = 1}^n\frac{\Gamma(2\imath w(\l_k))}{\Gamma(\imath w(\l_k)+g)}\prod_{1 \leq i<j \leq n}\Gamma (\imath w(\l_{j})\pm \imath w(\l_{i}) ) \, e^{\imath\sum_{k = 1}^n w(\l_{k})\tilde{x}_k},
\end{align} 
where $\tilde{x}_k = x_k - \ln(2\beta)$ and $W_n = S_n \times Z_2^n$ is Weyl group of $BC_n$ root system. This asymptotic formula coincides with the known asymptotics of the hyperoctahedral Whittaker function~\cite[(3.5b)]{DE}, so we conclude that functions coincide totally since the latter is unique (to compare with~\cite{DE} put $2\beta = 1$).

Finally, in Section~\ref{sec:asymp} we perform residue calculation of the Mellin--Barnes integral~\eqref{Psi-MB-0}. This gives us convergent Harish-Chandra series~\rf{AS15},~\rf{AS16}, which are multivariable generalizations of classical series representation of Whittaker function $W_{\kappa,\mu}$~\cite[\href{http://dlmf.nist.gov/13.14.E33}{(13.14.33)}, \href{http://dlmf.nist.gov/13.14.E6}{(13.14.6)}]{DLMF}.

Let us remark that many of the above results are in parallel to those obtained recently for $BC$ open $SL(2,\mathbb{C})$ spin chain~\cite{ADV2}. In particular, notice that in the Mellin--Barnes representation~\eqref{Psi-MB-0} we expand $BC$ wave functions in the basis of the functions $e^{\beta e^{-x_1}} \, \Phi_{\bm{\lambda}_n}(\bm{x}_n)$. The latter are eigenfunctions of the linear combination $A_n(u) + \imath \beta B_n(u)$ (see Section~\ref{sec:QISM} and particularly formulas~\eqref{AB0}). Similar expansion for the spin chain is performed in~\cite[Section 8]{ADV2}.

\newpage

\section{Gauss--Givental representation}\label{section2}

\subsection{Diagrams} \label{sec:diagrams}
Basic calculations for Gauss--Givental representation reduce to certain integral identities. It is convenient to visualize the corresponding integrals and identities using diagrams. In this section we introduce graphic notations and fundamental identities in graphic language. 

\begin{figure}[h] \centering \vspace{0.2cm}
	\begin{tikzpicture}[thick, line cap = round]
		\def\l{1.4}
		\def\r{1.5pt}
		\def\h{0.9}
		\def\s{6}
		\def\d{-2.7}
		\def\a{2.5}
		\def\t{0.03}
		
		\draw[fill=black] (0.5*\l, -0.43*\h) circle (0.4*\r) node[xshift = -0.4cm] {$x$} node[yshift = 0.4cm] {\color{spec} $\lambda$} node[xshift = 1.2cm] {$ = \;\; e^{\imath \lambda x}$};
		\draw (\s, -\h) node[xshift = -0.25cm, yshift = -0.15cm] {$y$} -- (\s + \l, 0) node[xshift = 0.25cm, yshift = 0.15cm] {$x$};
		\draw (\s + 2.3*\l, 0) node[xshift = -0.25cm, yshift = 0.15cm] {$x$} -- node[xshift = 1.15cm, yshift = 0cm] {$=$ \hspace{2.7cm} $= \;\; \exp(-e^{x - y})$} (\s + 3.3*\l, -\h) node[xshift = 0.25cm, yshift = -0.15cm] {$y$};
		
		\draw[line width = 0.6pt] (\a + \t, \d) -- node[xshift = 5.1cm] {\color{spec} $\lambda$ \color{black} \hspace{0.6cm} $= \; \; \dfrac{(2\beta)^{\imath \lambda}}{\Gamma(g - \imath \lambda)} \;
			(1 + \beta e^{-y})^{- \imath \lambda - g} \; (1 - \beta e^{-y})^{- \imath \lambda + g -1}			 
			\;  \theta(y - \ln \beta)$} (\a + \t, \d-\l)
		node[xshift = -0.05cm, yshift = -0.3cm] {$y$};
		\draw[line width = 0.6pt] (\a-\t, \d) --  (\a-\t, \d-\l);
	\end{tikzpicture}
	\vspace{0.2cm}	\caption{Elements of diagrams} \label{fig:lines}
\end{figure}

Kernels of all operators related to Gauss--Givental representation are built from three types of functions, which we depict by vertices and lines with labels, see Figure~\ref{fig:lines}. We distinguish top and bottom of ordinary line, since the function $\exp(-e^{x - y})$ is not symmetric in~$x, y$. For the proofs of various integral identities we also need functions depicted by dashed lines, see Figure~\ref{fig:dashed}. They act like ``ghosts'', since they appear only during calculations, but don't enter the final answer. For clarity, we use different colour for the parameters of dashed lines.

\begin{figure}[h] \centering
	\begin{tikzpicture}[thick, line cap = round]
		\def\l{1.4}
		\def\r{1.5pt}
		\def\d{7.3}		
		\draw[dashed] (0, -1.2) node[left] {$x$} -- node[yshift = 0.3cm] {\color{spec2} $\lambda$} (\l, -1.2) node[right] {$y \;\; = \; (e^x + e^y)^{- \imath \lambda}$};
		\draw[dashed] (\d, -0.3*\l) node[xshift=0.15cm, yshift=0.2cm] {$x$} -- node[xshift = 2.3cm] {\color{spec2} $\lambda$ \color{black} $\qquad  = \; (1 - e^{x - y})^{\imath \lambda - 1} \; \theta(y - x)$} (\d, -1.3*\l) node[xshift=0.15cm, yshift=-0.2cm] {$y$};
	\end{tikzpicture}
	\vspace{0.2cm}
	\caption{``Ghosts''} \label{fig:dashed}
\end{figure}

Finally, bold vertices depict integration over $\mathbb{R}$ with respect to corresponding variable, see example in Figure~\ref{fig:diag-ex}. Here and in what follows we omit labels of integrated variables.

\begin{figure}[h] \centering
	\vspace{0.7cm}
	\begin{tikzpicture}[thick, line cap = round]
		\def\lv{1.3}
		\def\l{1.1}
		\def\r{1.5pt}
		\def\h{0.7}
		\def\d{6.1}
		\def\s{4.2}
		
		\draw (0, -\lv) -- (-\l, -\lv - \h) node[xshift = -0.15cm, yshift = -0.3cm] {$x$};
		\draw (0, -\lv) -- (\l, -\lv - \h) node[xshift = 0.15cm, yshift = -0.3cm] {$y$};
		\draw[fill = black] (0, -\lv) circle (\r) node[xshift = 0.25cm, yshift = 0.25cm] {\color{spec} $\lambda$} node[xshift = 6cm, yshift = -0.4cm] {$= \qquad \displaystyle \int_{\mathbb{R}} dz \; \exp( \imath \lambda z - e^{z - x} - e^{z - y} )$};
	\end{tikzpicture}
	\vspace{0.2cm}
	\caption{Bold vertices depict integration} \label{fig:diag-ex}
\end{figure}

Using the above rules we can picture many useful integral identities as transformations of diagrams, see figures below. Notice that apart from diagrams some relations contain coefficients with gamma functions. From the whole list of identities the basic ones are \textit{star-triangle} and \textit{flip} relations (Figures~\ref{fig:star-triangle} and~\ref{fig:flip}), since all others can be obtained from them. Alternatively, one can prove all relations independently.

The first star-triangle relation from Figure~\ref{fig:star-triangle} represents the following integral identity
\begin{multline}\label{st-tr}
	\int_{z}^\infty dw \; (1 - e^{z - w})^{\imath \lambda - 1} \, \exp ( \imath \lambda w - e^{w - x} - e^{w - y} ) \\[6pt]
	= \Gamma(\imath \lambda) \; (e^x + e^y)^{-\imath \lambda} \, \exp\bigl( \imath \lambda (x + y) - e^{z - x} - e^{z - y} \bigr).
\end{multline}
The proof is straightforward: after the change of integration variable $w \to t$
\begin{align}
	t = (e^{-x} + e^{-y})(e^w - e^z), \qquad dt = (e^{-x} + e^{-y}) \, e^w \,  dw,
\end{align}
the left hand side acquires the form 
\beqq
 (e^{-x} + e^{-y})^{-\imath \lambda} \, \exp( - e^{z - x} - e^{z - y} ) \, \int_{0}^\infty dt \; t^{\imath \lambda - 1} \, e^{-t}.
\eeqq
The remaining integral gives gamma function $\Gamma(\imath \lambda)$. Hence, we arrived at the right hand side of star-triangle identity~\eqref{st-tr}. 
The second star-triangle relation can be proven in analogous way. Note that gamma function integral and, as a consequence, the ``star'' integrals are absolutely convergent under assumption $\Re(\imath \lambda) > 0$. 
\begin{figure}[t] \centering 
	\begin{tikzpicture}[thick, line cap = round]
		\def\lv{1.15}
		\def\l{1.1}
		\def\r{1.5pt}
		\def\h{0.7}
		\def\d{6.1}
		\def\s{4.2}
		
		\draw[dashed] (0, 0) node[yshift = 0.3cm] {$z$} -- node[xshift = -0.3cm, yshift = 0.1cm] {\color{spec2} $\lambda$} (0, -\lv) node[xshift = 0.25cm, yshift = 0.2cm] {\color{spec} $\lambda$};
		\draw (0, -\lv) -- (-\l, -\lv - \h) node[xshift = -0.15cm, yshift = -0.3cm] {$x$};
		\draw (0, -\lv) -- (\l, -\lv - \h) node[xshift = 0.15cm, yshift = -0.3cm] {$y$};
		\draw[fill = black] (0, -\lv) circle (\r) node[xshift = 3.1cm] {$= \qquad \Gamma(\imath \lambda)$};
		
		\draw (\d, 0) node[yshift = 0.3cm] {$z$} -- (\d - \l, -\lv - \h) node[xshift = -0.1cm, yshift = -0.3cm] {$x$} node[xshift = -0.25cm, yshift = 0.25cm] {\color{spec} $\lambda$};
		\draw (\d, 0) -- (\d + \l, -\lv - \h) node[xshift = 0.1cm, yshift = -0.3cm] {$y$} node[xshift = 0.22cm, yshift = 0.25cm] {\color{spec} $\lambda$};
		\draw[dashed] (\d - \l, -\lv - \h) -- node[yshift = -0.3cm] {\color{spec2} $\lambda$} (\d + \l, -\lv - \h);
		
		\draw (-\l, -\s) node[xshift = -0.2cm, yshift = 0.2cm] {$x$} -- (0, -\s - \h);
		\draw (\l, -\s) node[xshift = 0.2cm, yshift = 0.2cm] {$y$} -- (0, -\s - \h);
		\draw[dashed] (0, -\s - \h) -- node[xshift = -0.3cm] {\color{spec2} $\lambda$} (0, -\s - \h - \lv) node[yshift = -0.3cm] {$z$};
		\draw[fill = black] (0, -\s - \h) circle (\r) node[xshift = 0.4cm, yshift = -0.15cm] {\color{spec} $-\lambda$} node[xshift = 3.1cm] {$= \qquad \Gamma(\imath \lambda)$};
		
		\draw (\d - \l, -\s) node[xshift = -0.2cm, yshift = 0.2cm] {$x$} -- (\d, -\s - \h - \lv) node[yshift = -0.3cm] {$z$};
		\draw (\d + \l, -\s) node[xshift = 0.2cm, yshift = 0.2cm] {$y$} -- (\d, -\s - \h - \lv);
		\draw[dashed] (\d - \l, -\s) -- node[yshift = 0.3cm] {\color{spec2} $\lambda$} (\d + \l, -\s);
	\end{tikzpicture}
	\vspace{0.2cm}
	\caption{Star-triangle relations} \label{fig:star-triangle}
\end{figure}
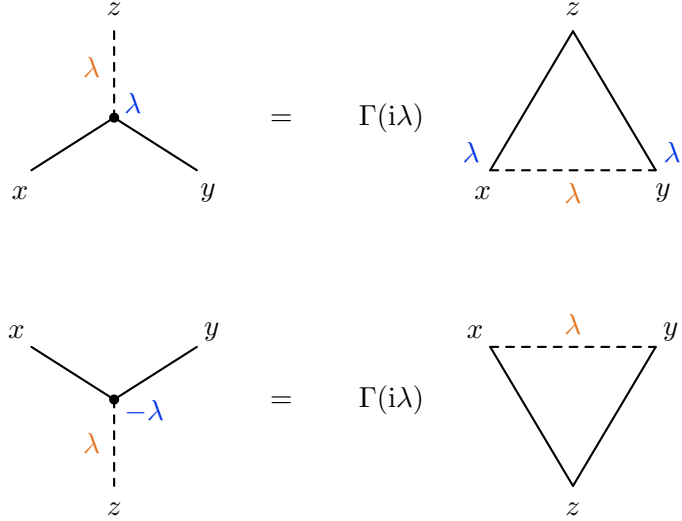

The chain relations from Figure~\ref{fig:chain} represent reductions of star-triangle relations as~\mbox{$z \to \pm \infty$}, when some lines disappear. For example, taking limit $z \to -\infty$ of the first star-triangle relation with the help of formulas
\begin{align}
	\lim_{z \to -\infty} (1 - e^{z - w})^{\imath \lambda - 1} \, \theta(w - z) = 1, \qquad \lim_{z \to -\infty} \exp(-e^{z - x}) = 1
\end{align}
we arrive at the first chain relation.
	\begin{figure}[t] \centering 
	
	\begin{tikzpicture}[thick, line cap = round]
		\def\lv{1.3}
		\def\l{1.1}
		\def\r{1.5pt}
		\def\h{0.7}
		\def\d{6.1}
		\def\s{4.2}
		
		\draw (0, -\lv) -- (-\l, -\lv - \h) node[xshift = -0.15cm, yshift = -0.3cm] {$x$};
		\draw (0, -\lv) -- (\l, -\lv - \h) node[xshift = 0.15cm, yshift = -0.3cm] {$y$};
		\draw[fill = black] (0, -\lv) circle (\r) node[xshift = 0.25cm, yshift = 0.2cm] {\color{spec} $\lambda$} node[xshift = 3.1cm, yshift = -0.4cm] {$= \qquad \Gamma(\imath \lambda)$};
		
		\draw[dashed] (\d - \l, -\lv - 0.5*\h) node[xshift = -0.1cm, yshift = -0.3cm] {$x$} node[xshift = -0.25cm, yshift = 0.25cm] {\color{spec} $\lambda$} -- node[yshift = -0.3cm] {\color{spec2} $\lambda$} (\d + \l, -\lv - 0.5*\h) node[xshift = 0.1cm, yshift = -0.3cm] {$y$} node[xshift = 0.22cm, yshift = 0.25cm] {\color{spec} $\lambda$};
		
		\draw(-\l, -\s) node[xshift = -0.2cm, yshift = 0.2cm] {$x$} -- (0, -\s - \h);
		\draw (\l, -\s) node[xshift = 0.2cm, yshift = 0.2cm] {$y$} -- (0, -\s - \h);
		\draw[fill = black] (0, -\s - \h) circle (\r) node[xshift = 0.4cm, yshift = -0.15cm] {\color{spec} $-\lambda$} node[xshift = 3.1cm, yshift = 0.4cm] {$= \qquad \Gamma(\imath \lambda)$};
		
		\draw[dashed] (\d - \l, -\s - 0.5*\h) node[xshift = -0.2cm, yshift = 0.2cm] {$x$} -- node[yshift = 0.3cm] {\color{spec2} $\lambda$} (\d + \l, -\s - 0.5*\h)  node[xshift = 0.2cm, yshift = 0.2cm] {$y$};
	\end{tikzpicture}
	\vspace{0.2cm}
	\caption{Chain relations} \label{fig:chain}
\end{figure}

 To justify interchange of limit and integration note that ``chain'' integral is analytic in $\lambda$ in the domain $\Re(\imath \lambda)>0$, see Figure~\ref{fig:diag-ex}. So, one can, at first, assume $\Re(\imath \lambda) \geq 1$, then apply dominated convergence theorem using the bound
\begin{align}
	\bigl| (1 - e^{z - w})^{\imath \lambda - 1} \, \theta(w - z) \bigr| \leq 1,
\end{align}
and analytically continue the answer at the end.

\begin{figure}[H] \centering 
	
	\vspace{0.2cm}
	\begin{tikzpicture}[thick, line cap = round]
		\def\l{1.1}
		\def\r{1.5pt}
		\def\h{1}
		\def\d{4.5}
		\def\s{-3.7}
		\def\t{-7}
		
		\draw (-\l, \h) -- (0, 0); 
		\draw (\l, \h) -- (0, 0);
		\draw (0, 0) -- (-\l, -\h);
		\draw (0, 0) -- (\l, -\h);
		\draw[dashed] (-\l, \h) -- node[yshift = 0.3cm] {\color{spec2} $\lambda$} (\l, \h);
		\draw[fill = black] (0, 0) circle (\r) node[xshift = 0.4cm] {\color{spec} $\lambda$} node[xshift = 2.2cm] {$=$};
		
		\draw (\d - \l, \h) -- (\d, 0); 
		\draw (\d + \l, \h) -- (\d, 0);
		\draw (\d, 0) -- (\d - \l, -\h);
		\draw (\d, 0) -- (\d + \l, -\h);
		\draw[dashed] (\d - \l, -\h) node[xshift = -0.25 cm, yshift = -0.1cm] {\color{spec} $\lambda$} -- node[yshift = -0.3cm] {\color{spec2} $\lambda$} (\d + \l, -\h) node[xshift = 0.2 cm, yshift = -0.1cm] {\color{spec} $\lambda$} ;
		\draw[fill = black] (\d, 0) circle (\r) node[xshift = 0.5cm] {\color{spec} $- \lambda$};
		
		\draw (-\l, \s + \h) -- (0, \s); 
		\draw (\l, \s + \h) -- (0, \s);
		\draw (0, \s) -- (\l, \s - \h);
		\draw[dashed] (-\l, \s + \h) -- node[yshift = 0.3cm] {\color{spec2} $\lambda$} (\l, \s + \h);
		\draw[fill = black] (0, \s) circle (\r) node[xshift = 0.4cm] {\color{spec} $\lambda$} node[xshift = 2.2cm] {$=$};
		
		\draw (\d - \l, \s + \h) -- (\d, \s); 
		\draw (\d + \l, \s + \h) -- (\d, \s);
		\draw (\d, \s) -- (\d + \l, \s - \h) node[xshift = 0.2 cm, yshift = -0.1cm] {\color{spec} $\lambda$};
		\draw[fill = black] (\d, \s) circle (\r) node[xshift = 0.5cm] {\color{spec} $- \lambda$};
		
		\draw (-\l, \t + \h) node[xshift = -0.1 cm, yshift = 0.25cm] {\color{spec} $-\lambda$} -- (0, \t) ; 
		\draw (0, \t) -- (-\l, \t - \h);
		\draw (0, \t) -- (\l, \t - \h);
		\draw[fill = black] (0, \t) circle (\r) node[xshift = 0.4cm] {\color{spec} $\lambda$} node[xshift = 2.2cm] {$=$};
		
		\draw (\d - \l, \t + \h) -- (\d, \t); 
		\draw (\d, \t) -- (\d - \l, \t-\h);
		\draw (\d, \t) -- (\d + \l, \t-\h);
		\draw[dashed] (\d - \l, \t-\h) node[xshift = -0.25 cm, yshift = -0.1cm] {\color{spec} $\lambda$} -- node[yshift = -0.3cm] {\color{spec2} $\lambda$} (\d + \l, \t-\h) node[xshift = 0.2 cm, yshift = -0.1cm] {\color{spec} $\lambda$} ;
		\draw[fill = black] (\d, \t) circle (\r) node[xshift = 0.5cm] {\color{spec} $- \lambda$};
		
	\end{tikzpicture}
	\vspace{0.1cm}
	\caption{Cross and reduced cross relations} \label{fig:cross} \vspace{0.2cm}
\end{figure}

The cross relation from Figure~\ref{fig:cross} can be proved in two steps using star-triangle transformations, as shown in Figure~\ref{fig:cross-proof}. Let us remark that in these pictures and in what follows we frequently omit labels of vertices.

\begin{figure}[h] \centering \vspace{0.2cm}
	\begin{tikzpicture}[thick, line cap = round]
		\def\l{1.1}
		\def\r{1.5pt}
		\def\h{1}
		\def\d{6}
		\def\a{4.8}
		\def\s{-4}
		\def\t{-7.5}
		
		\draw (-\l, \h) -- (0, 0); 
		\draw (\l, \h) -- (0, 0);
		\draw (0, 0) -- (-\l, -\h);
		\draw (0, 0) -- (\l, -\h);
		\draw[dashed] (-\l, \h) -- node[yshift = 0.3cm] {\color{spec2} $\lambda$} (\l, \h);
		\draw[fill = black] (0, 0) circle (\r) node[xshift = 0.4cm] {\color{spec} $\lambda$} node[xshift = 2.2cm] {$=$};
		
		\draw (\d - \l, 1.65*\h) -- (\d, 0.65*\h); 
		\draw (\d + \l, 1.65*\h) -- (\d, 0.65*\h);
		\draw (\d, -0.65*\h) -- (\d - \l, -1.65*\h);
		\draw (\d, -0.65*\h) -- (\d + \l, -1.65*\h);
		\draw[dashed] (\d, 0.65*\h) -- node[xshift = -1.4cm] {$\dfrac{1}{\Gamma(\imath \lambda)}$ \hspace{0.6cm} \color{spec2} $\lambda$} node[xshift = 2.2cm] {$=$} (\d, -0.65*\h);
		\draw[fill = black] (\d, 0.65*\h) circle (\r) node[xshift = 0.5cm] {\color{spec} $- \lambda$};
		\draw[fill = black] (\d, -0.65*\h) circle (\r) node[xshift = 0.35cm, yshift = 0.1cm] {\color{spec} $\lambda$};
		
		\draw (\a + \d - \l, \h) -- (\a + \d, 0); 
		\draw (\a + \d + \l, \h) -- (\a + \d, 0);
		\draw (\a + \d, 0) -- (\a + \d - \l, -\h);
		\draw (\a + \d, 0) -- (\a + \d + \l, -\h);
		\draw[dashed] (\a + \d - \l, -\h) node[xshift = -0.25 cm, yshift = -0.1cm] {\color{spec} $\lambda$} -- node[yshift = -0.3cm] {\color{spec2} $\lambda$} (\a + \d + \l, -\h) node[xshift = 0.2 cm, yshift = -0.1cm] {\color{spec} $\lambda$} ;
		\draw[fill = black] (\a + \d, 0) circle (\r) node[xshift = 0.5cm] {\color{spec} $- \lambda$};
		
	\end{tikzpicture}
	\vspace{0.2cm}
	\caption{Proof of cross relation using star-triangle relations} \label{fig:cross-proof} \vspace{0.2cm}
\end{figure}
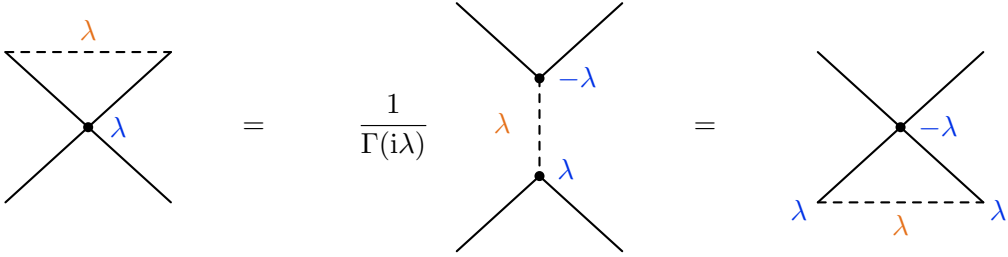

Also note that both sides of cross relation are analytic in $\lambda \in \mathbb{C}$. Thus, to prove it using star-triangle relations we assume $\Re (\imath \lambda) > 0$ and analytically continue the answer at the end.

The two other relations from Figure~\ref{fig:cross} can be obtained from the first one in the limit when one of non-integrated variables tends to $\pm \infty$. 

\begin{figure}[H] \centering \vspace{0.4cm}
	
	\begin{tikzpicture}[thick, line cap = round]
		\def\l{1.4}
		\def\r{1.5pt}
		\def\d{7}
		\def\hh{1.45}
		\def\t{0.03}
		
		\draw[line width = 0.6pt] (\t, \hh) -- node[xshift = -0.35cm, yshift = 0.15cm] {\color{spec} \footnotesize $\lambda$} (\t, 0);
		\draw[line width = 0.6pt] (-\t, \hh) --  (-\t, 0);
		
		\draw[line width = 0.6pt] (\t + 2*\l, \hh) -- node[xshift = 0.25cm, yshift = 0.1cm] {\color{spec} \footnotesize $\rho$} (\t + 2*\l, 0);
		\draw[line width = 0.6pt] (-\t + 2*\l, \hh) --  (-\t + 2*\l, 0);
		
		\draw[dashed] (0, 0) node[xshift = -0.5cm, yshift = -0.1cm] {\color{spec} \footnotesize  $-2\lambda$} -- node[yshift = 0.3cm] {\color{spec2} \footnotesize  $-\lambda - \rho$} (2*\l, 0) node[xshift = 0.5cm, yshift = -0.1cm] {\color{spec} \footnotesize  $-2\rho$} node[xshift = -0.1cm, yshift = -0.3cm] {$y$} node[xshift = 2.1cm] {$=$};
		\draw[dashed] (0, 0) -- node[xshift = 0.6cm] {\color{spec2} \footnotesize  $\lambda - \rho$} (0, -\hh) node[xshift = 0.25cm, yshift = -0.2cm] {$x$} node[xshift = -0.6cm, yshift = -0.15cm] {\color{spec} \footnotesize  $\lambda - \rho$};
		
		\draw[fill = black] (0, 0) circle (\r);

		\draw[line width = 0.6pt] (\d + \t, \hh) -- node[xshift = -0.3cm, yshift = 0.1cm] {\color{spec} \footnotesize $\rho$} (\d + \t, 0);
		\draw[line width = 0.6pt] (\d -\t, \hh) --  (\d -\t, 0);
		
		\draw[line width = 0.6pt] (\d + \t + 2*\l, \hh) -- node[xshift = 0.25cm, yshift = 0.15cm] {\color{spec} \footnotesize $\lambda$} (\d + \t + 2*\l, 0);
		\draw[line width = 0.6pt] (\d -\t + 2*\l, \hh) --  (\d -\t + 2*\l, 0);
		
		\draw[dashed] (\d, 0) node[xshift = -0.5cm, yshift = -0.1cm] {\color{spec} \footnotesize $-2\rho$} node[xshift = 0.1cm, yshift = -0.3cm] {$x$} -- node[yshift = 0.3cm] {\color{spec2} \footnotesize $-\lambda - \rho$} (\d + 2*\l, 0) node[xshift = 0.5cm, yshift = -0.1cm] {\color{spec} \footnotesize $-2\lambda$} ;
		\draw[dashed] (2*\l + \d, 0) -- node[xshift = -0.6cm] {\color{spec2} \footnotesize $\lambda - \rho$} (2*\l + \d, -\hh) node[xshift = -0.25cm, yshift = -0.2cm] {$y$} node[xshift = 0.6cm, yshift = -0.15cm] {\color{spec} \footnotesize $\lambda - \rho$};
		
		\draw[fill = black] (2*\l + \d, 0) circle (\r);
		
	\end{tikzpicture}
	\vspace{-0.2cm}
	\caption{Flip relation} \label{fig:flip} 	
	\vspace{0.2cm}
\end{figure}
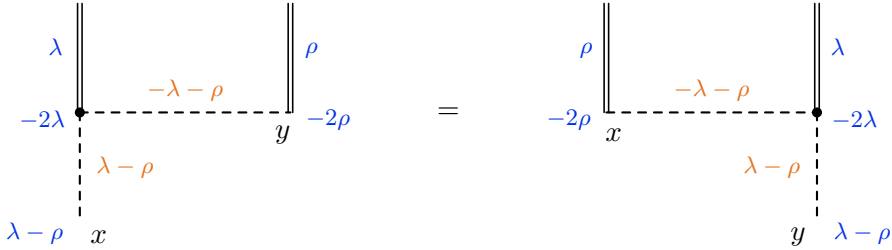

Now consider flip relation from Figure~\ref{fig:flip}. Vertical lines come with step functions, hence, diagram from the left represents expression
\begin{align}
	\frac{(2\beta)^{\imath (\lambda + \rho)}}{\Gamma(g - \imath \lambda) \, \Gamma(g - \imath \rho)} \, F(x, y) \, \theta(x - \ln \beta) \, \theta(y - \ln \beta),
\end{align}
where we denoted
\begin{align} \label{Fflip}
	\begin{aligned}
		& F(x, y) = e^{\imath (\lambda - \rho) x -2\imath \rho y} \, (1 + \beta e^{-y})^{- \imath \rho - g} \; (1 - \beta e^{-y})^{- \imath \rho + g - 1} \\[6pt]
		& \times \int_{\ln \beta}^x dz \;\, e^{-2\imath \lambda z} \, (1 + \beta e^{-z})^{- \imath \lambda - g} \; (1 - \beta e^{-z})^{- \imath \lambda +g - 1} \, (e^z + e^y)^{\imath(\lambda + \rho)} \; (1 - e^{z - x})^{\imath (\lambda - \rho) - 1}.
	\end{aligned}
\end{align}
The flip relation just states that this function is symmetric $F(x, y) = F(y, x)$, the proof is given in Appendix~\ref{app:flip}. Let us note that the above integral is absolutely convergent under condition~$\Re(\imath \rho) < \Re (\imath \lambda) < g$.

For calculations we actually need not flip relation itself, but the transformation from Figure~\ref{fig:flip-tr}. The latter follows from the sequence of flip and star-triangle relations pictured in Figure~\ref{fig:flip-tr-proof}. On the last step of this sequence two dashed lines cancel each other.

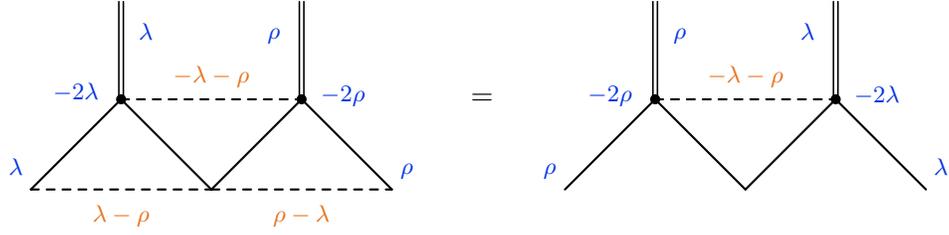
\begin{figure}[h] \centering 
	\begin{tikzpicture}[thick, line cap = round]
		\def\l{1.2}
		\def\r{1.5pt}
		\def\h{1.2}
		\def\d{7.1}
		\def\hh{1.3}
		\def\t{0.03}
		
		\draw[line width = 0.6pt] (\t, \hh) -- node[xshift = 0.3cm, yshift = 0.25cm] {\color{spec} \footnotesize $\lambda$} (\t, 0);
		\draw[line width = 0.6pt] (-\t, \hh) --  (-\t, 0);
		
		\draw[line width = 0.6pt] (\t + 2*\l, \hh) -- node[xshift = -0.4cm, yshift = 0.2cm] {\color{spec} \footnotesize $\rho$} (\t + 2*\l, 0);
		\draw[line width = 0.6pt] (-\t + 2*\l, \hh) --  (-\t + 2*\l, 0);
		
		\draw (0, 0) -- (\l, -\h);
		\draw (0, 0) -- (-\l, -\h) node[xshift = -0.2cm, yshift = 0.3cm] {\color{spec} \footnotesize  $\lambda$};
		\draw (2*\l, 0) -- (\l, -\h);
		\draw (2*\l, 0) -- (3*\l, -\h) node[xshift = 0.2cm, yshift = 0.25cm] {\color{spec} \footnotesize  $\rho$};
		\draw[dashed] (0, 0) -- node[yshift = 0.3cm] {\color{spec2} \footnotesize  $-\lambda - \rho$} (2*\l, 0);
		\draw[dashed] (-\l, -\h) -- node[yshift = -0.35cm] {\color{spec2} \footnotesize  $\lambda - \rho$} (\l, -\h);
		\draw[dashed] (\l, -\h) -- node[yshift = -0.35cm] {\color{spec2} \footnotesize  $\rho - \lambda$} (3*\l, -\h);
		
		\draw[fill = black] (0, 0) circle (\r) node[xshift = -0.6cm, yshift = 0.1cm] {\color{spec} \footnotesize  $- 2 \lambda$};
		\draw[fill = black] (2*\l, 0) circle (\r) node[xshift = 0.55cm, yshift = 0.05cm] {\color{spec} \footnotesize  $-2\rho$} node[xshift = 2.4cm] {$=$};

		\draw[line width = 0.6pt] (\d + \t, \hh) -- node[xshift = 0.3cm, yshift = 0.2cm] {\color{spec} \footnotesize  $\rho$} (\d + \t, 0);
		\draw[line width = 0.6pt] (\d-\t, \hh) --  (\d-\t, 0);
		
		\draw[line width = 0.6pt] (\d + \t + 2*\l, \hh) -- node[xshift = -0.4cm, yshift = 0.25cm] {\color{spec} \footnotesize  $\lambda$} (\d + \t + 2*\l, 0);
		\draw[line width = 0.6pt] (\d-\t + 2*\l, \hh) --  (\d-\t + 2*\l, 0);
		
		\draw (\d, 0) -- (\d + \l, -\h);
		\draw (\d, 0) -- (\d-\l, -\h) node[xshift = -0.2cm, yshift = 0.25cm] {\color{spec} \footnotesize  $\rho$};
		\draw (\d + 2*\l, 0) -- (\d + \l, -\h);
		\draw (\d + 2*\l, 0) -- (\d + 3*\l, -\h) node[xshift = 0.2cm, yshift = 0.3cm] {\color{spec} \footnotesize  $\lambda$};
		\draw[dashed] (\d, 0) -- node[yshift = 0.3cm] {\color{spec2} \footnotesize $- \lambda - \rho$} (\d + 2*\l, 0);
		
		\draw[fill = black] (\d, 0) circle (\r) node[xshift = -0.6cm, yshift = 0.05cm] {\color{spec} \footnotesize  $- 2 \rho$};
		\draw[fill = black] (\d + 2*\l, 0) circle (\r) node[xshift = 0.55cm, yshift = 0.05cm] {\color{spec} \footnotesize  $-2\lambda$};
	\end{tikzpicture}
	\vspace{0cm}
	\caption{Twin peaks relation} \label{fig:flip-tr} 
\end{figure}

Diagrams from Figure~\ref{fig:flip-tr} represent absolutely convergent integrals under condition $\Re(\imath \lambda)$, $\Re(\imath \rho) < g$, which we impose because of double lines. To perform all steps from Figure~\ref{fig:flip-tr-proof} we should also assume $\Re(\imath \rho) < \Re(\imath \lambda)$. After calculations this assumption can be removed by analytic continuation.

\begin{figure}[H] \centering \vspace{0.2cm}
	\begin{tikzpicture}[thick, line cap = round]
		\def\l{1.3}
		\def\r{1.5pt}
		\def\h{2.1}
		\def\a{-6}
		\def\d{9}
		\def\hh{1.4}
		\def\t{0.03}
		\def\b{1.18}
		
		\draw[line width = 0.6pt] (\t, \hh) -- node[xshift = 0.3cm, yshift = 0.25cm] {\color{spec} \footnotesize $\lambda$} (\t, 0);
		\draw[line width = 0.6pt] (-\t, \hh) --  (-\t, 0);
		
		\draw[line width = 0.6pt] (\t + 2*\l, \hh) -- node[xshift = -0.4cm, yshift = 0.2cm] {\color{spec} \footnotesize $\rho$} (\t + 2*\l, 0);
		\draw[line width = 0.6pt] (-\t + 2*\l, \hh) --  (-\t + 2*\l, 0);
		
		\draw (0, 0) -- (\l, -\h);
		\draw (0, 0) -- (-\l, -\h) node[xshift = -0.25cm, yshift = -0.1cm] {\color{spec} \footnotesize $\lambda$};
		\draw (2*\l, 0) -- (\l, -\h);
		\draw (2*\l, 0) -- (3*\l, -\h) node[xshift = 0.25cm, yshift = -0.15cm] {\color{spec} \footnotesize $\rho$};
		\draw[dashed] (0, 0) -- node[yshift = 0.3cm] {\color{spec2} \footnotesize $-\lambda - \rho$} (2*\l, 0);
		\draw[dashed] (-\l, -\h) -- node[yshift = -0.35cm] {\color{spec2} \footnotesize $\lambda - \rho$} (\l, -\h);
		\draw[dashed] (\l, -\h) -- node[yshift = -0.35cm] {\color{spec2} \footnotesize $\rho - \lambda$} (3*\l, -\h);
		
		\draw[fill = black] (0, 0) circle (\r) node[xshift = -0.6cm, yshift = 0.1cm] {\color{spec} \footnotesize $- 2 \lambda$};
		\draw[fill = black] (2*\l, 0) circle (\r) node[xshift = 0.55cm, yshift = 0.05cm] {\color{spec} \footnotesize $-2\rho$};

		\draw[arrow, gray!60] (\l, -\b-\h) -- (\l, -1.45*\b-\h);

		\draw[line width = 0.6pt] (\t, \a + \hh) -- node[xshift = 0.3cm, yshift = 0.25cm] {\color{spec} \footnotesize $\lambda$} (\t, \a);
		\draw[line width = 0.6pt] (-\t, \a +  \hh) --  (-\t, \a);
		
		\draw[line width = 0.6pt] (\t + 2*\l, \a + \hh) -- node[xshift = -0.4cm, yshift = 0.2cm] {\color{spec} \footnotesize $\rho$} (\t + 2*\l, \a);
		\draw[line width = 0.6pt] (-\t + 2*\l, \a + \hh) --  (-\t + 2*\l, \a);
		
		\draw (0, \a - 0.6*\h) -- (\l, \a-\h) node[xshift = -0.45cm, yshift = -0.25cm] {\color{spec} \footnotesize $\rho - \lambda$};
		\draw (0, \a - 0.6*\h) -- (-\l, \a-\h) node[xshift = -0.25cm, yshift = -0.15cm] {\color{spec} \footnotesize $\rho$};
		\draw[dashed] (0, \a) -- node[right] {\color{spec2} \footnotesize $\lambda - \rho$} (0, \a - 0.6*\h) node[xshift = -0.6cm, yshift = 0.15cm] {\color{spec} \footnotesize $\lambda - \rho$};
		\draw (2*\l, \a) -- (\l, \a-\h);
		\draw (2*\l, \a) -- (3*\l, \a-\h) node[xshift = 0.25cm, yshift = -0.15cm] {\color{spec} \footnotesize $\rho$};
		\draw[dashed] (0, \a) -- node[yshift = 0.3cm] {\color{spec2} \footnotesize $-\lambda - \rho$} (2*\l, \a);
		\draw[dashed] (\l, \a-\h) -- node[yshift = -0.35cm] {\color{spec2} \footnotesize $\rho - \lambda$} (3*\l, \a-\h);
		
		\draw[fill = black] (0, \a) circle (\r) node[xshift = -0.6cm, yshift = 0.1cm] {\color{spec} \footnotesize $- 2 \lambda$};
		\draw[fill = black] (2*\l, \a) circle (\r) node[xshift = 0.55cm, yshift = 0.05cm] {\color{spec} \footnotesize $-2\rho$};
		\draw[fill = black] (0, \a - 0.6*\h) circle (\r);

		\draw[arrow, gray!60] (0.62*\d, 1.05*\a) -- (0.68*\d, 1.05*\a);

		\draw[line width = 0.6pt] (\d + \t, \a + \hh) -- node[xshift = 0.3cm, yshift = 0.2cm] {\color{spec} \footnotesize $\rho$} (\d + \t, \a);
		\draw[line width = 0.6pt] (\d -\t, \a +  \hh) --  (\d -\t, \a);
		
		\draw[line width = 0.6pt] (\d + \t + 2*\l, \a + \hh) -- node[xshift = -0.4cm, yshift = 0.25cm] {\color{spec} \footnotesize $\lambda$} (\d + \t + 2*\l, \a);
		\draw[line width = 0.6pt] (\d -\t + 2*\l, \a + \hh) --  (\d -\t + 2*\l, \a);
		
		\draw (\d, \a) -- (\d + \l, \a-\h) node[xshift = -0.45cm, yshift = -0.25cm] {\color{spec} \footnotesize $\rho - \lambda$};
		\draw (\d, \a) -- (\d -\l, \a-\h) node[xshift = -0.25cm, yshift = -0.15cm] {\color{spec} \footnotesize $\rho$};
		\draw (\d + 2*\l, \a - 0.6*\h) -- (\d + \l, \a-\h);
		\draw (\d + 2*\l, \a - 0.6*\h) -- (\d + 3*\l, \a-\h) node[xshift = 0.25cm, yshift = -0.15cm] {\color{spec} \footnotesize $\rho$};
		\draw[dashed] (\d + 2*\l, \a) -- node[xshift = -0.6cm] {\color{spec2} \footnotesize $\lambda - \rho$} (\d + 2*\l, \a - 0.6*\h) node[xshift = 0.6cm, yshift = 0.15cm] {\color{spec} \footnotesize $\lambda - \rho$};
		\draw[dashed] (\d, \a) -- node[yshift = 0.3cm] {\color{spec2} \footnotesize $-\lambda - \rho$} (\d + 2*\l, \a);
		\draw[dashed] (\d + \l, \a-\h) -- node[yshift = -0.35cm] {\color{spec2} \footnotesize $\rho - \lambda$} (\d + 3*\l, \a-\h);
		
		\draw[fill = black] (\d, \a) circle (\r) node[xshift = -0.6cm, yshift = 0.05cm] {\color{spec} \footnotesize $- 2 \rho$};
		\draw[fill = black] (\d + 2*\l, \a) circle (\r) node[xshift = 0.55cm, yshift = 0.05cm] {\color{spec} \footnotesize $-2\lambda$};
		\draw[fill = black] (\d + 2*\l, \a - 0.6*\h) circle (\r);

		\draw[arrow, gray!60] (\d + \l, -1.45*\b-\h) -- (\d + \l, -\b-\h);

		\draw[line width = 0.6pt] (\d + \t, \hh) -- node[xshift = 0.3cm, yshift = 0.2cm] {\color{spec} \footnotesize $\rho$} (\d + \t, 0);
		\draw[line width = 0.6pt] (\d-\t, \hh) --  (\d-\t, 0);
		
		\draw[line width = 0.6pt] (\d + \t + 2*\l, \hh) -- node[xshift = -0.4cm, yshift = 0.25cm] {\color{spec} \footnotesize $\lambda$} (\d + \t + 2*\l, 0);
		\draw[line width = 0.6pt] (\d-\t + 2*\l, \hh) --  (\d-\t + 2*\l, 0);
		
		\draw (\d, 0) -- (\d + \l, -\h);
		\draw (\d, 0) -- (\d-\l, -\h) node[xshift = -0.25cm, yshift = -0.15cm] {\color{spec} \footnotesize $\rho$};
		\draw (\d + 2*\l, 0) -- (\d + \l, -\h);
		\draw (\d + 2*\l, 0) -- (\d + 3*\l, -\h) node[xshift = 0.2cm, yshift = -0.1cm] {\color{spec} \footnotesize $\lambda$};
		\draw[dashed] (\d, 0) -- node[yshift = 0.3cm] {\color{spec2} \footnotesize $- \lambda - \rho$} (\d + 2*\l, 0);
		
		\draw[fill = black] (\d, 0) circle (\r) node[xshift = -0.6cm, yshift = 0.05cm] {\color{spec} \footnotesize $- 2 \rho$};
		\draw[fill = black] (\d + 2*\l, 0) circle (\r) node[xshift = 0.55cm, yshift = 0.05cm] {\color{spec} \footnotesize $-2\lambda$};
	\end{tikzpicture}
	\vspace{0.2cm}
	\caption{Sequence of flip and star-triangle relations} \label{fig:flip-tr-proof}  \vspace{0.2cm}
\end{figure}
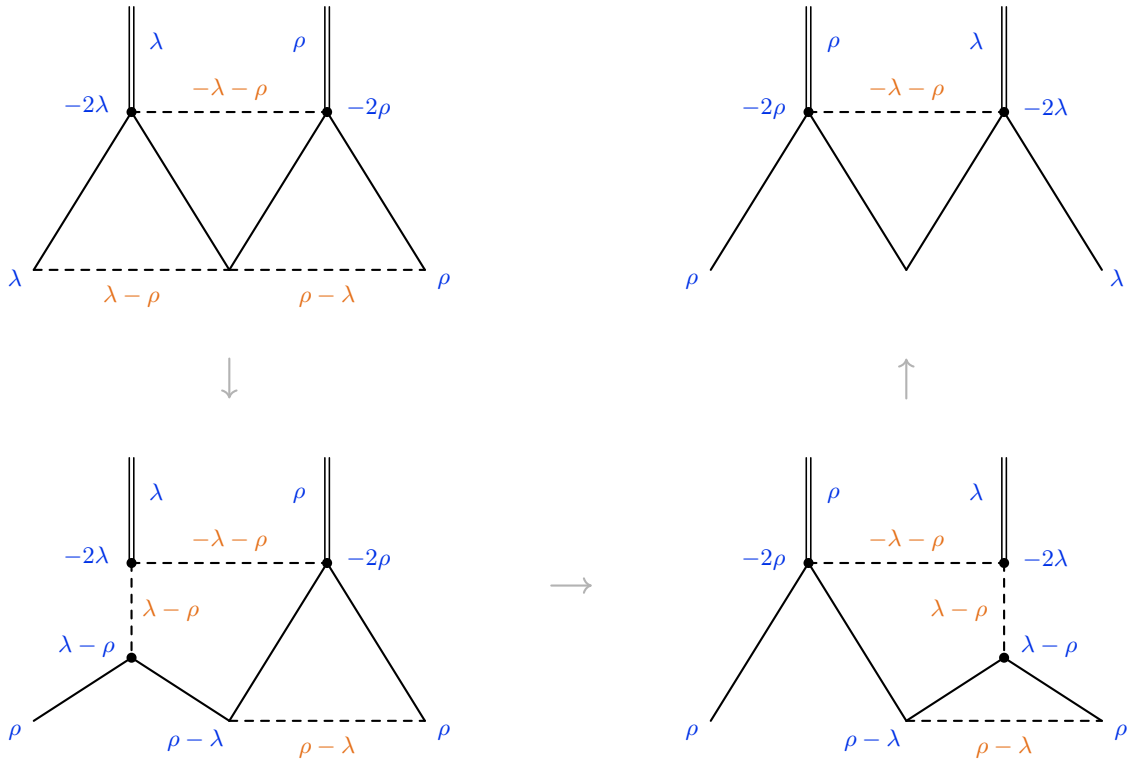

\begin{figure}[H] \centering
	\begin{tikzpicture}[thick, line cap = round]
		\def\l{1.4}
		\def\r{1.5pt}
		\def\d{8}
		
		\draw[dashed] (0, 0) node[left] {$x$} -- node[yshift = 0.3cm] {\color{spec2} $\lambda$} (\l, 0) node[right] {$y \;\; = \; (e^x + e^y)^{- \imath \lambda}$};
		\draw[dashed] (\d + \l, 0) node[right] {$\;\;= \;\; 1$} arc (75:105:3.85) node[xshift = 1cm, yshift = 0.5cm] {\color{spec2} $\lambda$};
		\draw[dashed] (\d + \l, 0) arc (-75:-105:3.85) node[xshift = 1cm, yshift = -0.5cm] {\color{spec2} $-\lambda$};
	\end{tikzpicture}
	\caption{Cancellation of lines} \label{fig:anih}
\end{figure}

\begin{figure}[H] \centering \vspace{0.2cm}
	\begin{tikzpicture}[thick, line cap = round]
		\def\l{1.4}
		\def\r{1.5pt}
		\def\d{7.6}
		\def\hh{1.4}
		\def\t{0.03}
		
		\draw[line width = 0.6pt] (\t, \hh) -- node[xshift = -0.35cm, yshift = 0.2cm] {\color{spec} \footnotesize \footnotesize $\lambda$} (\t, 0);
		\draw[line width = 0.6pt] (-\t, \hh) --  (-\t, 0);
		
		\draw[dashed] (0, 0) node[xshift = -0.5cm, yshift = -0.1cm] {\color{spec} \footnotesize $-2\lambda$} -- node[yshift = 0.3cm] {\color{spec2} \footnotesize $-\lambda - \rho$} (2*\l, 0) node[xshift = 0.5cm, yshift = -0.2cm] {\small $\ln \beta$} node[xshift = 2.5cm] {$=$ \small $\quad\; \Gamma(\imath \lambda - \imath \rho)$};
		\draw[dashed] (0, 0) -- node[xshift = 0.6cm] {\color{spec2} \footnotesize $\lambda - \rho$} (0, -\hh) node[xshift = 0.25cm, yshift = -0.2cm] {$x$} node[xshift = -0.6cm, yshift = -0.15cm] {\color{spec} \footnotesize $\lambda - \rho$};
		
		\draw[fill = black] (0, 0) circle (\r);

		\draw[line width = 0.6pt] (\d + \t, \hh) -- node[xshift = -0.35cm, yshift = 0.15cm] {\color{spec} \footnotesize $\rho$} (\d + \t, 0);
		\draw[line width = 0.6pt] (\d -\t, \hh) --  (\d -\t, 0);
		
		\draw[dashed] (\d, 0) node[xshift = -0.5cm, yshift = -0.1cm] {\color{spec} \footnotesize $-2\rho$} node[xshift = 0.1cm, yshift = -0.3cm] {$x$} -- node[yshift = 0.3cm] {\color{spec2} \footnotesize $-\lambda - \rho$} (\d + 2*\l, 0) node[xshift = 0.5cm, yshift = -0.2cm] {\small $\ln \beta$} ;
		
	\end{tikzpicture}
	\vspace{0cm}
	\caption{Reduced flip relation} \label{fig:flip-lim} 	 
\end{figure}

In the limit $y \to \ln \beta^+$ flip relation reduces to the integral identity pictured in Figure~\ref{fig:flip-lim}. This limiting identity is proven in Appendix~\ref{app:flip-lim}. The further limit $x \to +\infty$ is straightforward and gives the relation depicted in Figure~\ref{fig:flip-lim2}.

\begin{figure}[H] \centering \vspace{0.2cm}
	\begin{tikzpicture}[thick, line cap = round]
		\def\l{1.4}
		\def\r{1.5pt}
		\def\d{7.6}
		\def\hh{1.4}
		\def\t{0.03}
		
		\draw[line width = 0.6pt] (\t, \hh) -- node[xshift = -0.35cm, yshift = 0.2cm] {\color{spec} \footnotesize \footnotesize $\lambda$} (\t, 0);
		\draw[line width = 0.6pt] (-\t, \hh) --  (-\t, 0);
		
		\draw[dashed] (0, 0) node[xshift = -0.5cm, yshift = -0.1cm] {\color{spec} \footnotesize $-2\lambda$} -- node[yshift = 0.3cm] {\color{spec2} \footnotesize $-\lambda - \rho$} (2*\l, 0) node[xshift = 0.5cm, yshift = -0.2cm] {\small $\ln \beta$} node[xshift = 3cm, yshift = 0.5cm] {$=$ \small $\quad\; (2\beta)^{\imath \rho} \, \dfrac{\Gamma(\imath \lambda - \imath \rho)}{\Gamma(g - \imath \rho)}$};
		
		\draw[fill = black] (0, 0) circle (\r);
		
	\end{tikzpicture}
	\vspace{0cm}
	\caption{Relation from Figure~\ref{fig:flip-lim} in the limit $x \to +\infty$} \label{fig:flip-lim2} 	 
\end{figure}

A combination of reduced flip and star-triangle relations gives another useful transformation shown in Figure~\ref{fig:flip-lim-tr}. Due to presence of double lines it holds under assumption $\Re(\imath \lambda), \Re(\imath \rho) < g$. Moreover, in the case $\rho = - \lambda$ one of the dashed lines disappears and the identity simplifies, see Figure~\ref{fig:flip-lim-tr2}.

\begin{figure}[H] \centering \vspace{0.2cm}
	\begin{tikzpicture}[thick, line cap = round]
		\def\l{1.2}
		\def\r{1.5pt}
		\def\h{1.2}
		\def\d{7}
		\def\hh{1.3}
		\def\t{0.03}
		
		\draw[line width = 0.6pt] (\t, \hh) -- node[xshift = 0.3cm, yshift = 0.25cm] {\color{spec} \footnotesize $\lambda$} (\t, 0);
		\draw[line width = 0.6pt] (-\t, \hh) --  (-\t, 0);

		\draw (0, 0) -- (\l, -\h) node[xshift = 0.2cm, yshift = 0.25cm] {\color{spec} \footnotesize  $-\rho$};
		\draw (0, 0) -- (-\l, -\h) node[xshift = -0.2cm, yshift = 0.3cm] {\color{spec} \footnotesize  $\lambda$};
		\draw[dashed] (0, 0) -- node[yshift = 0.3cm] {\color{spec2} \footnotesize  $-\lambda - \rho$} (2*\l, 0) node[xshift = 0.4cm, yshift = -0.25cm] {\small $\ln \beta$} node[xshift = 2cm] {$=$};
		\draw[dashed] (-\l, -\h) -- node[yshift = -0.35cm] {\color{spec2} \footnotesize  $\lambda - \rho$} (\l, -\h);
		
		\draw[fill = black] (0, 0) circle (\r) node[xshift = -0.6cm, yshift = 0.1cm] {\color{spec} \footnotesize  $- 2 \lambda$};

		\draw[line width = 0.6pt] (\d + \t, \hh) -- node[xshift = 0.3cm, yshift = 0.2cm] {\color{spec} \footnotesize  $\rho$} (\d + \t, 0);
		\draw[line width = 0.6pt] (\d-\t, \hh) --  (\d-\t, 0);
		
		\draw (\d, 0) -- (\d + \l, -\h) node[xshift = 0.2cm, yshift = 0.3cm] {\color{spec} \footnotesize  $-\lambda$};
		\draw (\d, 0) -- (\d-\l, -\h) node[xshift = -0.2cm, yshift = 0.25cm] {\color{spec} \footnotesize  $\rho$};
		\draw[dashed] (\d, 0) -- node[yshift = 0.3cm] {\color{spec2} \footnotesize $- \lambda - \rho$} (\d + 2*\l, 0) node[xshift = 0.4cm, yshift = -0.25cm] {\small $\ln \beta$};
		
		\draw[fill = black] (\d, 0) circle (\r) node[xshift = -0.6cm, yshift = 0.05cm] {\color{spec} \footnotesize  $- 2 \rho$};
	\end{tikzpicture}
	\vspace{0cm}
	\caption{Combination of reduced flip and star-triangle relations} \label{fig:flip-lim-tr} 	 
\end{figure}

\begin{figure}[H] \centering \vspace{0.2cm}
	\begin{tikzpicture}[thick, line cap = round]
		\def\l{1.2}
		\def\r{1.5pt}
		\def\h{1.2}
		\def\d{5}
		\def\hh{1.3}
		\def\t{0.03}
		
		\draw[line width = 0.6pt] (\t, \hh) -- node[xshift = 0.3cm, yshift = 0.2cm] {\color{spec} \footnotesize $\lambda$} (\t, 0);
		\draw[line width = 0.6pt] (-\t, \hh) --  (-\t, 0);

		\draw (0, 0) -- (\l, -\h) node[xshift = 0.2cm, yshift = 0.3cm] {\color{spec} \footnotesize  $\lambda$};
		\draw (0, 0) -- (-\l, -\h) node[xshift = -0.2cm, yshift = 0.3cm] {\color{spec} \footnotesize  $\lambda$};
		\draw[dashed] (-\l, -\h) -- node[yshift = -0.35cm] {\color{spec2} \footnotesize  $2\lambda$} (\l, -\h);
		
		\draw[fill = black] (0, 0) circle (\r) node[xshift = -0.6cm, yshift = 0.1cm] {\color{spec} \footnotesize  $- 2 \lambda$}  node[xshift = 2.4cm] {$ = $};

		\draw[line width = 0.6pt] (\d + \t, \hh) -- node[xshift = 0.4cm, yshift = 0.2cm] {\color{spec} \footnotesize  $-\lambda$} (\d + \t, 0);
		\draw[line width = 0.6pt] (\d-\t, \hh) --  (\d-\t, 0);
		
		\draw (\d, 0) -- (\d + \l, -\h) node[xshift = 0.2cm, yshift = 0.3cm] {\color{spec} \footnotesize  $-\lambda$};
		\draw (\d, 0) -- (\d-\l, -\h) node[xshift = -0.35cm, yshift = 0.3cm] {\color{spec} \footnotesize  $-\lambda$};
		
		\draw[fill = black] (\d, 0) circle (\r) node[xshift = -0.5cm, yshift = 0.1cm] {\color{spec} \footnotesize  $2 \lambda$};
	\end{tikzpicture}
	\vspace{0cm}
	\caption{The case $\rho = - \lambda$} \label{fig:flip-lim-tr2} 	 
\end{figure}

\subsection{Raising operator and reflection symmetry} \label{sec:rais-op}
The raising operator $\LLambda_n(\lambda)$ for $BC$ Toda chain is an integral operator acting on functions~$\phi(\bm{x}_{n - 1})$
\begin{align}
	\bigl[ \LLambda_n(\lambda) \, \phi \bigr] (\bm{x}_n) = \int_{\mathbb{R}^{n - 1}} d\bm{y}_{n - 1} \; \LLambda_\lambda(\bm{x}_n | \bm{y}_{n - 1}) \, \phi(\bm{y}_{n - 1})
\end{align}
with the kernel
\begin{align} \label{Lker}
	\begin{aligned}
		& \LLambda_\lambda(\bm{x}_n | \bm{y}_{n - 1}) = \frac{(2\beta)^{\imath \lambda}}{\Gamma(g - \imath \lambda)} \int_{\mathbb{R}^n} d\bm{z}_n \;
		(1 + \beta e^{-z_1})^{- \imath \lambda - g} \; (1 - \beta e^{-z_1})^{- \imath \lambda + g - 1}\, \theta(z_1 - \ln \beta) \\[6pt]
		& \quad \times  \exp \biggl( \imath \lambda \bigl( \underline{\bm{x}}_n + \underline{\bm{y}}_{n - 1} - 2 \underline{\bm{z}}_n \bigr) - \sum_{j = 1}^{n - 1} (e^{z_j - x_j} + e^{z_j - y_j} + e^{x_j - z_{j + 1}} + e^{y_j - z_{j + 1}} )  - e^{z_n - x_n} \biggr).
	\end{aligned}
\end{align}
It is easy to understand the structure of this integral by looking at the corresponding diagrams in Figure~\ref{fig:Lker}. For the parameter of the raising operator we always assume $\lambda \in \mathbb{R}$, and the kernel is clearly absolutely convergent in this case (since $g > 0$). 

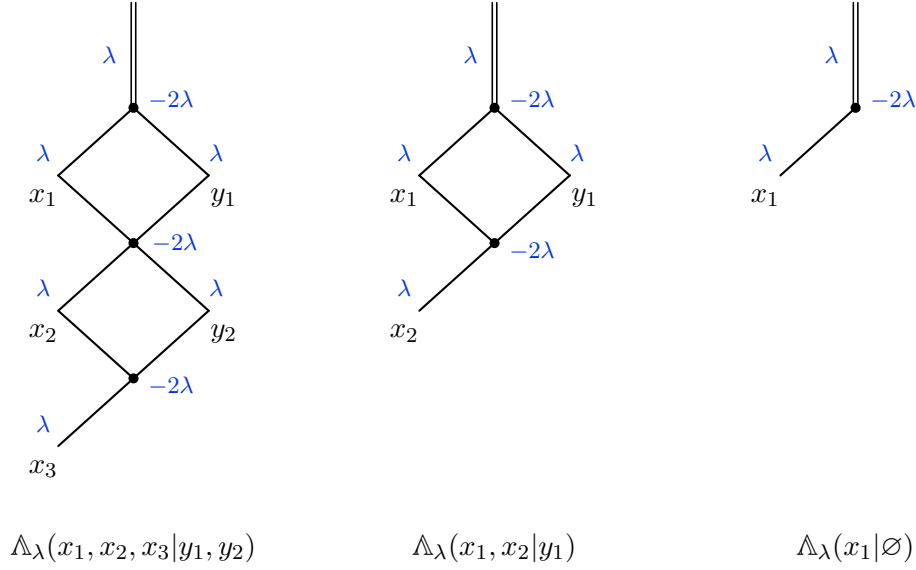
\begin{figure}[t] \centering
	\begin{tikzpicture}[thick, line cap = round]
		\def\l{1}
		\def\r{1.5pt}
		\def\h{0.9}
		\def\d{4.8}
		\def\dd{9.6}
		\def\hh{1.4}
		\def\t{0.03}
		
		\draw[line width = 0.6pt] (\t, \hh) -- node[xshift = -0.35cm] {\color{spec} \footnotesize $\lambda$} (\t, 0);
		\draw[line width = 0.6pt] (-\t, \hh) --  (-\t, 0);
		
		\draw (0, 0) -- (\l, -\h) node[xshift = 0.2cm,yshift = -0.3cm] {$y_1$} node[xshift = 0.1cm, yshift = 0.3cm] {\color{spec} \footnotesize $\lambda$};
		\draw (\l, -\h) -- (0, -2*\h);
		\draw (0, 0) -- (-\l, -\h) node[xshift = -0.2cm, yshift = -0.3cm] {$x_1$} node[xshift = -0.2cm, yshift = 0.3cm] {\color{spec} \footnotesize $\lambda$};
		\draw (-\l, -\h) -- (0, -2*\h);
		\draw (0, -2*\h) -- (\l, -3*\h) node[xshift = 0.2cm,yshift = -0.3cm] {$y_2$} node[xshift = 0.1cm, yshift = 0.3cm] {\color{spec} \footnotesize $\lambda$};
		\draw (0, -2*\h) -- (-\l, -3*\h) node[xshift = -0.2cm, yshift = -0.3cm] {$x_2$} node[xshift = -0.2cm, yshift = 0.3cm] {\color{spec} \footnotesize $\lambda$};
		\draw (-\l, -3*\h) -- (0, -4*\h);
		\draw (\l, -3*\h) -- (0, -4*\h);
		\draw (0, -4*\h) -- (-\l, -5*\h) node[xshift = -0.2cm, yshift = -0.3cm] {$x_3$} node[xshift = -0.2cm, yshift = 0.3cm] {\color{spec} \footnotesize $\lambda$};
		
		\draw[fill = black] (0, 0) circle (\r) node[xshift = 0.5cm, yshift = 0.1cm] {\color{spec} \footnotesize $- 2 \lambda$};
		\draw[fill = black] (0, -2*\h) node[xshift = 0.55cm] {\color{spec} \footnotesize $-2\lambda$} circle (\r);
		\draw[fill = black] (0, -4*\h) node[xshift = 0.5cm, yshift = -0.1cm] {\color{spec} \footnotesize $- 2 \lambda$} circle (\r);
		
		\node (L3) at (0, -6.5*\h) {$\LLambda_\lambda(x_1, x_2, x_3 | y_1, y_2)$};

		
		\draw[line width = 0.6pt] (\d + \t, \hh) -- node[xshift = -0.35cm] {\color{spec} \footnotesize $\lambda$} (\d + \t, 0);
		\draw[line width = 0.6pt] (\d-\t, \hh) --  (\d-\t, 0);
		
		\draw (\d, 0) -- (\d + \l, -\h) node[xshift = 0.2cm,yshift = -0.3cm] {$y_1$} 	node[xshift = 0.1cm, yshift = 0.3cm] {\color{spec} \footnotesize $\lambda$};
		\draw (\d + \l, -\h) -- (\d, -2*\h);
		\draw (\d, 0) -- (\d-\l, -\h) node[xshift = -0.2cm, yshift = -0.3cm] {$x_1$} 	node[xshift = -0.2cm, yshift = 0.3cm] {\color{spec} \footnotesize $\lambda$};
		\draw (\d-\l, -\h) -- (\d, -2*\h);
		\draw (\d, -2*\h) -- (\d-\l, -3*\h) node[xshift = -0.2cm, yshift = -0.3cm] {$x_2$} 	node[xshift = -0.2cm, yshift = 0.3cm] {\color{spec} \footnotesize $\lambda$};
		
		\draw[fill = black] (\d, 0) circle (\r) node[xshift = 0.5cm, yshift = 0.1cm] {\color{spec} \footnotesize $- 2 \lambda$};
		\draw[fill = black] (\d, -2*\h) node[xshift = 0.5cm, yshift = -0.1cm] {\color{spec} \footnotesize $-2\lambda$} circle (\r);
		
		\node (L2) at (\d, -6.5*\h) {$\LLambda_\lambda(x_1, x_2 | y_1)$};

		
		\draw[line width = 0.6pt] (\dd + \t, \hh) -- node[xshift = -0.35cm] {\color{spec} \footnotesize $\lambda$} (\dd + \t, 0);
		\draw[line width = 0.6pt] (\dd-\t, \hh) --  (\dd-\t, 0);
		
		\draw (\dd, 0) -- (\dd-\l, -\h) node[xshift = -0.2cm, yshift = -0.3cm] {$x_1$} 	node[xshift = -0.2cm, yshift = 0.3cm] {\color{spec} \footnotesize $\lambda$};
		
		\draw[fill = black] (\dd, 0) circle (\r) node[xshift = 0.5cm, yshift = 0.1cm] {\color{spec} \footnotesize $- 2 \lambda$};
		
		\node (L1) at (\dd, -6.5*\h) {$\LLambda_\lambda(x_1 | \varnothing)$};
	\end{tikzpicture}
	\vspace{0.1cm}
	\caption{Kernels of raising operators} \label{fig:Lker} \vspace{0.1cm}
\end{figure}

Recall the space of continuous polynomially bounded functions
\begin{align}\label{Pn}
	\mathcal{P}_n = \bigl\{ \phi \in C(\mathbb{R}^n) \colon \quad | \phi(\bm{x}_n) | \leq P(|x_1|, \dots, |x_n|), \quad P \text{ --- polynomial} \bigr\}.
\end{align}
By Proposition~\ref{prop:LQQr-space} the raising operator respects this space
\begin{align}
	\LLambda_n(\lambda) \colon \; \mathcal{P}_{n - 1} \; \to \; \mathcal{P}_n, \qquad \lambda \in \mathbb{R}.
\end{align}
In this section we prove the reflection symmetry.

\begin{proposition} \label{prop:Lrefl}
	For $\lambda \in \mathbb{R}$ the relation
	\begin{align}\label{L-refl}
		\LLambda_n(\lambda) = \LLambda_n(-\lambda)
	\end{align}
	holds on the space $\mathcal{P}_{n - 1}$.
\end{proposition}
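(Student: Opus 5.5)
We will prove the stronger statement that the kernels coincide,
\begin{align*}
	\LLambda_\lambda(\bm{x}_n | \bm{y}_{n-1}) = \LLambda_{-\lambda}(\bm{x}_n | \bm{y}_{n-1}),
\end{align*}
for fixed $\bm{x}_n, \bm{y}_{n-1}$. The relation~\eqref{L-refl} on $\mathcal{P}_{n-1}$ then follows at once, because the operators $\LLambda_n(\pm\lambda)$ are well defined there by Proposition~\ref{prop:LQQr-space}. The kernel in Figure~\ref{fig:Lker} is a chain of $n$ integrated vertices $z_1,\dots,z_n$ with label $-2\lambda$. Each is joined by lines of label $\lambda$ to $x_j, y_j$ from below and to $x_{j-1}, y_{j-1}$ from above, and the top vertex $z_1$ carries the double line of parameter $\lambda$. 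The plan is to push a ``defect'' down the chain with the twin peaks relation (Figure~\ref{fig:flip-tr}) and absorb it at the bottom with the reduced relations of Figures~\ref{fig:flip-lim-tr} and~\ref{fig:flip-lim-tr2}. This is the same mechanism that makes $\LLambda_1(\lambda) = \LLambda_1(-\lambda)$ equivalent to the symmetry $W_{\kappa,\mu} = W_{\kappa,-\mu}$ of Example~\ref{ex:PhiPsi1}.

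\emph{Step 1 (base case and start of the induction).} Apply Figure~\ref{fig:flip-lim-tr2} at the top vertex. It converts the double line $\lambda$ with vertex $-2\lambda$ and two outgoing lines $\lambda$ (to $x_1, y_1$) into the double line $-\lambda$ with vertex $2\lambda$ and lines $-\lambda$. The price is one dashed line of parameter $2\lambda$ between $x_1$ and $y_1$ on the left-hand side.

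For $n=1$ there is no $y_1$. In that case I would use the reduced version directly: Figure~\ref{fig:flip-lim-tr} with $\rho=-\lambda$ and the $y$-leg sent to infinity, or equivalently Figure~\ref{fig:flip-lim2} combined with a chain relation. This gives the identity for $\LLambda_\lambda(x_1|\varnothing)$ outright; the prefactors $(2\beta)^{\imath\lambda}/\Gamma(g-\imath\lambda)$ are already built into the double-line convention.

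For general $n$, I will instead insert the identity ``dashed line $2\lambda$ times dashed line $-2\lambda$ $=1$'' (Figure~\ref{fig:anih}) between $x_1$ and $y_1$. This lets Figure~\ref{fig:flip-lim-tr2} turn the top of the diagram into the $-\lambda$ version, leaving an extra dashed line $-2\lambda$ joining $x_1,y_1$.

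\emph{Step 2 (propagation).} The points $x_1, y_1$ are also the top endpoints of the lines going into the vertex $z_2$ with label $-2\lambda$. The configuration there — two lines from $x_1,y_1$ into $z_2$, two lines out to $x_2,y_2$, and a dashed line on top — is exactly the cross relation of Figure~\ref{fig:cross}, with parameter $2\lambda$ after relabeling. Applying it moves the dashed line from the pair $(x_1,y_1)$ down to the pair $(x_2,y_2)$ and flips the vertex label to $2\lambda$. I will track the line labels $\pm\lambda$: a dashed line with parameter $\mu$ sitting at the endpoints multiplies by $e^{\imath\mu x}$-type factors, which is what turns the line labels $\lambda$ into $-\lambda$. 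Iterating down the chain, the defect reaches the last vertex $z_n$, which has only the single lower leg to $x_n$. There the reduced cross relation of Figure~\ref{fig:cross} (second or third picture) absorbs it completely, because the $y_n$-endpoint is at infinity.

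\emph{Main obstacle.} The hard part is the bookkeeping of labels and convergence. Each star-triangle, cross or flip move is valid only in a strip such as $0<\Re(\imath\lambda)<g$, $\Re(\imath\rho)<\Re(\imath\lambda)$, while $\lambda$ is real in the statement. To handle this, I would first regard both kernels as functions of $\lambda$ analytic in the strip $|\Im\lambda|<g$. Absolute convergence there follows from the double-line factor and the estimates of Appendix~\ref{sec:GG-bounds}. I would then perform the diagram manipulations for $\Re(\imath\lambda)$ slightly positive, where every intermediate integral converges absolutely. Finally, I would continue analytically to $\lambda\in\mathbb{R}$. I also expect to verify the exact matching of labels at each cross step by writing the one-vertex identity explicitly in terms of~\eqref{st-tr}, since this is where sign errors are most likely.
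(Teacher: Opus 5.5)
Your proposal is correct and is essentially the paper's proof read backwards. The paper creates the dashed line at the bottom vertex with the reduced cross relation, lifts it with cross relations, and removes it at the top with Figure~\ref{fig:flip-lim-tr2}; you create it at the top and absorb it at the bottom, and your intermediate diagrams are exactly the paper's ones with $\lambda \to -\lambda$.

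The analytic-continuation detour is unnecessary: the (reduced) cross relations hold for all complex parameters, and Figure~\ref{fig:flip-lim-tr2} holds for $|\Im \lambda| < g$, in particular for real $\lambda$. Also, the cross relation in Step~2 is applied with parameter $-2\lambda$, not $2\lambda$.
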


\begin{proof}
	Since both sides are well defined on $\mathcal{P}_{n - 1}$, it is sufficient to establish the equality of kernels
	\begin{align}
		\LLambda_\lambda(\bm{x}_n | \bm{y}_{n - 1}) = \LLambda_{-\lambda}(\bm{x}_n | \bm{y}_{n - 1}).
	\end{align}
	This can be done using three types of diagram transformations, as shown in Figure~\ref{fig:Lsym}. 
	
	The first step is to use \textit{reduced} cross relation from Figure~\ref{fig:cross}. As a result, the dashed line appears at the bottom, and spectral parameters below it change signs. After that we apply \textit{full} cross relation from Figure~\ref{fig:cross} in order to move this dashed line to the top rhombus. Again, spectral parameters on its way change signs.
	
	Finally, we use identity depicted in Figure~\ref{fig:flip-lim-tr2}. This removes dashed line from the picture, while changing the signs of spectral parameters above it. The resulting diagram represents kernel of the operator from the right hand side of identity~\eqref{L-refl}.
	
	Although Figure~\ref{fig:Lsym} corresponds to the case $n = 3$, generalization to arbitrary $n$ is simple: with many rhombuses in the middle one just uses cross relation many times to move dashed line from bottom to top.
\end{proof}

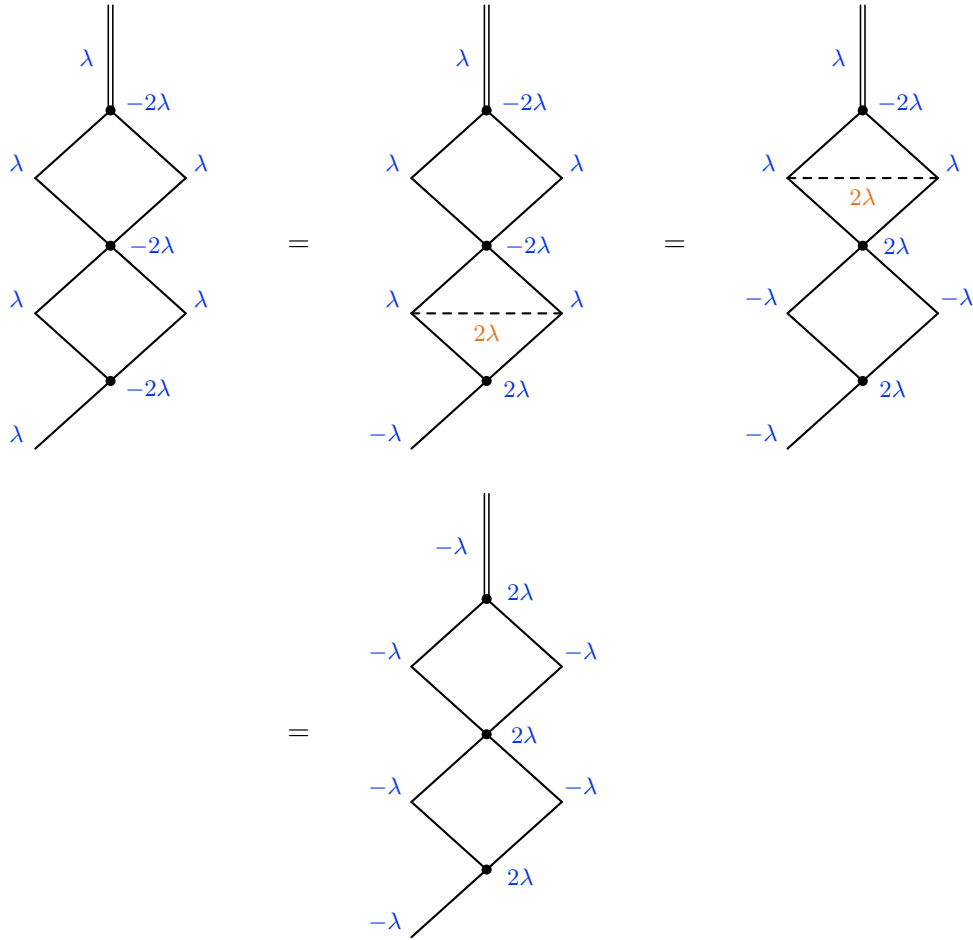
\begin{figure}[h] \centering
	\begin{tikzpicture}[thick, line cap = round]
		\def\l{1}
		\def\r{1.5pt}
		\def\R{13pt}
		\def\h{0.9}
		\def\d{5}
		\def\dd{10}
		\def\ddd{5}
		\def\hh{1.4}
		\def\t{0.03}
		\def\a{0}
		\def\b{-6.5}
		
		
		\draw[line width = 0.6pt] (\t, \hh) -- node[xshift = -0.35cm] {\color{spec} \footnotesize $\lambda$} (\t, 0);
		\draw[line width = 0.6pt] (-\t, \hh) --  (-\t, 0);
		
		\draw (0, 0) -- (\l, -\h) node[xshift = 0.2cm, yshift = 0.2cm] {\color{spec} \footnotesize $\lambda$};
		\draw (\l, -\h) -- (0, -2*\h);
		\draw (0, 0) -- (-\l, -\h) node[xshift = -0.25cm, yshift = 0.2cm] {\color{spec} \footnotesize $\lambda$};
		\draw (-\l, -\h) -- (0, -2*\h);
		\draw (0, -2*\h) -- (\l, -3*\h) node[xshift = 0.2cm, yshift = 0.2cm] {\color{spec} \footnotesize $\lambda$};
		\draw (0, -2*\h) -- (-\l, -3*\h) node[xshift = -0.25cm, yshift = 0.2cm] {\color{spec} \footnotesize $\lambda$};
		\draw (-\l, -3*\h) -- (0, -4*\h);
		\draw (\l, -3*\h) -- (0, -4*\h);
		\draw (0, -4*\h) -- (-\l, -5*\h) node[xshift = -0.25cm, yshift = 0.2cm] {\color{spec} \footnotesize $\lambda$};
		
		\draw[fill = black] (0, 0) circle (\r) node[xshift = 0.5cm, yshift = 0.1cm] {\color{spec} \footnotesize $- 2 \lambda$};
		\draw[fill = black] (0, -2*\h) node[xshift = 0.55cm] {\color{spec} \footnotesize $-2\lambda$} circle (\r) node[xshift = 2.5cm] {$=$};
		\draw[fill = black] (0, -4*\h) node[xshift = 0.5cm, yshift = -0.1cm] {\color{spec} \footnotesize $- 2 \lambda$} circle (\r);
		
		
		\draw[line width = 0.6pt] (\d + \t, \a + \hh) -- node[xshift = -0.35cm] {\color{spec} \footnotesize $\lambda$} (\d + \t, \a);
		\draw[line width = 0.6pt] (\d - \t, \a + \hh) --  (\d - \t, \a);
		
		\draw (\d, \a) -- (\d + \l, \a - \h) node[xshift = 0.2cm, yshift = 0.2cm] {\color{spec} \footnotesize $\lambda$};
		\draw (\d + \l, \a - \h) -- (\d, \a - 2*\h);
		\draw (\d, \a) -- (\d - \l, \a - \h) node[xshift = -0.25cm, yshift = 0.2cm] {\color{spec} \footnotesize $\lambda$};
		\draw (\d - \l, \a - \h) -- (\d, \a - 2*\h);
		\draw (\d, \a - 2*\h) -- (\d + \l, \a - 3*\h) node[xshift = 0.2cm, yshift = 0.2cm] {\color{spec} \footnotesize $\lambda$};
		\draw (\d, \a - 2*\h) -- (\d - \l, \a - 3*\h) node[xshift = -0.25cm, yshift = 0.2cm] {\color{spec} \footnotesize $\lambda$};
		\draw (\d - \l, \a - 3*\h) -- (\d, \a - 4*\h);
		\draw (\d + \l, \a - 3*\h) -- (\d, \a - 4*\h);
		\draw (\d, \a - 4*\h) -- (\d - \l, \a - 5*\h) node[xshift = -0.35cm, yshift = 0.2cm] {\color{spec} \footnotesize $-\lambda$};
		\draw[dashed] (\d - \l, \a - 3*\h) -- node[below] {\color{spec2} \footnotesize $2\lambda$}  (\d + \l, \a - 3*\h);
		
		\draw[fill = black] (\d, \a) circle (\r) node[xshift = 0.5cm, yshift = 0.1cm] {\color{spec} \footnotesize $- 2 \lambda$};
		\draw[fill = black] (\d, \a - 2*\h) node[xshift = 0.55cm] {\color{spec} \footnotesize $-2\lambda$} circle (\r) node[xshift = 2.5cm] {$=$};
		\draw[fill = black] (\d, \a - 4*\h) node[xshift = 0.4cm, yshift = -0.1cm] {\color{spec} \footnotesize $2 \lambda$} circle (\r);
		
		
		\draw[line width = 0.6pt] (\dd + \t, \a + \hh) -- node[xshift = -0.35cm] {\color{spec} \footnotesize $\lambda$} (\dd + \t, \a);
		\draw[line width = 0.6pt] (\dd - \t, \a + \hh) --  (\dd - \t, \a);
		
		
		\draw (\dd, \a) -- (\dd + \l, \a - \h) node[xshift = 0.2cm, yshift = 0.2cm] {\color{spec} \footnotesize $\lambda$};
		\draw (\dd + \l, \a - \h) -- (\dd, \a - 2*\h);
		\draw (\dd, \a) -- (\dd - \l, \a - \h) node[xshift = -0.25cm, yshift = 0.2cm] {\color{spec} \footnotesize $\lambda$};
		\draw (\dd - \l, \a - \h) -- (\dd, \a - 2*\h);
		\draw (\dd, \a - 2*\h) -- (\dd + \l, \a - 3*\h) node[xshift = 0.25cm, yshift = 0.2cm] {\color{spec} \footnotesize $-\lambda$};
		\draw (\dd, \a - 2*\h) -- (\dd - \l, \a - 3*\h) node[xshift = -0.35cm, yshift = 0.2cm] {\color{spec} \footnotesize $-\lambda$};
		\draw (\dd - \l, \a - 3*\h) -- (\dd, \a - 4*\h);
		\draw (\dd + \l, \a - 3*\h) -- (\dd, \a - 4*\h);
		\draw (\dd, \a - 4*\h) -- (\dd - \l, \a - 5*\h) node[xshift = -0.35cm, yshift = 0.2cm] {\color{spec} \footnotesize $-\lambda$};
		\draw[dashed] (\dd - \l, \a - \h) -- node[below] {\color{spec2} \footnotesize $2\lambda$}  (\dd + \l, \a - \h);
		
		\draw[fill = black] (\dd, \a) circle (\r) node[xshift = 0.5cm, yshift = 0.1cm] {\color{spec} \footnotesize $- 2 \lambda$};
		\draw[fill = black] (\dd, \a - 2*\h) circle (\r) node[xshift = 0.45cm] {\color{spec} \footnotesize $2\lambda$};
		\draw[fill = black] (\dd, \a - 4*\h) node[xshift = 0.4cm, yshift = -0.1cm] {\color{spec} \footnotesize $2 \lambda$} circle (\r);
		
		
		\draw[line width = 0.6pt] (\ddd+ \t, \b + \hh) -- node[xshift = -0.5cm] {\color{spec} \footnotesize $-\lambda$} (\ddd+ \t, \b);
		\draw[line width = 0.6pt] (\ddd- \t, \b + \hh) --  (\ddd- \t, \b);
		
		
		\draw (\ddd, \b ) -- (\ddd+ \l, \b - \h) node[xshift = 0.25cm, yshift = 0.2cm] {\color{spec} \footnotesize $-\lambda$};
		\draw (\ddd+ \l, \b - \h) -- (\ddd, \b - 2*\h);
		\draw (\ddd, \b) -- (\ddd- \l, \b - \h) node[xshift = -0.35cm, yshift = 0.2cm] {\color{spec} \footnotesize $-\lambda$};
		\draw (\ddd- \l, \b - \h) -- (\ddd, \b - 2*\h);
		\draw (\ddd, \b - 2*\h) -- (\ddd+ \l, \b - 3*\h) node[xshift = 0.25cm, yshift = 0.2cm] {\color{spec} \footnotesize $-\lambda$};
		\draw (\ddd, \b - 2*\h) -- (\ddd- \l, \b - 3*\h) node[xshift = -0.35cm, yshift = 0.2cm] {\color{spec} \footnotesize $-\lambda$};
		\draw (\ddd- \l, \b - 3*\h) -- (\ddd, \b - 4*\h);
		\draw (\ddd+ \l, \b - 3*\h) -- (\ddd, \b - 4*\h);
		\draw (\ddd, \b - 4*\h) -- (\ddd- \l, \b -5*\h) node[xshift = -0.35cm, yshift = 0.2cm] {\color{spec} \footnotesize $-\lambda$};
		
		\draw[fill = black] (\ddd, \b) circle (\r) node[xshift = 0.45cm, yshift = 0.1cm] {\color{spec} \footnotesize $2 \lambda$};
		\draw[fill = black] (\ddd, \b - 2*\h) node[xshift = 0.5cm] {\color{spec} \footnotesize $2\lambda$} node[xshift = -2.5cm] {$= $} circle (\r);
		\draw[fill = black] (\ddd, \b - 4*\h) node[xshift = 0.45cm, yshift = -0.1cm] {\color{spec} \footnotesize $2 \lambda$} circle (\r);
	\end{tikzpicture}
	\vspace{0.2cm}
	\caption{Reflection symmetry of raising operator} \label{fig:Lsym}
\end{figure}

\subsection{Baxter operator and its reduction} \label{sec:baxt-op}

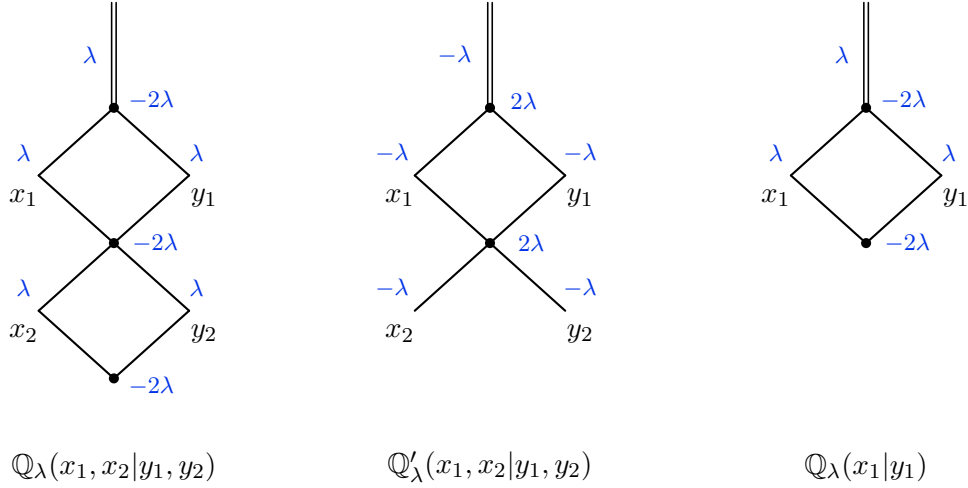
\begin{figure}[t] \centering
	\begin{tikzpicture}[thick, line cap = round]
		\def\l{1}
		\def\r{1.5pt}
		\def\h{0.9}
		\def\d{5}
		\def\a{-5}
		\def\dd{10}
		\def\hh{1.4}
		\def\t{0.03}
		\def\s{5.3}

		\draw[line width = 0.6pt] (\t, \hh) -- node[xshift = -0.35cm] {\color{spec} \footnotesize $\lambda$} (\t, 0);
		\draw[line width = 0.6pt] (-\t, \hh) --  (-\t, 0);
		
		\draw (0, 0) -- (\l, -\h) node[xshift = 0.2cm,yshift = -0.3cm] {$y_1$} node[xshift = 0.1cm, yshift = 0.3cm] {\color{spec} \footnotesize $\lambda$};
		\draw (\l, -\h) -- (0, -2*\h);
		\draw (0, 0) -- (-\l, -\h) node[xshift = -0.2cm, yshift = -0.3cm] {$x_1$} node[xshift = -0.2cm, yshift = 0.3cm] {\color{spec} \footnotesize $\lambda$};
		\draw (-\l, -\h) -- (0, -2*\h);
		\draw (0, -2*\h) -- (\l, -3*\h) node[xshift = 0.2cm,yshift = -0.3cm] {$y_2$} node[xshift = 0.1cm, yshift = 0.3cm] {\color{spec} \footnotesize $\lambda$};
		\draw (0, -2*\h) -- (-\l, -3*\h) node[xshift = -0.2cm, yshift = -0.3cm] {$x_2$} node[xshift = -0.2cm, yshift = 0.3cm] {\color{spec} \footnotesize $\lambda$};
		\draw (-\l, -3*\h) -- (0, -4*\h);
		\draw (\l, -3*\h) -- (0, -4*\h);
		
		\draw[fill = black] (0, 0) circle (\r) node[xshift = 0.5cm, yshift = 0.1cm] {\color{spec} \footnotesize $- 2 \lambda$};
		\draw[fill = black] (0, -2*\h) node[xshift = 0.55cm] {\color{spec} \footnotesize $-2\lambda$} circle (\r);
		\draw[fill = black] (0, -4*\h) node[xshift = 0.5cm, yshift = -0.1cm] {\color{spec} \footnotesize $- 2 \lambda$} circle (\r);
		
		\node (Q2) at (0, -\s*\h) {$\QQ_\lambda(x_1, x_2 | y_1, y_2)$};

		\draw[line width = 0.6pt] (\d + \t, \hh) -- node[xshift = -0.5cm] {\color{spec} \footnotesize $-\lambda$} (\d + \t, 0);
		\draw[line width = 0.6pt] (\d-\t, \hh) --  (\d-\t, 0);
		
		\draw (\d, 0) -- (\d + \l, -\h) node[xshift = 0.2cm,yshift = -0.3cm] {$y_1$} node[xshift = 0.2cm, yshift = 0.3cm] {\color{spec} \footnotesize $-\lambda$};
		\draw (\d + \l, -\h) -- (\d, -2*\h);
		\draw (\d, 0) -- (\d-\l, -\h) node[xshift = -0.2cm, yshift = -0.3cm] {$x_1$} node[xshift = -0.3cm, yshift = 0.3cm] {\color{spec} \footnotesize $-\lambda$};
		\draw (\d-\l, -\h) -- (\d, -2*\h);
		\draw (\d, -2*\h) -- (\d + \l, -3*\h) node[xshift = 0.2cm,yshift = -0.3cm] {$y_2$} node[xshift = 0.2cm, yshift = 0.3cm] {\color{spec} \footnotesize $-\lambda$};
		\draw (\d, -2*\h) -- (\d-\l, -3*\h) node[xshift = -0.2cm, yshift = -0.3cm] {$x_2$} node[xshift = -0.3cm, yshift = 0.3cm] {\color{spec} \footnotesize $-\lambda$};
		
		\draw[fill = black] (\d, 0) circle (\r) node[xshift = 0.45cm, yshift = 0.1cm] {\color{spec} \footnotesize $2 \lambda$};
		\draw[fill = black] (\d, -2*\h) node[xshift = 0.55cm] {\color{spec} \footnotesize $2\lambda$} circle (\r);
		
		\node (Qr2) at (\d, -\s*\h) {$\QQr_\lambda(x_1, x_2 | y_1, y_2)$};

		\draw[line width = 0.6pt] (\dd + \t, \hh) -- node[xshift = -0.35cm] {\color{spec} \footnotesize $\lambda$} (\dd + \t, 0);
		\draw[line width = 0.6pt] (\dd -\t, \hh) --  (\dd - \t, 0);
		
		\draw (\dd, 0) -- (\dd + \l, -\h) node[xshift = 0.2cm,yshift = -0.3cm] {$y_1$} node[xshift = 0.1cm, yshift = 0.3cm] {\color{spec} \footnotesize $\lambda$};
		\draw (\dd + \l, -\h) -- (\dd, -2*\h);
		\draw (\dd, 0) -- (\dd -\l, -\h) node[xshift = -0.2cm, yshift = -0.3cm] {$x_1$} node[xshift = -0.2cm, yshift = 0.3cm] {\color{spec} \footnotesize $\lambda$};
		\draw (\dd -\l, -\h) -- (\dd, -2*\h);
		
		\draw[fill = black] (\dd, 0) circle (\r) node[xshift = 0.5cm, yshift = 0.1cm] {\color{spec} \footnotesize $- 2 \lambda$};
		\draw[fill = black] (\dd, -2*\h) node[xshift = 0.55cm] {\color{spec} \footnotesize $-2\lambda$} circle (\r);
		
		\node (Q1) at (\dd, -\s*\h) {$\QQ_\lambda(x_1 | y_1)$};
		
	\end{tikzpicture}
	\vspace{0.3cm}
	\caption{Kernels of Baxter operators} \label{fig:Qker}
\end{figure}

Baxter operator $\QQ_n(\lambda)$ is an integral operator acting on functions $\phi(\bm{x}_n)$ 
\begin{align}\label{GG13}
	\bigl[ \QQ_n(\lambda) \, \phi \bigr] (\bm{x}_n) = \int_{\mathbb{R}^{n}} d\bm{y}_{n} \; \QQ_\lambda(\bm{x}_n | \bm{y}_{n}) \, \phi(\bm{y}_{n}),
\end{align}
where the kernel
\begin{align}
	\begin{aligned}
		\QQ_\lambda(\bm{x}_n | \bm{y}_{n}) & = \frac{(2\beta)^{\imath \lambda}}{\Gamma(g - \imath \lambda)} \int_{\mathbb{R}^{n + 1}} d\bm{z}_{n + 1}
		\; (1 + \beta e^{-z_1})^{- \imath \lambda - g} \; (1 - \beta e^{-z_1})^{- \imath \lambda + g - 1}\,	\theta(z_1 - \ln \beta)
		\\[6pt]
		&  \times \exp \biggl( \imath \lambda \bigl( \underline{\bm{x}}_n + \underline{\bm{y}}_{n} - 2 \underline{\bm{z}}_{n + 1} \bigr) - \sum_{j = 1}^{n} (e^{z_j - x_j} + e^{z_j - y_j} + e^{x_j - z_{j + 1}} + e^{y_j - z_{j + 1}} ) \biggr).
	\end{aligned}
\end{align}
Let us define the closely related \textit{reduced} Baxter operator $\QQr_n(\lambda)$ as an integral operator 
\begin{align}\label{GG14}
	\bigl[ \QQr_n(\lambda) \, \phi \bigr] (\bm{x}_n) = \int_{\mathbb{R}^{n}} d\bm{y}_{n} \; \QQr_\lambda(\bm{x}_n | \bm{y}_{n}) \, \phi(\bm{y}_{n}),
\end{align}
whose kernel is given by a slightly smaller integral
\begin{align}  \label{Qr-ker}
		& \QQr_\lambda(\bm{x}_n | \bm{y}_{n}) = \frac{(2\beta)^{-\imath \lambda}}{\Gamma(g + \imath \lambda)}\int_{\mathbb{R}^{n}}  d\bm{z}_{n} \;
		\; \bigl(1 + \beta e^{-z_1}\bigr)^{\imath \lambda - g } \, \bigl(1 - \beta e^{-z_1}\bigr)^{\imath \lambda + g - 1 } \, \theta(z_1 - \ln \beta) \\[6pt] \nonumber
		& \times \exp \biggl( \imath \lambda \bigl( 2 \underline{\bm{z}}_{n}  - \underline{\bm{x}}_n - \underline{\bm{y}}_{n} \bigr) - \sum_{j = 1}^{n - 1} (e^{z_j - x_j} + e^{z_j - y_j} + e^{x_j - z_{j + 1}} + e^{y_j - z_{j + 1}} ) - e^{z_n - x_n} - e^{z_n - y_n}\biggr).
\end{align}
The examples of these kernels are depicted in Figure~\ref{fig:Qker}. Notice that spectral parameters in the kernels of these operators have opposite signs.

By Proposition~\ref{prop:LQQr-space} the above Baxter operators act invariantly on the space of continuous polynomially bounded functions~\eqref{Pn}
\begin{align}
	& \QQ_n(\lambda) \colon \; \mathcal{P}_{n} \; \to \; \mathcal{P}_n, && \hspace{-2cm} \Im \lambda \in (-g, 0), \\[6pt]
	& \QQr_n(\lambda) \colon \; \mathcal{P}_{n} \; \to \; \mathcal{P}_n, && \hspace{-2cm} \Im \lambda < 0.
\end{align}
With the above assumptions on parameters their kernels are absolutely convergent. 

\begin{proposition} \label{prop:Qred}
	For $\Im \lambda \in(-g, 0)$ the relation
	\begin{align}\label{QQrel}
		\QQ_n(\lambda) = \Gamma(2\imath \lambda) \; \QQr_n(\lambda)
	\end{align}
	holds on $\mathcal{P}_n$.
\end{proposition}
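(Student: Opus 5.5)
Since both operators act on $\mathcal{P}_n$ with absolutely convergent kernels for $\Im\lambda\in(-g,0)$, it suffices, as in the proof of Proposition~\ref{prop:Lrefl}, to prove the kernel identity
\begin{align*}
	\QQ_\lambda(\bm{x}_n|\bm{y}_n) = \Gamma(2\imath\lambda)\,\QQr_\lambda(\bm{x}_n|\bm{y}_n).
\end{align*}
I would do this diagrammatically, repeating the argument of Figure~\ref{fig:Lsym} but applied to the kernel of $\QQ_n(\lambda)$ from Figure~\ref{fig:Qker}. The only difference from the raising operator kernel is the bottom vertex $z_{n+1}$. It is integrated with parameter $-2\lambda$ and has two incoming lines from $x_n$ and $y_n$ with parameters $\lambda$. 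This is the left-hand side of the second chain relation (Figure~\ref{fig:chain}) with $\lambda$ replaced by $2\lambda$. The condition $\Re(2\imath\lambda)>0$ holds because $\Im\lambda<0$.

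The key steps, in order, are as follows.

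\textbf{Step 1.} Integrate out $z_{n+1}$ using the chain relation. This gives the factor $\Gamma(2\imath\lambda)$ and replaces the bottom ``cup'' by a dashed horizontal line with parameter $2\lambda$ between $x_n$ and $y_n$. The external weights at $x_n,y_n$ change accordingly, so the lines attached above acquire $-\lambda$ in place of $\lambda$, exactly as in the second picture of Figure~\ref{fig:Lsym}.

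\textbf{Step 2.} Move this dashed line upward through each rhombus using the full cross relation of Figure~\ref{fig:cross}. Each application flips the sign of the spectral parameters along its path. After $n-1$ steps it sits just below the top vertex $z_1$.

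\textbf{Step 3.} Apply the identity of Figure~\ref{fig:flip-lim-tr2} (the case $\rho=-\lambda$ of the reduced flip combined with star-triangle). This removes the dashed line and turns the double line with parameter $\lambda$ into one with parameter $-\lambda$. The prefactor $(2\beta)^{\imath\lambda}/\Gamma(g-\imath\lambda)$ becomes $(2\beta)^{-\imath\lambda}/\Gamma(g+\imath\lambda)$, and the vertices carry $2\lambda$. The resulting diagram is precisely the kernel~\eqref{Qr-ker} of $\QQr_n(\lambda)$ in Figure~\ref{fig:Qker}.

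The main obstacle is justifying these steps for the given parameter range, since they are applied inside multiple integrals. Steps 1 and 2 need $\Re(2\imath\lambda)>0$, and the double-line identity in Step 3 needs $\Re(\pm\imath\lambda)<g$, i.e.\ $|\Im\lambda|<g$. Both are guaranteed by $\Im\lambda\in(-g,0)$. I would check absolute convergence of the intermediate diagrams so that Fubini allows applying each local identity under the remaining integrals. If some intermediate integral fails to converge absolutely in part of the strip, I would first prove the identity in a smaller subregion and then extend it by analyticity of both kernels in $\lambda$ throughout the strip.
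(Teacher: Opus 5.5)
Your proposal is correct and follows the paper's proof essentially verbatim. The chain relation at the bottom vertex $z_{n+1}$ produces $\Gamma(2\imath\lambda)$ and a dashed line, after which the cross relations and the identity of Figure~\ref{fig:flip-lim-tr2} are applied exactly as in Figure~\ref{fig:Lsym}. One small bookkeeping correction: the chain relation itself leaves the weights $e^{\imath\lambda x_n}, e^{\imath\lambda y_n}$ unchanged, and they become $-\lambda$ (with vertex $z_n$ becoming $2\lambda$) only at the first cross-relation step.
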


\begin{proof}
	Since both sides are well defined on $\mathcal{P}_n$, it is sufficient to show the equality of kernels, which can be done with the help of diagrams, see Figure~\ref{fig:Qred}. On the first step one applies chain relation from Figure~\ref{fig:chain} to obtain dashed line at the bottom, while the rest of transformations coincide with the ones in Figure~\ref{fig:Lsym}.
\end{proof}

In the case $n = 0$, in accordance with~\eqref{QQrel} and the above definitions, we denote 
\begin{align} \label{QQr0}
	\QQr_0(\lambda) = \frac{(2\beta)^{-\imath \lambda}}{\Gamma(g + \imath \lambda)} \, \mathrm{Id}, \qquad \QQ_0(\lambda) = \frac{(2\beta)^{-\imath \lambda} \, \Gamma(2\imath \lambda)}{\Gamma(g + \imath \lambda)} \,  \mathrm{Id}.
\end{align}

\begin{figure}[H] \centering \vspace{0.5cm}
	\begin{tikzpicture}[thick, line cap = round]
		\def\l{1}
		\def\r{1.5pt}
		\def\h{0.9}
		\def\d{6}
		\def\dd{12}
		\def\ddd{6}
		\def\hh{1.4}
		\def\t{0.03}
		\def\a{0}
		\def\b{-5.4}
		
		\draw[line width = 0.6pt] (\t, \hh) -- node[xshift = -0.35cm] {\color{spec} \footnotesize $\lambda$} (\t, 0);
		\draw[line width = 0.6pt] (-\t, \hh) --  (-\t, 0);
		
		\draw (0, 0) -- (\l, -\h) node[xshift = 0.2cm, yshift = 0.2cm] {\color{spec} \footnotesize $\lambda$};
		\draw (\l, -\h) -- (0, -2*\h);
		\draw (0, 0) -- (-\l, -\h) node[xshift = -0.25cm, yshift = 0.2cm] {\color{spec} \footnotesize $\lambda$};
		\draw (-\l, -\h) -- (0, -2*\h);
		\draw (0, -2*\h) -- (\l, -3*\h) node[xshift = 0.2cm, yshift = 0.2cm] {\color{spec} \footnotesize $\lambda$};
		\draw (0, -2*\h) -- (-\l, -3*\h) node[xshift = -0.25cm, yshift = 0.2cm] {\color{spec} \footnotesize $\lambda$};
		\draw (-\l, -3*\h) -- (0, -4*\h);
		\draw (\l, -3*\h) -- (0, -4*\h);
		
		\draw[fill = black] (0, 0) circle (\r) node[xshift = 0.5cm, yshift = 0.1cm] {\color{spec} \footnotesize $- 2 \lambda$};
		\draw[fill = black] (0, -2*\h) node[xshift = 0.55cm] {\color{spec} \footnotesize $-2\lambda$} circle (\r) node[xshift = 3.2cm, yshift = 0.5cm] {$=$ \small $\quad \Gamma(2\imath \lambda)$};
		\draw[fill = black] (0, -4*\h) node[xshift = 0.5cm, yshift = -0.1cm] {\color{spec} \footnotesize $- 2 \lambda$} circle (\r);

		\draw[line width = 0.6pt] (\d + \t, \a + \hh) -- node[xshift = -0.35cm] {\color{spec} \footnotesize $\lambda$} (\d + \t, \a);
		\draw[line width = 0.6pt] (\d - \t, \a + \hh) --  (\d - \t, \a);
		
		\draw (\d, \a) -- (\d + \l, \a - \h) node[xshift = 0.2cm, yshift = 0.2cm] {\color{spec} \footnotesize $\lambda$};
		\draw (\d + \l, \a - \h) -- (\d, \a - 2*\h);
		\draw (\d, \a) -- (\d - \l, \a - \h) node[xshift = -0.25cm, yshift = 0.2cm] {\color{spec} \footnotesize $\lambda$};
		\draw (\d - \l, \a - \h) -- (\d, \a - 2*\h);
		\draw (\d, \a - 2*\h) -- (\d + \l, \a - 3*\h) node[xshift = 0.2cm, yshift = 0.2cm] {\color{spec} \footnotesize $\lambda$};
		\draw (\d, \a - 2*\h) -- (\d - \l, \a - 3*\h) node[xshift = -0.25cm, yshift = 0.2cm] {\color{spec} \footnotesize $\lambda$};
		\draw[dashed] (\d - \l, \a - 3*\h) -- node[below] {\color{spec2} \footnotesize $2\lambda$}  (\d + \l, \a - 3*\h);
		
		\draw[fill = black] (\d, \a) circle (\r) node[xshift = 0.5cm, yshift = 0.1cm] {\color{spec} \footnotesize $- 2 \lambda$};
		\draw[fill = black] (\d, \a - 2*\h) node[xshift = 0.55cm] {\color{spec} \footnotesize $-2\lambda$} circle (\r) node[xshift = 3.2cm, yshift = 0.5cm] {$=$ \small $\quad \Gamma(2\imath \lambda)$};

		\draw[line width = 0.6pt] (\dd + \t, \a + \hh) -- node[xshift = -0.35cm] {\color{spec} \footnotesize $\lambda$} (\dd + \t, \a);
		\draw[line width = 0.6pt] (\dd - \t, \a + \hh) --  (\dd - \t, \a);
		
		\draw (\dd, \a) -- (\dd + \l, \a - \h) node[xshift = 0.2cm, yshift = 0.2cm] {\color{spec} \footnotesize $\lambda$};
		\draw (\dd + \l, \a - \h) -- (\dd, \a - 2*\h);
		\draw (\dd, \a) -- (\dd - \l, \a - \h) node[xshift = -0.25cm, yshift = 0.2cm] {\color{spec} \footnotesize $\lambda$};
		\draw (\dd - \l, \a - \h) -- (\dd, \a - 2*\h);
		\draw (\dd, \a - 2*\h) -- (\dd + \l, \a - 3*\h) node[xshift = 0.25cm, yshift = 0.2cm] {\color{spec} \footnotesize $-\lambda$};
		\draw (\dd, \a - 2*\h) -- (\dd - \l, \a - 3*\h) node[xshift = -0.35cm, yshift = 0.2cm] {\color{spec} \footnotesize $-\lambda$};
		\draw[dashed] (\dd - \l, \a - \h) -- node[below] {\color{spec2} \footnotesize $2\lambda$}  (\dd + \l, \a - \h);
		
		\draw[fill = black] (\dd, \a) circle (\r) node[xshift = 0.5cm, yshift = 0.1cm] {\color{spec} \footnotesize $- 2 \lambda$};
		\draw[fill = black] (\dd, \a - 2*\h) node[xshift = 0.45cm] {\color{spec} \footnotesize $2\lambda$} circle (\r);

		\draw[line width = 0.6pt] (\ddd+ \t, \b + \hh) -- node[xshift = -0.5cm] {\color{spec} \footnotesize $-\lambda$} (\ddd+ \t, \b);
		\draw[line width = 0.6pt] (\ddd- \t, \b + \hh) --  (\ddd- \t, \b);
		
		\draw (\ddd, \b ) -- (\ddd+ \l, \b - \h) node[xshift = 0.25cm, yshift = 0.2cm] {\color{spec} \footnotesize $-\lambda$};
		\draw (\ddd+ \l, \b - \h) -- (\ddd, \b - 2*\h);
		\draw (\ddd, \b) -- (\ddd- \l, \b - \h) node[xshift = -0.35cm, yshift = 0.2cm] {\color{spec} \footnotesize $-\lambda$};
		\draw (\ddd- \l, \b - \h) -- (\ddd, \b - 2*\h);
		\draw (\ddd, \b - 2*\h) -- (\ddd+ \l, \b - 3*\h) node[xshift = 0.25cm, yshift = 0.2cm] {\color{spec} \footnotesize $-\lambda$};
		\draw (\ddd, \b - 2*\h) -- (\ddd- \l, \b - 3*\h) node[xshift = -0.35cm, yshift = 0.2cm] {\color{spec} \footnotesize $-\lambda$};
		
		\draw[fill = black] (\ddd, \b) circle (\r) node[xshift = 0.45cm, yshift = 0.1cm] {\color{spec} \footnotesize $2 \lambda$};
		\draw[fill = black] (\ddd, \b - 2*\h) node[xshift = 0.5cm] {\color{spec} \footnotesize $2\lambda$} node[xshift = -2.8cm, yshift = 0.5cm] {$=$ \small $\quad \Gamma(2\imath \lambda)$} circle (\r);
	\end{tikzpicture}
	\vspace{0.2cm}
	\caption{Reduction of Baxter operator} \label{fig:Qred}
\end{figure}
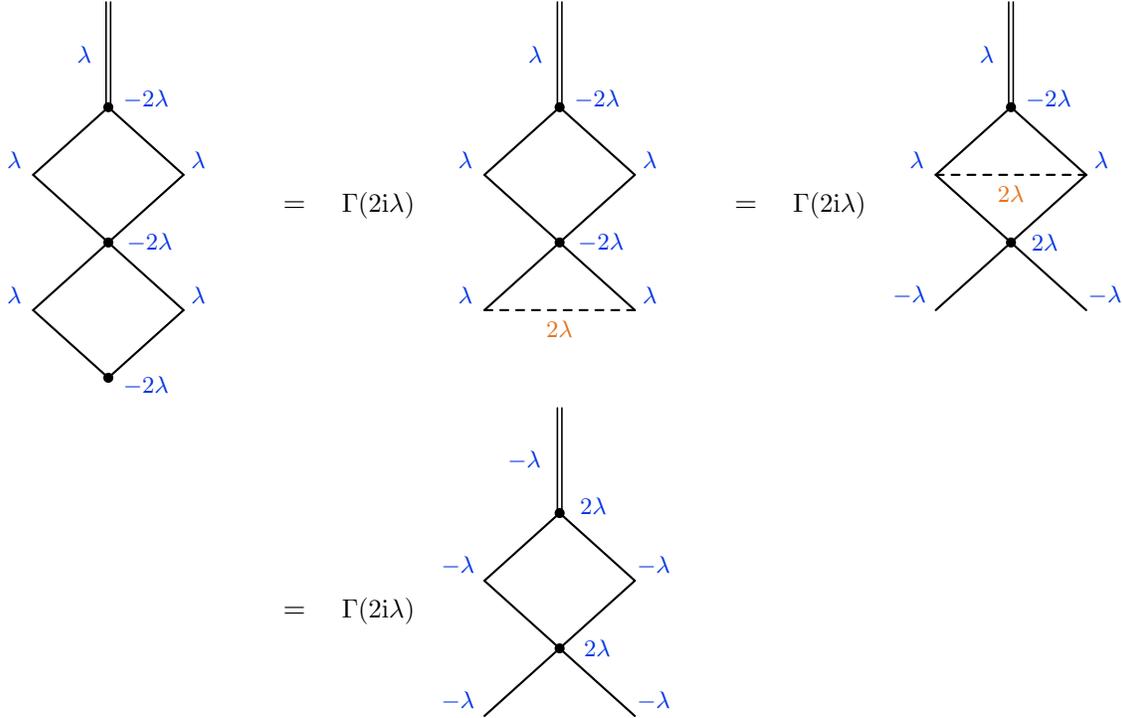

\subsection{Commutativity and exchange relations} \label{sec:op-rel}

\begin{proposition} \label{prop:QL-rel}
	The relations
	\begin{align}
		& \QQ_n(\lambda)\, \QQ_n(\rho)= \QQ_n(\rho)\, \QQ_n(\lambda), &&  \Im \lambda, \, \Im \rho \in(-g, 0), \\[6pt] \label{QQrcomm}
		& \QQr_n(\lambda)\, \QQr_n(\rho)= \QQr_n(\rho)\, \QQr_n(\lambda), &&  \Im \lambda, \, \Im \rho < 0, \\[6pt]
		& \QQ_n(\lambda) \, \LLambda_n(\rho) = \Gamma(\imath \lambda \pm \imath \rho) \, \LLambda_n(\rho) \, \QQ_{n - 1}(\lambda), && \Im \lambda \in (-g, 0), \; \rho \in \mathbb{R}, \\[6pt] \label{QrL}
		& \QQr_n(\lambda) \, \LLambda_n(\rho) =  \Gamma(\imath \lambda \pm \imath \rho) \, \LLambda_n(\rho) \, \QQr_{n - 1}(\lambda), && \Im \lambda < 0, \; \rho \in \mathbb{R}, \\[6pt] \label{LL}
		& \LLambda_{n + 1}(\lambda) \, \LLambda_n(\rho) = \LLambda_{n + 1}(\rho) \, \LLambda_n(\lambda), && \lambda, \rho \in \mathbb{R},
	\end{align}
	hold on the spaces $\mathcal{P}_n$ and $\mathcal{P}_{n - 1}$ correspondingly.
\end{proposition}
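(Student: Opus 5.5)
Since every operator in the statement is well defined on $\mathcal{P}_n$ or $\mathcal{P}_{n-1}$ (Proposition~\ref{prop:LQQr-space}), and all relevant kernels and products are absolutely convergent, the plan is to prove each identity as an equality of the kernels of the composite operators. Fubini then lets us write each composite kernel as a single diagram integral. We transform these diagrams with the relations of Section~\ref{sec:diagrams}: star-triangle, chain, cross, twin peaks (Figure~\ref{fig:flip-tr}) and reduced flip (Figures~\ref{fig:flip-lim-tr},~\ref{fig:flip-lim-tr2}). Wherever an intermediate step needs an extra condition such as $\Re(\imath\rho)<\Re(\imath\lambda)$, we impose it first and remove it at the end by analytic continuation in the parameters, since both sides are analytic in the stated domains.

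\textbf{Exchange relation \eqref{QrL}.} Gluing the kernels of $\QQr_n(\lambda)$ and $\LLambda_n(\rho)$ gives a diagram with two columns of rhombi. On top sit two double lines carrying $-\lambda$ and $\rho$, with integration vertices $2\lambda$ and $-2\rho$. The procedure is as follows.
\begin{enumerate}
\item Insert a dashed line at the bottom using a chain relation (Figure~\ref{fig:chain}). This produces the factor $\Gamma(\imath\lambda-\imath\rho)$ when it is applied in reverse, or in the reduced-cross form; here the condition $\Im\lambda<0$ guarantees $\Re(\imath\lambda\mp\imath\rho)>0$.
\item Move the dashed line upward through the rhombi with repeated cross relations (Figure~\ref{fig:cross}). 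At each step the two columns of vertices exchange their parameters $\lambda$ and $\rho$.
\item At the top, apply the reduced flip combination (Figure~\ref{fig:flip-lim-tr}) to swap the double lines. This is the BC analogue of the twin-peaks step.
\item Remove the dashed line by a second star-triangle or chain relation, which produces $\Gamma(\imath\lambda+\imath\rho)$.
\end{enumerate}
The result is the kernel of $\LLambda_n(\rho)\,\QQr_{n-1}(\lambda)$. The bottom vertex of $\QQr_n$ has no partner in $\QQr_{n-1}$; it is absorbed by the chain relation. For $n=1$ this reproduces the normalization in \eqref{QQr0}, which I would check separately using Figure~\ref{fig:flip-lim2}.

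\textbf{The remaining relations.} The relation for $\QQ_n$ then follows from \eqref{QrL} by Proposition~\ref{prop:Qred}, because the factor $\Gamma(2\imath\lambda)$ appears on both sides. Commutativity \eqref{QQrcomm} is proved by the same mechanism. Glue the kernels of $\QQr_n(\lambda)$ and $\QQr_n(\rho)$, create a dashed line with parameter $\lambda-\rho$ at the bottom via the reduced cross relation (these kernels have a closing bottom vertex), transport it upward with cross relations, and annihilate it at the top using the twin peaks relation of Figure~\ref{fig:flip-tr} together with the cancellation of Figure~\ref{fig:anih}. No gamma factors survive, because the dashed line is created and destroyed without star integrals. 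Commutativity of $\QQ_n$ again follows from Proposition~\ref{prop:Qred}.

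For \eqref{LL}, glue $\LLambda_{n+1}(\lambda)$ and $\LLambda_n(\rho)$. The bottom of this diagram ends in an open line, so the dashed line with parameter $\lambda-\rho$ is inserted there by a reduced cross relation. We then move it up with cross relations and remove it with the twin peaks relation, keeping track of the orientation of the open end. By Proposition~\ref{prop:Lrefl} we may flip the signs of $\lambda$ and $\rho$ whenever it is convenient, which makes any sign mismatches harmless.

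\textbf{Main obstacle.} The hardest step is the top of the diagram, where the two double (boundary) lines must be exchanged. One has to verify that the dashed line arrives there with exactly the parameters $-\lambda-\rho$ and $\lambda-\rho$ required by Figures~\ref{fig:flip-tr} and~\ref{fig:flip-lim-tr}, and that the convergence conditions ($\Re(\imath\lambda),\Re(\imath\rho)<g$, plus the temporary ordering assumption) can all be met simultaneously before analytic continuation. I would first work out the case $n=1$ explicitly to fix the signs, and then run the induction on the number of rhombi.
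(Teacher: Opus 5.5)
Your overall framework matches the paper: equality of kernels, diagram transformations, reduction of $\QQ_n$ to $\QQr_n$ by Proposition~\ref{prop:Qred}, and temporary parameter restrictions lifted by analytic continuation. However, the step that actually exchanges the two boundary columns is wrong, so neither \eqref{QrL} nor \eqref{QQrcomm} goes through as you describe.

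In $\QQr_n(\lambda)\,\LLambda_n(\rho)$ both columns end in genuine double lines. The identity you need at the top is therefore the twin peaks relation of Figure~\ref{fig:flip-tr}, not Figure~\ref{fig:flip-lim-tr}. The latter requires a dashed line from a double-line vertex to the fixed point $\ln\beta$, which exists only when the factor $\exp(-\beta e^{-x_1})$ is present (Section~\ref{sec:gl-bc-rel}); it also changes the parameter of a single double line only.

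Moreover, twin peaks does not remove the line you transported. Work, as the paper does, with $\QQr_n(\lambda)\,\LLambda_n(-\rho)$:
\begin{itemize}
\item The chain relation at the bottom middle vertex (label $-\lambda-\rho$) gives $\Gamma(\imath\lambda+\imath\rho)$ and a dashed line $\lambda+\rho$ joining the two column vertices.
\item Cross relations lift this line to the top.
\item Twin peaks keeps the top line, swaps the double lines, and \emph{creates} two new dashed lines $\lambda-\rho$ and $\rho-\lambda$ inside the two columns.
\item These two lines are carried back down by cross relations and removed at the bottom. In the left column this is done by a reduced cross relation. In the right column (the one coming from $\LLambda_n$) it is done by the chain relation at its bottom vertex, which after the first step has only upward lines, together with Figure~\ref{fig:anih}. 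This is where $\Gamma(\imath\lambda-\imath\rho)$ comes from.
\item Only then is the top line brought down again and removed by a reduced cross relation, with no gamma factor.
\end{itemize}
Your step~4 attributes the second gamma factor to the wrong line. Without the two lines generated by twin peaks, it cannot reach the kernel of $\LLambda_n\,\QQr_{n-1}$.

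For \eqref{QQrcomm}, the starting move you propose does not exist. In the glued kernel of $\QQr_n(\lambda)\,\QQr_n(\rho)$ the bottom integrated vertex has only two upward lines, so no reduced cross relation applies there. The only possible start is the chain relation, which produces a dashed line with the sum parameter $\lambda+\rho$ and the factor $\Gamma(\imath\lambda+\imath\rho)$. That factor is harmless only because it is symmetric, so your claim that no gamma factor appears is unfounded.

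Nor can anything be annihilated at the top via Figure~\ref{fig:anih}. A line lifted from the bottom middle vertex ends up between the two top vertices, whereas the lines created by twin peaks sit inside the columns. The correct route is the sequence above: it returns the diagram obtained after lifting the sum line to the top, with $\lambda$ and $\rho$ interchanged. Commutativity is the resulting symmetry of that intermediate diagram.

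For \eqref{LL} the bottom vertex is three-valent, so a reduced cross relation does create the transported line (with the sum parameter, up to Proposition~\ref{prop:Lrefl}). But again the two lines generated by twin peaks must be transported down and removed before the top line is returned and removed.
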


\begin{proof}
By Proposition~\ref{prop:Qred}, $\QQ_n(\lambda) = \Gamma(2\imath \lambda) \, \QQr_n(\lambda)$, so it is enough to prove the relations with reduced Baxter operators~$\QQr_n(\lambda)$. 

\paragraph{Commutativity~\eqref{QQrcomm}.} In Appendix~\ref{sec:ker-op-eq} we show that it is sufficient to establish the equality of the corresponding kernels
\begin{align} \label{QrQr-prod}
	\int_{\mathbb{R}^n} d\bm{y}_n \; \QQr_\lambda(\bm{x}_n | \bm{y}_n) \, \QQr_\rho(\bm{y}_n | \bm{z}_n) = \int_{\mathbb{R}^n} d\bm{y}_n \; \QQr_\rho(\bm{x}_n | \bm{y}_n) \, \QQr_\lambda(\bm{y}_n | \bm{z}_n) .
\end{align}
This can be done using transformations of diagrams, as shown in Figure~\ref{fig:Qcomm}. Let us postpone the convergence questions to the end of the proof. As before, the pictures correspond to the case $n = 3$, however, generalization to arbitrary~$n$ is straightforward. 

The first diagram in Figure~\ref{fig:Qcomm} depicts kernel of the product $\QQr_3(\lambda) \, \QQr_3(\rho)$. Commutativity is equivalent to the symmetry of this kernel with respect to $\lambda, \rho$. 

The first step is to use chain relation (Figure~\ref{fig:chain}) for the bold vertex at the bottom to obtain the second diagram with dashed line. For brevity, we omit coefficient $\Gamma(\imath \lambda + \imath \rho)$ appearing in this transformation, since it is symmetric in $\lambda, \rho$.

Second, we apply cross relation (Figure~\ref{fig:cross}) two times in order to move dashed line to the top. Notice that some spectral parameters below change. 

Next, we use identity from Figure~\ref{fig:flip-tr} that interchanges parameters $\lambda, \rho$ in double lines. Due to this identity two more dashed lines appear. These new dashed lines then can be brought to the bottom of diagram with the help of cross relations (Figure~\ref{fig:cross}).

Finally, dashed lines at the bottom of the fifth diagram can be removed using reduced cross relations (Figure~\ref{fig:cross}). As a result we arrive at the \textit{sixth} diagram, which differs from the \textit{third} one only by interchange of $\lambda$ and $\rho$. Therefore, the initial integral is symmetric in~$\lambda, \rho$, which proves commutativity of Baxter operators.

It is left to argue that each diagram on the way corresponds to the absolutely convergent integral. Consider any diagram and denote the corresponding integral as $D_{\lambda, \rho}(\bm{x}_{n}| \bm{y}_{n})$. From definitions of the lines (Figures~\ref{fig:lines},~\ref{fig:dashed}) we have
\begin{align}
	\bigl|  D_{\lambda, \rho}(\bm{x}_{n}| \bm{y}_{n}) \bigr| \leq C(\lambda, \rho) \, D_{\imath \Im \lambda, \imath \Im \rho}(\bm{x}_{n}| \bm{y}_{n}).
\end{align}
Hence, it is enough to consider the case $\lambda, \rho \in \imath \mathbb{R}$. In Appendix~\ref{sec:ker-prod} we prove that the kernel of the Baxter operators product~\eqref{QrQr-prod}, which corresponds to the first diagram, is convergent. Since the next diagram differs only by one transformation, the corresponding integral is convergent at least in one order (such that the transformed integral is taken in the first place). By Fubini--Tonelli theorem it is therefore convergent in any order. The third diagram differs by two transformations, and for both of them we can inductively apply the same arguments. This induction can be continued until the last diagram, so that all diagrams represent convergent integrals.

\paragraph{Relation~\eqref{QrL}.} Again, as argued in Section~\ref{sec:ker-op-eq}, it is sufficient to establish the equality of the corresponding kernels
\begin{align}
	\int_{\mathbb{R}^n} d\bm{y}_n \; \QQr_\lambda(\bm{x}_n | \bm{y}_n) \, \LLambda_\rho(\bm{y}_n | \bm{z}_{n - 1}) = \Gamma(\imath \lambda \pm \imath \rho) \, \int_{\mathbb{R}^{n - 1}} d\bm{y}_{n - 1} \; \LLambda_\rho(\bm{x}_n | \bm{y}_{n - 1}) \, \QQr_\lambda(\bm{y}_{n - 1} | \bm{z}_{n - 1}) .
\end{align}
With the help of diagrams it can be done almost in the same way, as for commutativity of Baxter operators. The first diagram from Figure~\ref{fig:QrL} depicts the kernel of operator $\QQr_n(\lambda) \, \LLambda_n(-\rho)$. Here we reflect the parameter of raising operator $\rho$ to ease comparison with Figure~\ref{fig:Qcomm}. Namely, notice that the first diagrams of Figures~\ref{fig:Qcomm} and~\ref{fig:QrL} coincide up to one line at the bottom. 

To arrive at the second diagram of Figure~\ref{fig:QrL} one repeats the same four steps, as in Figure~\ref{fig:Qcomm}. The rest of transformations are slightly different. Passing to the third diagram we use reduced cross relation (Figure~\ref{fig:cross}) for the bottom vertex from the left and chain relation (Figure~\ref{fig:chain}) for the bottom vertex from the right. The resulting diagram contains only one dashed line, which we move to the bottom using cross relation (Figure~\ref{fig:cross}). Finally, to arrive at the last diagram, which depicts kernel of operator $\LLambda_n(-\rho) \, \QQr_{n - 1}(\lambda)$, we again use reduced cross relation (Figure~\ref{fig:cross}). 

Note that gamma functions, which enter the right hand side of identity~\eqref{QrL}, appear during the first two transitions of Figure~\ref{fig:QrL}. This finishes the proof of desired relation. By the same arguments, as before, all diagrams correspond to absolutely convergent integrals.

\paragraph{Relation~\eqref{LL}.} The proof is analogous to the previous ones.
\end{proof}

\begin{remark}
	Since the first diagrams of Figures~\ref{fig:Qcomm} and~\ref{fig:QrL} coincide up to one line at the bottom, the latter can be obtained as the limit of the former
	\begin{align}
		\mathrm{Kernel} \, \bigl[ \QQr_n(\lambda) \, \LLambda_n(-\rho) \bigr] (\bm{x}_n | \bm{y}_{n - 1}) = \lim_{y_n \to \infty} \, e^{\imath \rho y_n} \, \mathrm{Kernel} \, \bigl[ \QQr_n(\lambda) \, \QQr_n(\rho) \bigr] (\bm{x}_n | \bm{y}_{n}).
	\end{align}
	Thus, once the commutativity of Baxter operators is established, one can deduce the relation between Baxter and raising operators~\eqref{QrL} from it by taking the above limit. This approach has been applied for Calogero--Ruijsenaars types of models \cite{BDKK2, BCDK}, which lack local diagrams transformations.
\end{remark}


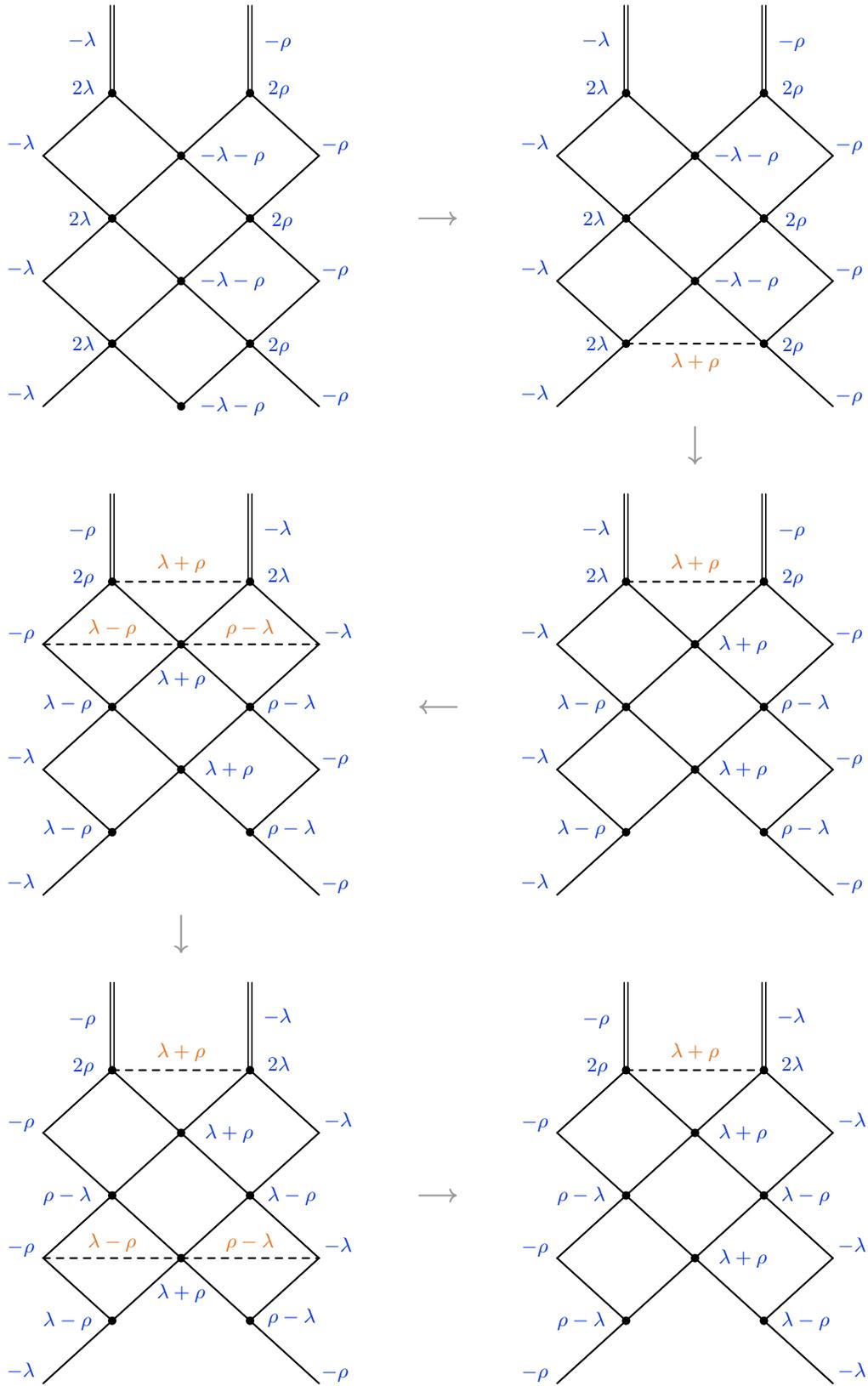
\begin{figure}[H] \centering \vspace{-1cm}
	\begin{tikzpicture}[thick, line cap = round, scale = 0.98]
		\def\l{1.1}
		\def\r{1.5pt}
		\def\h{1}
		\def\d{8.2}
		\def\dd{0.3}
		\def\hh{1.4}
		\def\t{0.03}
		\def\a{-7.8}
		\def\b{-15.6}

		
		\draw[line width = 0.6pt] ( \t, \hh) -- node[xshift = -0.5cm, yshift = 0.15cm] {\color{spec} \footnotesize $-\lambda$} ( \t, 0);
		\draw[line width = 0.6pt] (- \t, \hh) --  (- \t, 0);
		
		\draw (0, 0 ) -- ( \l, - \h);
		\draw ( \l, - \h) -- (0, - 2*\h);
		\draw (0, 0) -- (- \l, - \h) node[xshift = -0.35cm, yshift = 0.2cm] {\color{spec} \footnotesize $-\lambda$};
		\draw (- \l, - \h) -- (0, - 2*\h);
		\draw (0, - 2*\h) -- ( \l, - 3*\h) ;
		\draw (0, - 2*\h) -- (- \l, - 3*\h) node[xshift = -0.35cm, yshift = 0.2cm] {\color{spec} \footnotesize $-\lambda$};
		\draw (- \l, - 3*\h) -- (0, - 4*\h);
		\draw ( \l, - 3*\h) -- (0, - 4*\h);
		\draw (0, - 4*\h) -- (- \l, -5*\h) node[xshift = -0.35cm, yshift = 0.2cm] {\color{spec} \footnotesize $-\lambda$};
		\draw (0, - 4*\h) -- (\l, -5*\h);
		
		\draw[fill = black] (0, 0) circle (\r) node[xshift = -0.45cm, yshift = 0.1cm] {\color{spec} \footnotesize $2 \lambda$};
		\draw[fill = black] (0, - 2*\h) node[xshift = -0.5cm] {\color{spec} \footnotesize $2\lambda$} circle (\r);
		\draw[fill = black] (0, - 4*\h) node[xshift = -0.45cm] {\color{spec} \footnotesize $2 \lambda$} circle (\r);

		\draw[line width = 0.6pt] (2*\l + \t, \hh) -- node[xshift = 0.4cm, yshift = 0.1cm] {\color{spec} \footnotesize $-\rho$} (2*\l + \t, 0);
		\draw[line width = 0.6pt] (2*\l - \t,  \hh) --  (2*\l - \t, 0);
		
		\draw (2*\l, 0) -- (2*\l + \l, - \h) node[xshift = 0.25cm, yshift = 0.15cm] {\color{spec} \footnotesize $-\rho$};
		\draw (2*\l + \l, - \h) -- (2*\l, - 2*\h);
		\draw (2*\l, 0) -- (2*\l - \l, - \h);
		\draw (2*\l- \l, - \h) -- (2*\l, - 2*\h);
		\draw (2*\l, - 2*\h) -- (2*\l + \l, - 3*\h) node[xshift = 0.25cm, yshift = 0.15cm] {\color{spec} \footnotesize $-\rho$};
		\draw (2*\l, - 2*\h) -- (2*\l- \l, - 3*\h);
		\draw (2*\l - \l, - 3*\h) -- (2*\l, - 4*\h);
		\draw (2*\l + \l, - 3*\h) -- (2*\l, - 4*\h);
		\draw (2*\l, - 4*\h) -- (2*\l - \l, -5*\h);
		\draw (2*\l, - 4*\h) -- (2*\l + \l, -5*\h) node[xshift = 0.25cm, yshift = 0.15cm] {\color{spec} \footnotesize $-\rho$};
		
		\draw[fill = black] (2*\l, 0) circle (\r) node[xshift = 0.45cm, yshift = 0.05cm] {\color{spec} \footnotesize $2 \rho$};
		\draw[fill = black] (2*\l, - 2*\h) node[xshift = 0.5cm, yshift = -0.05cm] {\color{spec} \footnotesize $2\rho$} circle (\r);
		\draw[fill = black] (2*\l, - 4*\h) node[xshift = 0.45cm, yshift = -0.05cm] {\color{spec} \footnotesize $2 \rho$} circle (\r);
		
		\draw[fill = black] (\l, -\h) circle (\r) node[xshift = 0.8cm] {\color{spec} \footnotesize $-\lambda-\rho$};
		\draw[fill = black] (\l, -3*\h) circle (\r) node[xshift = 0.8cm] {\color{spec} \footnotesize $-\lambda-\rho$};
		\draw[fill = black] (\l, -5*\h) circle (\r) node[xshift = 0.8cm] {\color{spec} \footnotesize $-\lambda-\rho$};

		
		\draw[line width = 0.6pt] (\d + \t, \hh) -- node[xshift = -0.5cm, yshift = 0.15cm] {\color{spec} \footnotesize $-\lambda$} (\d + \t, 0);
		\draw[line width = 0.6pt] (\d - \t, \hh) --  (\d - \t, 0);
		
		\draw (\d, 0 ) -- (\d + \l, - \h);
		\draw (\d + \l, - \h) -- (\d, - 2*\h);
		\draw (\d, 0) -- (\d - \l, - \h) node[xshift = -0.35cm, yshift = 0.2cm] {\color{spec} \footnotesize $-\lambda$};
		\draw (\d - \l, - \h) -- (\d, - 2*\h);
		\draw (\d, - 2*\h) -- (\d + \l, - 3*\h) ;
		\draw (\d, - 2*\h) -- (\d - \l, - 3*\h) node[xshift = -0.35cm, yshift = 0.2cm] {\color{spec} \footnotesize $-\lambda$};
		\draw (\d - \l, - 3*\h) -- (\d, - 4*\h);
		\draw (\d + \l, - 3*\h) -- (\d, - 4*\h);
		\draw (\d, - 4*\h) -- (\d - \l, -5*\h) node[xshift = -0.35cm, yshift = 0.2cm] {\color{spec} \footnotesize $-\lambda$};
		
		\draw[fill = black] (\d, 0) circle (\r) node[xshift = -0.45cm, yshift = 0.1cm] {\color{spec} \footnotesize $2 \lambda$};
		\draw[fill = black] (\d, - 2*\h) node[xshift = -0.5cm] {\color{spec} \footnotesize $2\lambda$} circle (\r);
		\draw[fill = black] (\d, - 4*\h) node[xshift = -0.45cm] {\color{spec} \footnotesize $2 \lambda$} circle (\r);

		\draw[line width = 0.6pt] (\d + 2*\l + \t, \hh) -- node[xshift = 0.4cm, yshift = 0.1cm] {\color{spec} \footnotesize $-\rho$} (\d + 2*\l + \t, 0);
		\draw[line width = 0.6pt] (\d + 2*\l - \t,  \hh) --  (\d + 2*\l - \t, 0);
		
		\draw (\d + 2*\l, 0) -- (\d + 2*\l + \l, - \h) node[xshift = 0.25cm, yshift = 0.15cm] {\color{spec} \footnotesize $-\rho$};
		\draw (\d + 2*\l + \l, - \h) -- (\d + 2*\l, - 2*\h);
		\draw (\d + 2*\l, 0) -- (\d + 2*\l - \l, - \h);
		\draw (\d + 2*\l- \l, - \h) -- (\d + 2*\l, - 2*\h);
		\draw (\d + 2*\l, - 2*\h) -- (\d + 2*\l + \l, - 3*\h) node[xshift = 0.25cm, yshift = 0.15cm] {\color{spec} \footnotesize $-\rho$};
		\draw (\d + 2*\l, - 2*\h) -- (\d + 2*\l- \l, - 3*\h);
		\draw (\d + 2*\l - \l, - 3*\h) -- (\d + 2*\l, - 4*\h);
		\draw (\d + 2*\l + \l, - 3*\h) -- (\d + 2*\l, - 4*\h);
		\draw (\d + 2*\l, - 4*\h) -- (\d + 2*\l + \l, -5*\h) node[xshift = 0.25cm, yshift = 0.15cm] {\color{spec} \footnotesize $-\rho$};
		
		\draw[fill = black] (\d + 2*\l, 0) circle (\r) node[xshift = 0.45cm, yshift = 0.05cm] {\color{spec} \footnotesize $2 \rho$};
		\draw[fill = black] (\d + 2*\l, - 2*\h) node[xshift = 0.5cm, yshift = -0.05cm] {\color{spec} \footnotesize $2\rho$} circle (\r);
		\draw[fill = black] (\d + 2*\l, - 4*\h) node[xshift = 0.45cm, yshift = -0.05cm] {\color{spec} \footnotesize $2 \rho$} circle (\r);
		
		\draw[fill = black] (\d + \l, -\h) circle (\r) node[xshift = 0.8cm] {\color{spec} \footnotesize $-\lambda-\rho$};
		\draw[fill = black] (\d + \l, -3*\h) circle (\r) node[xshift = 0.8cm] {\color{spec} \footnotesize $-\lambda-\rho$};
		\draw[dashed] (\d, - 4*\h) -- node[below] {\color{spec2} \footnotesize $\lambda + \rho$} (\d + 2*\l, - 4*\h);

		
		\draw[line width = 0.6pt] (\d + \t, \a + \hh) -- node[xshift = -0.5cm, yshift = 0.15cm] {\color{spec} \footnotesize $-\lambda$} (\d + \t, \a);
		\draw[line width = 0.6pt] (\d - \t, \a + \hh) --  (\d - \t, \a);
		
		\draw (\d, \a) -- (\d + \l, \a - \h);
		\draw (\d + \l, \a - \h) -- (\d, \a - 2*\h);
		\draw (\d, \a) -- (\d - \l, \a - \h) node[xshift = -0.35cm, yshift = 0.2cm] {\color{spec} \footnotesize $-\lambda$};
		\draw (\d - \l, \a - \h) -- (\d, \a - 2*\h);
		\draw (\d, \a - 2*\h) -- (\d + \l, \a - 3*\h) ;
		\draw (\d, \a - 2*\h) -- (\d - \l, \a - 3*\h) node[xshift = -0.35cm, yshift = 0.2cm] {\color{spec} \footnotesize $-\lambda$};
		\draw (\d - \l, \a - 3*\h) -- (\d, \a - 4*\h);
		\draw (\d + \l, \a - 3*\h) -- (\d, \a - 4*\h);
		\draw (\d, \a - 4*\h) -- (\d - \l, \a - 5*\h) node[xshift = -0.35cm, yshift = 0.2cm] {\color{spec} \footnotesize $-\lambda$};
		
		\draw[fill = black] (\d, \a) circle (\r) node[xshift = -0.45cm, yshift = 0.1cm] {\color{spec} \footnotesize $2 \lambda$};
		\draw[fill = black] (\d, \a - 2*\h) node[xshift = -0.7cm, yshift = 0.05cm] {\color{spec} \footnotesize $\lambda - \rho$} circle (\r);
		\draw[fill = black] (\d, \a - 4*\h) node[xshift = -0.7cm, yshift = 0.05cm] {\color{spec} \footnotesize $\lambda - \rho$} circle (\r);

		\draw[line width = 0.6pt] (\d + 2*\l + \t, \a + \hh) -- node[xshift = 0.4cm, yshift = 0.1cm] {\color{spec} \footnotesize $-\rho$} (\d + 2*\l + \t, \a);
		\draw[line width = 0.6pt] (\d + 2*\l - \t,  \a + \hh) --  (\d + 2*\l - \t, \a);
		
		\draw (\d + 2*\l, \a) -- (\d + 2*\l + \l, \a - \h) node[xshift = 0.25cm, yshift = 0.15cm] {\color{spec} \footnotesize $-\rho$};
		\draw (\d + 2*\l + \l, \a - \h) -- (\d + 2*\l, \a - 2*\h);
		\draw (\d + 2*\l, \a) -- (\d + 2*\l - \l, \a - \h);
		\draw (\d + 2*\l- \l, \a - \h) -- (\d + 2*\l, \a - 2*\h);
		\draw (\d + 2*\l, \a - 2*\h) -- (\d + 2*\l + \l, \a - 3*\h) node[xshift = 0.25cm, yshift = 0.15cm] {\color{spec} \footnotesize $-\rho$};
		\draw (\d + 2*\l, \a - 2*\h) -- (\d + 2*\l- \l, \a - 3*\h);
		\draw (\d + 2*\l - \l, \a - 3*\h) -- (\d + 2*\l, \a - 4*\h);
		\draw (\d + 2*\l + \l, \a - 3*\h) -- (\d + 2*\l, \a - 4*\h);
		\draw (\d + 2*\l, \a - 4*\h) -- (\d + 2*\l + \l, \a - 5*\h) node[xshift = 0.25cm, yshift = 0.15cm] {\color{spec} \footnotesize $-\rho$};
		
		\draw[fill = black] (\d + 2*\l, \a) circle (\r) node[xshift = 0.45cm, yshift = 0.05cm] {\color{spec} \footnotesize $2 \rho$};
		\draw[fill = black] (\d + 2*\l, \a - 2*\h) node[xshift = 0.65cm, yshift = 0.05cm] {\color{spec} \footnotesize $\rho - \lambda$} circle (\r);
		\draw[fill = black] (\d + 2*\l, \a - 4*\h) node[xshift = 0.65cm, yshift = 0.05cm] {\color{spec} \footnotesize $\rho - \lambda$} circle (\r);
		
		\draw[fill = black] (\d + \l, \a -\h) circle (\r) node[xshift = 0.75cm] {\color{spec} \footnotesize $\lambda + \rho$};
		\draw[fill = black] (\d + \l, \a - 3*\h) circle (\r) node[xshift = 0.75cm] {\color{spec} \footnotesize $\lambda + \rho$};
		\draw[dashed] (\d, \a) -- node[above] {\color{spec2} \footnotesize $\lambda + \rho$} (\d + 2*\l, \a);

		
		\draw[line width = 0.6pt] (\t, \a + \hh) -- node[xshift = -0.5cm, yshift = 0.1cm] {\color{spec} \footnotesize $-\rho$} (\t, \a);
		\draw[line width = 0.6pt] (- \t, \a + \hh) --  (- \t, \a);
		
		\draw (0, \a) -- ( \l, \a - \h);
		\draw ( \l, \a - \h) -- (0, \a - 2*\h);
		\draw (0, \a) -- (- \l, \a - \h) node[xshift = -0.35cm, yshift = 0.15cm] {\color{spec} \footnotesize $-\rho$};
		\draw (- \l, \a - \h) -- (0, \a - 2*\h);
		\draw (0, \a - 2*\h) -- ( \l, \a - 3*\h) ;
		\draw (0, \a - 2*\h) -- (- \l, \a - 3*\h) node[xshift = -0.35cm, yshift = 0.2cm] {\color{spec} \footnotesize $-\lambda$};
		\draw (- \l, \a - 3*\h) -- (0, \a - 4*\h);
		\draw ( \l, \a - 3*\h) -- (0, \a - 4*\h);
		\draw (0, \a - 4*\h) -- (- \l, \a - 5*\h) node[xshift = -0.35cm, yshift = 0.2cm] {\color{spec} \footnotesize $-\lambda$};
		
		\draw[fill = black] (0, \a) circle (\r) node[xshift = -0.45cm, yshift = 0.05cm] {\color{spec} \footnotesize $2 \rho$};
		\draw[fill = black] (0, \a - 2*\h) node[xshift = -0.7cm, yshift = 0.05cm] {\color{spec} \footnotesize $\lambda - \rho$} circle (\r);
		\draw[fill = black] (0, \a - 4*\h) node[xshift = -0.7cm, yshift = 0.05cm] {\color{spec} \footnotesize $\lambda - \rho$} circle (\r);
		\draw[dashed] ( -\l, \a - \h) -- node[yshift = 0.25cm] {\color{spec2} \footnotesize $\lambda - \rho$} ( \l, \a - \h);

		\draw[line width = 0.6pt] ( 2*\l + \t, \a + \hh) -- node[xshift = 0.4cm, yshift = 0.15cm] {\color{spec} \footnotesize $-\lambda$} ( 2*\l + \t, \a);
		\draw[line width = 0.6pt] ( 2*\l - \t,  \a + \hh) --  ( 2*\l - \t, \a);
		
		\draw ( 2*\l, \a) -- ( 2*\l + \l, \a - \h) node[xshift = 0.3cm, yshift = 0.2cm] {\color{spec} \footnotesize $-\lambda$};
		\draw ( 2*\l + \l, \a - \h) -- ( 2*\l, \a - 2*\h);
		\draw ( 2*\l, \a) -- ( 2*\l - \l, \a - \h);
		\draw ( 2*\l- \l, \a - \h) -- ( 2*\l, \a - 2*\h);
		\draw ( 2*\l, \a - 2*\h) -- ( 2*\l + \l, \a - 3*\h) node[xshift = 0.25cm, yshift = 0.15cm] {\color{spec} \footnotesize $-\rho$};
		\draw ( 2*\l, \a - 2*\h) -- ( 2*\l- \l, \a - 3*\h);
		\draw ( 2*\l - \l, \a - 3*\h) -- ( 2*\l, \a - 4*\h);
		\draw ( 2*\l + \l, \a - 3*\h) -- ( 2*\l, \a - 4*\h);
		\draw ( 2*\l, \a - 4*\h) -- ( 2*\l + \l, \a - 5*\h) node[xshift = 0.25cm, yshift = 0.15cm] {\color{spec} \footnotesize $-\rho$};
		
		\draw[fill = black] ( 2*\l, \a) circle (\r) node[xshift = 0.45cm, yshift = 0.1cm] {\color{spec} \footnotesize $2 \lambda$};
		\draw[fill = black] ( 2*\l, \a - 2*\h) node[xshift = 0.65cm, yshift = 0.05cm] {\color{spec} \footnotesize $\rho - \lambda$} circle (\r);
		\draw[fill = black] ( 2*\l, \a - 4*\h) node[xshift = 0.65cm, yshift = 0.05cm] {\color{spec} \footnotesize $\rho - \lambda$} circle (\r);
		
		\draw[fill = black] ( \l, \a -\h) circle (\r) node[yshift = -0.55cm] {\color{spec} \footnotesize $\lambda + \rho$};
		\draw[fill = black] ( \l, \a - 3*\h) circle (\r) node[xshift = 0.75cm] {\color{spec} \footnotesize $\lambda + \rho$};
		\draw[dashed] ( 2*\l - \l, \a - \h) -- node[yshift = 0.25cm] {\color{spec2} \footnotesize $\rho - \lambda$} ( 2*\l + \l, \a - \h);
		\draw[dashed] (0, \a) -- node[above] {\color{spec2} \footnotesize $\lambda + \rho$} ( 2*\l, \a);

		
		\draw[line width = 0.6pt] (\t, \b + \hh) -- node[xshift = -0.5cm, yshift = 0.1cm] {\color{spec} \footnotesize $-\rho$} (\t, \b);
		\draw[line width = 0.6pt] (- \t, \b + \hh) --  (- \t, \b);
		
		\draw (0, \b) -- ( \l, \b - \h);
		\draw ( \l, \b - \h) -- (0, \b - 2*\h);
		\draw (0, \b) -- (- \l, \b - \h) node[xshift = -0.35cm, yshift = 0.15cm] {\color{spec} \footnotesize $-\rho$};
		\draw (- \l, \b - \h) -- (0, \b - 2*\h);
		\draw (0, \b - 2*\h) -- ( \l, \b - 3*\h) ;
		\draw (0, \b - 2*\h) -- (- \l, \b - 3*\h) node[xshift = -0.35cm, yshift = 0.15cm] {\color{spec} \footnotesize $-\rho$};
		\draw (- \l, \b - 3*\h) -- (0, \b - 4*\h);
		\draw ( \l, \b - 3*\h) -- (0, \b - 4*\h);
		\draw (0, \b - 4*\h) -- (- \l, \b - 5*\h) node[xshift = -0.35cm, yshift = 0.2cm] {\color{spec} \footnotesize $-\lambda$};
		
		\draw[fill = black] (0, \b) circle (\r) node[xshift = -0.45cm, yshift = 0.05cm] {\color{spec} \footnotesize $2 \rho$};
		\draw[fill = black] (0, \b - 2*\h) node[xshift = -0.7cm] {\color{spec} \footnotesize $\rho - \lambda$} circle (\r);
		\draw[fill = black] (0, \b - 4*\h) node[xshift = -0.7cm] {\color{spec} \footnotesize $\lambda - \rho$} circle (\r);
		\draw[dashed] ( - \l, \b - 3*\h) -- node[yshift = 0.25cm] {\color{spec2} \footnotesize $\lambda - \rho$} ( \l, \b - 3*\h);

		\draw[line width = 0.6pt] ( 2*\l + \t, \b + \hh) -- node[xshift = 0.4cm, yshift = 0.15cm] {\color{spec} \footnotesize $-\lambda$} ( 2*\l + \t, \b);
		\draw[line width = 0.6pt] ( 2*\l - \t,  \b + \hh) --  ( 2*\l - \t, \b);
		
		\draw ( 2*\l, \b) -- ( 2*\l + \l, \b - \h) node[xshift = 0.3cm, yshift = 0.2cm] {\color{spec} \footnotesize $-\lambda$};
		\draw ( 2*\l + \l, \b - \h) -- ( 2*\l, \b - 2*\h);
		\draw ( 2*\l, \b) -- ( 2*\l - \l, \b - \h);
		\draw ( 2*\l- \l, \b - \h) -- ( 2*\l, \b - 2*\h);
		\draw ( 2*\l, \b - 2*\h) -- ( 2*\l + \l, \b - 3*\h) node[xshift = 0.3cm, yshift = 0.2cm] {\color{spec} \footnotesize $-\lambda$};
		\draw ( 2*\l, \b - 2*\h) -- ( 2*\l- \l, \b - 3*\h);
		\draw ( 2*\l - \l, \b - 3*\h) -- ( 2*\l, \b - 4*\h);
		\draw ( 2*\l + \l, \b - 3*\h) -- ( 2*\l, \b - 4*\h);
		\draw ( 2*\l, \b - 4*\h) -- ( 2*\l + \l, \b - 5*\h) node[xshift = 0.25cm, yshift = 0.15cm] {\color{spec} \footnotesize $-\rho$};
		
		\draw[fill = black] ( 2*\l, \b) circle (\r) node[xshift = 0.45cm, yshift = 0.1cm] {\color{spec} \footnotesize $2 \lambda$};
		\draw[fill = black] ( 2*\l, \b - 2*\h) node[xshift = 0.65cm] {\color{spec} \footnotesize $\lambda - \rho$} circle (\r);
		\draw[fill = black] ( 2*\l, \b - 4*\h) node[xshift = 0.65cm, yshift = 0.05cm] {\color{spec} \footnotesize $\rho - \lambda$} circle (\r);
		\draw[dashed] ( 2*\l - \l, \b - 3*\h) -- node[yshift = 0.25cm] {\color{spec2} \footnotesize $\rho - \lambda$} ( 2*\l + \l, \b - 3*\h);
		
		\draw[fill = black] ( \l, \b -\h) circle (\r) node[xshift = 0.75cm] {\color{spec} \footnotesize $\lambda + \rho$};
		\draw[fill = black] ( \l, \b - 3*\h) circle (\r) node[yshift = -0.55cm] {\color{spec} \footnotesize $\lambda + \rho$};
		\draw[dashed] (0, \b) -- node[above] {\color{spec2} \footnotesize $\lambda + \rho$} ( 2*\l, \b);

		\draw[line width = 0.6pt] (\d + \t, \b + \hh) -- node[xshift = -0.5cm, yshift = 0.1cm] {\color{spec} \footnotesize $-\rho$} (\d + \t, \b);
		\draw[line width = 0.6pt] (\d - \t, \b + \hh) --  (\d - \t, \b);
		
		\draw (\d, \b) -- (\d + \l, \b - \h);
		\draw (\d + \l, \b - \h) -- (\d, \b - 2*\h);
		\draw (\d, \b) -- (\d - \l, \b - \h) node[xshift = -0.35cm, yshift = 0.15cm] {\color{spec} \footnotesize $-\rho$};
		\draw (\d - \l, \b - \h) -- (\d, \b - 2*\h);
		\draw (\d, \b - 2*\h) -- (\d + \l, \b - 3*\h) ;
		\draw (\d, \b - 2*\h) -- (\d - \l, \b - 3*\h) node[xshift = -0.35cm, yshift = 0.15cm] {\color{spec} \footnotesize $-\rho$};
		\draw (\d - \l, \b - 3*\h) -- (\d, \b - 4*\h);
		\draw (\d + \l, \b - 3*\h) -- (\d, \b - 4*\h);
		\draw (\d, \b - 4*\h) -- (\d - \l, \b - 5*\h) node[xshift = -0.35cm, yshift = 0.15cm] {\color{spec} \footnotesize $-\rho$};
		
		\draw[fill = black] (\d, \b) circle (\r) node[xshift = -0.45cm, yshift = 0.05cm] {\color{spec} \footnotesize $2 \rho$};
		\draw[fill = black] (\d, \b - 2*\h) node[xshift = -0.7cm] {\color{spec} \footnotesize $\rho - \lambda$} circle (\r);
		\draw[fill = black] (\d, \b - 4*\h) node[xshift = -0.7cm] {\color{spec} \footnotesize $\rho - \lambda$} circle (\r);

		\draw[line width = 0.6pt] (\d + 2*\l + \t, \b + \hh) -- node[xshift = 0.4cm, yshift = 0.15cm] {\color{spec} \footnotesize $-\lambda$} (\d + 2*\l + \t, \b);
		\draw[line width = 0.6pt] (\d + 2*\l - \t,  \b + \hh) --  (\d + 2*\l - \t, \b);
		
		\draw (\d + 2*\l, \b) -- (\d + 2*\l + \l, \b - \h) node[xshift = 0.3cm, yshift = 0.2cm] {\color{spec} \footnotesize $-\lambda$};
		\draw (\d + 2*\l + \l, \b - \h) -- (\d + 2*\l, \b - 2*\h);
		\draw (\d + 2*\l, \b) -- (\d + 2*\l - \l, \b - \h);
		\draw (\d + 2*\l- \l, \b - \h) -- (\d + 2*\l, \b - 2*\h);
		\draw (\d + 2*\l, \b - 2*\h) -- (\d + 2*\l + \l, \b - 3*\h) node[xshift = 0.3cm, yshift = 0.2cm] {\color{spec} \footnotesize $-\lambda$};
		\draw (\d + 2*\l, \b - 2*\h) -- (\d + 2*\l- \l, \b - 3*\h);
		\draw (\d + 2*\l - \l, \b - 3*\h) -- (\d + 2*\l, \b - 4*\h);
		\draw (\d + 2*\l + \l, \b - 3*\h) -- (\d + 2*\l, \b - 4*\h);
		\draw (\d + 2*\l, \b - 4*\h) -- (\d + 2*\l + \l, \b - 5*\h) node[xshift = 0.3cm, yshift = 0.2cm] {\color{spec} \footnotesize $-\lambda$};
		
		\draw[fill = black] (\d + 2*\l, \b) circle (\r) node[xshift = 0.45cm, yshift = 0.1cm] {\color{spec} \footnotesize $2 \lambda$};
		\draw[fill = black] (\d + 2*\l, \b - 2*\h) node[xshift = 0.65cm] {\color{spec} \footnotesize $\lambda - \rho$} circle (\r);
		\draw[fill = black] (\d + 2*\l, \b - 4*\h) node[xshift = 0.65cm] {\color{spec} \footnotesize $\lambda - \rho$} circle (\r);
		
		\draw[fill = black] (\d + \l, \b -\h) circle (\r) node[xshift = 0.75cm] {\color{spec} \footnotesize $\lambda + \rho$};
		\draw[fill = black] (\d + \l, \b - 3*\h) circle (\r) node[xshift = 0.75cm] {\color{spec} \footnotesize $\lambda + \rho$};
		\draw[dashed] (\d, \b) -- node[above] {\color{spec2} \footnotesize $\lambda + \rho$} (\d + 2*\l, \b);

		\draw[arrow, gray!80] (\l + 0.5*\d - \dd, - 2*\h) -- (\l + 0.5*\d + \dd, - 2*\h);
		\draw[arrow, gray!80] (\l + 0.5*\d + \dd, \a - 2*\h) -- (\l + 0.5*\d - \dd, \a - 2*\h);
		\draw[arrow, gray!80] (\l + 0.5*\d - \dd, \b - 2*\h) -- (\l + 0.5*\d + \dd, \b - 2*\h);
		
		\draw[arrow, gray!80] (\d + \l, \a + 1.55*\hh + \dd) -- (\d + \l, \a + 1.55*\hh - \dd);
		\draw[arrow, gray!80] (\l, \b + 1.55*\hh + \dd) -- (\l, \b + 1.55*\hh - \dd);
	\end{tikzpicture}
	\vspace{0.2cm}
	\caption{Commutativity of Baxter operators} \label{fig:Qcomm}
\end{figure}

\newpage

\begin{figure}[H] \centering \vspace{-1cm}
	\begin{tikzpicture}[thick, line cap = round, scale = 0.98]
		\def\l{1.1}
		\def\r{1.5pt}
		\def\h{1}
		\def\d{8.2}
		\def\dd{0.3}
		\def\hh{1.4}
		\def\t{0.03}
		\def\a{-7.8}
		\def\b{-15.6}

		
		\draw[line width = 0.6pt] ( \t, \hh) -- node[xshift = -0.5cm, yshift = 0.15cm] {\color{spec} \footnotesize $-\lambda$} ( \t, 0);
		\draw[line width = 0.6pt] (- \t, \hh) --  (- \t, 0);
		
		\draw (0, 0 ) -- ( \l, - \h);
		\draw ( \l, - \h) -- (0, - 2*\h);
		\draw (0, 0) -- (- \l, - \h) node[xshift = -0.35cm, yshift = 0.2cm] {\color{spec} \footnotesize $-\lambda$};
		\draw (- \l, - \h) -- (0, - 2*\h);
		\draw (0, - 2*\h) -- ( \l, - 3*\h) ;
		\draw (0, - 2*\h) -- (- \l, - 3*\h) node[xshift = -0.35cm, yshift = 0.2cm] {\color{spec} \footnotesize $-\lambda$};
		\draw (- \l, - 3*\h) -- (0, - 4*\h);
		\draw ( \l, - 3*\h) -- (0, - 4*\h);
		\draw (0, - 4*\h) -- (- \l, -5*\h) node[xshift = -0.35cm, yshift = 0.2cm] {\color{spec} \footnotesize $-\lambda$};
		\draw (0, - 4*\h) -- (\l, -5*\h);
		
		\draw[fill = black] (0, 0) circle (\r) node[xshift = -0.45cm, yshift = 0.1cm] {\color{spec} \footnotesize $2 \lambda$};
		\draw[fill = black] (0, - 2*\h) node[xshift = -0.5cm] {\color{spec} \footnotesize $2\lambda$} circle (\r);
		\draw[fill = black] (0, - 4*\h) node[xshift = -0.45cm] {\color{spec} \footnotesize $2 \lambda$} circle (\r);

		\draw[line width = 0.6pt] (2*\l + \t, \hh) -- node[xshift = 0.4cm, yshift = 0.1cm] {\color{spec} \footnotesize $-\rho$} (2*\l + \t, 0);
		\draw[line width = 0.6pt] (2*\l - \t,  \hh) --  (2*\l - \t, 0);
		
		\draw (2*\l, 0) -- (2*\l + \l, - \h) node[xshift = 0.25cm, yshift = 0.15cm] {\color{spec} \footnotesize $-\rho$};
		\draw (2*\l + \l, - \h) -- (2*\l, - 2*\h);
		\draw (2*\l, 0) -- (2*\l - \l, - \h);
		\draw (2*\l- \l, - \h) -- (2*\l, - 2*\h);
		\draw (2*\l, - 2*\h) -- (2*\l + \l, - 3*\h) node[xshift = 0.25cm, yshift = 0.15cm] {\color{spec} \footnotesize $-\rho$};
		\draw (2*\l, - 2*\h) -- (2*\l- \l, - 3*\h);
		\draw (2*\l - \l, - 3*\h) -- (2*\l, - 4*\h);
		\draw (2*\l + \l, - 3*\h) -- (2*\l, - 4*\h);
		\draw (2*\l, - 4*\h) -- (2*\l - \l, -5*\h);
		
		\draw[fill = black] (2*\l, 0) circle (\r) node[xshift = 0.45cm, yshift = 0.05cm] {\color{spec} \footnotesize $2 \rho$};
		\draw[fill = black] (2*\l, - 2*\h) node[xshift = 0.5cm, yshift = -0.05cm] {\color{spec} \footnotesize $2\rho$} circle (\r);
		\draw[fill = black] (2*\l, - 4*\h) node[xshift = 0.45cm, yshift = -0.05cm] {\color{spec} \footnotesize $2 \rho$} circle (\r);
		
		\draw[fill = black] (\l, -\h) circle (\r) node[xshift = 0.8cm] {\color{spec} \footnotesize $-\lambda-\rho$};
		\draw[fill = black] (\l, -3*\h) circle (\r) node[xshift = 0.8cm] {\color{spec} \footnotesize $-\lambda-\rho$};
		\draw[fill = black] (\l, -5*\h) circle (\r) node[xshift = 0.8cm] {\color{spec} \footnotesize $-\lambda-\rho$};

		
		\draw[line width = 0.6pt] (\d + \t,  \hh) -- node[xshift = -0.5cm, yshift = 0.1cm] {\color{spec} \footnotesize $-\rho$} (\d + \t, 0);
		\draw[line width = 0.6pt] (\d - \t, \hh) --  (\d - \t, 0);
		
		\draw (\d , 0) -- (\d +  \l,  - \h);
		\draw (\d +  \l,  - \h) -- (\d,  - 2*\h);
		\draw (\d, 0) -- (\d - \l,  - \h) node[xshift = -0.35cm, yshift = 0.15cm] {\color{spec} \footnotesize $-\rho$};
		\draw (\d - \l,  - \h) -- (\d,  - 2*\h);
		\draw (\d,  - 2*\h) -- (\d +  \l,  - 3*\h) ;
		\draw (\d,  - 2*\h) -- (\d - \l,  - 3*\h) node[xshift = -0.35cm, yshift = 0.15cm] {\color{spec} \footnotesize $-\rho$};
		\draw (\d - \l,  - 3*\h) -- (\d,  - 4*\h);
		\draw (\d +  \l,  - 3*\h) -- (\d,  - 4*\h);
		\draw (\d,  - 4*\h) -- (\d - \l,  - 5*\h) node[xshift = -0.35cm, yshift = 0.2cm] {\color{spec} \footnotesize $-\lambda$};
		
		\draw[fill = black] (\d, 0) circle (\r) node[xshift = -0.45cm, yshift = 0.05cm] {\color{spec} \footnotesize $2 \rho$};
		\draw[fill = black] (\d,  - 2*\h) node[xshift = -0.7cm] {\color{spec} \footnotesize $\rho - \lambda$} circle (\r);
		\draw[fill = black] (\d,  - 4*\h) node[xshift = -0.7cm] {\color{spec} \footnotesize $\lambda - \rho$} circle (\r);
		\draw[dashed] (\d - \l,  - 3*\h) -- node[yshift = 0.25cm] {\color{spec2} \footnotesize $\lambda - \rho$} (\d +  \l,  - 3*\h);

		\draw[line width = 0.6pt] (\d + 2*\l + \t,  + \hh) -- node[xshift = 0.4cm, yshift = 0.15cm] {\color{spec} \footnotesize $-\lambda$} (\d +  2*\l + \t, 0);
		\draw[line width = 0.6pt] (\d +  2*\l - \t,   + \hh) --  (\d + 2*\l - \t, 0);
		
		\draw (\d + 2*\l, 0) -- ( \d + 2*\l + \l,  - \h) node[xshift = 0.3cm, yshift = 0.2cm] {\color{spec} \footnotesize $-\lambda$};
		\draw (\d +  2*\l + \l,  - \h) -- (\d +  2*\l,  - 2*\h);
		\draw (\d +  2*\l, 0) -- ( \d + 2*\l - \l,  - \h);
		\draw ( \d + 2*\l- \l,  - \h) -- ( \d + 2*\l,  - 2*\h);
		\draw ( \d + 2*\l,  - 2*\h) -- ( \d + 2*\l + \l,  - 3*\h) node[xshift = 0.3cm, yshift = 0.2cm] {\color{spec} \footnotesize $-\lambda$};
		\draw ( \d + 2*\l,  - 2*\h) -- ( \d + 2*\l- \l,  - 3*\h);
		\draw ( \d + 2*\l - \l,  - 3*\h) -- ( \d + 2*\l,  - 4*\h);
		\draw ( \d + 2*\l + \l,  - 3*\h) -- ( \d + 2*\l,  - 4*\h);
		
		\draw[fill = black] ( \d + 2*\l, 0) circle (\r) node[xshift = 0.45cm, yshift = 0.1cm] {\color{spec} \footnotesize $2 \lambda$};
		\draw[fill = black] ( \d + 2*\l,  - 2*\h) node[xshift = 0.65cm] {\color{spec} \footnotesize $\lambda - \rho$} circle (\r);
		\draw[fill = black] ( \d + 2*\l,  - 4*\h) node[xshift = 0.65cm, yshift = 0.05cm] {\color{spec} \footnotesize $\rho - \lambda$} circle (\r);
		\draw[dashed] ( \d + 2*\l - \l,  - 3*\h) -- node[yshift = 0.25cm] {\color{spec2} \footnotesize $\rho - \lambda$} ( \d + 2*\l + \l,  - 3*\h);
		
		\draw[fill = black] ( \d + \l,  -\h) circle (\r) node[xshift = 0.75cm] {\color{spec} \footnotesize $\lambda + \rho$};
		\draw[fill = black] ( \d + \l,  - 3*\h) circle (\r) node[yshift = -0.55cm] {\color{spec} \footnotesize $\lambda + \rho$};
		\draw[dashed] (\d, 0) -- node[above] {\color{spec2} \footnotesize $\lambda + \rho$} ( \d + 2*\l, 0);

		
		\draw[line width = 0.6pt] (\d + \t, \a + \hh) -- node[xshift = -0.5cm, yshift = 0.1cm] {\color{spec} \footnotesize $-\rho$} (\d + \t, \a);
		\draw[line width = 0.6pt] (\d - \t, \a + \hh) --  (\d - \t, \a);
		
		\draw (\d, \a) -- (\d + \l, \a - \h);
		\draw (\d + \l, \a - \h) -- (\d, \a - 2*\h);
		\draw (\d, \a) -- (\d - \l, \a - \h) node[xshift = -0.35cm, yshift = 0.15cm] {\color{spec} \footnotesize $-\rho$};
		\draw (\d - \l, \a - \h) -- (\d, \a - 2*\h);
		\draw (\d, \a - 2*\h) -- (\d + \l, \a - 3*\h) ;
		\draw (\d, \a - 2*\h) -- (\d - \l, \a - 3*\h) node[xshift = -0.35cm, yshift = 0.15cm] {\color{spec} \footnotesize $-\rho$};
		\draw (\d - \l, \a - 3*\h) -- (\d, \a - 4*\h);
		\draw (\d + \l, \a - 3*\h) -- (\d, \a - 4*\h);
		\draw (\d, \a - 4*\h) -- (\d - \l, \a - 5*\h) node[xshift = -0.35cm, yshift = 0.15cm] {\color{spec} \footnotesize $-\rho$};
		
		\draw[fill = black] (\d, \a) circle (\r) node[xshift = -0.45cm, yshift = 0.05cm] {\color{spec} \footnotesize $2 \rho$};
		\draw[fill = black] (\d, \a - 2*\h) node[xshift = -0.7cm] {\color{spec} \footnotesize $\rho - \lambda$} circle (\r);
		\draw[fill = black] (\d, \a - 4*\h) node[xshift = -0.7cm] {\color{spec} \footnotesize $\rho - \lambda$} circle (\r);

		\draw[line width = 0.6pt] (\d + 2*\l + \t, \a + \hh) -- node[xshift = 0.4cm, yshift = 0.15cm] {\color{spec} \footnotesize $-\lambda$} (\d + 2*\l + \t, \a);
		\draw[line width = 0.6pt] (\d + 2*\l - \t,  \a + \hh) --  (\d + 2*\l - \t, \a);
		
		\draw (\d + 2*\l, \a) -- (\d + 2*\l + \l, \a - \h) node[xshift = 0.3cm, yshift = 0.2cm] {\color{spec} \footnotesize $-\lambda$};
		\draw (\d + 2*\l + \l, \a - \h) -- (\d + 2*\l, \a - 2*\h);
		\draw (\d + 2*\l, \a) -- (\d + 2*\l - \l, \a - \h);
		\draw (\d + 2*\l- \l, \a - \h) -- (\d + 2*\l, \a - 2*\h);
		\draw (\d + 2*\l, \a - 2*\h) -- (\d + 2*\l + \l, \a - 3*\h) node[xshift = 0.3cm, yshift = 0.2cm] {\color{spec} \footnotesize $-\lambda$};
		\draw (\d + 2*\l, \a - 2*\h) -- (\d + 2*\l- \l, \a - 3*\h);
		
		\draw[fill = black] (\d + 2*\l, \a) circle (\r) node[xshift = 0.45cm, yshift = 0.1cm] {\color{spec} \footnotesize $2 \lambda$};
		\draw[fill = black] (\d + 2*\l, \a - 2*\h) node[xshift = 0.65cm] {\color{spec} \footnotesize $\lambda - \rho$} circle (\r);
		
		\draw[fill = black] (\d + \l, \a -\h) circle (\r) node[xshift = 0.75cm] {\color{spec} \footnotesize $\lambda + \rho$};
		\draw[fill = black] (\d + \l, \a - 3*\h) circle (\r) node[xshift = 0.75cm] {\color{spec} \footnotesize $\lambda + \rho$};
		\draw[dashed] (\d, \a) -- node[above] {\color{spec2} \footnotesize $\lambda + \rho$} (\d + 2*\l, \a);

		
		\draw[line width = 0.6pt] ( \t, \a + \hh) -- node[xshift = -0.5cm, yshift = 0.1cm] {\color{spec} \footnotesize $-\rho$} ( \t, \a);
		\draw[line width = 0.6pt] (- \t, \a + \hh) --  (- \t, \a);
		
		\draw (0, \a) -- ( \l, \a - \h);
		\draw ( \l, \a - \h) -- (0, \a - 2*\h);
		\draw (0, \a) -- (- \l, \a - \h) node[xshift = -0.35cm, yshift = 0.15cm] {\color{spec} \footnotesize $-\rho$};
		\draw (- \l, \a - \h) -- (0, \a - 2*\h);
		\draw (0, \a - 2*\h) -- ( \l, \a - 3*\h) ;
		\draw (0, \a - 2*\h) -- (- \l, \a - 3*\h) node[xshift = -0.35cm, yshift = 0.15cm] {\color{spec} \footnotesize $-\rho$};
		\draw (- \l, \a - 3*\h) -- (0, \a - 4*\h);
		\draw ( \l, \a - 3*\h) -- (0, \a - 4*\h);
		\draw (0, \a - 4*\h) -- (- \l, \a - 5*\h) node[xshift = -0.35cm, yshift = 0.15cm] {\color{spec} \footnotesize $-\rho$};
		
		\draw[fill = black] (0, \a) circle (\r) node[xshift = -0.45cm, yshift = 0.05cm] {\color{spec} \footnotesize $2 \rho$};
		\draw[fill = black] (0, \a - 2*\h) node[xshift = -0.45cm] {\color{spec} \footnotesize $2\rho$} circle (\r);
		\draw[fill = black] (0, \a - 4*\h) node[xshift = -0.7cm] {\color{spec} \footnotesize $\rho - \lambda$} circle (\r);

		\draw[line width = 0.6pt] ( 2*\l + \t, \a + \hh) -- node[xshift = 0.4cm, yshift = 0.15cm] {\color{spec} \footnotesize $-\lambda$} ( 2*\l + \t, \a);
		\draw[line width = 0.6pt] ( 2*\l - \t,  \a + \hh) --  ( 2*\l - \t, \a);
		
		\draw ( 2*\l, \a) -- ( 2*\l + \l, \a - \h) node[xshift = 0.3cm, yshift = 0.2cm] {\color{spec} \footnotesize $-\lambda$};
		\draw ( 2*\l + \l, \a - \h) -- ( 2*\l, \a - 2*\h);
		\draw ( 2*\l, \a) -- ( 2*\l - \l, \a - \h);
		\draw ( 2*\l- \l, \a - \h) -- ( 2*\l, \a - 2*\h);
		\draw ( 2*\l, \a - 2*\h) -- ( 2*\l + \l, \a - 3*\h) node[xshift = 0.3cm, yshift = 0.2cm] {\color{spec} \footnotesize $-\lambda$};
		\draw ( 2*\l, \a - 2*\h) -- ( 2*\l- \l, \a - 3*\h);
		
		\draw[fill = black] ( 2*\l, \a) circle (\r) node[xshift = 0.45cm, yshift = 0.1cm] {\color{spec} \footnotesize $2 \lambda$};
		\draw[fill = black] ( 2*\l, \a - 2*\h) node[xshift = 0.45cm] {\color{spec} \footnotesize $2\lambda$} circle (\r);
		
		\draw[fill = black] ( \l, \a -\h) circle (\r) node[xshift = 0.75cm] {\color{spec} \footnotesize $- \lambda - \rho$};
		\draw[fill = black] ( \l, \a - 3*\h) circle (\r) node[xshift = 0.75cm] {\color{spec} \footnotesize $\lambda + \rho$};
		\draw[dashed] (0, \a - 2*\h) -- node[above] {\color{spec2} \footnotesize $\lambda + \rho$} ( 2*\l, \a - 2*\h);

		
		\draw[line width = 0.6pt] ( \t, \b + \hh) -- node[xshift = -0.5cm, yshift = 0.1cm] {\color{spec} \footnotesize $-\rho$} ( \t, \b);
		\draw[line width = 0.6pt] (- \t, \b + \hh) --  (- \t, \b);
		
		\draw (0, \b) -- ( \l, \b - \h);
		\draw ( \l, \b - \h) -- (0, \b - 2*\h);
		\draw (0, \b) -- (- \l, \b - \h) node[xshift = -0.35cm, yshift = 0.15cm] {\color{spec} \footnotesize $-\rho$};
		\draw (- \l, \b - \h) -- (0, \b - 2*\h);
		\draw (0, \b - 2*\h) -- ( \l, \b - 3*\h) ;
		\draw (0, \b - 2*\h) -- (- \l, \b - 3*\h) node[xshift = -0.35cm, yshift = 0.15cm] {\color{spec} \footnotesize $-\rho$};
		\draw (- \l, \b - 3*\h) -- (0, \b - 4*\h);
		\draw ( \l, \b - 3*\h) -- (0, \b - 4*\h);
		\draw (0, \b - 4*\h) -- (- \l, \b - 5*\h) node[xshift = -0.35cm, yshift = 0.15cm] {\color{spec} \footnotesize $-\rho$};
		
		\draw[fill = black] (0, \b) circle (\r) node[xshift = -0.45cm, yshift = 0.05cm] {\color{spec} \footnotesize $2 \rho$};
		\draw[fill = black] (0, \b - 2*\h) node[xshift = -0.45cm] {\color{spec} \footnotesize $2\rho$} circle (\r);
		\draw[fill = black] (0, \b - 4*\h) node[xshift = -0.45cm] {\color{spec} \footnotesize $2\rho$} circle (\r);

		\draw[line width = 0.6pt] ( 2*\l + \t, \b + \hh) -- node[xshift = 0.4cm, yshift = 0.15cm] {\color{spec} \footnotesize $-\lambda$} ( 2*\l + \t, \b);
		\draw[line width = 0.6pt] ( 2*\l - \t,  \b + \hh) --  ( 2*\l - \t, \b);
		
		\draw ( 2*\l, \b) -- ( 2*\l + \l, \b - \h) node[xshift = 0.3cm, yshift = 0.2cm] {\color{spec} \footnotesize $-\lambda$};
		\draw ( 2*\l + \l, \b - \h) -- ( 2*\l, \b - 2*\h);
		\draw ( 2*\l, \b) -- ( 2*\l - \l, \b - \h);
		\draw ( 2*\l- \l, \b - \h) -- ( 2*\l, \b - 2*\h);
		\draw ( 2*\l, \b - 2*\h) -- ( 2*\l + \l, \b - 3*\h) node[xshift = 0.3cm, yshift = 0.2cm] {\color{spec} \footnotesize $-\lambda$};
		\draw ( 2*\l, \b - 2*\h) -- ( 2*\l- \l, \b - 3*\h);
		
		\draw[fill = black] ( 2*\l, \b) circle (\r) node[xshift = 0.45cm, yshift = 0.1cm] {\color{spec} \footnotesize $2 \lambda$};
		\draw[fill = black] ( 2*\l, \b - 2*\h) node[xshift = 0.45cm] {\color{spec} \footnotesize $2\lambda$} circle (\r);
		
		\draw[fill = black] ( \l, \b -\h) circle (\r) node[xshift = 0.75cm] {\color{spec} \footnotesize $- \lambda - \rho$};
		\draw[fill = black] ( \l, \b - 3*\h) circle (\r) node[xshift = 0.75cm] {\color{spec} \footnotesize $- \lambda - \rho$};

		
		\hypersetup{linkcolor=gray!80}
		\draw[arrow, gray!80] (\l + 0.5*\d - \dd, - 2*\h) -- node[yshift = 0.6cm, align=center] {\footnotesize see \\[-3pt] \footnotesize Fig.~\ref{fig:Qcomm}} (\l + 0.5*\d + \dd, - 2*\h);
		\hypersetup{linkcolor=black}
		\draw[arrow, gray!80] (\l + 0.5*\d + \dd, \a - 2*\h) -- (\l + 0.5*\d - \dd, \a - 2*\h);
		
		\draw[arrow, gray!80] (\d + \l, \a + 1.55*\hh + \dd) -- (\d + \l, \a + 1.55*\hh - \dd);
		\draw[arrow, gray!80] (\l, \b + 1.55*\hh + \dd) -- (\l, \b + 1.55*\hh - \dd);
	\end{tikzpicture}
	\vspace{0.2cm}
	\caption{Local relation between Baxter and raising operators} \label{fig:QrL}
\end{figure}
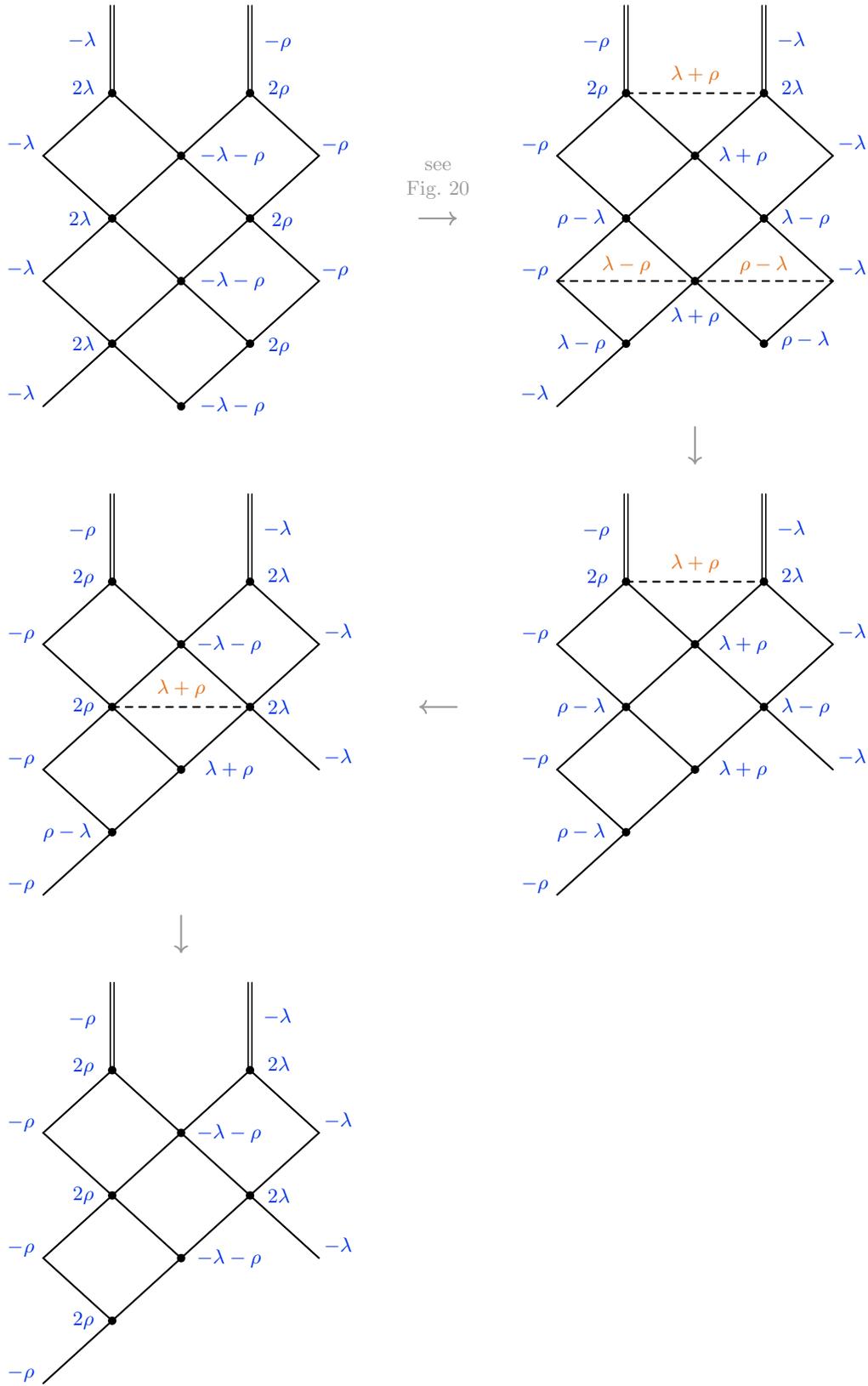

\subsection{Properties of wave functions}

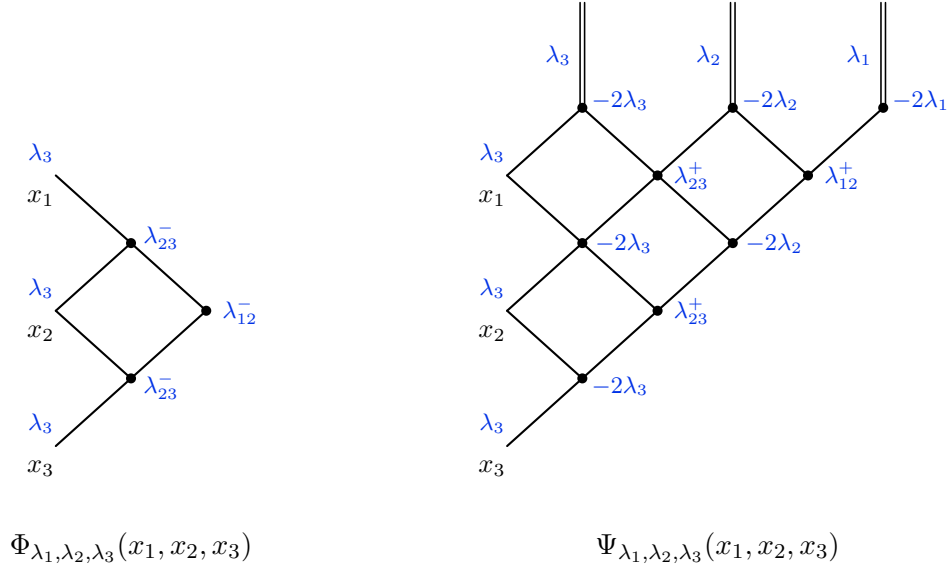
\begin{figure}[t] \centering
	\begin{tikzpicture}[thick, line cap = round]
		\def\l{1}
		\def\r{1.5pt}
		\def\h{0.9}
		\def\s{-0.9}
		\def\d{-7}
		\def\hh{1.4}
		\def\t{0.03}
		
		\draw[line width = 0.6pt] (\t, \hh) -- node[xshift = -0.35cm] {\color{spec} \footnotesize $\lambda_3$} (\t, 0);
		\draw[line width = 0.6pt] (-\t, \hh) --  (-\t, 0);
		
		\draw (0, 0) -- (\l, -\h)  node[xshift = 0.45cm, yshift = 0cm] {\color{spec} \footnotesize $\lambda_{23}^+$};
		\draw (\l, -\h) -- (0, -2*\h);
		\draw (0, 0) -- (-\l, -\h) node[xshift = -0.2cm, yshift = -0.3cm] {\small $x_1$} node[xshift = -0.2cm, yshift = 0.3cm] {\color{spec} \footnotesize $\lambda_3$};
		\draw (-\l, -\h) -- (0, -2*\h);
		\draw (0, -2*\h) -- (\l, -3*\h) node[xshift = 0.45cm, yshift = 0cm] {\color{spec} \footnotesize $\lambda_{23}^+$};
		\draw (0, -2*\h) -- (-\l, -3*\h) node[xshift = -0.2cm, yshift = -0.3cm] {\small $x_2$} node[xshift = -0.2cm, yshift = 0.3cm] {\color{spec} \footnotesize $\lambda_3$};
		\draw (-\l, -3*\h) -- (0, -4*\h);
		\draw (\l, -3*\h) -- (0, -4*\h);
		\draw (0, -4*\h) -- (-\l, -5*\h) node[xshift = -0.2cm, yshift = -0.3cm] {\small $x_3$} node[xshift = -0.2cm, yshift = 0.3cm] {\color{spec} \footnotesize $\lambda_3$};
		
		\draw[line width = 0.6pt] (2*\l + \t, \hh) -- node[xshift = -0.35cm] {\color{spec} \footnotesize $\lambda_2$} (2*\l + \t, 0);
		\draw[line width = 0.6pt] (2*\l -\t, \hh) --  (2*\l-\t, 0);
		
		\draw (2*\l, 0) -- (\l, -\h);
		\draw (2*\l, 0) -- (3*\l, -\h) node[xshift = 0.45cm, yshift = 0cm] {\color{spec} \footnotesize $\lambda_{12}^+$};
		\draw (\l, -\h) -- (2*\l, -2*\h);
		\draw (3*\l, -\h) -- (2*\l, -2*\h);
		\draw (2*\l, -2*\h) -- (\l, -3*\h);
		
		\draw[line width = 0.6pt] (4*\l + \t, \hh) -- node[xshift = -0.35cm] {\color{spec} \footnotesize $\lambda_1$} (4*\l + \t, 0);
		\draw[line width = 0.6pt] (4*\l -\t, \hh) --  (4*\l-\t, 0);
		
		\draw (4*\l, 0) -- (3*\l, -\h);
		
		\draw[fill = black] (0, 0) circle (\r) node[xshift = 0.5cm, yshift = 0.1cm] {\color{spec} \footnotesize $- 2 \lambda_3$};
		\draw[fill = black] (0, -2*\h) node[xshift = 0.55cm] {\color{spec} \footnotesize $-2\lambda_3$} circle (\r);
		\draw[fill = black] (0, -4*\h) node[xshift = 0.5cm, yshift = -0.1cm] {\color{spec} \footnotesize $- 2 \lambda_3$} circle (\r);
		
		\draw[fill = black] (\l, -\h) circle (\r);
		\draw[fill = black] (\l, -3*\h) circle (\r);
		
		\draw[fill = black] (2*\l, 0) circle (\r) node[xshift = 0.5cm, yshift = 0.1cm] {\color{spec} \footnotesize $- 2 \lambda_2$};
		\draw[fill = black] (2*\l, -2*\h) circle (\r) node[xshift = 0.55cm, yshift = 0cm] {\color{spec} \footnotesize $- 2 \lambda_2$};
		
		\draw[fill = black] (3*\l, -\h) circle (\r);
		
		\draw[fill = black] (4*\l, 0) circle (\r) node[xshift = 0.5cm, yshift = 0.1cm] {\color{spec} \footnotesize $- 2 \lambda_1$};
		
		\node (Psi) at (1.8*\l, -6.5*\h) {$\Psi_{\lambda_1, \lambda_2, \lambda_3}(x_1, x_2, x_3)$};

		
		\draw (\d, \s) node[xshift = -0.2cm, yshift = -0.3cm] {\small $x_1$} node[xshift = -0.2cm, yshift = 0.3cm] {\color{spec} \footnotesize $\lambda_3$} -- (\d + \l, -\h + \s) node[xshift = 0.4cm, yshift = 0.1cm] {\color{spec} \footnotesize $\lambda_{23}^-$};
		\draw (\d + \l, -\h + \s) -- (\d, -2*\h + \s);
		\draw (\d, -2*\h + \s) node[xshift = -0.2cm, yshift = -0.3cm] {\small $x_2$} node[xshift = -0.2cm, yshift = 0.3cm] {\color{spec} \footnotesize $\lambda_3$} -- (\d + \l, -3*\h + \s) node[xshift = 0.4cm, yshift = -0.1cm] {\color{spec} \footnotesize $\lambda_{23}^-$};
		\draw (\d + \l, -3*\h + \s) -- (\d, -4*\h + \s) node[xshift = -0.2cm, yshift = -0.3cm] {\small $x_3$} node[xshift = -0.2cm, yshift = 0.3cm] {\color{spec} \footnotesize $\lambda_3$};
		
		\draw (\d + \l, -\h + \s) -- (\d + 2*\l, -2*\h + \s) node[xshift = 0.45cm, yshift = 0cm] {\color{spec} \footnotesize $\lambda_{12}^-$};
		\draw (\d + 2*\l, -2*\h + \s) -- (\d + \l, -3*\h + \s);
		
		\draw[fill = black] (\d + \l, -\h + \s) circle (\r);
		\draw[fill = black] (\d + \l, -3*\h + \s) circle (\r);
		\draw[fill = black] (\d + 2*\l, -2*\h + \s) circle (\r);
		
		\node (Phi) at (\l + \d, -6.5*\h) {$\Phi_{\lambda_1, \lambda_2, \lambda_3}(x_1, x_2, x_3)$};
		
	\end{tikzpicture}
	\vspace{0.2cm}
	\caption{Wave functions of $GL$ and $BC$ systems ($\lambda_{jk}^{\pm} = \lambda_j \pm \lambda_k$)} \label{fig:eigen} \vspace{0.1cm}
\end{figure}

The wave functions of $BC$ Toda chain are given by iterative integrals
\begin{align}\label{Psi-GG}
	\Psi_{\bm{\lambda}_n}(\bm{x}_n) =  \LLambda_n(\lambda_n) \cdots \LLambda_1(\lambda_1) \cdot 1,
\end{align} 
where we assume $\bm{\lambda}_n \in \mathbb{R}^n$. This multiple integral is easy to visualize using diagrams, see Figure~\ref{fig:eigen}, where for comparison we also depict wave functions of $GL$ Toda chain defined by~\eqref{Phi-GG-0}.

\begin{theorem} \label{thm:Psi-sym}
	Let $\sigma \in S_n$, $\bm{\epsilon}_n \in \{1, -1\}^n$. Then
	\begin{align}
		\Psi_{\l_1,\ldots, \l_n}(\bm{x}_n) = \Psi_{\ve_1\l_{\sigma(1)},\ldots, \ve_n\l_{\sigma(n)}}(\bm{x}_n).
	\end{align}
\end{theorem}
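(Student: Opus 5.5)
The plan is to show that the Weyl group $W_n = S_n \times \mathbb{Z}_2^n$ is generated by the adjacent transpositions $(k,k+1)$, $k=1,\dots,n-1$, together with the single sign change $\lambda_1 \mapsto -\lambda_1$. Invariance of $\Psi_{\bm{\lambda}_n}$ under each generator then gives the theorem. (Alternatively one may use all sign changes $\lambda_k \mapsto -\lambda_k$; each is equally easy.) Throughout I use the Gauss--Givental representation~\eqref{Psi-GG}, $\Psi_{\bm{\lambda}_n} = \LLambda_n(\lambda_n)\cdots\LLambda_1(\lambda_1)\cdot 1$, together with the fact (Proposition~\ref{prop:LQQr-space}) that each $\LLambda_k(\lambda)$ with $\lambda\in\mathbb{R}$ maps $\mathcal{P}_{k-1}$ to $\mathcal{P}_k$. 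All intermediate functions therefore lie in the appropriate spaces $\mathcal{P}_k$, and operator identities valid on these spaces may be applied inside the product.

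\emph{Sign changes.} Fix $k$. The function $\phi = \LLambda_{k-1}(\lambda_{k-1})\cdots\LLambda_1(\lambda_1)\cdot 1$ belongs to $\mathcal{P}_{k-1}$; for $k=1$ it is the constant $1\in\mathcal{P}_0$, and $\LLambda_1(\lambda)\cdot 1$ is the $n=1$ case of the same proposition. By Proposition~\ref{prop:Lrefl}, $\LLambda_k(\lambda_k)\phi = \LLambda_k(-\lambda_k)\phi$. Applying the remaining operators $\LLambda_n(\lambda_n)\cdots\LLambda_{k+1}(\lambda_{k+1})$ to both sides gives
\begin{align*}
\Psi_{\lambda_1,\dots,\lambda_k,\dots,\lambda_n}=\Psi_{\lambda_1,\dots,-\lambda_k,\dots,\lambda_n}.
\end{align*}

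\emph{Adjacent transpositions.} Fix $1\le k\le n-1$ and set $\phi=\LLambda_{k-1}(\lambda_{k-1})\cdots\LLambda_1(\lambda_1)\cdot 1\in\mathcal{P}_{k-1}$. Relation~\eqref{LL} of Proposition~\ref{prop:QL-rel}, applied with $n$ replaced by $k$ and valid on $\mathcal{P}_{k-1}$, gives
\begin{align*}
\LLambda_{k+1}(\lambda_{k+1})\,\LLambda_k(\lambda_k)\,\phi=\LLambda_{k+1}(\lambda_k)\,\LLambda_k(\lambda_{k+1})\,\phi.
\end{align*}
Acting on both sides with $\LLambda_n(\lambda_n)\cdots\LLambda_{k+2}(\lambda_{k+2})$ yields invariance of $\Psi$ under swapping $\lambda_k$ and $\lambda_{k+1}$.

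Composing these elementary symmetries produces every signed permutation $(\epsilon_1\lambda_{\sigma(1)},\dots,\epsilon_n\lambda_{\sigma(n)})$, since signed permutations of real spectral parameters remain real and every step stays within the hypotheses $\lambda\in\mathbb{R}$. This proves the theorem.

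The substantive content lies entirely in the earlier operator identities. The proof of~\eqref{LL} was only sketched as ``analogous'', so I regard it as the main potential obstacle. If it needed to be made explicit, I would imitate Figure~\ref{fig:QrL}: use the chain and cross relations to create a dashed line, apply the twin peaks relation (Figure~\ref{fig:flip-tr}) to exchange the double-line parameters, then remove the dashed lines with reduced cross relations. Convergence would be justified by the same induction on diagram transformations as in the commutativity proof. Given Proposition~\ref{prop:QL-rel} as stated, however, the theorem follows directly.
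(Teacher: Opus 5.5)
Your proof is correct and is essentially the paper's own argument. Both use the Gauss--Givental product, the mapping property $\LLambda_k(\lambda)\colon \mathcal{P}_{k-1}\to\mathcal{P}_k$ from Proposition~\ref{prop:LQQr-space}, and the two identities $\LLambda_k(\lambda)=\LLambda_k(-\lambda)$ and \eqref{LL} holding on $\mathcal{P}_{k-1}$. You additionally spell out the bookkeeping that the paper leaves implicit: inserting these identities inside the product, and composing sign changes and adjacent transpositions to obtain all signed permutations.
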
  

\begin{proof}
	By Proposition~\ref{prop:LQQr-space}, the raising operators respect the spaces of polynomially bounded continuous functions
	\begin{align}
		\LLambda_k(\lambda_k)  \colon \; \mathcal{P}_{k - 1} \; \to \; \mathcal{P}_k, \qquad \lambda_k \in \mathbb{R},
	\end{align}
	which implies the convergence of the integral~\eqref{Psi-GG}. Besides, by Propositions~\ref{prop:Lrefl},~\ref{prop:QL-rel}, the raising operators satisfy the identities
	\begin{align} \label{Lprop}
		\LLambda_k(\lambda) = \LLambda_k(-\lambda), \qquad \LLambda_{k + 1}(\lambda) \, \LLambda_{k}(\rho) = \LLambda_{k + 1}(\rho) \, \LLambda_{k}(\lambda)
	\end{align}
	that hold on $\mathcal{P}_{k - 1}$. This leads to the claim.
\end{proof}

\begin{corollary} \label{cor:Psi-real}
	Let $\bm{\lambda}_n \in \mathbb{R}^n$. Then $\Psi_{\bm{\lambda}_n}(\bm{x}_n) \in \mathbb{R}$.
\end{corollary}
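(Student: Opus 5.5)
The plan is to combine complex conjugation with the reflection symmetry of Theorem~\ref{thm:Psi-sym}. The only complex quantities in the Gauss--Givental integral \eqref{Psi-GG} are the factors carrying $\imath\lambda$. These are the exponentials $e^{\imath\lambda(\cdots)}$, the prefactor $(2\beta)^{\imath\lambda}/\Gamma(g-\imath\lambda)$, and the powers $(1\pm\beta e^{-z_1})^{\mp\imath\lambda - \ldots}$ appearing in the double line. Everything else in the kernels is real and positive: the terms $\exp(-e^{x-y})$, the step function, and the bases $1\pm\beta e^{-z_1}$, which are positive on the support $z_1\ge\ln\beta$.

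First I will show that for $\lambda\in\mathbb{R}$ the kernel satisfies $\overline{\LLambda_\lambda(\bm{x}_n|\bm{y}_{n-1})}=\LLambda_{-\lambda}(\bm{x}_n|\bm{y}_{n-1})$. Take the explicit formula \eqref{Lker} and conjugate under the absolutely convergent integral over $\bm{z}_n$, which is allowed since $g>0$. The positive-base powers conjugate as $\overline{a^{s}}=a^{\bar s}$ for $a>0$, and $\overline{\Gamma(g-\imath\lambda)}=\Gamma(g+\imath\lambda)$. Each occurrence of $\imath\lambda$ is therefore replaced by $-\imath\lambda$, and the conjugated kernel is exactly the kernel with parameter $-\lambda$.

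Next, conjugate the iterated integral \eqref{Psi-GG} factor by factor. Each step is an absolutely convergent integral: the raising operators map $\mathcal{P}_{k-1}$ to $\mathcal{P}_k$ by Proposition~\ref{prop:LQQr-space}, and the absolute values of the kernels do not depend on the sign of $\lambda$. This gives
\begin{align*}
\overline{\Psi_{\lambda_1,\dots,\lambda_n}(\bm{x}_n)}=\LLambda_n(-\lambda_n)\cdots\LLambda_1(-\lambda_1)\cdot 1=\Psi_{-\lambda_1,\dots,-\lambda_n}(\bm{x}_n).
\end{align*}

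Finally, apply Theorem~\ref{thm:Psi-sym} with $\sigma=\mathrm{id}$ and all $\epsilon_j=-1$, which gives $\Psi_{-\bm{\lambda}_n}=\Psi_{\bm{\lambda}_n}$. Hence $\overline{\Psi_{\bm{\lambda}_n}}=\Psi_{\bm{\lambda}_n}$, so $\Psi_{\bm{\lambda}_n}(\bm{x}_n)$ is real. Equivalently, one could apply Proposition~\ref{prop:Lrefl} directly to each factor.

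I do not expect a serious obstacle. The only point needing care is the conjugation of the complex powers: the identity $\overline{a^{s}}=a^{\bar s}$ requires the bases to be positive, and this holds exactly on the support of $\theta(z_1-\ln\beta)$, where $1-\beta e^{-z_1}\ge 0$. The boundary point $z_1=\ln\beta$ has measure zero and does not affect the integral.
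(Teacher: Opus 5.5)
Your proposal is correct and is the paper's proof: conjugating the kernel \eqref{Lker} gives $\overline{\LLambda_\lambda(\bm{x}_n|\bm{y}_{n-1})}=\LLambda_{-\lambda}(\bm{x}_n|\bm{y}_{n-1})$, hence $\overline{\Psi_{\bm{\lambda}_n}}=\Psi_{-\bm{\lambda}_n}$, and the sign-flip case of Theorem~\ref{thm:Psi-sym} then gives reality. Your remarks on the nonnegativity of the bases on the support of $\theta(z_1-\ln\beta)$ and on absolute convergence spell out details the paper leaves implicit.
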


\begin{proof}
	By definition~\eqref{Lker}, for $\lambda \in \mathbb{R}$
	\begin{align}
		\overline{\LLambda_\lambda(\bm{x}_n| \bm{y}_{n - 1}) }= \LLambda_{-\lambda}(\bm{x}_n| \bm{y}_{n - 1}) ,
	\end{align}
	and consequently, for $\bm{\lambda}_n \in \mathbb{R}^n$
	\begin{align}
		\overline{\Psi_{\bm{\lambda}_n}(\bm{x}_n)} = \Psi_{-\bm{\lambda}_n}(\bm{x}_n).
	\end{align}
	Hence, by Theorem~\ref{thm:Psi-sym} the wave function is real.
\end{proof}

\begin{theorem} \label{theorem2.2}
	Let $\bm{\lambda}_n \in \mathbb{R}^n$. Then
	\begin{align}
		& \QQ_n(\lambda) \, \Psi_{\bm{\lambda}_n}(\bm{x}_n) = \frac{(2\beta)^{-\imath \lambda} \, \Gamma(2\imath \lambda) }{\Gamma(g + \imath \lambda)} \, \prod_{j = 1}^n \Gamma(\imath \lambda \pm \imath \lambda_j) \, \Psi_{\bm{\lambda}_n}(\bm{x}_n),  &&  \Im \lambda \in (-g, 0), \\[6pt] \label{QrPsi}
		& \QQr_n(\lambda) \, \Psi_{\bm{\lambda}_n}(\bm{x}_n) = \frac{(2\beta)^{-\imath \lambda}}{\Gamma(g + \imath \lambda)} \,  \prod_{j = 1}^n  \Gamma(\imath \lambda \pm \imath \lambda_j) \, \Psi_{\bm{\lambda}_n}(\bm{x}_n), && \Im \lambda < 0.
	\end{align}
\end{theorem}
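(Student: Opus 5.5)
We prove the second formula \eqref{QrPsi} first and then deduce the first one from Proposition~\ref{prop:Qred}. Indeed, for $\Im\lambda\in(-g,0)$ we have $\QQ_n(\lambda)=\Gamma(2\imath\lambda)\,\QQr_n(\lambda)$ on $\mathcal{P}_n$. Multiplying \eqref{QrPsi} by $\Gamma(2\imath\lambda)$ then gives exactly the stated eigenvalue for $\QQ_n(\lambda)$. This uses that $\Psi_{\bm{\lambda}_n}\in\mathcal{P}_n$, which holds because each $\LLambda_k(\lambda_k)$ maps $\mathcal{P}_{k-1}\to\mathcal{P}_k$ (Proposition~\ref{prop:LQQr-space}).

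For \eqref{QrPsi} we push the reduced Baxter operator through the Gauss--Givental product \eqref{Psi-GG}, using the exchange relation \eqref{QrL} from Proposition~\ref{prop:QL-rel} repeatedly. Namely, for $\Im\lambda<0$ and real $\lambda_k$,
\begin{align*}
	\QQr_k(\lambda)\,\LLambda_k(\lambda_k)\,\phi=\Gamma(\imath\lambda\pm\imath\lambda_k)\,\LLambda_k(\lambda_k)\,\QQr_{k-1}(\lambda)\,\phi,\qquad \phi\in\mathcal{P}_{k-1}.
\end{align*}
At step $k$ we apply this with $\phi=\LLambda_{k-1}(\lambda_{k-1})\cdots\LLambda_1(\lambda_1)\cdot 1$, which lies in $\mathcal{P}_{k-1}$ by Proposition~\ref{prop:LQQr-space}, so the relation is legitimately applicable. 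We also need that $\QQr_{k-1}(\lambda)\,\phi\in\mathcal{P}_{k-1}$, so that the next raising operators act on a function in the right space; this again follows from Proposition~\ref{prop:LQQr-space}. Iterating from $k=n$ down to $k=1$ gives
\begin{align*}
	\QQr_n(\lambda)\,\Psi_{\bm{\lambda}_n}=\prod_{j=1}^n\Gamma(\imath\lambda\pm\imath\lambda_j)\;\LLambda_n(\lambda_n)\cdots\LLambda_1(\lambda_1)\,\QQr_0(\lambda)\cdot 1 .
\end{align*}

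In the last step $\QQr_0(\lambda)$ is the scalar $(2\beta)^{-\imath\lambda}/\Gamma(g+\imath\lambda)$ by \eqref{QQr0}. Pulling it out gives the claimed eigenvalue times $\Psi_{\bm{\lambda}_n}$. The main point to check is the base case $n=1$ of \eqref{QrL}: with the convention \eqref{QQr0}, it should read $\QQr_1(\lambda)\,\LLambda_1(\rho)\cdot 1=\Gamma(\imath\lambda\pm\imath\rho)\,(2\beta)^{-\imath\lambda}\Gamma(g+\imath\lambda)^{-1}\,\LLambda_1(\rho)\cdot 1$. I expect this to be the only delicate step. If it is not already covered by Proposition~\ref{prop:QL-rel} for $n=1$, I would verify it directly from the diagrams. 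Apply the chain or reduced cross relations at the bottom of the $\QQr_1(\lambda)\LLambda_1(\rho)$ diagram, then the relation from Figure~\ref{fig:flip-lim-tr}. Finally, evaluate the remaining top piece with the reduced flip relation in Figure~\ref{fig:flip-lim2}, which is what produces the factor $(2\beta)^{-\imath\lambda}/\Gamma(g+\imath\lambda)$. Absolute convergence at each stage follows from the same domination-by-imaginary-parts argument used in the proof of Proposition~\ref{prop:QL-rel}.
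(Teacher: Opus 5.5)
Your argument is essentially the paper's proof: iterate the exchange relation of Proposition~\ref{prop:QL-rel} through the Gauss--Givental product~\eqref{Psi-GG} and finish with the scalar $\QQr_0(\lambda)$ from~\eqref{QQr0}. The only cosmetic difference is that you get the $\QQ_n$ formula from Proposition~\ref{prop:Qred}, whereas the paper iterates the $\QQ_n$ exchange relation directly (and that relation is itself derived from the $\QQr_n$ one in exactly this way).

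The case $n=1$ is already part of Proposition~\ref{prop:QL-rel} via the convention~\eqref{QQr0}, so no fallback is needed. The fallback you sketch would not work as stated anyway: Figures~\ref{fig:flip-lim-tr} and~\ref{fig:flip-lim2} require a dashed line ending at the fixed point $\ln\beta$, and the kernel of $\QQr_1(\lambda)\,\LLambda_1(\rho)$ contains no such line. There one would instead use the chain relation followed by a degeneration of the twin peaks relation of Figure~\ref{fig:flip-tr}. The factor $(2\beta)^{-\imath\lambda}/\Gamma(g+\imath\lambda)$ is simply the normalization of the $\QQr_\lambda$ double line, which survives when that line is removed.
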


\begin{proof}
	By Proposition~\ref{prop:QL-rel}, Baxter operator and its reduced version satisfy the relations (which hold on $\mathcal{P}_{n - 1}$)
	\begin{align} \label{QL-rel}
		& \QQ_n(\lambda) \, \LLambda_n(\rho) = \Gamma(\imath \lambda \pm \imath \rho) \, \LLambda_n(\rho) \, \QQ_{n - 1}(\lambda), && \hspace{-1cm} \Im \lambda \in (-g, 0), \quad \rho \in \mathbb{R},  \\[6pt]
		& \QQr_n(\lambda) \, \LLambda_n(\rho) = \Gamma(\imath \lambda \pm \imath \rho) \, \LLambda_n(\rho) \, \QQr_{n - 1}(\lambda), && \hspace{-1cm} \Im \lambda < 0, \quad \rho \in \mathbb{R},
	\end{align}
	where in particular
	\begin{align}
		\QQ_0(\lambda) = \frac{(2\beta)^{-\imath \lambda} \, \Gamma(2\imath \lambda) }{\Gamma(g + \imath \lambda)} \, \mathrm{Id}, \qquad \QQr_0(\lambda) = \frac{(2\beta)^{-\imath \lambda}}{\Gamma(g + \imath \lambda)} \, \mathrm{Id}.
	\end{align}
	Thus, from iterative representation~\eqref{Psi-GG} we obtain the claimed formulas.
\end{proof}

\subsection{Local relation between $GL$ and $BC$ operators} \label{sec:gl-bc-rel}

For any integral operator $K$
\begin{align}
	[K \, \phi](\bm{x}_n) = \int_{\mathbb{R}^m} d\bm{y}_m \; K(\bm{x}_n|\bm{y}_m) \, \phi(\bm{y}_m)
\end{align}
denote its transpose $K^t$ by the formula
\begin{align} \label{t-op-def}
	[K^t \, \phi](\bm{x}_m) = \int_{\mathbb{R}^n} d\bm{y}_n \; K(\bm{y}_n|\bm{x}_m) \, \phi(\bm{y}_n).
\end{align}
The following identity is the key ingredient in the calculation of the scalar product between $GL$ and $BC$ wave functions, performed in the next section.

\begin{proposition} \label{prop:GL-BC-rel}
	Let $n \geq 2$, $\Im \lambda< 0$ and $\rho \in \mathbb{R}$. Then the relation
	\begin{multline}\label{QL-ABC}
		Q^t_{n - 1}(\lambda) \, \Lambda^t_{n}(- \lambda) \, \exp(- \beta e^{-x_1}) \, \LLambda_n(\rho) \\[6pt]
		= \Gamma(\imath \lambda \pm \imath \rho) \; Q^{t}_{n - 1}(\rho) \, Q^t_{n - 1}(-\rho) \, \exp(-\beta e^{-x_1}) \, \QQr_{n - 1}(\lambda)
	\end{multline}
	holds as the equality between corresponding kernels.
\end{proposition}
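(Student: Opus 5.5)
The plan is to write both sides of \eqref{QL-ABC} as kernels and prove their equality by diagram transformations, in the same way as Proposition~\ref{prop:QL-rel}. The kernel of the left-hand side, as a function of $(\bm{x}_{n-1}\,|\,\bm{y}_{n-1})$, is
\begin{align*}
\int d\bm{u}_{n-1}\, d\bm{v}_n\; Q_\lambda(\bm{u}_{n-1}|\bm{x}_{n-1})\, \Lambda_{-\lambda}(\bm{v}_n|\bm{u}_{n-1})\, e^{-\beta e^{-v_1}}\, \LLambda_\rho(\bm{v}_n|\bm{y}_{n-1}).
\end{align*}
Here the transposes turn the $GL$ kernels upside down, so the $GL$ part sits on top of the $BC$ raising kernel.

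The first observation is that the factor $e^{-\beta e^{-v_1}}$ is exactly an ordinary line $\exp(-e^{v_1-\ln\beta})$ attached to the point $\ln\beta$. Hence every ingredient is a line from Figures~\ref{fig:lines},~\ref{fig:dashed}, with the extra line attached to $\ln\beta$. In the picture, the top vertex of $\LLambda_\rho$ carries the double line from $\ln \beta$, the $e^{-\beta e^{-v_1}}$ line runs into $v_1$, and the $GL$ lines of $\Lambda^t_n(-\lambda)$ and $Q^t_{n-1}(\lambda)$ join at the integrated vertices $v_j, u_j$. The vertex $v_1$ then has the spectral weight $\imath(-\lambda+\rho)v_1$ combined with the $BC$ weights. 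I would first integrate out the $v$-vertices of the $\LLambda$ top rhombus and the vertex $v_1$ together with the line from $\ln\beta$.

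The key steps run as follows.
\begin{enumerate}
\item Apply the star-triangle relation (Figure~\ref{fig:star-triangle}) at the vertex $v_1$. The incoming $\ln\beta$ line and the line to $z_1$ form a star with parameter $\lambda\pm\rho$-type. This produces $\Gamma(\imath\lambda+\imath\rho)$ and a dashed line connecting $\ln\beta$ to a neighbouring vertex.
\item Combine that dashed line with the double line using the reduced flip relation (Figure~\ref{fig:flip-lim}) or its combination with star-triangle (Figure~\ref{fig:flip-lim-tr}). This flips the double line parameter from $\rho$ to $\lambda$, matching the $\QQr_{n-1}(\lambda)$ on the right, with its reversed sign convention, and produces $\Gamma(\imath\lambda-\imath\rho)$.
\item Propagate the remaining dashed line down the chain of rhombi with cross relations (Figure~\ref{fig:cross}), flipping spectral parameters along its way. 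Remove it at the bottom by a reduced cross or chain relation (Figure~\ref{fig:chain}), as in the proofs of Propositions~\ref{prop:Lrefl} and~\ref{prop:Qred}.
\item Reorganize the resulting diagram. The column of $GL$ lines with parameters $\rho$ and $-\rho$ should be read as the kernel of $Q^t_{n-1}(\rho)Q^t_{n-1}(-\rho)$. This is glued through $e^{-\beta e^{-x_1}}$ to the lower part, which is precisely the kernel \eqref{Qr-ker} of $\QQr_{n-1}(\lambda)$.
\end{enumerate}

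For convergence I would argue as in Proposition~\ref{prop:QL-rel}. First bound each diagram by its version with purely imaginary parameters. Then check that the initial kernel converges for $\Im\lambda<0$, using the bounds of Appendix~\ref{sec:GG-bounds}. Fubini--Tonelli then passes convergence along the chain of transformations. Where a star or flip step requires $\Re(\imath\lambda)>\Re(\imath\rho)$ or similar, I would impose it and remove it by analytic continuation in $\lambda$.

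The main obstacle is step 2: getting the exact bookkeeping of the parameters near the top. This means checking that after the flip the double line, the $\ln\beta$ line and the dashed lines combine exactly into the top of the $\QQr_{n-1}(\lambda)$ kernel together with one free $e^{-\beta e^{-x_1}}$ line. I expect to try Figure~\ref{fig:flip-lim-tr} first, and to fall back to the full flip (Figure~\ref{fig:flip-tr}) with $y\to\ln\beta^+$ if the signs do not match.
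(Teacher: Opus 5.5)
There is a genuine gap in your first step, and a missing ingredient later on.

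\textbf{The first move is not available.} In the initial diagram the vertex $v_1$ is four-valent: it has ordinary lines to $\ln\beta$ and $z_1$ above, and to $u_1$ and $z_2$ below. It carries no ghost line. So no star-triangle relation applies there, since Figure~\ref{fig:star-triangle} needs a trivalent vertex with a dashed line whose parameter matches the vertex weight. No cross relation applies either, since that needs a dashed line already joining two of its endpoints. Likewise, the reduced flip in your step 2 needs a dashed line from the double-line vertex $z_1$ to $\ln\beta$, which is not present.

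\textbf{Where the paper starts instead.} The calculus has to begin at the bottom vertex $v_n$. It has exactly two lines, both from above ($u_{n-1}$ and $z_n$), and weight $\rho-\lambda$. The paper proceeds as follows.
\begin{enumerate}
\item The chain relation (Figure~\ref{fig:chain}) at $v_n$ produces $\Gamma(\imath\lambda-\imath\rho)$ and a dashed line with parameter $\lambda-\rho$.
\item Cross relations push this line all the way up until it joins $\ln\beta$ to $z_1$.
\item The new bottom vertex $z_n$ now has weight $-\lambda-\rho$ and two lines from above. The chain relation there gives $\Gamma(\imath\lambda+\imath\rho)$ and a second dashed line.
\item This second line is pushed up until it joins $v_1$ and $y_1$.
\item Only with both dashed lines in place does Figure~\ref{fig:flip-lim-tr} apply, turning the double line into that of $\QQr$.
\end{enumerate}
The identity of Figure~\ref{fig:flip-lim-tr} carries no gamma factor. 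So the whole coefficient $\Gamma(\imath\lambda\pm\imath\rho)$ comes from the two bottom chain relations, not from a star at $v_1$ plus a reduced flip as you propose.

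\textbf{What your step 3 misses.} After the flip, the remaining dashed line (from $\ln\beta$ to $z_1$, parameter $\lambda-\rho$) cannot be moved down directly. The vertex $v_1$ now has weight $-2\rho$ instead of $\lambda-\rho$, so the cross relation does not match. The paper first applies the kernel form of commutativity of $GL$ Baxter operators (Figure~\ref{fig:QQcomm}) to the $u$-part of the diagram. This changes the $u_j$-weights from $\lambda+\rho$ to $-\lambda-\rho$ and restores weight $\lambda-\rho$ at the $v_j$. Only then do cross relations carry the dashed line to the bottom. There the reduced cross relation removes it, leaving the kernel of $Q^t_{n-1}(\rho)\,Q^t_{n-1}(-\rho)\,e^{-\beta e^{-x_1}}\,\QQr_{n-1}(\lambda)$.

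Your convergence strategy, bounding by imaginary parameters and propagating absolute convergence via Fubini--Tonelli from the first diagram, matches the paper's.
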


\begin{proof}
	The proof of the identity~\eqref{QL-ABC} in terms of diagrams is shown in Figures~\ref{fig:ABCrel},~\ref{fig:ABCrel2}. In Section~\ref{sec:ker-prod} we prove that the kernel of the left hand side~\eqref{QL-ABC}, which corresponds to the first diagram, is absolutely convergent. By the same arguments, as in the proof of Proposition~\ref{prop:QL-rel}, all of the rest diagrams also correspond to absolutely convergent integrals. 
	
	As before, the pictures are for the particular case $n = 3$, while generalization to arbitrary $n$ is straightforward. In all diagrams we indicate coordinate $\ln \beta$, which comes from the function $ \exp(- \beta e^{-x_1}) $ in~\eqref{QL-ABC}. 
	
	The first step is to use chain relation (Figure~\ref{fig:chain}) for the vertex at the bottom and obtain one dashed line. Then one moves this dashed line to the top using cross relations (Figure~\ref{fig:cross}). After that we repeat the same procedure for the vertex at the bottom from the right.
	
	Doing so we arrive at the \textit{fifth} diagram with two dashed lines at the top. The next step is to use identity from Figure~\ref{fig:flip-lim-tr}, which removes one dashed line and changes some spectral parameters. In particular, the parameter of double line is changed $\rho \to \lambda$. Notice that for this identity it is crucial that the top dashed line connects double line with the coordinate~$\ln\beta$.
	
	On the next step, shown in Figure~\ref{fig:ABCrel2}, we use the identity pictured in Figure~\ref{fig:QQcomm}. It represents commutativity of $GL$ Toda $Q$-operators (see~\eqref{Q-ker},~\eqref{QQ-gl}). This step enables us to again use cross relation (Figure~\ref{fig:cross}) to move dashed line to the bottom. 
	
	Passing to the last diagram we use reduced cross relation (Figure~\ref{fig:cross}), so that the dashed line disappears, and we arrive at the kernel of operator from the right hand side of identity~\eqref{QL-ABC}. Collecting factors appearing in all above transformations we obtain the coefficient $\Gamma(\imath \lambda \pm \imath \rho )$.
\end{proof}

\begin{remark}
	With the help of \cite[Corollary 2]{BDK} together with bounds from Appendix~\ref{sec:GG-bounds} one can show that the identity~\eqref{QL-ABC} holds on $\mathcal{P}_{n - 1}$.
\end{remark}

\begin{figure}[H] \centering \vspace{-1cm}
	\begin{tikzpicture}[thick, line cap = round, scale = 0.98]
		\def\l{1.1}
		\def\r{1.5pt}
		\def\h{1}
		\def\d{8.2}
		\def\dd{0.3}
		\def\hh{1.4}
		\def\t{0.03}
		\def\a{-7.8}
		\def\b{-15.6}

		
		\draw (0, 0 ) node[xshift = -0.4cm, yshift = 0.2cm] {\footnotesize $\ln \beta$} -- ( \l, - \h);
		\draw ( \l, - \h) -- (0, - 2*\h);
		\draw (0, - 2*\h) -- ( \l, - 3*\h) ;
		\draw (0, - 2*\h) -- (- \l, - 3*\h) node[xshift = -0.4cm, yshift = 0.1cm] {\color{spec} \footnotesize $-\lambda$};
		\draw (- \l, - 3*\h) -- (0, - 4*\h);
		\draw ( \l, - 3*\h) -- (0, - 4*\h);
		\draw (0, - 4*\h) -- (- \l, -5*\h) node[xshift = -0.4cm, yshift = 0.1cm] {\color{spec} \footnotesize $-\lambda$};
		\draw (0, - 4*\h) -- (\l, -5*\h);
		
		\draw[fill = black] (0, - 2*\h) node[xshift = -0.45cm, yshift = 0.1cm] {\color{spec} \footnotesize $2\lambda$} circle (\r);
		\draw[fill = black] (0, - 4*\h) node[xshift = -0.5cm] {\color{spec} \footnotesize $2 \lambda$} circle (\r);

		\draw[line width = 0.6pt] (2*\l + \t, \hh) -- node[xshift = 0.3cm, yshift = 0.1cm] {\color{spec} \footnotesize $\rho$} (2*\l + \t, 0);
		\draw[line width = 0.6pt] (2*\l - \t,  \hh) --  (2*\l - \t, 0);
		
		\draw (2*\l, 0) -- (2*\l + \l, - \h) node[xshift = 0.25cm, yshift = 0.05cm] {\color{spec} \footnotesize $\rho$};
		\draw (2*\l + \l, - \h) -- (2*\l, - 2*\h);
		\draw (2*\l, 0) -- (2*\l - \l, - \h);
		\draw (2*\l- \l, - \h) -- (2*\l, - 2*\h);
		\draw (2*\l, - 2*\h) -- (2*\l + \l, - 3*\h) node[xshift = 0.25cm, yshift = 0.05cm] {\color{spec} \footnotesize $\rho$};
		\draw (2*\l, - 2*\h) -- (2*\l- \l, - 3*\h);
		\draw (2*\l - \l, - 3*\h) -- (2*\l, - 4*\h);
		\draw (2*\l + \l, - 3*\h) -- (2*\l, - 4*\h);
		\draw (2*\l, - 4*\h) -- (2*\l - \l, -5*\h);
		
		\draw[fill = black] (2*\l, 0) circle (\r) node[xshift = 0.5cm, yshift = 0.05cm] {\color{spec} \footnotesize $-2 \rho$};
		\draw[fill = black] (2*\l, - 2*\h) node[xshift = 0.5cm] {\color{spec} \footnotesize $-2\rho$} circle (\r);
		\draw[fill = black] (2*\l, - 4*\h) node[xshift = 0.5cm, yshift = -0.05cm] {\color{spec} \footnotesize $-2 \rho$} circle (\r);
		
		\draw[fill = black] (\l, -\h) circle (\r) node[xshift = 0.7cm] {\color{spec} \footnotesize $\rho - \lambda$};
		\draw[fill = black] (\l, -3*\h) circle (\r) node[xshift = 0.7cm] {\color{spec} \footnotesize $\rho - \lambda$};
		\draw[fill = black] (\l, -5*\h) circle (\r) node[xshift = 0.55cm, yshift = -0.15cm] {\color{spec} \footnotesize $\rho - \lambda$};

		
		\draw (\d, 0) node[xshift = -0.4cm, yshift = 0.2cm] {\footnotesize $\ln \beta$} -- (\d + \l, - \h);
		\draw (\d + \l, - \h) -- (\d, - 2*\h);
		\draw (\d, - 2*\h) -- (\d + \l, - 3*\h) ;
		\draw (\d, - 2*\h) -- (\d - \l, - 3*\h) node[xshift = -0.4cm, yshift = 0.1cm] {\color{spec} \footnotesize $-\lambda$};
		\draw (\d - \l, - 3*\h) -- (\d, - 4*\h);
		\draw (\d + \l, - 3*\h) -- (\d, - 4*\h);
		\draw (\d, - 4*\h) -- (\d - \l, -5*\h) node[xshift = -0.4cm, yshift = 0.1cm] {\color{spec} \footnotesize $-\lambda$};
		
		\draw[fill = black] (\d, - 2*\h) node[xshift = -0.45cm, yshift = 0.1cm] {\color{spec} \footnotesize $2\lambda$} circle (\r);
		\draw[fill = black] (\d, - 4*\h) node[xshift = -0.5cm] {\color{spec} \footnotesize $2 \lambda$} circle (\r);

		\draw[line width = 0.6pt] (\d + 2*\l + \t, \hh) -- node[xshift = 0.3cm, yshift = 0.1cm] {\color{spec} \footnotesize $\rho$} (\d + 2*\l + \t, 0);
		\draw[line width = 0.6pt] (\d + 2*\l - \t,  \hh) --  (\d + 2*\l - \t, 0);
		
		\draw (\d + 2*\l, 0) -- (\d + 2*\l + \l, - \h) node[xshift = 0.25cm, yshift = 0.05cm] {\color{spec} \footnotesize $\rho$};
		\draw (\d + 2*\l + \l, - \h) -- (\d + 2*\l, - 2*\h);
		\draw (\d + 2*\l, 0) -- (\d + 2*\l - \l, - \h);
		\draw (\d + 2*\l- \l, - \h) -- (\d + 2*\l, - 2*\h);
		\draw (\d + 2*\l, - 2*\h) -- (\d + 2*\l + \l, - 3*\h) node[xshift = 0.25cm, yshift = 0.05cm] {\color{spec} \footnotesize $\rho$};
		\draw (\d + 2*\l, - 2*\h) -- (\d + 2*\l- \l, - 3*\h);
		\draw (\d + 2*\l - \l, - 3*\h) -- (\d + 2*\l, - 4*\h);
		\draw (\d + 2*\l + \l, - 3*\h) -- (\d + 2*\l, - 4*\h);
		
		\draw[fill = black] (\d + 2*\l, 0) circle (\r) node[xshift = 0.5cm, yshift = 0.05cm] {\color{spec} \footnotesize $-2 \rho$};
		\draw[fill = black] (\d + 2*\l, - 2*\h) node[xshift = 0.5cm] {\color{spec} \footnotesize $-2\rho$} circle (\r);
		\draw[fill = black] (\d + 2*\l, - 4*\h) node[xshift = 0.5cm, yshift = -0.05cm] {\color{spec} \footnotesize $-2 \rho$} circle (\r);
		
		\draw[fill = black] (\d + \l, -\h) circle (\r) node[xshift = 0.7cm] {\color{spec} \footnotesize $\rho - \lambda$};
		\draw[fill = black] (\d + \l, -3*\h) circle (\r) node[xshift = 0.7cm] {\color{spec} \footnotesize $\rho - \lambda$};
		\draw[dashed] (\d, - 4*\h) -- node[below] {\color{spec2} \footnotesize $\lambda - \rho$} (\d + 2*\l, - 4*\h);

		
		\draw (\d, \a) node[xshift = -0.4cm, yshift = 0.2cm] {\footnotesize $\ln \beta$} -- (\d + \l, \a - \h);
		\draw (\d + \l, \a - \h) -- (\d, \a - 2*\h);
		\draw (\d, \a - 2*\h) -- (\d + \l, \a - 3*\h) ;
		\draw (\d, \a - 2*\h) -- (\d - \l, \a - 3*\h) node[xshift = -0.4cm, yshift = 0.1cm] {\color{spec} \footnotesize $-\lambda$};
		\draw (\d - \l, \a - 3*\h) -- (\d, \a - 4*\h);
		\draw (\d + \l, \a - 3*\h) -- (\d, \a - 4*\h);
		\draw (\d, \a - 4*\h) -- (\d - \l, \a - 5*\h) node[xshift = -0.4cm, yshift = 0.1cm] {\color{spec} \footnotesize $-\lambda$};
		
		\draw[fill = black] (\d, \a - 2*\h) node[xshift = -0.65cm, yshift = 0.1cm] {\color{spec} \footnotesize $\lambda + \rho$} circle (\r);
		\draw[fill = black] (\d, \a - 4*\h) node[xshift = -0.7cm] {\color{spec} \footnotesize $ \lambda + \rho$} circle (\r);

		\draw[line width = 0.6pt] (\d + 2*\l + \t, \a + \hh) -- node[xshift = 0.3cm, yshift = 0.1cm] {\color{spec} \footnotesize $\rho$} (\d + 2*\l + \t, \a);
		\draw[line width = 0.6pt] (\d + 2*\l - \t,  \a + \hh) --  (\d + 2*\l - \t, \a);
		
		\draw (\d + 2*\l, \a) -- (\d + 2*\l + \l, \a - \h) node[xshift = 0.25cm, yshift = 0.05cm] {\color{spec} \footnotesize $\rho$};
		\draw (\d + 2*\l + \l, \a - \h) -- (\d + 2*\l, \a - 2*\h);
		\draw (\d + 2*\l, \a) -- (\d + 2*\l - \l, \a - \h);
		\draw (\d + 2*\l- \l, \a - \h) -- (\d + 2*\l, \a - 2*\h);
		\draw (\d + 2*\l, \a - 2*\h) -- (\d + 2*\l + \l, \a - 3*\h) node[xshift = 0.25cm, yshift = 0.05cm] {\color{spec} \footnotesize $\rho$};
		\draw (\d + 2*\l, \a - 2*\h) -- (\d + 2*\l- \l, \a - 3*\h);
		\draw (\d + 2*\l - \l, \a - 3*\h) -- (\d + 2*\l, \a - 4*\h);
		\draw (\d + 2*\l + \l, \a - 3*\h) -- (\d + 2*\l, \a - 4*\h);
		
		\draw[fill = black] (\d + 2*\l, \a) circle (\r) node[xshift = 0.5cm, yshift = 0.05cm] {\color{spec} \footnotesize $-2\rho$};
		\draw[fill = black] (\d + 2*\l, \a - 2*\h) node[xshift = 0.75cm] {\color{spec} \footnotesize $-\lambda - \rho$} circle (\r);
		\draw[fill = black] (\d + 2*\l, \a - 4*\h) node[xshift = 0.7cm, yshift = -0.1cm] {\color{spec} \footnotesize $-\lambda - \rho$} circle (\r);
		
		\draw[fill = black] (\d + \l, \a - \h) circle (\r) node[xshift = 0.75cm] {\color{spec} \footnotesize $ \lambda - \rho$};
		\draw[fill = black] (\d + \l, \a - 3*\h) circle (\r) node[xshift = 0.75cm] {\color{spec} \footnotesize $\lambda - \rho$};
		\draw[dashed] (\d, \a) -- node[above] {\color{spec2} \footnotesize $\lambda - \rho$} (\d + 2*\l, \a);

		
		\draw (0, \a) node[xshift = -0.4cm, yshift = 0.2cm] {\footnotesize $\ln \beta$} -- (\l, \a - \h);
		\draw ( \l, \a - \h) -- (0, \a - 2*\h);
		\draw (0, \a - 2*\h) -- ( \l, \a - 3*\h) ;
		\draw (0, \a - 2*\h) -- (- \l, \a - 3*\h) node[xshift = -0.4cm, yshift = 0.1cm] {\color{spec} \footnotesize $- \lambda$};
		\draw (- \l, \a - 3*\h) -- (0, \a - 4*\h);
		\draw ( \l, \a - 3*\h) -- (0, \a - 4*\h);
		\draw (0, \a - 4*\h) -- (- \l, \a - 5*\h) node[xshift = -0.4cm, yshift = 0.1cm] {\color{spec} \footnotesize $- \lambda$};
		
		\draw[fill = black] (0, \a - 2*\h) node[xshift = -0.65cm, yshift = 0.1cm] {\color{spec} \footnotesize $\lambda + \rho$} circle (\r);
		\draw[fill = black] (0, \a - 4*\h) node[xshift = -0.7cm] {\color{spec} \footnotesize $\lambda + \rho$} circle (\r);

		\draw[line width = 0.6pt] ( 2*\l + \t, \a + \hh) -- node[xshift = 0.3cm, yshift = 0.1cm] {\color{spec} \footnotesize $\rho$} ( 2*\l + \t, \a);
		\draw[line width = 0.6pt] ( 2*\l - \t,  \a + \hh) --  ( 2*\l - \t, \a);
		
		\draw ( 2*\l, \a) -- ( 2*\l + \l, \a - \h) node[xshift = 0.25cm, yshift = 0.05cm] {\color{spec} \footnotesize $\rho$};
		\draw ( 2*\l + \l, \a - \h) -- ( 2*\l, \a - 2*\h);
		\draw ( 2*\l, \a) -- ( 2*\l - \l, \a - \h);
		\draw ( 2*\l- \l, \a - \h) -- ( 2*\l, \a - 2*\h);
		\draw ( 2*\l, \a - 2*\h) -- ( 2*\l + \l, \a - 3*\h) node[xshift = 0.25cm, yshift = 0.05cm] {\color{spec} \footnotesize $\rho$};
		\draw ( 2*\l, \a - 2*\h) -- ( 2*\l- \l, \a - 3*\h);
		
		\draw[fill = black] ( 2*\l, \a) circle (\r) node[xshift = 0.5cm, yshift = 0.05cm] {\color{spec} \footnotesize $-2\rho$};
		\draw[fill = black] ( 2*\l, \a - 2*\h) node[xshift = 0.75cm] {\color{spec} \footnotesize $- \lambda - \rho$} circle (\r);
		\draw[dashed] ( 2*\l - \l, \a - 3*\h) -- node[below] {\color{spec2} \footnotesize $\lambda + \rho$} ( 2*\l + \l, \a - 3*\h);		
		
		\draw[fill = black] ( \l, \a - \h) circle (\r) node[xshift = -0.7cm] {\color{spec} \footnotesize $\lambda - \rho$};
		\draw[fill = black] ( \l, \a - 3*\h) circle (\r) node[xshift = -0.7cm] {\color{spec} \footnotesize $\lambda - \rho$};
		\draw[dashed] (0, \a) -- node[above] {\color{spec2} \footnotesize $\lambda - \rho$} ( 2*\l, \a);

		
		\draw (0, \b) node[xshift = -0.4cm, yshift = 0.2cm] {\footnotesize $\ln \beta$} -- (\l, \b - \h);
		\draw ( \l, \b - \h) -- (0, \b - 2*\h);
		\draw (0, \b - 2*\h) -- ( \l, \b - 3*\h) ;
		\draw (0, \b - 2*\h) -- (- \l, \b - 3*\h) node[xshift = -0.4cm, yshift = 0.1cm] {\color{spec} \footnotesize $-\lambda$};
		\draw (- \l, \b - 3*\h) -- (0, \b - 4*\h);
		\draw ( \l, \b - 3*\h) -- (0, \b - 4*\h);
		\draw (0, \b - 4*\h) -- (- \l, \b - 5*\h) node[xshift = -0.4cm, yshift = 0.1cm] {\color{spec} \footnotesize $-\lambda$};
		
		\draw[fill = black] (0, \b - 2*\h) node[xshift = -0.65cm, yshift = 0.1cm] {\color{spec} \footnotesize $\lambda + \rho$} circle (\r);
		\draw[fill = black] (0, \b - 4*\h) node[xshift = -0.7cm] {\color{spec} \footnotesize $\lambda + \rho$} circle (\r);

		\draw[line width = 0.6pt] ( 2*\l + \t, \b + \hh) -- node[xshift = 0.3cm, yshift = 0.1cm] {\color{spec} \footnotesize $\rho$} ( 2*\l + \t, \b);
		\draw[line width = 0.6pt] ( 2*\l - \t,  \b + \hh) --  ( 2*\l - \t, \b);
		
		\draw ( 2*\l, \b) -- ( 2*\l + \l, \b - \h) node[xshift = 0.25cm, yshift = 0.05cm] {\color{spec} \footnotesize $\rho$};
		\draw ( 2*\l + \l, \b - \h) -- ( 2*\l, \b - 2*\h);
		\draw ( 2*\l, \b) -- ( 2*\l - \l, \b - \h);
		\draw ( 2*\l- \l, \b - \h) -- ( 2*\l, \b - 2*\h);
		\draw ( 2*\l, \b - 2*\h) -- ( 2*\l + \l, \b - 3*\h) node[xshift = 0.3cm, yshift = 0.1cm] {\color{spec} \footnotesize $-\lambda$};
		\draw ( 2*\l, \b - 2*\h) -- ( 2*\l- \l, \b - 3*\h);
		
		\draw[fill = black] ( 2*\l, \b) circle (\r) node[xshift = 0.5cm, yshift = 0.05cm] {\color{spec} \footnotesize $-2\rho$};
		\draw[fill = black] ( 2*\l, \b - 2*\h) node[xshift = 0.7cm] {\color{spec} \footnotesize $\lambda + \rho$} circle (\r);
		\draw[dashed] ( 2*\l - \l, \b - 1*\h) -- node[below] {\color{spec2} \footnotesize $\lambda + \rho$} ( 2*\l + \l, \b - 1*\h);		
		
		\draw[fill = black] ( \l, \b - \h) circle (\r) node[xshift = -0.7cm] {\color{spec} \footnotesize $ \lambda - \rho$};
		\draw[fill = black] ( \l, \b - 3*\h) circle (\r) node[xshift = -0.6cm] {\color{spec} \footnotesize $- 2\rho$};
		\draw[dashed] (0, \b) -- node[above] {\color{spec2} \footnotesize $\lambda - \rho$} ( 2*\l, \b);

		
		\draw (\d, \b) node[xshift = -0.4cm, yshift = 0.2cm] {\footnotesize $\ln \beta$} -- (\d + \l, \b - \h);
		\draw ( \d + \l, \b - \h) -- (\d, \b - 2*\h);
		\draw (\d, \b - 2*\h) -- ( \d + \l, \b - 3*\h) ;
		\draw (\d, \b - 2*\h) -- (\d - \l, \b - 3*\h) node[xshift = -0.4cm, yshift = 0.1cm] {\color{spec} \footnotesize $-\lambda$};
		\draw (\d - \l, \b - 3*\h) -- (\d, \b - 4*\h);
		\draw ( \d + \l, \b - 3*\h) -- (\d, \b - 4*\h);
		\draw (\d, \b - 4*\h) -- (\d - \l, \b - 5*\h) node[xshift = -0.4cm, yshift = 0.1cm] {\color{spec} \footnotesize $-\lambda$};
		
		\draw[fill = black] (\d, \b - 2*\h) node[xshift = -0.65cm, yshift = 0.1cm] {\color{spec} \footnotesize $ \lambda + \rho$} circle (\r);
		\draw[fill = black] (\d, \b - 4*\h) node[xshift = -0.7cm] {\color{spec} \footnotesize $\lambda + \rho$} circle (\r);

		\draw[line width = 0.6pt] ( \d + 2*\l + \t, \b + \hh) -- node[xshift = 0.4cm, yshift = 0.1cm] {\color{spec} \footnotesize $-\lambda$} ( \d + 2*\l + \t, \b);
		\draw[line width = 0.6pt] ( \d + 2*\l - \t,  \b + \hh) --  ( \d + 2*\l - \t, \b);
		
		\draw ( \d + 2*\l, \b) -- ( \d + 2*\l + \l, \b - \h) node[xshift = 0.3cm, yshift = 0.1cm] {\color{spec} \footnotesize $-\lambda$};
		\draw ( \d + 2*\l + \l, \b - \h) -- ( \d + 2*\l, \b - 2*\h);
		\draw ( \d + 2*\l, \b) -- ( \d + 2*\l - \l, \b - \h);
		\draw ( \d + 2*\l- \l, \b - \h) -- ( \d + 2*\l, \b - 2*\h);
		\draw ( \d + 2*\l, \b - 2*\h) -- ( \d + 2*\l + \l, \b - 3*\h) node[xshift = 0.3cm, yshift = 0.1cm] {\color{spec} \footnotesize $-\lambda$};
		\draw ( \d + 2*\l, \b - 2*\h) -- ( \d + 2*\l- \l, \b - 3*\h);
		
		\draw[fill = black] ( \d + 2*\l, \b) circle (\r) node[xshift = 0.45cm, yshift = 0.05cm] {\color{spec} \footnotesize $2\lambda$};
		\draw[fill = black] ( \d + 2*\l, \b - 2*\h) node[xshift = 0.7cm] {\color{spec} \footnotesize $\lambda + \rho$} circle (\r);
		
		\draw[fill = black] (\d + \l, \b - \h) circle (\r) node[xshift = -0.6cm] {\color{spec} \footnotesize $- 2\rho$};
		\draw[fill = black] (\d + \l, \b - 3*\h) circle (\r) node[xshift = -0.6cm] {\color{spec} \footnotesize $- 2\rho$};
		\draw[dashed] (\d, \b) -- node[above] {\color{spec2} \footnotesize $\lambda - \rho$} ( \d + 2*\l, \b);

		
		\draw[arrow, gray!80] (\l + 0.5*\d - \dd, - 2*\h) -- (\l + 0.5*\d + \dd, - 2*\h);
		\draw[arrow, gray!80] (\l + 0.5*\d + \dd, \a - 2*\h) -- (\l + 0.5*\d - \dd, \a - 2*\h);
		\draw[arrow, gray!80] (\l + 0.5*\d - \dd, \b - 2*\h) -- (\l + 0.5*\d + \dd, \b - 2*\h);
		
		\draw[arrow, gray!80] (\d + \l, \a + 1.55*\hh + \dd) -- (\d + \l, \a + 1.55*\hh - \dd);
		\draw[arrow, gray!80] (\l, \b + 1.55*\hh + \dd) -- (\l, \b + 1.55*\hh - \dd);
	\end{tikzpicture}
	\vspace{0.2cm}
	\caption{Proof of identity~\eqref{QL-ABC} (continues in Figure~\ref{fig:ABCrel2})} \label{fig:ABCrel}
\end{figure}
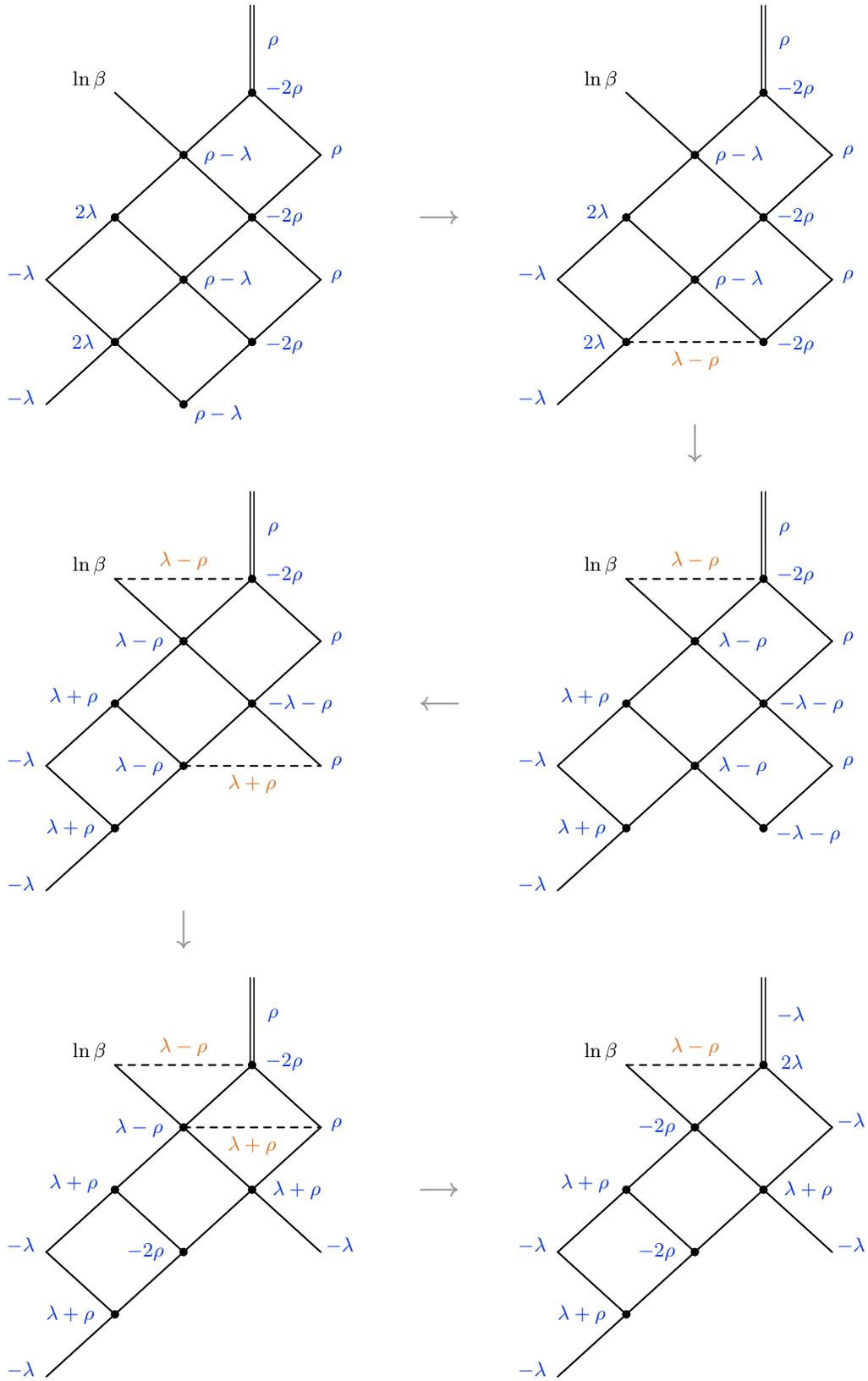

\newpage

\begin{figure}[H] \centering \vspace{-1cm}
	\begin{tikzpicture}[thick, line cap = round, scale = 0.98]
		\def\l{1.1}
		\def\r{1.5pt}
		\def\h{1}
		\def\d{8.2}
		\def\dd{0.3}
		\def\hh{1.4}
		\def\t{0.03}
		\def\a{-7.8}
		\def\b{-15.6}

		
		\draw (0, 0) node[xshift = -0.4cm, yshift = 0.2cm] {\footnotesize $\ln \beta$} -- (\l, - \h);
		\draw (\l, - \h) -- (0, - 2*\h);
		\draw (0, - 2*\h) -- (  \l, - 3*\h) ;
		\draw (0, - 2*\h) -- (- \l, - 3*\h) node[xshift = -0.4cm, yshift = 0.1cm] {\color{spec} \footnotesize $-\lambda$};
		\draw (- \l, - 3*\h) -- (0, - 4*\h);
		\draw (  \l, - 3*\h) -- (0, - 4*\h);
		\draw (0, - 4*\h) -- (- \l, - 5*\h) node[xshift = -0.4cm, yshift = 0.1cm] {\color{spec} \footnotesize $-\lambda$};
		
		\draw[fill = black] (0, - 2*\h) node[xshift = -0.65cm, yshift = 0.1cm] {\color{spec} \footnotesize $\lambda + \rho$} circle (\r);
		\draw[fill = black] (0, - 4*\h) node[xshift = -0.7cm] {\color{spec} \footnotesize $\lambda + \rho$} circle (\r);

		\draw[line width = 0.6pt] (  2*\l + \t, + \hh) -- node[xshift = 0.4cm, yshift = 0.1cm] {\color{spec} \footnotesize $-\lambda$} (  2*\l + \t, 0);
		\draw[line width = 0.6pt] (  2*\l - \t,  + \hh) --  (  2*\l - \t, 0);
		
		\draw (  2*\l, 0) -- (  2*\l + \l, - \h) node[xshift = 0.35cm, yshift = 0.1cm] {\color{spec} \footnotesize $-\lambda$};
		\draw (  2*\l + \l, - \h) -- (  2*\l, - 2*\h);
		\draw (  2*\l, 0) -- (  2*\l - \l, - \h);
		\draw (  2*\l- \l, - \h) -- (  2*\l, - 2*\h);
		\draw (  2*\l, - 2*\h) -- (  2*\l + \l, - 3*\h) node[xshift = 0.35cm, yshift = 0.1cm] {\color{spec} \footnotesize $-\lambda$};
		\draw (  2*\l, - 2*\h) -- (  2*\l- \l, - 3*\h);
		
		\draw[fill = black] ( 2*\l, 0) circle (\r) node[xshift = 0.45cm, yshift = 0.05cm] {\color{spec} \footnotesize $2\lambda$};
		\draw[fill = black] (2*\l, - 2*\h) node[xshift = 0.7cm] {\color{spec} \footnotesize $\lambda + \rho$} circle (\r);
		
		\draw[fill = black] (\l, - \h) circle (\r) node[xshift = -0.6cm] {\color{spec} \footnotesize $- 2\rho$};
		\draw[fill = black] (\l, - 3*\h) circle (\r) node[xshift = -0.6cm] {\color{spec} \footnotesize $- 2\rho$};
		\draw[dashed] (0, 0) -- node[above] {\color{spec2} \footnotesize $\lambda - \rho$} ( 2*\l, 0);

		
		\draw (\d, 0) node[xshift = -0.4cm, yshift = 0.2cm] {\footnotesize $\ln \beta$} -- (\d + \l, - \h);
		\draw ( \d + \l, - \h) -- (\d, - 2*\h);
		\draw (\d, - 2*\h) -- ( \d + \l, - 3*\h) ;
		\draw (\d, - 2*\h) -- (\d - \l, - 3*\h) node[xshift = -0.25cm, yshift = 0.05cm] {\color{spec} \footnotesize $\rho$};
		\draw (\d - \l, - 3*\h) -- (\d, - 4*\h);
		\draw ( \d + \l, - 3*\h) -- (\d, - 4*\h);
		\draw (\d, - 4*\h) -- (\d - \l, - 5*\h) node[xshift = -0.25cm, yshift = 0.05cm] {\color{spec} \footnotesize $\rho$};
		
		\draw[fill = black] (\d, - 2*\h) node[xshift = -0.75cm, yshift = 0.1cm] {\color{spec} \footnotesize $-\lambda - \rho$} circle (\r);
		\draw[fill = black] (\d, - 4*\h) node[xshift = -0.8cm] {\color{spec} \footnotesize $-\lambda - \rho$} circle (\r);

		\draw[line width = 0.6pt] ( \d + 2*\l + \t, + \hh) -- node[xshift = 0.4cm, yshift = 0.1cm] {\color{spec} \footnotesize $-\lambda$} ( \d + 2*\l + \t, 0);
		\draw[line width = 0.6pt] ( \d + 2*\l - \t,  + \hh) --  ( \d + 2*\l - \t, 0);
		
		\draw ( \d + 2*\l, 0) -- ( \d + 2*\l + \l, - \h) node[xshift = 0.35cm, yshift = 0.1cm] {\color{spec} \footnotesize $-\lambda$};
		\draw ( \d + 2*\l + \l, - \h) -- ( \d + 2*\l, - 2*\h);
		\draw ( \d + 2*\l, 0) -- ( \d + 2*\l - \l, - \h);
		\draw ( \d + 2*\l- \l, - \h) -- ( \d + 2*\l, - 2*\h);
		\draw ( \d + 2*\l, - 2*\h) -- ( \d + 2*\l + \l, - 3*\h) node[xshift = 0.35cm, yshift = 0.1cm] {\color{spec} \footnotesize $-\lambda$};
		\draw ( \d + 2*\l, - 2*\h) -- ( \d + 2*\l- \l, - 3*\h);
		
		\draw[fill = black] ( \d + 2*\l, 0) circle (\r) node[xshift = 0.4cm, yshift = 0.05cm] {\color{spec} \footnotesize $2\lambda$};
		\draw[fill = black] ( \d + 2*\l, - 2*\h) node[xshift = 0.7cm] {\color{spec} \footnotesize $\lambda + \rho$} circle (\r);
		
		\draw[fill = black] (\d + \l, - \h) circle (\r) node[xshift = -0.7cm] {\color{spec} \footnotesize $\lambda - \rho$};
		\draw[fill = black] (\d + \l, - 3*\h) circle (\r) node[xshift = -0.7cm] {\color{spec} \footnotesize $\lambda - \rho$};
		\draw[dashed] (\d, 0) -- node[above] {\color{spec2} \footnotesize $\lambda - \rho$} ( \d + 2*\l, 0);

		
		\draw (\d, \a) node[xshift = -0.4cm, yshift = 0.2cm] {\footnotesize $\ln \beta$} -- (\d + \l, \a - \h);
		\draw ( \d + \l, \a - \h) -- (\d, \a - 2*\h);
		\draw (\d, \a - 2*\h) -- ( \d + \l, \a - 3*\h) ;
		\draw (\d, \a - 2*\h) -- (\d - \l, \a - 3*\h) node[xshift = -0.25cm, yshift = 0.05cm] {\color{spec} \footnotesize $\rho$};
		\draw (\d - \l, \a - 3*\h) -- (\d, \a - 4*\h);
		\draw ( \d + \l, \a - 3*\h) -- (\d, \a - 4*\h);
		\draw (\d, \a - 4*\h) -- (\d - \l, \a - 5*\h) node[xshift = -0.25cm, yshift = 0.05cm] {\color{spec} \footnotesize $\rho$};
		
		\draw[fill = black] (\d, \a - 2*\h) node[xshift = -0.55cm, yshift = 0.1cm] {\color{spec} \footnotesize $-2 \rho$} circle (\r);
		\draw[fill = black] (\d, \a - 4*\h) node[xshift = -0.8cm] {\color{spec} \footnotesize $-\lambda - \rho$} circle (\r);

		\draw[line width = 0.6pt] ( \d + 2*\l + \t, \a + \hh) -- node[xshift = 0.4cm, yshift = 0.1cm] {\color{spec} \footnotesize $-\lambda$} ( \d + 2*\l + \t, \a);
		\draw[line width = 0.6pt] ( \d + 2*\l - \t,  \a + \hh) --  ( \d + 2*\l - \t, \a);
		
		\draw ( \d + 2*\l, \a) -- ( \d + 2*\l + \l, \a - \h) node[xshift = 0.35cm, yshift = 0.1cm] {\color{spec} \footnotesize $-\lambda$};
		\draw ( \d + 2*\l + \l, \a - \h) -- ( \d + 2*\l, \a - 2*\h);
		\draw ( \d + 2*\l, \a) -- ( \d + 2*\l - \l, \a - \h);
		\draw ( \d + 2*\l- \l, \a - \h) -- ( \d + 2*\l, \a - 2*\h);
		\draw ( \d + 2*\l, \a - 2*\h) -- ( \d + 2*\l + \l, \a - 3*\h) node[xshift = 0.35cm, yshift = 0.1cm] {\color{spec} \footnotesize $-\lambda$};
		\draw ( \d + 2*\l, \a - 2*\h) -- ( \d + 2*\l- \l, \a - 3*\h);
		
		\draw[fill = black] ( \d + 2*\l, \a) circle (\r) node[xshift = 0.4cm, yshift = 0.05cm] {\color{spec} \footnotesize $2\lambda$};
		\draw[fill = black] ( \d + 2*\l, \a - 2*\h) node[xshift = 0.45cm] {\color{spec} \footnotesize $2\lambda$} circle (\r);
		
		\draw[fill = black] (\d + \l, \a - \h) circle (\r) node[xshift = -0.7cm] {\color{spec} \footnotesize $\rho - \lambda$};
		\draw[fill = black] (\d + \l, \a - 3*\h) circle (\r) node[xshift = -0.7cm] {\color{spec} \footnotesize $\lambda - \rho$};
		\draw[dashed] (\d, \a - 2*\h) -- node[yshift = 0.25cm] {\color{spec2} \footnotesize $\lambda - \rho$} ( \d + 2*\l, \a - 2*\h);

		
		\draw (0, \a) node[xshift = -0.4cm, yshift = 0.2cm] {\footnotesize $\ln \beta$} -- (+ \l, \a - \h);
		\draw ( + \l, \a - \h) -- (0, \a - 2*\h);
		\draw (0, \a - 2*\h) -- ( + \l, \a - 3*\h) ;
		\draw (0, \a - 2*\h) -- (- \l, \a - 3*\h) node[xshift = -0.25cm, yshift = 0.05cm] {\color{spec} \footnotesize $\rho$};
		\draw (- \l, \a - 3*\h) -- (0, \a - 4*\h);
		\draw ( + \l, \a - 3*\h) -- (0, \a - 4*\h);
		\draw (0, \a - 4*\h) -- (- \l, \a - 5*\h) node[xshift = -0.25cm, yshift = 0.05cm] {\color{spec} \footnotesize $\rho$};
		
		\draw[fill = black] (0, \a - 2*\h) node[xshift = -0.55cm, yshift = 0.1cm] {\color{spec} \footnotesize $-2 \rho$} circle (\r);
		\draw[fill = black] (0, \a - 4*\h) node[xshift = -0.6cm] {\color{spec} \footnotesize $-2 \rho$} circle (\r);

		\draw[line width = 0.6pt] ( + 2*\l + \t, \a + \hh) -- node[xshift = 0.4cm, yshift = 0.1cm] {\color{spec} \footnotesize $-\lambda$} ( + 2*\l + \t, \a);
		\draw[line width = 0.6pt] ( + 2*\l - \t,  \a + \hh) --  ( + 2*\l - \t, \a);
		
		\draw ( + 2*\l, \a) -- ( + 2*\l + \l, \a - \h) node[xshift = 0.35cm, yshift = 0.1cm] {\color{spec} \footnotesize $-\lambda$};
		\draw ( + 2*\l + \l, \a - \h) -- ( + 2*\l, \a - 2*\h);
		\draw ( + 2*\l, \a) -- ( + 2*\l - \l, \a - \h);
		\draw ( + 2*\l- \l, \a - \h) -- ( + 2*\l, \a - 2*\h);
		\draw ( + 2*\l, \a - 2*\h) -- ( + 2*\l + \l, \a - 3*\h) node[xshift = 0.35cm, yshift = 0.1cm] {\color{spec} \footnotesize $-\lambda$};
		\draw ( + 2*\l, \a - 2*\h) -- ( + 2*\l- \l, \a - 3*\h);
		
		\draw[fill = black] ( + 2*\l, \a) circle (\r) node[xshift = 0.4cm, yshift = 0.05cm] {\color{spec} \footnotesize $2\lambda$};
		\draw[fill = black] ( + 2*\l, \a - 2*\h) node[xshift = 0.45cm] {\color{spec} \footnotesize $2\lambda$} circle (\r);
		
		\draw[fill = black] (+ \l, \a - \h) circle (\r) node[xshift = -0.7cm] {\color{spec} \footnotesize $\rho - \lambda$};
		\draw[fill = black] (+ \l, \a - 3*\h) circle (\r) node[xshift = -0.7cm] {\color{spec} \footnotesize $\rho - \lambda$};

		
		\hypersetup{linkcolor=gray!80}
		\draw[arrow, gray!80] (\l + 0.5*\d - \dd, - 2*\h) -- node[yshift = 0.6cm, align=center] {\footnotesize see \\[-3pt] \footnotesize Fig.~\ref{fig:QQcomm}} (\l + 0.5*\d + \dd, - 2*\h);
		\hypersetup{linkcolor=black}
		\draw[arrow, gray!80] (\l + 0.5*\d + \dd, \a - 2*\h) -- (\l + 0.5*\d - \dd, \a - 2*\h);
		
		\draw[arrow, gray!80] (\d + \l, \a + 1.55*\hh + \dd) -- (\d + \l, \a + 1.55*\hh - \dd);
	\end{tikzpicture}
	\vspace{0.2cm}
	\caption{Proof of identity~\eqref{QL-ABC} (begins in Figure~\ref{fig:ABCrel})} \label{fig:ABCrel2}
\end{figure}

\begin{figure}[H] \centering \vspace{0.8cm}
	\begin{tikzpicture}[thick, line cap = round, scale = 0.98]
		\def\l{1.1}
		\def\r{1.5pt}
		\def\h{1}
		\def\d{6}
		\def\dd{0.3}
		\def\hh{1.4}
		\def\t{0.03}
		\def\a{-7.8}
		\def\b{-15.6}
		
		
		\draw (\l, - \h) node[xshift = -0.4cm, yshift = 0.1cm] {\color{spec} \footnotesize $-\rho$} -- (0, - 2*\h) ;
		\draw (0, - 2*\h) -- (  \l, - 3*\h) node[xshift = -0.45cm] {\color{spec} \footnotesize $-\rho$} node[xshift = 1.7cm] { \footnotesize $=$}  ;
		\draw (0, - 2*\h) -- (- \l, - 3*\h) node[xshift = -0.4cm, yshift = 0.1cm] {\color{spec} \footnotesize $-\lambda$};
		\draw (- \l, - 3*\h) -- (0, - 4*\h);
		\draw (  \l, - 3*\h) -- (0, - 4*\h);
		\draw (0, - 4*\h) -- (- \l, - 5*\h) node[xshift = -0.4cm, yshift = 0.1cm] {\color{spec} \footnotesize $-\lambda$};
		
		\draw[fill = black] (0, - 2*\h) node[xshift = -0.65cm, yshift = 0.1cm] {\color{spec} \footnotesize $\lambda + \rho$} circle (\r);
		\draw[fill = black] (0, - 4*\h) node[xshift = -0.7cm] {\color{spec} \footnotesize $ \lambda + \rho$} circle (\r);

		
		\draw (\d + \l, - \h) node[xshift = -0.35cm, yshift = 0.1cm] {\color{spec} \footnotesize $\lambda$} -- (\d, - 2*\h) ;
		\draw (\d, - 2*\h) -- (\d + \l, - 3*\h) node[xshift = -0.4cm] {\color{spec} \footnotesize $\lambda$} ;
		\draw (\d, - 2*\h) -- (\d - \l, - 3*\h) node[xshift = -0.25cm, yshift = 0.05cm] {\color{spec} \footnotesize $\rho$};
		\draw (\d - \l, - 3*\h) -- (\d, - 4*\h);
		\draw (\d + \l, - 3*\h) -- (\d, - 4*\h);
		\draw (\d, - 4*\h) -- (\d - \l, - 5*\h) node[xshift = -0.25cm, yshift = 0.05cm] {\color{spec} \footnotesize $\rho$};
		
		\draw[fill = black] (\d, - 2*\h) node[xshift = -0.75cm, yshift = 0.1cm] {\color{spec} \footnotesize $-\lambda - \rho$} circle (\r);
		\draw[fill = black] (\d, - 4*\h) node[xshift = -0.8cm] {\color{spec} \footnotesize $-\lambda - \rho$} circle (\r);

	\end{tikzpicture}
	\vspace{0.2cm}
	\caption{Commutativity of $GL$ Toda $Q$-operators} \label{fig:QQcomm}
\end{figure}
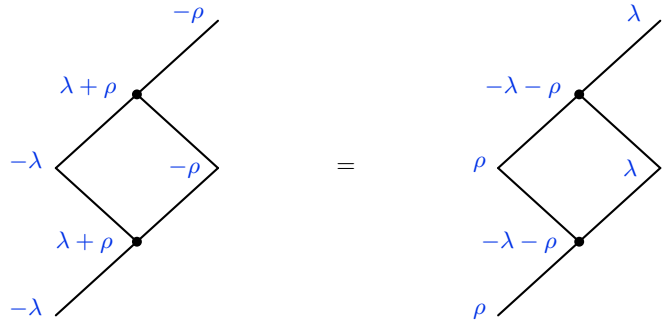

\subsection{Scalar product between $GL$ and $BC$ wave functions} \label{sec:gl-bc-prod}

The following proposition is needed to derive Mellin--Barnes representation of the $BC$ wave function.

\begin{proposition}\label{prop:gl-bc-prod}
	Let $\bm{\lambda}_n \in \mathbb{R}^n$, $\bm{\rho}_n \in \mathbb{C}^n$ such that $\Im \rho_1 = \ldots = \Im \rho_n < 0$. Then 
	\begin{align} \label{GG20}
		\int_{\mathbb{R}^n} d\bm{x}_n \; \Phi_{-\bm{\rho}_n}(\bm{x}_n)  \, \Psi_{\bm{\lambda}_n}(\bm{x}_n) \, \exp (- \beta e^{-x_1}) 
		=   \frac{(2\beta)^{-\imath \underline{\bm{\rho}}_n} \prod\limits_{j, k=1}^n \Gamma(\imath \rho_j \pm \imath \lambda_k) }{ \prod\limits_{1\leq j < k\leq n} \Gamma(\imath \rho_j + \imath \rho_k) \, \prod\limits_{j = 1}^n \Gamma (g + \imath \rho_j ) }.
	\end{align}
\end{proposition}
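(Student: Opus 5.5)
We argue by induction on $n$, peeling off one $GL$ raising operator and one $BC$ raising operator at a time with Proposition~\ref{prop:GL-BC-rel}. By the symmetry~\eqref{Phi-shift} and the $S_n$-symmetry of $\Phi$ we may order $\bm{\rho}_n$ conveniently. Write
\begin{align*}
	\Phi_{-\bm{\rho}_n} = \Lambda_n(-\rho_n)\,\Phi_{-\bm{\rho}_{n-1}}, \qquad \Psi_{\bm{\lambda}_n} = \LLambda_n(\lambda_n)\,\Psi_{\bm{\lambda}_{n-1}}.
\end{align*}
Then the left-hand side of~\eqref{GG20} becomes the pairing
\begin{align*}
	\bigl\langle \Phi_{-\bm{\rho}_{n-1}},\; \Lambda_n^t(-\rho_n)\, e^{-\beta e^{-x_1}} \LLambda_n(\lambda_n)\, \Psi_{\bm{\lambda}_{n-1}} \bigr\rangle .
\end{align*}

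To use~\eqref{QL-ABC} we need the extra factor $Q^t_{n-1}(\rho_n)$ on the left. We produce it from the $GL$ eigenvalue relation~\eqref{Q-Phi}, applied in transposed form: $Q_{n-1}(\rho_n)$ acts on $\Phi_{-\bm{\rho}_{n-1}}$ with eigenvalue $\prod_{j<n}\Gamma(\imath\rho_n+\imath\rho_j)$. Transposing moves this onto the pairing, so
\begin{align*}
	\text{LHS} = \prod_{j<n}\Gamma(\imath\rho_n+\imath\rho_j)^{-1}\,\bigl\langle \Phi_{-\bm{\rho}_{n-1}},\, Q^t_{n-1}(\rho_n)\Lambda^t_n(-\rho_n)\,e^{-\beta e^{-x_1}}\LLambda_n(\lambda_n)\Psi_{\bm{\lambda}_{n-1}}\bigr\rangle .
\end{align*}
The condition $\Im\rho_n<0$ is exactly what Proposition~\ref{prop:GL-BC-rel} requires.

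Next we apply~\eqref{QL-ABC} with $\lambda=\rho_n$ and $\rho=\lambda_n$. This gives the factor $\Gamma(\imath\rho_n\pm\imath\lambda_n)$ and the operator
\begin{align*}
	Q^t_{n-1}(\lambda_n)\,Q^t_{n-1}(-\lambda_n)\,e^{-\beta e^{-x_1}}\,\QQr_{n-1}(\rho_n).
\end{align*}
Three simplifications follow:
\begin{itemize}
	\item By Theorem~\ref{theorem2.2}, \eqref{QrPsi}, $\QQr_{n-1}(\rho_n)$ acts on $\Psi_{\bm{\lambda}_{n-1}}$ by the scalar $(2\beta)^{-\imath\rho_n}\Gamma(g+\imath\rho_n)^{-1}\prod_{k<n}\Gamma(\imath\rho_n\pm\imath\lambda_k)$.
	\item The transposed operators $Q^t_{n-1}(\pm\lambda_n)$ move back onto $\Phi_{-\bm{\rho}_{n-1}}$ and act by $\prod_{j<n}\Gamma(\imath\rho_j\pm\imath\lambda_n)$. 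Here $\Im(\pm\lambda_n-(-\rho_j))=\Im\rho_j<0$, so these $Q$'s are defined.
	\item What remains is the $(n-1)$-particle pairing, which is given by the induction hypothesis.
\end{itemize}
Collecting the factors: $\Gamma(\imath\rho_n\pm\imath\lambda_k)$ for all $k$, together with $\Gamma(\imath\rho_j\pm\imath\lambda_n)$ for $j<n$, completes $\prod_{j,k}$. The divisor $\prod_{j<n}\Gamma(\imath\rho_n+\imath\rho_j)$ supplies the new denominator terms, and the remaining factor is $(2\beta)^{-\imath\rho_n}/\Gamma(g+\imath\rho_n)$. This is exactly~\eqref{GG20}.

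For the base case $n=1$ we compute $\int dx\, e^{-\imath\rho x}\Psi_\lambda(x)e^{-\beta e^{-x}}$ directly from Example~\ref{ex:PhiPsi1}, or via the diagrams: a chain relation followed by the reduced flip relation of Figure~\ref{fig:flip-lim2}. This gives $(2\beta)^{-\imath\rho}\Gamma(\imath\rho\pm\imath\lambda)/\Gamma(g+\imath\rho)$. (Proposition~\ref{prop:GL-BC-rel} needs $n\ge2$, so $n=1$ must be handled separately in this way.)

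The main obstacle is analytic: justifying each move of transposed $Q$-operators between the two sides of the pairing, and the use of~\eqref{QL-ABC} inside a pairing rather than merely as a kernel identity. This requires absolute convergence of the full multiple integral, via Fubini--Tonelli. We would bound everything by the case of purely imaginary parameters, using the decay of $\Psi$ from Proposition~\ref{prop:bc-bound}, the factor $e^{-\beta e^{-x_1}}$ for $x_1\to-\infty$, and the growth $e^{\Im\rho\,\underline{\bm{x}}}$ of $\Phi_{-\bm{\rho}}$ in the allowed direction. The condition that all $\Im\rho_j$ are equal is what makes the shift~\eqref{Phi-shift} reduce the problem to a common exponential weight.
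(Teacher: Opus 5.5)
Your proposal is correct and is essentially the paper's proof. It uses the same base case via the chain relation and the reduced flip relation of Figure~\ref{fig:flip-lim2}, and the same insertion of $Q_{n-1}(\rho_n)$ through~\eqref{Q-Phi}. It then applies Proposition~\ref{prop:GL-BC-rel} with $\lambda=\rho_n$, $\rho=\lambda_n$, followed by~\eqref{QrPsi} and~\eqref{Q-Phi}, and gives the same Fubini--Tonelli justification: factor out $e^{-\epsilon\underline{\bm{x}}_n}$ and dominate by the case $\bm{\lambda}_n=0$, $\bm{\rho}_n=-\imath\epsilon\,\bm{e}_n$. The preliminary reordering of $\bm{\rho}_n$ you mention is not needed but does no harm.
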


The proof is by induction over $n$. Denote the above integral as 
\begin{align} \label{Iint}
	I(\bm{\lambda}_n, \bm{\rho}_n) = \int_{\mathbb{R}^n} d\bm{x}_n \; \Phi_{-\bm{\rho}_n}(\bm{x}_n)  \, \Psi_{\bm{\lambda}_n}(\bm{x}_n) \, \exp (- \beta e^{-x_1}).
\end{align}
The following lemmas represent base case and induction step.

\begin{lemma}
	Let $\lambda \in \mathbb{R}$, $\Im \rho < 0$. Then
	\begin{align}
		I(\lambda, \rho) = \int_{\mathbb{R}} dx \; \Phi_{-\rho}(x) \, \Psi_{\lambda}(x) \, \exp (- \beta e^{-x}) = \frac{(2\beta)^{-\imath \rho} \, \Gamma(\imath \rho \pm \imath \lambda)}{\Gamma(g + \imath \rho)}.
	\end{align}
\end{lemma}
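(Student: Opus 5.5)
We compute $I(\lambda,\rho)$ directly from the one-particle integral representation of Example~\ref{ex:PhiPsi1}:
\begin{align*}
I(\lambda,\rho)=\frac{(2\beta)^{\imath\lambda}}{\Gamma(g-\imath\lambda)}\int_{\mathbb{R}}dx\int_{\ln\beta}^\infty dz\; e^{-\imath\rho x+\imath\lambda(x-2z)-e^{z-x}-\beta e^{-x}}\,(1+\beta e^{-z})^{-\imath\lambda-g}(1-\beta e^{-z})^{-\imath\lambda+g-1}.
\end{align*}
We first check absolute convergence for $\Im\rho<0$. As $x\to+\infty$ the factor $|e^{-\imath\rho x}|=e^{\Im\rho\, x}$ decays. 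As $x\to-\infty$ the double exponential $e^{-e^{z-x}-\beta e^{-x}}$ kills everything. In $z$, the weight near $z=\ln\beta$ is integrable since $g>0$, and for large $z$ the factor $e^{-e^{z-x}}$ decays. Fubini therefore allows us to integrate over $x$ first.

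The $x$-integral is a gamma integral. Set $t=(e^{z}+\beta)e^{-x}$; then
\begin{align*}
\int_{\mathbb{R}}dx\; e^{\imath(\lambda-\rho)x}\,e^{-(e^z+\beta)e^{-x}}=\Gamma(\imath\rho-\imath\lambda)\,(e^z+\beta)^{\imath\lambda-\imath\rho},
\end{align*}
which is valid because $\Re(\imath\rho-\imath\lambda)=-\Im\rho>0$. In diagram language this is just the chain relation of Figure~\ref{fig:chain} with the endpoint $y=\ln\beta$.

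The remaining $z$-integral is exactly of the type in Figure~\ref{fig:flip-lim2}, namely the double line with parameter $\lambda$ attached to a dashed line $(e^z+e^{\ln\beta})^{-\imath(-\lambda-\rho)}$. Writing $(e^z+\beta)^{\imath\lambda-\imath\rho}=e^{\imath(\lambda-\rho)z}(1+\beta e^{-z})^{\imath\lambda-\imath\rho}$, the integrand becomes
\begin{align*}
e^{-\imath(\lambda+\rho)z}\,(1+\beta e^{-z})^{-\imath\rho-g}\,(1-\beta e^{-z})^{-\imath\lambda+g-1}.
\end{align*}
The relation of Figure~\ref{fig:flip-lim2}, with its parameter renamed $\rho\to-\rho$, then gives
\begin{align*}
\frac{(2\beta)^{\imath\lambda}}{\Gamma(g-\imath\lambda)}\int_{\ln\beta}^\infty dz\;(\ldots)=(2\beta)^{-\imath\rho}\,\frac{\Gamma(\imath\lambda+\imath\rho)}{\Gamma(g+\imath\rho)}.
\end{align*}
Multiplying by $\Gamma(\imath\rho-\imath\lambda)$ yields the claimed $(2\beta)^{-\imath\rho}\Gamma(\imath\rho\pm\imath\lambda)/\Gamma(g+\imath\rho)$.

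The main obstacle is this last step: we must confirm that Figure~\ref{fig:flip-lim2} applies in the required parameter range $\Re(\imath\lambda+\imath\rho)>0$, or else continue it analytically from a region where it does. If one prefers an independent check, the substitution $u=\beta e^{-z}\in(0,1)$ turns the $z$-integral into a Beta-type integral
\begin{align*}
\beta^{-\imath(\lambda+\rho)}\int_0^1 u^{\imath(\lambda+\rho)-1}(1+u)^{-\imath\rho-g}(1-u)^{-\imath\lambda+g-1}\,du.
\end{align*}
This is a hypergeometric ${}_2F_1$ at argument $-1$. Because of the special parameter relation here, it reduces to gamma functions via a quadratic transformation (Kummer-type evaluation). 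That evaluation, together with the duplication formula, should reproduce $2^{-\imath(\lambda+\rho)}\Gamma(\imath\lambda+\imath\rho)\Gamma(g-\imath\lambda)/\Gamma(g+\imath\rho)$. Alternatively, one can simply compare with the known Mellin transform of the Whittaker function $W_{\kappa,\mu}$ (DLMF 13.23), after writing $e^{-\beta e^{-x}}\Psi_\lambda$ via Example~\ref{ex:MB1-1}.
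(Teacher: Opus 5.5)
Your proof is correct and follows the paper's route: integrate over $x$ first by the chain relation, which gives $\Gamma(\imath\rho-\imath\lambda)$ and the dashed line $(e^z+\beta)^{\imath(\lambda-\rho)}$, then apply the identity of Figure~\ref{fig:flip-lim2} with $\rho\to-\rho$. The step you flag as the main obstacle is not one, since the reduced flip hypothesis $\Re(\imath\rho')<\Re(\imath\lambda)<g$ becomes exactly $\Im\rho<0$ under $\rho'=-\rho$; and in your Beta-type check the ${}_2F_1$ has equal parameters $a=c=g+\imath\rho$, so it equals $2^{-\imath(\lambda+\rho)}$ directly, with no quadratic transformation or duplication formula needed.
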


\begin{proof}
	Using Gauss--Givental representations for wave functions (see Example~\ref{ex:PhiPsi1}) we rewrite the above integral~as
	\begin{align} \label{sc-pr-1}
		\int_{\mathbb{R}} dx \; e^{- \imath \rho  x - \beta e^{-x}} \, \frac{(2\beta)^{\imath \lambda}}{\Gamma(g - \imath \lambda)} \, \int_{\ln \beta}^\infty dy \; (1 + \beta e^{-y})^{-\imath \lambda - g} \, (1 - \beta e^{-y})^{-\imath \lambda + g - 1 } \, e^{\imath \lambda (x - 2y) - e^{y - x}}.
	\end{align}
	Then it can be calculated with the help of diagrams, see Figure~\ref{fig:ABCrel1}. On the first step one uses chain relation (Figure~\ref{fig:chain}), and on the second step the identity from Figure~\ref{fig:flip-lim2}.
\end{proof}
	
	\begin{figure}[H] \centering \vspace{0cm}
		\begin{tikzpicture}[thick, line cap = round, scale = 0.98]
				\def\l{1.1}
				\def\r{1.5pt}
				\def\h{1}
				\def\d{7.6}
				\def\dd{0.3}
				\def\hh{1.4}
				\def\t{0.03}
				\def\a{-7.8}
				\def\b{-15.6}

				
				\draw (0, 0 ) node[xshift = -0.4cm, yshift = 0.2cm] {\footnotesize $\ln \beta$} -- ( \l, - \h);
				
				\draw[line width = 0.6pt] (2*\l + \t, \hh) -- node[xshift = 0.3cm, yshift = 0.1cm] {\color{spec} \footnotesize $\lambda$} (2*\l + \t, 0);
				\draw[line width = 0.6pt] (2*\l - \t,  \hh) --  (2*\l - \t, 0);
				
				\draw (2*\l, 0) -- (2*\l - \l, - \h);
				
				\draw[fill = black] (2*\l, 0) circle (\r) node[xshift = 0.5cm, yshift = 0.05cm] {\color{spec} \footnotesize $-2 \lambda$} node[xshift = 2.8cm] {$=$ \small $\quad \Gamma(\imath \rho - \imath \lambda)$};
				
				\draw[fill = black] (\l, -\h) circle (\r) node[xshift = 0.6cm, yshift = -0.15cm] {\color{spec} \footnotesize $\lambda - \rho$};

				
				\draw[dashed] (\d, 0) node[xshift = -0.4cm, yshift = 0.2cm] {\footnotesize $\ln \beta$} -- node[below] {\color{spec2} \footnotesize $\rho - \lambda$} (\d + 2*\l, 0);
				
				\draw[line width = 0.6pt] (\d + 2*\l + \t, \hh) -- node[xshift = 0.3cm, yshift = 0.1cm] {\color{spec} \footnotesize $\lambda$} (\d + 2*\l + \t, 0);
				\draw[line width = 0.6pt] (\d + 2*\l - \t,  \hh) --  (\d + 2*\l - \t, 0);
				
				\draw[fill = black] (\d + 2*\l, 0) circle (\r) node[xshift = 0.5cm, yshift = 0.05cm] {\color{spec} \footnotesize $-2 \lambda$} node[xshift = 3.2cm] {$=$ \small $\quad \Gamma(\imath \rho \pm \imath \lambda) \, \dfrac{ (2\beta)^{- \imath \rho}   }{ \Gamma (g + \imath \rho) }$};

			\end{tikzpicture}
		\vspace{0.1cm}
		\caption{Calculation of the integral~\eqref{sc-pr-1}} \label{fig:ABCrel1}
	\end{figure}
	
\begin{lemma}
	Let $\bm{\lambda}_n \in \mathbb{R}^n$, $\bm{\rho}_n \in \mathbb{C}^n$ such that $\Im \rho_1 = \ldots = \Im \rho_n < 0$. Then the integral~\eqref{Iint} satisfies the recurrence
	\begin{align} \label{Iint-rec}
		I(\bm{\lambda}_n, \bm{\rho}_n) = \frac{(2\beta)^{-\imath \rho_n} \, \Gamma(\imath \rho_n \pm \imath \lambda_n) \prod\limits_{j = 1}^{n - 1} \Gamma(\imath \rho_n \pm \imath \lambda_j) \, \Gamma(\imath \rho_j \pm \imath \lambda_n)}{\Gamma(g + \imath \rho_n) \prod\limits_{j = 1}^{n - 1} \Gamma(\imath \rho_j + \imath \rho_n)} \, I(\bm{\lambda}_{n - 1}, \bm{\rho}_{n - 1}).
	\end{align}
\end{lemma}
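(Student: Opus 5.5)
The plan is to peel off the outermost raising operator $\LLambda_n(\lambda_n)$ from $\Psi_{\bm{\lambda}_n} = \LLambda_n(\lambda_n)\,\Psi_{\bm{\lambda}_{n-1}}$, and the corresponding $GL$ raising operator from $\Phi_{-\bm{\rho}_n}$, and then apply Proposition~\ref{prop:GL-BC-rel}. Before doing so, I use the shift property~\eqref{Phi-shift} to set $\bm{\rho}_n = \rho_n \bm{e}_n + \bm{\rho}'$ with $\rho'_n = 0$ and $\bm{\rho}'$ real. This gives
\begin{align*}
\Phi_{-\bm{\rho}_n}(\bm{x}_n) = e^{-\imath\rho_n \underline{\bm{x}}_n}\,\Phi_{-\bm{\rho}'}(\bm{x}_n),
\end{align*}
and the condition $\Im\rho_j$ all equal is exactly what makes $\bm{\rho}'$ real. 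Writing $\Phi_{-\bm{\rho}'} = \Lambda_n(0)\,\Phi_{-\bm{\rho}'_{n-1}}$ and absorbing the exponential via $\Lambda_n(\lambda+\rho) = e^{\imath\rho\underline{\bm{x}}_n}\Lambda_n(\lambda)e^{-\imath\rho\underline{\bm{x}}_{n-1}}$, I get
\begin{align*}
\Phi_{-\bm{\rho}_n} = \Lambda_n(-\rho_n)\,\Phi_{-\bm{\rho}_{n-1}+\rho_n\bm{e}_{n-1}} .
\end{align*}
Hence
\begin{align*}
I(\bm{\lambda}_n,\bm{\rho}_n)=\bigl\langle \Phi_{-\bm{\rho}_{n-1}+\rho_n \bm{e}_{n-1}},\ \Lambda_n^t(-\rho_n)\, e^{-\beta e^{-x_1}}\,\LLambda_n(\lambda_n)\,\Psi_{\bm{\lambda}_{n-1}}\bigr\rangle ,
\end{align*}
with the bilinear pairing $\langle f,h\rangle=\int f h$.

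To match the left side of~\eqref{QL-ABC}, I insert $Q^t_{n-1}(\rho_n)$ by moving it onto the $GL$ function. The transpose of~\eqref{Q-Phi} reads $\langle \Phi_{\bm{\mu}}, Q^t(\rho_n) h\rangle = \langle Q(\rho_n)\Phi_{\bm{\mu}}, h\rangle$, and for $\bm{\mu} = -\bm{\rho}_{n-1}+\rho_n\bm{e}_{n-1}$ this equals $\prod_{j<n}\Gamma(\imath\rho_j)^{\!}$-type factors, precisely $\prod_{j=1}^{n-1}\Gamma(\imath\rho_n-\imath\mu_j)=\prod_{j=1}^{n-1}\Gamma(\imath\rho_j)$. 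Here $\Im\rho_n<0$ is as required. So
\begin{align*}
I(\bm{\lambda}_n,\bm{\rho}_n)\prod_{j<n}\Gamma(\imath\rho_j)=\langle \Phi_{\bm{\mu}},\,Q^t_{n-1}(\rho_n)\Lambda^t_n(-\rho_n)e^{-\beta e^{-x_1}}\LLambda_n(\lambda_n)\Psi_{\bm{\lambda}_{n-1}}\rangle .
\end{align*}
Applying Proposition~\ref{prop:GL-BC-rel} with $\lambda=\rho_n$ and $\rho=\lambda_n$ turns the right side into
\begin{align*}
\Gamma(\imath\rho_n\pm\imath\lambda_n)\,\bigl\langle \Phi_{\bm{\mu}},\, Q^t_{n-1}(\lambda_n)Q^t_{n-1}(-\lambda_n)\,e^{-\beta e^{-x_1}}\,\QQr_{n-1}(\rho_n)\Psi_{\bm{\lambda}_{n-1}}\bigr\rangle .
\end{align*}
Three ingredients then finish the reduction. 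First, \eqref{QrPsi} gives the eigenvalue $\frac{(2\beta)^{-\imath\rho_n}}{\Gamma(g+\imath\rho_n)}\prod_{j<n}\Gamma(\imath\rho_n\pm\imath\lambda_j)$. Second, moving the two $GL$ $Q$'s back onto $\Phi_{\bm{\mu}}$ yields $\prod_{j<n}\Gamma(\imath\lambda_n-\imath\mu_j)\Gamma(-\imath\lambda_n-\imath\mu_j)=\prod_{j<n}\Gamma(\imath\rho_j-\imath\rho_n\pm\imath\lambda_n)$. Third, I use the shift property once more: $\Phi_{\bm{\mu}}=e^{\imath\rho_n\underline{\bm{x}}_{n-1}}\Phi_{-\bm{\rho}_{n-1}}$.

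This last step is the main obstacle, since it leaves a residual $e^{\imath\rho_n\underline{\bm{x}}_{n-1}}$ and shifted gamma arguments, whereas the target recurrence has $\Gamma(\imath\rho_j\pm\imath\lambda_n)$ and $I(\bm{\lambda}_{n-1},\bm{\rho}_{n-1})$ with no shift. This suggests the decomposition should not use $\rho'_n=0$. Instead I will take the $GL$ raising parameter to be $-\rho_n$ with total shift chosen so that the lower $GL$ function is exactly $\Phi_{-\bm{\rho}_{n-1}}$; that is, I write $\Phi_{-\bm{\rho}_n}=\Lambda_n(-\rho_n)\Phi_{-\bm{\rho}_{n-1}}$ directly, which holds by definition~\eqref{Phi-GG-0} up to reordering via the $S_n$-symmetry of $\Phi$. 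Then the $Q^t(\rho_n)$ insertion produces $\prod_{j<n}\Gamma(\imath\rho_n+\imath\rho_j)$, which is the denominator in~\eqref{Iint-rec}, and the $Q^t(\pm\lambda_n)$ step produces $\prod_{j<n}\Gamma(\imath\rho_j\pm\imath\lambda_n)$. These require $\Im(\pm\lambda_n+\rho_j)<0$, which holds. Collecting factors gives exactly~\eqref{Iint-rec}.

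The remaining technical points are the justification of Fubini and of moving transposes across the pairing. For these I would rely on the absolute convergence of the first diagram kernel (Section~\ref{sec:ker-prod}), on the decay of $\Psi$ from Proposition~\ref{prop:bc-bound}, and on the factor $e^{-\beta e^{-x_1}}$ together with the negative imaginary parts, which make all pairings converge.
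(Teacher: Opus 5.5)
Your final argument is the paper's proof. You peel off $\Lambda_n(-\rho_n)$, insert $Q_{n-1}(\rho_n)$ through its eigenvalue $\prod_{j<n}\Gamma(\imath\rho_n+\imath\rho_j)$, apply Proposition~\ref{prop:GL-BC-rel} with $(\lambda,\rho)=(\rho_n,\lambda_n)$, evaluate $\QQr_{n-1}(\rho_n)$ on $\Psi_{\bm{\lambda}_{n-1}}$ and $Q_{n-1}(\pm\lambda_n)$ on $\Phi_{-\bm{\rho}_{n-1}}$, and justify Fubini via the shift $\Phi_{-\bm{\rho}_n}=e^{-\epsilon\underline{\bm{x}}_n}\Phi_{-\Re\bm{\rho}_n}$. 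The ``obstacle'' in your first attempt was only a sign slip: carried out correctly, your shift computation gives $\Phi_{-\bm{\rho}_n}=\Lambda_n(-\rho_n)\,\Phi_{-\bm{\rho}_{n-1}}$ exactly, which is also immediate from~\eqref{Phi-GG-0} with no appeal to $S_n$-symmetry.
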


\begin{proof}
	Let us present main steps of the proof, postponing the justification of interchanging integrals and other convergence issues to the end. 
	
	First, rewrite the Gauss--Givental representation for $GL$ wave function~\eqref{Phi-GG-0}
	\begin{align}
		\Phi_{-\bm{\rho}_n}(\bm{x}_n) = \Lambda_n(-\rho_n) \, \Phi_{-\bm{\rho}_{n - 1}}(\bm{x}_{n - 1})
	\end{align}
	using the corresponding Baxter operator~\eqref{Q-Phi}
	\begin{align}\label{Phi-GG-Q}
		\begin{aligned}
			\Phi_{-\bm{\rho}_n}(\bm{x}_n)  &= \frac{1}{\prod\limits_{j = 1}^{n - 1} \Gamma(\imath \rho_n + \imath \rho_j)} \, \Lambda_n(- \rho_n) \, Q_{n - 1}(\rho_n) \, \Phi_{-\bm{\rho}_{n - 1}}(\bm{x}_{n - 1}) = \frac{1}{\prod\limits_{j = 1}^{n - 1} \Gamma(\imath \rho_n + \imath \rho_j)} \\[6pt]
			& \times \int_{\mathbb{R}^{n - 1}} d\bm{y}_{n - 1} \int_{\mathbb{R}^{n - 1}} d\bm{z}_{n - 1} \; \Lambda_{- \rho_n}(\bm{x}_n | \bm{y}_{n - 1}) \, Q_{\rho_n}(\bm{y}_{n - 1} | \bm{z}_{n - 1}) \, \Phi_{-\bm{\rho}_{n - 1}}(\bm{z}_{n - 1}).
		\end{aligned}
	\end{align}
	Besides, recall the Gauss--Givental representation of $BC$ wave function~\eqref{Psi-GG}
	\begin{align}
		\Psi_{\bm{\lambda}_n}(\bm{x}_n) = \LLambda_n(\lambda_n) \, \Psi_{\bm{\lambda}_{n - 1}}(\bm{x}_{n - 1}) = \int_{\mathbb{R}^{n - 1}} d\bm{s}_{n - 1} \; \LLambda_{\lambda_n}(\bm{x}_n | \bm{s}_{n - 1}) \, \Psi_{\bm{\lambda}_{n - 1}}(\bm{s}_{n - 1}) .
	\end{align}
	Inserting these formulas into the integral~\eqref{Iint} we obtain
	\begin{multline} \label{Iint2}
		I(\bm{\lambda}_n, \bm{\rho}_n) = \frac{1}{\prod\limits_{j = 1}^{n - 1} \Gamma(\imath \rho_n + \imath \rho_j)} \, \int_{\mathbb{R}^n } d\bm{x}_n  \int_{\mathbb{R}^{n - 1}} d\bm{y}_{n - 1} \int_{\mathbb{R}^{n - 1}} d\bm{z}_{n - 1} \int_{\mathbb{R}^{n - 1}} d\bm{s}_{n - 1} \; \exp(-\beta e^{-x_1}) \\[6pt]
		\times \Lambda_{- \rho_n}(\bm{x}_n | \bm{y}_{n - 1}) \, Q_{\rho_n}(\bm{y}_{n - 1} | \bm{z}_{n - 1})  \, \LLambda_{\lambda_n}(\bm{x}_n | \bm{s}_{n - 1}) \, \Phi_{-\bm{\rho}_{n - 1}}(\bm{z}_{n - 1}) \,\Psi_{\bm{\lambda}_{n - 1}}(\bm{s}_{n - 1}) .
	\end{multline}
	The integral over $\bm{x}_n$
	\begin{align}
		\int_{\mathbb{R}^n } d\bm{x}_n \; \exp(-\beta e^{-x_1}) \, \Lambda_{- \rho_n}(\bm{x}_n | \bm{y}_{n - 1}) \, Q_{\rho_n}(\bm{y}_{n - 1} | \bm{z}_{n - 1})  \, \LLambda_{\lambda_n}(\bm{x}_n | \bm{s}_{n - 1})
	\end{align}
	represents kernel of the operator product
	\begin{align}
		Q^t_{n - 1}(\rho_n) \, \Lambda^t_n(-\rho_n) \, \exp(-\beta e^{-x_1}) \, \LLambda_n(\lambda_n).
	\end{align}
	The definition of transposed operators is given in~\eqref{t-op-def}.
	
	By Proposition~\ref{prop:GL-BC-rel}, the relation
	\begin{multline}
		Q^t_{n - 1}(\rho_n) \, \Lambda^t_{n}(- \rho_n) \, \exp(- \beta e^{-x_1}) \, \LLambda_n(\lambda_n) \\[6pt]
		= \Gamma(\imath \rho_n \pm \imath \lambda_n) \; Q^{t}_{n - 1}(\lambda_n) \, Q^t_{n - 1}(-\lambda_n) \, \exp(-\beta e^{-x_1}) \, \QQr_{n - 1}(\rho_n)
	\end{multline}
	holds as the equality between corresponding kernels. Applying it to the integral~\eqref{Iint2} we arrive at
	\begin{multline}  \label{Iint3}
		I(\bm{\lambda}_n, \bm{\rho}_n) = \frac{\Gamma(\imath \rho_n \pm \imath \lambda_n)}{\prod\limits_{j = 1}^{n - 1} \Gamma(\imath \rho_n + \imath \rho_j)} \, \int_{\mathbb{R}^n } d\bm{x}_{n - 1}  \int_{\mathbb{R}^{n - 1}} d\bm{y}_{n - 1} \int_{\mathbb{R}^{n - 1}} d\bm{z}_{n - 1} \int_{\mathbb{R}^{n - 1}} d\bm{s}_{n - 1} \; \exp(-\beta e^{-x_1}) \\[6pt]
		\times  Q_{-\lambda_n}(\bm{x}_{n - 1} | \bm{y}_{n - 1}) \, Q_{\lambda_n}(\bm{y}_{n - 1} | \bm{z}_{n - 1}) \, \QQr_{\rho_n}(\bm{x}_{n - 1} | \bm{s}_{n - 1}) \, \Phi_{-\bm{\rho}_{n - 1}}(\bm{z}_{n - 1}) \,\Psi_{\bm{\lambda}_{n - 1}}(\bm{s}_{n - 1}) .
	\end{multline}
	Now the integral over $\bm{s}_{n - 1}$ can be calculated, since it represents action of $BC$ Baxter operator on its eigenfunction~\eqref{QrPsi}
	\begin{align} \label{QPsi-calc}
		\int_{\mathbb{R}^{n - 1}} d\bm{s}_{n - 1} \; \QQr_{\rho_n}(\bm{x}_{n - 1} | \bm{s}_{n - 1}) \, \Psi_{\bm{\lambda}_{n - 1}}(\bm{s}_{n - 1}) = \frac{(2\beta)^{- \imath \rho_n}}{\Gamma(g + \imath \rho_n)} \, \prod_{j = 1}^{n - 1} \Gamma(\imath \rho_n \pm \imath \lambda_j ) \, \Psi_{\bm{\lambda}_{n - 1}}(\bm{x}_{n - 1}).
	\end{align}
	Analogously for the integrals over $\bm{y}_{n - 1}$ and $\bm{z}_{n - 1}$ we use~\eqref{Q-Phi}
	\begin{multline} \label{QPhi-calc}
		\int_{\mathbb{R}^{n - 1}} d\bm{y}_{n - 1} \int_{\mathbb{R}^{n - 1}} d\bm{z}_{n - 1}  \; Q_{-\lambda_n}(\bm{x}_{n - 1} | \bm{y}_{n - 1}) \, Q_{\lambda_n}(\bm{y}_{n - 1} | \bm{z}_{n - 1}) \, \Phi_{-\bm{\rho}_{n - 1}}(\bm{z}_{n - 1}) \\
		= \prod_{j = 1}^{n - 1} \Gamma(\imath \rho_j \pm \imath \lambda_n) \, \Phi_{-\bm{\rho}_{n - 1}}(\bm{x}_{n - 1}).
	\end{multline}
	Inserting these results into~\eqref{Iint3} we are left with the integral over $\bm{x}_{n - 1}$, which gives the claimed relation~\eqref{Iint-rec}.
	
	It remain to justify all of the above manipulations. Denote 
	\begin{align}
		\Re \rho_j = \gamma_j \in \mathbb{R}, \qquad \Im \rho_j = - \epsilon < 0.
	\end{align}
	Then due to the property~\eqref{Phi-shift}
	\begin{align}
		\Phi_{- \bm{\rho}_n}(\bm{x}_n) = e^{- \epsilon \underline{\bm{x}}_n} \, \Phi_{-  \bm{\gamma}_{n}}(\bm{x}_n).
	\end{align}
	First, let us prove that the initial integral~\eqref{Iint2} is absolutely convergent. Combining Corollaries~\ref{cor:Phi-bound},~\ref{cor:Psi-Tspace} we have the estimate
	\begin{align}
		\bigl| e^{- \epsilon \underline{\bm{x}}_n - \beta e^{-x_1}} \, \Phi_{-\bm{\gamma}_n}(\bm{x}_n) \, \Psi_{\bm{\lambda}_n}(\bm{x}_n) \bigr| \leq C (1 + |\bm{x}_n|)^{-d}
	\end{align}
	for any $d \geq 0$, which implies convergence.
	
	Next, notice that the representation~\eqref{Phi-GG-Q} can be equivalently written as
	\begin{align}
		\Phi_{-\bm{\rho}_{n}}(\bm{x}_n) = \frac{1}{\prod\limits_{j = 1}^{n - 1} \Gamma(\imath \rho_n + \imath \rho_j)} \, e^{- \epsilon \underline{\bm{x}}_n} \, \Lambda_n(-\gamma_n) \, Q_{n - 1}(\gamma_n - 2 \imath \epsilon) \, \Phi_{-  \bm{\gamma}_{n - 1}}(\bm{x}_{n - 1}).
	\end{align}
	Since $\Phi_{-  \bm{\gamma}_{n - 1}}(\bm{x}_{n - 1}) \in \mathcal{P}_{n - 1}$ (Corollary~\ref{cor:Phi-bound}) and operators $Q_{n - 1}(\gamma_n - 2\imath \epsilon)$, $\Lambda_n(-\gamma_n)$ are well defined on $\mathcal{P}_{n - 1}$ (Proposition~\ref{prop:QL-A-space}), the above representation is convergent. 
	
	Now consider the multiple integral~\eqref{Iint2}. The integrand is estimated as
	\begin{multline}
		\bigl| \Lambda_{- \rho_n}(\bm{x}_n | \bm{y}_{n - 1}) \, Q_{\rho_n}(\bm{y}_{n - 1} | \bm{z}_{n - 1})  \, \LLambda_{\lambda_n}(\bm{x}_n | \bm{s}_{n - 1}) \, \Phi_{-\bm{\rho}_{n - 1}}(\bm{z}_{n - 1}) \,\Psi_{\bm{\lambda}_{n - 1}}(\bm{s}_{n - 1})  \bigr|  \\[6pt]
		\leq C(\bm{\lambda}_n) \, e^{- \epsilon \underline{\bm{x}}_n} \, \Lambda_{0}(\bm{x}_n | \bm{y}_{n - 1}) \, Q_{-2 \imath \epsilon}(\bm{y}_{n - 1} | \bm{z}_{n - 1})  \, \LLambda_{0}(\bm{x}_n | \bm{s}_{n - 1}) \, \Phi_{0, \dots, 0}(\bm{z}_{n - 1}) \,\Psi_{0, \dots, 0}(\bm{s}_{n - 1}).
	\end{multline}
	Hence, the integral is absolutely convergent at least in initial order, when it equals the original expression~\eqref{Iint} with $\bm{\lambda}_n = (0, \dots, 0)$ and $\bm{\rho}_n = -\imath \epsilon(1, \dots, 1)$, whose convergence is already established. Therefore, by Fubini--Tonelli theorem it converges in any order, and we can integrate over $\bm{x}_n$ first.
	
	Analogous arguments can be applied to the integral~\eqref{Iint3}: it is absolutely convergent in the order, when we evaluate integrals~\eqref{QPsi-calc},~\eqref{QPhi-calc} first, since in that case it boils down to the initial expression~\eqref{Iint} with decreased number of variables ($n \to n - 1$).
\end{proof}

\subsection{From Gauss--Givental to Mellin--Barnes} \label{sec:gg-mb-equiv}

In this section we prove that $BC$ Toda wave function defined by means of Gauss--Givental integral also admits Mellin--Barnes representation, which generalizes Iorgov--Shadura formula~\cite[(26)]{IS}. 

Recall $GL$ Toda spectral measure~\eqref{gl-measure-0}
\begin{align}
	\hat{\mu}(\bm{\lambda}_n) = \frac{1}{n! \, (2\pi)^n} \prod_{j \not= k} \frac{1}{\Gamma(\imath \lambda_j - \imath \lambda_k)}
\end{align}
and define the function 
\begin{align} \label{MB0}
	K_1( \bm{\lambda}_n,  \bm{\gamma}_n)  = \frac{(2\beta)^{-\imath \bgg_n} \prod\limits_{j, k=1}^n \Gamma(\imath \gamma_j \pm \imath \lambda_k) }{ \prod\limits_{1\leq j < k\leq n} \Gamma(\imath \gamma_j + \imath \gamma_k) \, \prod\limits_{j = 1}^n \Gamma (g + \imath \gamma_j ) }.
\end{align}

\begin{theorem}\label{theorem2.3}
	For $\bm{\lambda}_n \in \mathbb{R}^n$ and $\epsilon > 0$
	\begin{align} \label{Psi-MB}
		\Psi_{\bm{\lambda}_n}(\bm{x}_n)  = e^{\beta e^{-x_1}}  \int_{(\mathbb{R} - \imath \epsilon)^n}  d\bm{\gamma}_n \; \hat{\mu}(\bm{\gamma}_n) \, K_1(\bl_n,\bg_n) \, \Phi_{\bm{\gamma}_n}(\bm{x}_n).
	\end{align}
\end{theorem}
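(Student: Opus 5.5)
The plan is to expand the function $f(\bm{x}_n) = e^{-\beta e^{-x_1}}\,\Psi_{\bm{\lambda}_n}(\bm{x}_n)$ in the $GL$ eigenbasis, using the completeness relation~\eqref{gl-compl} after twisting by an exponential weight. The coefficients of this expansion are exactly the scalar products computed in Proposition~\ref{prop:gl-bc-prod}.

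Concretely, fix $\epsilon>0$ and set $h(\bm{x}_n) = e^{-\epsilon\underline{\bm{x}}_n}\, e^{-\beta e^{-x_1}}\,\Psi_{\bm{\lambda}_n}(\bm{x}_n)$. By Corollaries~\ref{cor:Phi-bound},~\ref{cor:Psi-Tspace} (the rapid decay of $\Psi$ in the forbidden regions, combined with the factors $e^{-\epsilon \underline{\bm x}_n}$ and $e^{-\beta e^{-x_1}}$), $h$ decays faster than any polynomial. It is therefore in $L^1\cap L^2$, and it is continuous. Completeness~\eqref{gl-compl} together with the reality relation $\overline{\Phi_{\bm{\gamma}_n}} = \Phi_{-\bm{\gamma}_n}$ for real $\bm{\gamma}_n$ gives
\begin{align*}
h(\bm{x}_n) = \int_{\mathbb{R}^n} d\bm{\gamma}_n\;\hat{\mu}(\bm{\gamma}_n)\,\Phi_{\bm{\gamma}_n}(\bm{x}_n)\int_{\mathbb{R}^n} d\bm{y}_n\;\Phi_{-\bm{\gamma}_n}(\bm{y}_n)\,h(\bm{y}_n).
\end{align*}
By~\eqref{Phi-shift}, the inner integral equals $I(\bm{\lambda}_n, \bm{\gamma}_n - \imath\epsilon\bm{e}_n)$, which Proposition~\ref{prop:gl-bc-prod} evaluates as $K_1(\bm{\lambda}_n,\bm{\gamma}_n-\imath\epsilon\bm{e}_n)$.

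Next, multiply both sides by $e^{\epsilon\underline{\bm{x}}_n}$ and use~\eqref{Phi-shift} again. This gives $e^{\epsilon \underline{\bm x}_n}\Phi_{\bm{\gamma}_n} = \Phi_{\bm{\gamma}_n - \imath\epsilon\bm{e}_n}$. Shifting the variables $\bm{\gamma}_n \to \bm{\gamma}_n + \imath\epsilon\bm{e}_n$ moves the contour to $(\mathbb{R}-\imath\epsilon)^n$. The measure $\hat\mu$ depends only on the differences $\gamma_j-\gamma_k$, so it is unchanged by this shift. Multiplying by $e^{\beta e^{-x_1}}$ then yields~\eqref{Psi-MB}. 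The same argument works for every $\epsilon>0$, which is consistent with the fact that the integrand has no poles between such contours (the poles sit at $\gamma_j=\pm\lambda_k+\imath m$).

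The main obstacle is rigor in applying~\eqref{gl-compl}. That relation is formally an $L^2$ statement: it says the $GL$ Whittaker transform is unitary onto $L^2_{\mathrm{sym}}(\mathbb{R}^n,\hat\mu)$, as shown in~\cite{STS, W, Kozl}. I would invoke it in this Plancherel form, so that $h$ equals its inverse transform in $L^2$. To upgrade this to pointwise equality, I would show that the integrand $\hat\mu(\bm\gamma_n)K_1(\bm\lambda_n,\bm\gamma_n-\imath\epsilon\bm e_n)\Phi_{\bm\gamma_n}(\bm x_n)$ is absolutely integrable. This follows from Stirling asymptotics: the gamma factors produce exponential decay $e^{-\pi(\ldots)}$ in $|\Re\gamma_j|$, while $|\Phi_{\bm\gamma_n}|\le\Phi_{0,\dots,0}$ is polynomially bounded. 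These are the bounds of Appendix~\ref{sec:MB-bounds}. With absolute integrability in hand, the right-hand side is continuous, and since both sides are continuous and agree almost everywhere, they agree everywhere.
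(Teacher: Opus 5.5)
Your argument has the same skeleton as the paper's proof. You use the same twisted function $e^{-\epsilon\underline{\bm{x}}_n-\beta e^{-x_1}}\Psi_{\bm{\lambda}_n}$, identify its $GL$ coefficients by Proposition~\ref{prop:gl-bc-prod}, and shift the contour in the same way. The difference is in the analytic justification. The paper applies Wallach's inversion formula on the Whittaker Schwartz space $\mathcal{T}$ directly; that is the purpose of Corollary~\ref{cor:Psi-Tspace}. You instead use the $L^2$ Plancherel statement and then upgrade to a pointwise identity via absolute integrability and continuity. That route is legitimate in principle, but the reason you give for absolute integrability is wrong.

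The gamma factors together with $\hat{\mu}$ do not decay. Take $\gamma_1=R\to+\infty$ with $\gamma_2,\dots,\gamma_n$ in a fixed compact set. The factors contribute as follows:
\begin{itemize}
\item $\hat{\mu}(\bm{\gamma}_n)$ grows like $e^{\pi(n-1)R}$;
\item the $2n$ factors $\Gamma(\epsilon+\imath(R\pm\lambda_k))$ give $e^{-\pi nR}$;
\item the $n-1$ factors $1/\Gamma(2\epsilon+\imath(R+\gamma_k))$ give $e^{\frac{\pi}{2}(n-1)R}$;
\item $1/\Gamma(g+\epsilon+\imath R)$ gives $e^{\frac{\pi}{2}R}$.
\end{itemize}
So, up to powers of $R$,
\[
\bigl|\hat{\mu}(\bm{\gamma}_n)\,K_1(\bm{\lambda}_n,\bm{\gamma}_n-\imath\epsilon\bm{e}_n)\bigr|\approx e^{\frac{\pi}{2}(n-2)R}.
\]
This grows exponentially for $n\ge 3$, and for $n=2$ the exponentials cancel exactly. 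Hence the bound $|\Phi_{\bm{\gamma}_n}|\le\Phi_{0,\dots,0}$ cannot give integrability.

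The missing decay comes from the $GL$ wave function as a function of its spectral variables. This is the Mellin--Barnes bound \eqref{Phi-MB-b} of Proposition~\ref{prop:gl-mb-bound}:
\[
|\Phi_{\bm{\gamma}_n}(\bm{x}_n)|\le P_\delta(\bm{\gamma}_n)\exp\Bigl(-\tfrac{\pi}{2}\sum_{j<k}|\gamma_j-\gamma_k|+\delta\sum_j|\gamma_j|\Bigr),
\]
and it holds locally uniformly in $\bm{x}_n$. It cancels half of the growth of $\hat{\mu}$ and leaves a net factor $e^{(\delta-\pi/2)\sum_j|\gamma_j|}$. This is exactly how Proposition~\ref{prop:bc-mb-bound} is proved. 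With that bound in place of $\Phi_{0,\dots,0}$, your $L^2$-plus-continuity argument becomes a valid alternative to citing Wallach's Schwartz-space inversion theorem.
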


\begin{proof}
	The proof relies on Proposition~\ref{prop:gl-bc-prod} and inversion formula for $GL$ Toda wave function. Namely, it is known \cite{W} that smooth and sufficiently fast decaying functions $\phi(\bm{x}_n)$, which belong to the so called \textit{Whittaker Schwartz space}, can be expanded in terms of $GL$ Toda wave functions
	\begin{align}\label{inv-f}
		\phi(\bm{x}_n) = \int_{\mathbb{R}^n} d\bm{\gamma}_n \; \hat{\mu}(\bm{\gamma}_n) \, \Phi_{\bm{\gamma}_n}(\bm{x}_n) \, \int_{\mathbb{R}^n} d\bm{y}_n \; \overline{\Phi_{\bm{\gamma}_n}(\bm{y}_n)} \, \phi(\bm{y}_n).
	\end{align}
	Equivalently, $GL$ wave functions satisfy completeness relation~\eqref{gl-compl}. The definition of Whittaker Schwartz space is given in Section~\ref{sec:bc-bound}.
	
	The formula~\eqref{Psi-MB} can be written in the same spirit. Namely, using the relation~\eqref{Phi-shift}
	\begin{align}
		\Phi_{\bm{\gamma}_n - \imath \epsilon \bm{e}_n} (\bm{x}_n)= e^{\epsilon \underline{\bm{x}}_n} \, \Phi_{\bm{\gamma}_n}(\bm{x}_n), \qquad \bm{e}_n = (1, \dots, 1), 
	\end{align}
	and shifting integration variables $\gamma_j \to \gamma_j -\imath \epsilon$ we arrive at
	\begin{align}\label{Psi-MB2}
		e^{-\epsilon \underline{\bm{x}}_n - \beta e^{-x_1}} \, \Psi_{\bm{\lambda}_n}(\bm{x}_n) = \int_{\mathbb{R}^n}  d\bm{\gamma}_n \; \hat{\mu}(\bm{\gamma}_n) \, K_1(\bm{\lambda}_n, \bm{\gamma}_n - \imath \epsilon \bm{e}_n) \, \Phi_{\bm{\gamma}_n}(\bm{x}_n).
	\end{align}
	By Corollary~\ref{cor:Psi-Tspace}, the function $e^{-\epsilon \underline{\bm{x}}_n - \beta e^{-x_1}} \, \Psi_{\bm{\lambda}_n}(\bm{x}_n)$ belongs to Whittaker Schwartz space, and hence we are allowed to apply inversion formula~\eqref{inv-f} to it. 
	
	It is left to recall Proposition~\ref{prop:gl-bc-prod}, which states that the scalar product of this function with $GL$ Toda wave function coincides with the kernel~\eqref{MB0}. Namely, for $\bm{\gamma}_n \in \mathbb{R}^n$ we have
	\begin{align}
		\overline{\Phi_{\bm{\gamma}_n}(\bm{x}_n)} = \Phi_{-\bm{\gamma}_n}(\bm{x}_n)
	\end{align}
	see~\eqref{Lker-expl}, so that
	\begin{align}
		\begin{aligned}
			\int_{\mathbb{R}^n} d\bm{x}_n \; \overline{\Phi_{\bm{\gamma}_n}(\bm{x}_n)} \, e^{-\epsilon \underline{\bm{x}}_n - \beta e^{-x_1}} \, \Psi_{\bm{\lambda}_n}(\bm{x}_n) & = \int_{\mathbb{R}^n} d\bm{x}_n \; \Phi_{-\bm{\gamma}_n + \imath \epsilon \bm{e}_n}(\bm{x}_n) \, e^{- \beta e^{-x_1}} \, \Psi_{\bm{\lambda}_n}(\bm{x}_n) \\[6pt]
			& = K_1(\bm{\lambda}_n, \bm{\gamma}_n - \imath \epsilon \bm{e}_n).
		\end{aligned}
	\end{align}
\end{proof}

\subsection{Orthogonality} \label{sec:orth}

The Hamiltonians of $BC$ Toda chain are formally self-adjoint with respect to the standard scalar product
\begin{align}
	\langle \psi | \phi \rangle = \int_{\mathbb{R}^n} d\bm{x}_n \; \overline{\psi(\bm{x}_n)} \, \phi(\bm{x}_n).
\end{align}
In this section we present heuristic calculation of the scalar product between $BC$ wave functions
\begin{align} \label{sc-pr-bc}
	\langle \Psi_{\bm{\lambda}_n} | \Psi_{\bm{\rho}_n} \rangle = \frac{1}{\bcdmu(\bm{\lambda}_n) } \, \bdelta_{\mathrm{sym}}(\bm{\lambda}_n, \bm{\rho}_n), \qquad \bm{\lambda}_n, \bm{\rho}_n \in \mathbb{R}^n,
\end{align}
where we denote the measure
\begin{align}\label{bc-measure}
	\bcdmu(\bm{\lambda}_n) = \frac{1}{n! \, (4\pi)^n } \prod_{1 \leq j < k \leq n} \frac{1}{\Gamma(\pm \imath \lambda_j \pm \imath \lambda_k) } \, \prod_{j = 1}^n \frac{ \Gamma(g \pm \imath \lambda_j ) }{ \Gamma(\pm2 \imath \lambda_j) }
\end{align}
and delta function symmetric with respect to signed permutations
\begin{align}
	\bdelta_{\mathrm{sym}} (\bm{\lambda}_n, \bm{\rho}_n) = \frac{1}{n! \, 2^n} \sum_{\substack{\sigma \in S_n \\[2pt] \bm{\epsilon}_n \in \{1, -1\}^n}}\delta \bigl( \lambda_1 - \epsilon_1 \rho_{\sigma(1)} \bigr) \cdots \delta \bigl( \lambda_n - \epsilon_n \rho_{\sigma(n)} \bigr).
\end{align}
The calculation is very similar to the one for $GL$ Toda chain, see \cite[Section 4]{Sil}. Mind that most of manipulations below are formal, since we work with divergent integrals. Yet, in the case of $GL$ model the analogous derivation can be made rigorous by regularizing the corresponding integrals and working with suitable space of test functions, see \cite[Sections 1, 2]{Kozl}. We expect that the same justification can be made for the calculations below.

\paragraph{One particle.}

In the simplest case of one particle the scalar product can be calculated in a straightforward way using Gauss--Givental representation
\begin{align}
	\begin{aligned}
		\Psi_{\lambda}(x) & = \frac{(2\beta)^{\imath \lambda}}{\Gamma(g - \imath \lambda)} \, \int_{\ln\beta}^\infty dy \; e^{\imath \lambda(x - 2y) - e^{y - x}} \, (1 + \beta e^{-y})^{-\imath \lambda - g} \,  (1 - \beta e^{-y})^{-\imath \lambda + g - 1} \\[6pt]
		& = \frac{(2\beta)^{-\imath \lambda}}{\Gamma(g - \imath \lambda)} \, \int_0^\infty dt \; e^{\imath \lambda x - \beta (2t + 1) e^{-x}} \, (t + 1)^{-\imath \lambda - g} \, t^{-\imath \lambda + g - 1},
	\end{aligned}
\end{align}
where passing to the second line we changed variable $e^y = \beta (2t + 1)$. Inserting the last expression into the scalar product we have
\begin{multline}
	\langle \Psi_{\lambda} | \Psi_{\rho} \rangle = \frac{(2\beta)^{\imath \lambda}}{\Gamma(g + \imath \lambda)} \, \frac{(2\beta)^{-\imath \rho}}{\Gamma(g - \imath \rho)} \\[6pt]
	\times  \int_{\mathbb{R}} dx \, \int_0^\infty dt \, \int_0^\infty ds \; e^{\imath (\rho - \lambda)x - 2\beta(t + s + 1) e^{-x}} \, (t + 1)^{\imath \lambda - g} \, t^{\imath \lambda + g - 1} \, (s + 1)^{-\imath \rho - g} \, s^{-\imath \rho + g - 1}.
\end{multline}
Changing the variable $x \to x + \ln (2\beta(t + s + 1))$ we calculate the integral over $x$ 
\begin{align}
	\int_{\mathbb{R}} dx \; e^{\imath (\rho - \lambda)x - 2\beta(t + s + 1) e^{-x}} = (2\beta(t + s + 1))^{\imath(\rho - \lambda)} \, \Gamma(\imath \lambda - \imath \rho),
\end{align}
where for the right hand side to make sense we assume $\lambda \not= \rho$. In the remaining double integral we change the integration variable $s \to (t + 1)s$
\begin{multline}
	\int_0^\infty dt \, \int_0^\infty ds \; (t + s + 1)^{\imath(\rho - \lambda)}  \, (t + 1)^{\imath \lambda - g} \, t^{\imath \lambda + g - 1} \, (s + 1)^{-\imath \rho - g} \, s^{-\imath \rho + g - 1} \\[6pt]
	= \int_0^\infty dt \, \int_0^\infty ds \; t^{\imath \lambda + g - 1} \, (s + 1)^{\imath (\rho - \lambda)} \, (st + s + 1)^{-\imath \rho - g} \, s^{-\imath \rho + g  - 1}.
\end{multline}
After another change of integration variable $t \to (s + 1) t/s$ the integrals separate and can be evaluated explicitly
\begin{multline}
	\int_0^\infty ds \, \int_0^\infty dt \; t^{\imath \lambda + g - 1} \, (s + 1)^{\imath (\rho - \lambda)} \, (st + s + 1)^{-\imath \rho - g} \, s^{-\imath \rho + g  - 1} \\[6pt]
	= 	\int_0^\infty ds \, \int_0^\infty dt \; t^{\imath \lambda + g - 1} \, (t + 1)^{-\imath \rho - g} \, s^{- \imath (\lambda + \rho) - 1} = \frac{\Gamma(g + \imath \lambda) \, \Gamma(\imath \rho - \imath \lambda)}{\Gamma(g + \imath \rho)} \, 2\pi \delta(\lambda + \rho),
\end{multline}
where we used 
\begin{align}
\int_0^\infty ds \, s^{- \imath (\lambda + \rho) - 1} = 
\int_{-\infty}^\infty dp \, e^{- \imath p(\lambda + \rho)} = 
2\pi \delta(\lambda + \rho).
\end{align}

Collecting everything together we arrive at
\begin{align}
	\langle \Psi_{\lambda} | \Psi_{\rho} \rangle = 2\pi \frac{\Gamma(\pm 2\imath \lambda)}{\Gamma(g \pm \imath \lambda)} \, \delta(\lambda + \rho),
\end{align}
where we assume $\lambda \not= \rho$. Since wave functions are even in spectral parameters, the general formula can be restored by symmetry
\begin{align}\label{sc-pr-bc-1}
	\langle \Psi_{\lambda} | \Psi_{\rho} \rangle = 4\pi \frac{\Gamma(\pm 2\imath \lambda)}{\Gamma(g \pm \imath \lambda)} \, \frac{\delta(\lambda + \rho) + \delta(\lambda - \rho)}{2}.
\end{align}
It coincides with the claimed expression~\eqref{sc-pr-bc}.

\paragraph{Many particles.}

Using Gauss--Givental recursive formula we can rewrite the scalar product in the form
\begin{align}\label{sc-pr-LL}
	\langle \Psi_{\bm{\lambda}_n} | \Psi_{\bm{\rho}_n} \rangle = \LLambda_1^t(\lambda_n) \cdots \LLambda_n^t(\lambda_1) \, \LLambda_n(\rho_n) \cdots \LLambda_1(\rho_1) \cdot 1,
\end{align}
the transposed operators are defined in~\eqref{t-op-def}. The calculation is based on the local relation 
\begin{align} \label{LtL}
	\LLambda_n^t(\lambda) \, \LLambda_n(\rho) = \Gamma( \pm \imath \lambda \pm \imath \rho) \, \LLambda_{n - 1}(\rho) \, \LLambda_{n - 1}^t(\lambda), \qquad n \geq 2.
\end{align}
Let us postpone its proof to the end of this section. Applying it many times we have
\begin{align}
	\LLambda_n^t(\lambda_1) \, \LLambda_n(\rho_n) \cdots \LLambda_1(\rho_1) \cdot 1 = \prod_{k = 2}^{n} \Gamma( \pm \imath \lambda_1 \pm \imath \rho_k) \, \LLambda_{n - 1}(\rho_n) \cdots \LLambda_1(\rho_2) \, \LLambda_1^t(\lambda_1) \, \LLambda_1(\rho_1) \cdot 1.
\end{align}
Here from the right we have the scalar product of one particle wave functions
\begin{align}
	\LLambda_1^t(\lambda_1) \, \LLambda_1(\rho_1) \cdot 1 = \langle \Psi_{\lambda_1} | \Psi_{\rho_1} \rangle,
\end{align}
while the rest part represents wave function for $n - 1$ particles
\begin{align}
	\LLambda_n^t(\lambda_1) \, \LLambda_n(\rho_n) \cdots \LLambda_1(\rho_1) \cdot 1 = \langle \Psi_{\lambda_1} | \Psi_{\rho_1} \rangle \prod_{k = 2}^{n} \Gamma( \pm \imath \lambda_1 \pm \imath \rho_k) \, \LLambda_{n - 1}(\rho_n) \cdots \LLambda_1(\rho_2) \cdot 1.
\end{align}
Therefore, the whole scalar product~\eqref{sc-pr-LL} breaks into the simplest ones
\begin{align}
	\langle \Psi_{\bm{\lambda}_n} | \Psi_{\bm{\rho}_n} \rangle = \prod_{j = 1}^n \biggl[ \langle \Psi_{\lambda_j} | \Psi_{\rho_j} \rangle  \prod_{k = j + 1}^n \Gamma( \pm \imath \lambda_j \pm \imath \rho_k ) \biggr].
\end{align}
For the right hand side to make sense we assume $\lambda_j \not= \pm \rho_k$ for $j \not= k$. Inserting one particle scalar products~\eqref{sc-pr-bc-1} we obtain
\begin{align}
	\begin{aligned}
		\langle \Psi_{\bm{\lambda}_n} | \Psi_{\bm{\rho}_n} \rangle & = (4\pi)^n \prod_{j = 1}^n \frac{\Gamma(\pm 2\imath \lambda_j)}{\Gamma(g \pm \imath \lambda_j)} \prod_{1 \leq j < k \leq n} \Gamma( \pm \imath \lambda_j \pm \imath \lambda_k ) \\[6pt]
		& \times \frac{1}{2^n} \sum_{\bm{\epsilon}_n \in  \{1, -1\}^n } \delta(\lambda_1 - \epsilon_1 \rho_1) \cdots \delta(\lambda_n - \epsilon_n \rho_n).
	\end{aligned}
\end{align}
The general formula~\eqref{sc-pr-bc} without restrictions on $\lambda_j, \rho_k$ is restored using symmetry of wave functions with respect to permutations of spectral parameters (Theorem~\ref{thm:Psi-sym}).

It is left to prove the local relation with raising operators~\eqref{LtL}. This can be done using diagrams, see Figures~\ref{fig:LtL-1},~\ref{fig:LtL-2}. In almost all transformations we use chain relation (Figure~\ref{fig:chain}) to obtain dashed line and cross relation (Figure~\ref{fig:cross}) to move it in vertical direction. Besides, to obtain the last diagram in Figure~\ref{fig:LtL-1} we use identity with double lines from Figure~\ref{fig:flip-tr}. The gamma functions in the final relation~\eqref{LtL} appear each time we use chain relation.

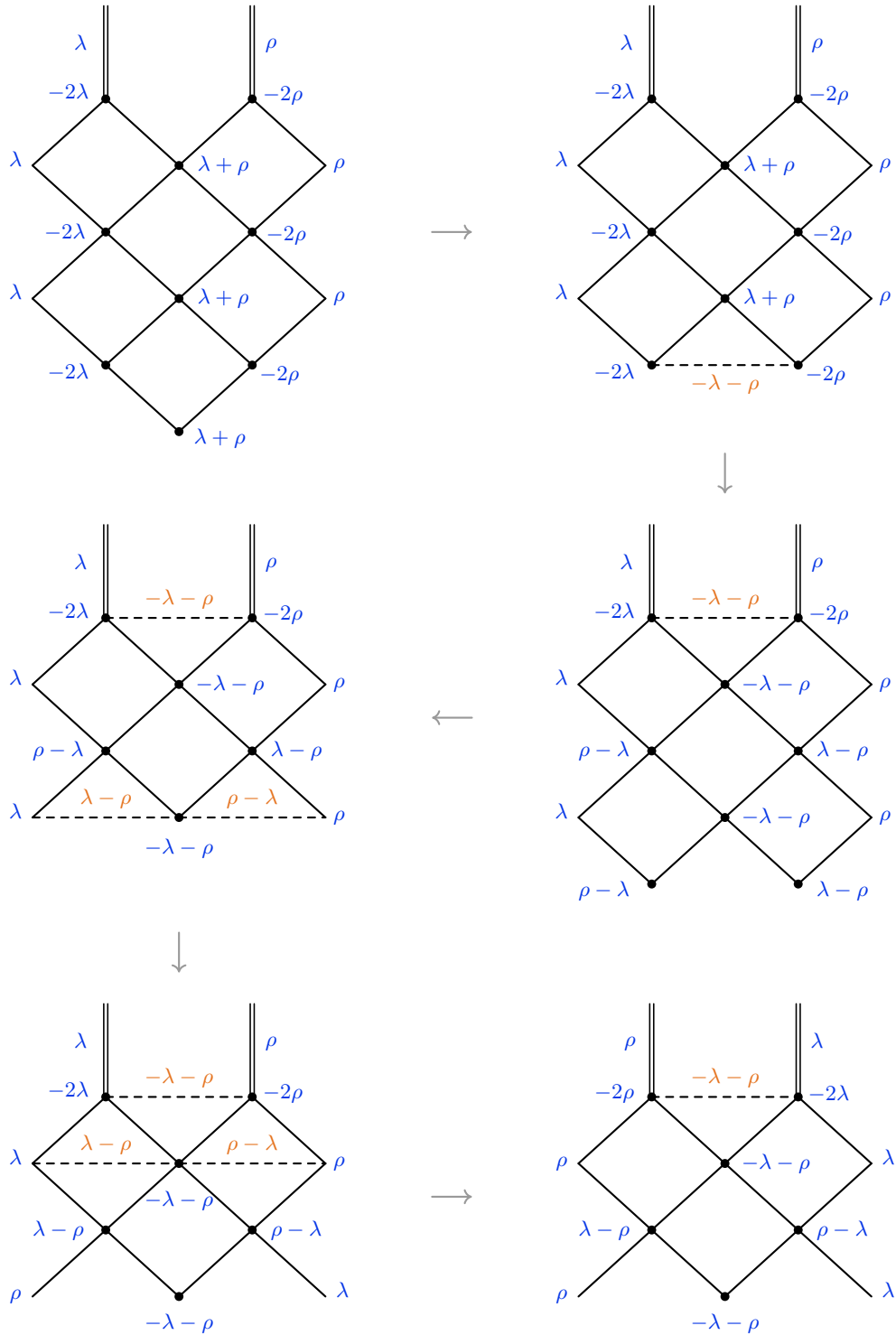
\begin{figure}[H] \centering \vspace{-1cm}
	\begin{tikzpicture}[thick, line cap = round, scale = 0.98]
		\def\l{1.1}
		\def\r{1.5pt}
		\def\h{1}
		\def\d{8.2}
		\def\dd{0.3}
		\def\hh{1.4}
		\def\t{0.03}
		\def\a{-7.8}
		\def\b{-15}

		
		\draw[line width = 0.6pt] ( \t, \hh) -- node[xshift = -0.4cm, yshift = 0.15cm] {\color{spec} \footnotesize $\lambda$} ( \t, 0);
		\draw[line width = 0.6pt] (- \t, \hh) --  (- \t, 0);
		
		\draw (0, 0 ) -- ( \l, - \h);
		\draw ( \l, - \h) -- (0, - 2*\h);
		\draw (0, 0) -- (- \l, - \h) node[xshift = -0.25cm, yshift = 0.1cm] {\color{spec} \footnotesize $\lambda$};
		\draw (- \l, - \h) -- (0, - 2*\h);
		\draw (0, - 2*\h) -- ( \l, - 3*\h) ;
		\draw (0, - 2*\h) -- (- \l, - 3*\h) node[xshift = -0.25cm, yshift = 0.1cm] {\color{spec} \footnotesize $\lambda$};
		\draw (- \l, - 3*\h) -- (0, - 4*\h);
		\draw ( \l, - 3*\h) -- (0, - 4*\h);
		\draw (0, - 4*\h) -- (\l, -5*\h);
		
		\draw[fill = black] (0, 0) circle (\r) node[xshift = -0.55cm, yshift = 0.1cm] {\color{spec} \footnotesize $-2 \lambda$};
		\draw[fill = black] (0, - 2*\h) node[xshift = -0.6cm] {\color{spec} \footnotesize $-2\lambda$} circle (\r);
		\draw[fill = black] (0, - 4*\h) node[xshift = -0.55cm, yshift = -0.1cm] {\color{spec} \footnotesize $-2 \lambda$} circle (\r);

		\draw[line width = 0.6pt] (2*\l + \t, \hh) -- node[xshift = 0.25cm, yshift = 0.1cm] {\color{spec} \footnotesize $\rho$} (2*\l + \t, 0);
		\draw[line width = 0.6pt] (2*\l - \t,  \hh) --  (2*\l - \t, 0);
		
		\draw (2*\l, 0) -- (2*\l + \l, - \h) node[xshift = 0.2cm, yshift = 0cm] {\color{spec} \footnotesize $\rho$};
		\draw (2*\l + \l, - \h) -- (2*\l, - 2*\h);
		\draw (2*\l, 0) -- (2*\l - \l, - \h);
		\draw (2*\l- \l, - \h) -- (2*\l, - 2*\h);
		\draw (2*\l, - 2*\h) -- (2*\l + \l, - 3*\h) node[xshift = 0.2cm, yshift = 0cm] {\color{spec} \footnotesize $\rho$};
		\draw (2*\l, - 2*\h) -- (2*\l- \l, - 3*\h);
		\draw (2*\l - \l, - 3*\h) -- (2*\l, - 4*\h);
		\draw (2*\l + \l, - 3*\h) -- (2*\l, - 4*\h);
		\draw (2*\l, - 4*\h) -- (2*\l - \l, -5*\h);
		
		\draw[fill = black] (2*\l, 0) circle (\r) node[xshift = 0.45cm, yshift = 0.05cm] {\color{spec} \footnotesize $-2 \rho$};
		\draw[fill = black] (2*\l, - 2*\h) node[xshift = 0.5cm, yshift = -0.05cm] {\color{spec} \footnotesize $-2\rho$} circle (\r);
		\draw[fill = black] (2*\l, - 4*\h) node[xshift = 0.4cm, yshift = -0.15cm] {\color{spec} \footnotesize $-2 \rho$} circle (\r);
		
		\draw[fill = black] (\l, -\h) circle (\r) node[xshift = 0.65cm] {\color{spec} \footnotesize $\lambda+\rho$};
		\draw[fill = black] (\l, -3*\h) circle (\r) node[xshift = 0.65cm] {\color{spec} \footnotesize $\lambda+\rho$};
		\draw[fill = black] (\l, -5*\h) circle (\r) node[xshift = 0.6cm, yshift = -0.1cm] {\color{spec} \footnotesize $\lambda+\rho$};

		
		\draw[line width = 0.6pt] (\d + \t, \hh) -- node[xshift = -0.4cm, yshift = 0.15cm] {\color{spec} \footnotesize $\lambda$} (\d + \t, 0);
		\draw[line width = 0.6pt] (\d - \t, \hh) --  (\d - \t, 0);
		
		\draw (\d, 0 ) -- (\d + \l, - \h);
		\draw (\d + \l, - \h) -- (\d, - 2*\h);
		\draw (\d, 0) -- (\d - \l, - \h) node[xshift = -0.25cm, yshift = 0.1cm] {\color{spec} \footnotesize $\lambda$};
		\draw (\d - \l, - \h) -- (\d, - 2*\h);
		\draw (\d, - 2*\h) -- (\d + \l, - 3*\h) ;
		\draw (\d, - 2*\h) -- (\d - \l, - 3*\h) node[xshift = -0.25cm, yshift = 0.1cm] {\color{spec} \footnotesize $\lambda$};
		\draw (\d - \l, - 3*\h) -- (\d, - 4*\h);
		\draw (\d + \l, - 3*\h) -- (\d, - 4*\h);
		
		\draw[fill = black] (\d, 0) circle (\r) node[xshift = -0.55cm, yshift = 0.1cm] {\color{spec} \footnotesize $-2 \lambda$};
		\draw[fill = black] (\d, - 2*\h) node[xshift = -0.6cm] {\color{spec} \footnotesize $-2\lambda$} circle (\r);
		\draw[fill = black] (\d, - 4*\h) node[xshift = -0.55cm, yshift = -0.1cm] {\color{spec} \footnotesize $-2 \lambda$} circle (\r);

		\draw[line width = 0.6pt] (\d + 2*\l + \t, \hh) -- node[xshift = 0.25cm, yshift = 0.1cm] {\color{spec} \footnotesize $\rho$} (\d + 2*\l + \t, 0);
		\draw[line width = 0.6pt] (\d + 2*\l - \t,  \hh) --  (\d + 2*\l - \t, 0);
		
		\draw (\d + 2*\l, 0) -- (\d + 2*\l + \l, - \h) node[xshift = 0.2cm, yshift = 0cm] {\color{spec} \footnotesize $\rho$};
		\draw (\d + 2*\l + \l, - \h) -- (\d + 2*\l, - 2*\h);
		\draw (\d + 2*\l, 0) -- (\d + 2*\l - \l, - \h);
		\draw (\d + 2*\l- \l, - \h) -- (\d + 2*\l, - 2*\h);
		\draw (\d + 2*\l, - 2*\h) -- (\d + 2*\l + \l, - 3*\h) node[xshift = 0.2cm, yshift = 0cm] {\color{spec} \footnotesize $\rho$};
		\draw (\d + 2*\l, - 2*\h) -- (\d + 2*\l- \l, - 3*\h);
		\draw (\d + 2*\l - \l, - 3*\h) -- (\d + 2*\l, - 4*\h);
		\draw (\d + 2*\l + \l, - 3*\h) -- (\d + 2*\l, - 4*\h);
		
		\draw[fill = black] (\d + 2*\l, 0) circle (\r) node[xshift = 0.45cm, yshift = 0.05cm] {\color{spec} \footnotesize $-2 \rho$};
		\draw[fill = black] (\d + 2*\l, - 2*\h) node[xshift = 0.5cm, yshift = -0.05cm] {\color{spec} \footnotesize $-2\rho$} circle (\r);
		\draw[fill = black] (\d + 2*\l, - 4*\h) node[xshift = 0.4cm, yshift = -0.15cm] {\color{spec} \footnotesize $-2 \rho$} circle (\r);
		
		\draw[fill = black] (\d + \l, -\h) circle (\r) node[xshift = 0.65cm] {\color{spec} \footnotesize $\lambda+\rho$};
		\draw[fill = black] (\d + \l, -3*\h) circle (\r) node[xshift = 0.65cm] {\color{spec} \footnotesize $\lambda+\rho$};
		\draw[dashed] (\d, - 4*\h) -- node[below] {\color{spec2} \footnotesize $-\lambda - \rho$} (\d + 2*\l, - 4*\h);

		
		\draw[line width = 0.6pt] (\d + \t, \a + \hh) -- node[xshift = -0.4cm, yshift = 0.15cm] {\color{spec} \footnotesize $\lambda$} (\d + \t, \a);
		\draw[line width = 0.6pt] (\d - \t, \a + \hh) --  (\d - \t, \a);
		
		\draw (\d, \a) -- (\d + \l, \a - \h);
		\draw (\d + \l, \a - \h) -- (\d, \a - 2*\h);
		\draw (\d, \a) -- (\d - \l, \a - \h) node[xshift = -0.25cm, yshift = 0.1cm] {\color{spec} \footnotesize $\lambda$};
		\draw (\d - \l, \a - \h) -- (\d, \a - 2*\h);
		\draw (\d, \a - 2*\h) -- (\d + \l, \a - 3*\h) ;
		\draw (\d, \a - 2*\h) -- (\d - \l, \a - 3*\h) node[xshift = -0.25cm, yshift = 0.1cm] {\color{spec} \footnotesize $\lambda$};
		\draw (\d - \l, \a - 3*\h) -- (\d, \a - 4*\h);
		\draw (\d + \l, \a - 3*\h) -- (\d, \a - 4*\h);
		
		\draw[fill = black] (\d, \a) circle (\r) node[xshift = -0.55cm, yshift = 0.1cm] {\color{spec} \footnotesize $-2 \lambda$};
		\draw[fill = black] (\d, \a - 2*\h) node[xshift = -0.7cm, yshift = 0cm] {\color{spec} \footnotesize $\rho - \lambda$} circle (\r);
		\draw[fill = black] (\d, \a - 4*\h) node[xshift = -0.7cm, yshift = -0.1cm] {\color{spec} \footnotesize $\rho - \lambda$} circle (\r);

		\draw[line width = 0.6pt] (\d + 2*\l + \t, \a + \hh) -- node[xshift = 0.25cm, yshift = 0.1cm] {\color{spec} \footnotesize $\rho$} (\d + 2*\l + \t, \a);
		\draw[line width = 0.6pt] (\d + 2*\l - \t,  \a + \hh) --  (\d + 2*\l - \t, \a);
		
		\draw (\d + 2*\l, \a) -- (\d + 2*\l + \l, \a - \h) node[xshift = 0.2cm, yshift = 0cm] {\color{spec} \footnotesize $\rho$};
		\draw (\d + 2*\l + \l, \a - \h) -- (\d + 2*\l, \a - 2*\h);
		\draw (\d + 2*\l, \a) -- (\d + 2*\l - \l, \a - \h);
		\draw (\d + 2*\l- \l, \a - \h) -- (\d + 2*\l, \a - 2*\h);
		\draw (\d + 2*\l, \a - 2*\h) -- (\d + 2*\l + \l, \a - 3*\h) node[xshift = 0.2cm, yshift = 0cm] {\color{spec} \footnotesize $\rho$};
		\draw (\d + 2*\l, \a - 2*\h) -- (\d + 2*\l- \l, \a - 3*\h);
		\draw (\d + 2*\l - \l, \a - 3*\h) -- (\d + 2*\l, \a - 4*\h);
		\draw (\d + 2*\l + \l, \a - 3*\h) -- (\d + 2*\l, \a - 4*\h);
		
		\draw[fill = black] (\d + 2*\l, \a) circle (\r) node[xshift = 0.45cm, yshift = 0.05cm] {\color{spec} \footnotesize $-2 \rho$};
		\draw[fill = black] (\d + 2*\l, \a - 2*\h) node[xshift = 0.65cm, yshift = 0cm] {\color{spec} \footnotesize $\lambda - \rho$} circle (\r);
		\draw[fill = black] (\d + 2*\l, \a - 4*\h) node[xshift = 0.65cm, yshift = -0.1cm] {\color{spec} \footnotesize $\lambda - \rho$} circle (\r);
		
		\draw[fill = black] (\d + \l, \a -\h) circle (\r) node[xshift = 0.75cm] {\color{spec} \footnotesize $-\lambda - \rho$};
		\draw[fill = black] (\d + \l, \a - 3*\h) circle (\r) node[xshift = 0.75cm] {\color{spec} \footnotesize $- \lambda - \rho$};
		\draw[dashed] (\d, \a) -- node[above] {\color{spec2} \footnotesize $- \lambda - \rho$} (\d + 2*\l, \a);

		
		\draw[line width = 0.6pt] (\t, \a + \hh) -- node[xshift = -0.4cm, yshift = 0.15cm] {\color{spec} \footnotesize $\lambda$} ( \t, \a);
		\draw[line width = 0.6pt] (- \t, \a + \hh) --  ( - \t, \a);
		
		\draw (0, \a) -- (\l, \a - \h);
		\draw (\l, \a - \h) -- (0, \a - 2*\h);
		\draw (0, \a) -- ( - \l, \a - \h) node[xshift = -0.25cm, yshift = 0.1cm] {\color{spec} \footnotesize $\lambda$};
		\draw ( - \l, \a - \h) -- (0, \a - 2*\h);
		\draw (0, \a - 2*\h) -- ( \l, \a - 3*\h) ;
		\draw (0, \a - 2*\h) -- (- \l, \a - 3*\h) node[xshift = -0.25cm, yshift = 0.1cm] {\color{spec} \footnotesize $\lambda$};
		
		\draw[fill = black] (0, \a) circle (\r) node[xshift = -0.55cm, yshift = 0.1cm] {\color{spec} \footnotesize $-2 \lambda$};
		\draw[fill = black] (0, \a - 2*\h) node[xshift = -0.7cm, yshift = 0cm] {\color{spec} \footnotesize $\rho - \lambda$} circle (\r);

		\draw[line width = 0.6pt] ( 2*\l + \t, \a + \hh) -- node[xshift = 0.25cm, yshift = 0.1cm] {\color{spec} \footnotesize $\rho$} ( 2*\l + \t, \a);
		\draw[line width = 0.6pt] ( 2*\l - \t,  \a + \hh) --  ( 2*\l - \t, \a);
		
		\draw ( 2*\l, \a) -- ( 2*\l + \l, \a - \h) node[xshift = 0.2cm, yshift = 0cm] {\color{spec} \footnotesize $\rho$};
		\draw ( 2*\l + \l, \a - \h) -- ( 2*\l, \a - 2*\h);
		\draw ( 2*\l, \a) -- ( 2*\l - \l, \a - \h);
		\draw ( 2*\l- \l, \a - \h) -- ( 2*\l, \a - 2*\h);
		\draw ( 2*\l, \a - 2*\h) -- ( 2*\l + \l, \a - 3*\h) node[xshift = 0.2cm, yshift = 0cm] {\color{spec} \footnotesize $\rho$};
		\draw ( 2*\l, \a - 2*\h) -- ( 2*\l- \l, \a - 3*\h);
		
		\draw[fill = black] ( 2*\l, \a) circle (\r) node[xshift = 0.45cm, yshift = 0.05cm] {\color{spec} \footnotesize $-2 \rho$};
		\draw[fill = black] ( 2*\l, \a - 2*\h) node[xshift = 0.65cm, yshift = 0cm] {\color{spec} \footnotesize $\lambda - \rho$} circle (\r);
		
		\draw[fill = black] ( \l, \a -\h) circle (\r) node[xshift = 0.75cm] {\color{spec} \footnotesize $- \lambda - \rho$};
		\draw[fill = black] ( \l, \a - 3*\h) circle (\r) node[yshift = -0.45cm] {\color{spec} \footnotesize $- \lambda - \rho$};
		\draw[dashed] (0, \a) -- node[above] {\color{spec2} \footnotesize $- \lambda - \rho$} ( 2*\l, \a);
		\draw[dashed] (- \l, \a - 3*\h) -- node[above] {\color{spec2} \footnotesize $\lambda - \rho$} ( \l, \a - 3*\h);
		\draw[dashed] ( 2*\l- \l, \a - 3*\h) -- node[above] {\color{spec2} \footnotesize $\rho - \lambda$} ( 2*\l + \l, \a - 3*\h);

		
		\draw[line width = 0.6pt] (\t, \b + \hh) -- node[xshift = -0.4cm, yshift = 0.15cm] {\color{spec} \footnotesize $\lambda$} ( \t, \b);
		\draw[line width = 0.6pt] (- \t, \b + \hh) --  ( - \t, \b);
		
		\draw (0, \b) -- (\l, \b - \h);
		\draw (\l, \b - \h) -- (0, \b - 2*\h);
		\draw (0, \b) -- ( - \l, \b - \h) node[xshift = -0.25cm, yshift = 0.1cm] {\color{spec} \footnotesize $\lambda$};
		\draw ( - \l, \b - \h) -- (0, \b - 2*\h);
		\draw (0, \b - 2*\h) -- ( \l, \b - 3*\h) ;
		\draw (0, \b - 2*\h) -- (- \l, \b - 3*\h) node[xshift = -0.25cm, yshift = 0cm] {\color{spec} \footnotesize $\rho$};
		
		\draw[fill = black] (0, \b) circle (\r) node[xshift = -0.55cm, yshift = 0.1cm] {\color{spec} \footnotesize $- 2 \lambda$};
		\draw[fill = black] (0, \b - 2*\h) node[xshift = -0.7cm, yshift = 0cm] {\color{spec} \footnotesize $\lambda - \rho$} circle (\r);

		\draw[line width = 0.6pt] ( 2*\l + \t, \b + \hh) -- node[xshift = 0.25cm, yshift = 0.1cm] {\color{spec} \footnotesize $\rho$} ( 2*\l + \t, \b);
		\draw[line width = 0.6pt] ( 2*\l - \t,  \b + \hh) --  ( 2*\l - \t, \b);
		
		\draw ( 2*\l, \b) -- ( 2*\l + \l, \b - \h) node[xshift = 0.2cm, yshift = 0cm] {\color{spec} \footnotesize $\rho$};
		\draw ( 2*\l + \l, \b - \h) -- ( 2*\l, \b - 2*\h);
		\draw ( 2*\l, \b) -- ( 2*\l - \l, \b - \h);
		\draw ( 2*\l- \l, \b - \h) -- ( 2*\l, \b - 2*\h);
		\draw ( 2*\l, \b - 2*\h) -- ( 2*\l + \l, \b - 3*\h) node[xshift = 0.25cm, yshift = 0.1cm] {\color{spec} \footnotesize $\lambda$};
		\draw ( 2*\l, \b - 2*\h) -- ( 2*\l- \l, \b - 3*\h);
		
		\draw[fill = black] ( 2*\l, \b) circle (\r) node[xshift = 0.45cm, yshift = 0.05cm] {\color{spec} \footnotesize $-2 \rho$};
		\draw[fill = black] ( 2*\l, \b - 2*\h) node[xshift = 0.65cm, yshift = 0cm] {\color{spec} \footnotesize $\rho - \lambda$} circle (\r);
		
		\draw[fill = black] ( \l, \b -\h) circle (\r) node[yshift = -0.55cm] {\color{spec} \footnotesize $- \lambda - \rho$};
		\draw[fill = black] ( \l, \b - 3*\h) circle (\r) node[yshift = -0.4cm] {\color{spec} \footnotesize $- \lambda - \rho$};
		\draw[dashed] (0, \b) -- node[above] {\color{spec2} \footnotesize $- \lambda - \rho$} ( 2*\l, \b);
		\draw[dashed] (- \l, \b - 1*\h) -- node[above] {\color{spec2} \footnotesize $\lambda - \rho$} ( \l, \b - 1*\h);
		\draw[dashed] ( 2*\l- \l, \b - 1*\h) -- node[above] {\color{spec2} \footnotesize $\rho - \lambda$} ( 2*\l + \l, \b - 1*\h);

		
		\draw[line width = 0.6pt] (\d + \t, \b + \hh) -- node[xshift = -0.35cm, yshift = 0.1cm] {\color{spec} \footnotesize $\rho$} (\d + \t, \b);
		\draw[line width = 0.6pt] (\d - \t, \b + \hh) --  (\d - \t, \b);
		
		\draw (\d, \b) -- (\d + \l, \b - \h);
		\draw (\d + \l, \b - \h) -- (\d, \b - 2*\h);
		\draw (\d, \b) -- (\d - \l, \b - \h) node[xshift = -0.25cm, yshift = 0cm] {\color{spec} \footnotesize $\rho$};
		\draw (\d - \l, \b - \h) -- (\d, \b - 2*\h);
		\draw (\d, \b - 2*\h) -- (\d +  \l, \b - 3*\h) ;
		\draw (\d, \b - 2*\h) -- (\d - \l, \b - 3*\h) node[xshift = -0.25cm, yshift = 0cm] {\color{spec} \footnotesize $\rho$};
		
		\draw[fill = black] (\d, \b) circle (\r) node[xshift = -0.55cm, yshift = 0.05cm] {\color{spec} \footnotesize $-2 \rho$};
		\draw[fill = black] (\d, \b - 2*\h) node[xshift = -0.7cm, yshift = 0cm] {\color{spec} \footnotesize $\lambda - \rho$} circle (\r);

		\draw[line width = 0.6pt] (\d +  2*\l + \t, \b + \hh) -- node[xshift = 0.25cm, yshift = 0.15cm] {\color{spec} \footnotesize $\lambda$} (\d +  2*\l + \t, \b);
		\draw[line width = 0.6pt] (\d +  2*\l - \t,  \b + \hh) --  (\d +  2*\l - \t, \b);
		
		\draw (\d +  2*\l, \b) -- (\d +  2*\l + \l, \b - \h) node[xshift = 0.25cm, yshift = 0.1cm] {\color{spec} \footnotesize $\lambda$};
		\draw ( \d + 2*\l + \l, \b - \h) -- ( \d + 2*\l, \b - 2*\h);
		\draw ( \d + 2*\l, \b) -- ( \d + 2*\l - \l, \b - \h);
		\draw ( \d + 2*\l- \l, \b - \h) -- ( \d + 2*\l, \b - 2*\h);
		\draw ( \d + 2*\l, \b - 2*\h) -- ( \d + 2*\l + \l, \b - 3*\h) node[xshift = 0.25cm, yshift = 0.1cm] {\color{spec} \footnotesize $\lambda$};
		\draw ( \d + 2*\l, \b - 2*\h) -- ( \d + 2*\l- \l, \b - 3*\h);
		
		\draw[fill = black] ( \d + 2*\l, \b) circle (\r) node[xshift = 0.45cm, yshift = 0.05cm] {\color{spec} \footnotesize $-2 \lambda$};
		\draw[fill = black] ( \d + 2*\l, \b - 2*\h) node[xshift = 0.65cm, yshift = 0cm] {\color{spec} \footnotesize $\rho - \lambda$} circle (\r);
		
		\draw[fill = black] (\d +  \l, \b -\h) circle (\r) node[xshift = 0.75cm] {\color{spec} \footnotesize $- \lambda - \rho$};
		\draw[fill = black] (\d +  \l, \b - 3*\h) circle (\r) node[yshift = -0.4cm] {\color{spec} \footnotesize $- \lambda - \rho$};
		\draw[dashed] (\d, \b) -- node[above] {\color{spec2} \footnotesize $- \lambda - \rho$} ( \d + 2*\l, \b);

		
		\draw[arrow, gray!80] (\l + 0.5*\d - \dd, - 2*\h) -- (\l + 0.5*\d + \dd, - 2*\h);
		\draw[arrow, gray!80] (\l + 0.5*\d + \dd, \a - 1.5*\h) -- (\l + 0.5*\d - \dd, \a - 1.5*\h);
		\draw[arrow, gray!80] (\l + 0.5*\d - \dd, \b - 1.5*\h) -- (\l + 0.5*\d + \dd, \b - 1.5*\h);
		
		\draw[arrow, gray!80] (\d + \l, \a + 1.55*\hh + \dd) -- (\d + \l, \a + 1.55*\hh - \dd);
		\draw[arrow, gray!80] (\l, \b + 1.55*\hh + \dd) -- (\l, \b + 1.55*\hh - \dd);
	\end{tikzpicture}
	\vspace{0.5cm}
	\caption{Proof of identity~\eqref{LtL} (continues in Figure~\ref{fig:LtL-2})} \label{fig:LtL-1}
\end{figure}

\begin{figure}[H] \centering \vspace{-1cm}
	\begin{tikzpicture}[thick, line cap = round, scale = 0.98]
		\def\l{1.1}
		\def\r{1.5pt}
		\def\h{1}
		\def\d{8.2}
		\def\dd{0.3}
		\def\hh{1.4}
		\def\t{0.03}
		\def\a{-6}

		
		\draw[line width = 0.6pt] (0+ \t, 0 + \hh) -- node[xshift = -0.35cm, yshift = 0.1cm] {\color{spec} \footnotesize $\rho$} (0+ \t, 0);
		\draw[line width = 0.6pt] (0- \t, 0 + \hh) --  (0- \t, 0);
		
		\draw (0, 0) -- (0+ \l, 0 - \h);
		\draw (0+ \l, 0 - \h) -- (0, 0 - 2*\h);
		\draw (0, 0) -- (0- \l, 0 - \h) node[xshift = -0.25cm, yshift = 0cm] {\color{spec} \footnotesize $\rho$};
		\draw (0- \l, 0 - \h) -- (0, 0 - 2*\h);
		\draw (0, 0 - 2*\h) -- (0+  \l, 0 - 3*\h) ;
		\draw (0, 0 - 2*\h) -- (0- \l, 0 - 3*\h) node[xshift = -0.25cm, yshift = 0cm] {\color{spec} \footnotesize $\rho$};
		
		\draw[fill = black] (0, 0) circle (\r) node[xshift = -0.55cm, yshift = 0.05cm] {\color{spec} \footnotesize $- 2 \rho$};
		\draw[fill = black] (0, 0 - 2*\h) node[xshift = -0.7cm, yshift = 0cm] {\color{spec} \footnotesize $\lambda - \rho$} circle (\r);

		\draw[line width = 0.6pt] (0+  2*\l + \t, 0 + \hh) -- node[xshift = 0.25cm, yshift = 0.15cm] {\color{spec} \footnotesize $\lambda$} (0+  2*\l + \t, 0);
		\draw[line width = 0.6pt] (0+  2*\l - \t,  0 + \hh) --  (0+  2*\l - \t, 0);
		
		\draw (0+  2*\l, 0) -- (0+  2*\l + \l, 0 - \h) node[xshift = 0.25cm, yshift = 0.1cm] {\color{spec} \footnotesize $\lambda$};
		\draw ( 0+ 2*\l + \l, 0 - \h) -- ( 0+ 2*\l, 0 - 2*\h);
		\draw ( 0+ 2*\l, 0) -- ( 0+ 2*\l - \l, 0 - \h);
		\draw ( 0+ 2*\l- \l, 0 - \h) -- ( 0+ 2*\l, 0 - 2*\h);
		\draw ( 0+ 2*\l, 0 - 2*\h) -- ( 0+ 2*\l + \l, 0 - 3*\h) node[xshift = 0.25cm, yshift = 0.1cm] {\color{spec} \footnotesize $\lambda$};
		\draw ( 0+ 2*\l, 0 - 2*\h) -- ( 0+ 2*\l- \l, 0 - 3*\h);
		
		\draw[fill = black] ( 0+ 2*\l, 0) circle (\r) node[xshift = 0.45cm, yshift = 0.05cm] {\color{spec} \footnotesize $-2 \lambda$};
		\draw[fill = black] ( 0+ 2*\l, 0 - 2*\h) node[xshift = 0.65cm, yshift = 0cm] {\color{spec} \footnotesize $\rho - \lambda$} circle (\r);
		
		\draw[fill = black] (0+  \l, 0 -\h) circle (\r) node[xshift = 0.75cm] {\color{spec} \footnotesize $- \lambda - \rho$};
		\draw[fill = black] (0+  \l, 0 - 3*\h) circle (\r) node[yshift = -0.4cm] {\color{spec} \footnotesize $- \lambda - \rho$};
		\draw[dashed] (0, 0) -- node[above] {\color{spec2} \footnotesize $- \lambda - \rho$} ( 0+ 2*\l, 0);


		\draw[line width = 0.6pt] (\d + \t, 0 + \hh) -- node[xshift = -0.35cm, yshift = 0.1cm] {\color{spec} \footnotesize $\rho$} (\d + \t, 0);
		\draw[line width = 0.6pt] (\d - \t, 0 + \hh) --  (\d - \t, 0);
		
		\draw (\d, 0) -- (\d + \l, 0 - \h);
		\draw (\d + \l, 0 - \h) -- (\d, 0 - 2*\h);
		\draw (\d, 0) -- (\d - \l, 0 - \h) node[xshift = -0.25cm, yshift = 0cm] {\color{spec} \footnotesize $\rho$};
		\draw (\d - \l, 0 - \h) -- (\d, 0 - 2*\h);
		\draw (\d, 0 - 2*\h) -- (\d - \l, 0 - 3*\h) node[xshift = -0.25cm, yshift = 0cm] {\color{spec} \footnotesize $\rho$};
		
		\draw[fill = black] (\d, 0) circle (\r) node[xshift = -0.55cm, yshift = 0.05cm] {\color{spec} \footnotesize $- 2 \rho$};
		\draw[fill = black] (\d, 0 - 2*\h) node[xshift = -0.7cm, yshift = 0cm] {\color{spec} \footnotesize $\lambda - \rho$} circle (\r);

		\draw[line width = 0.6pt] (\d +  2*\l + \t, 0 + \hh) -- node[xshift = 0.25cm, yshift = 0.1cm] {\color{spec} \footnotesize $\lambda$} (\d +  2*\l + \t, 0);
		\draw[line width = 0.6pt] (\d +  2*\l - \t,  0 + \hh) --  (\d +  2*\l - \t, 0);
		
		\draw (\d +  2*\l, 0) -- (\d +  2*\l + \l, 0 - \h) node[xshift = 0.25cm, yshift = 0.1cm] {\color{spec} \footnotesize $\lambda$};
		\draw ( \d + 2*\l + \l, 0 - \h) -- ( \d + 2*\l, 0 - 2*\h);
		\draw ( \d + 2*\l, 0) -- ( \d + 2*\l - \l, 0 - \h);
		\draw ( \d + 2*\l- \l, 0 - \h) -- ( \d + 2*\l, 0 - 2*\h);
		\draw ( \d + 2*\l, 0 - 2*\h) -- ( \d + 2*\l + \l, 0 - 3*\h) node[xshift = 0.25cm, yshift = 0.1cm] {\color{spec} \footnotesize $\lambda$};
		
		\draw[fill = black] ( \d + 2*\l, 0) circle (\r) node[xshift = 0.45cm, yshift = 0.05cm] {\color{spec} \footnotesize $- 2 \lambda$};
		\draw[fill = black] ( \d + 2*\l, 0 - 2*\h) node[xshift = 0.65cm, yshift = 0cm] {\color{spec} \footnotesize $\rho - \lambda$} circle (\r);
		
		\draw[fill = black] (\d +  \l, 0 -\h) circle (\r) node[xshift = 0.75cm] {\color{spec} \footnotesize $- \lambda - \rho$};
		\draw[dashed] (\d, 0) -- node[above] {\color{spec2} \footnotesize $- \lambda - \rho$} ( \d + 2*\l, 0);
		\draw[dashed] (\d, -2*\h) -- node[below] {\color{spec2} \footnotesize $\lambda + \rho$} ( \d + 2*\l, -2*\h);


		\draw[line width = 0.6pt] (\d + \t, \a + \hh) -- node[xshift = -0.35cm, yshift = 0.1cm] {\color{spec} \footnotesize $\rho$} (\d + \t, \a);
		\draw[line width = 0.6pt] (\d - \t, \a + \hh) --  (\d - \t, \a);
		
		\draw (\d, \a) -- (\d + \l, \a - \h);
		\draw (\d + \l, \a - \h) -- (\d, \a - 2*\h);
		\draw (\d, \a) -- (\d - \l, \a - \h) node[xshift = -0.25cm, yshift = 0cm] {\color{spec} \footnotesize $\rho$};
		\draw (\d - \l, \a - \h) -- (\d, \a - 2*\h);
		\draw (\d, \a - 2*\h) -- (\d - \l, \a - 3*\h) node[xshift = -0.25cm, yshift = 0cm] {\color{spec} \footnotesize $\rho$};
		
		\draw[fill = black] (\d, \a) circle (\r) node[xshift = -0.55cm, yshift = 0.05cm] {\color{spec} \footnotesize $- 2 \rho$};
		\draw[fill = black] (\d, \a - 2*\h) node[xshift = -0.55cm, yshift = -0cm] {\color{spec} \footnotesize $- 2\rho$} circle (\r);

		\draw[line width = 0.6pt] (\d +  2*\l + \t, \a + \hh) -- node[xshift = 0.25cm, yshift = 0.1cm] {\color{spec} \footnotesize $\lambda$} (\d +  2*\l + \t, \a);
		\draw[line width = 0.6pt] (\d +  2*\l - \t,  \a + \hh) --  (\d +  2*\l - \t, \a);
		
		\draw (\d +  2*\l, \a) -- (\d +  2*\l + \l, \a - \h) node[xshift = 0.25cm, yshift = 0.1cm] {\color{spec} \footnotesize $\lambda$};
		\draw ( \d + 2*\l + \l, \a - \h) -- ( \d + 2*\l, \a - 2*\h);
		\draw ( \d + 2*\l, \a) -- ( \d + 2*\l - \l, \a - \h);
		\draw ( \d + 2*\l- \l, \a - \h) -- ( \d + 2*\l, \a - 2*\h);
		\draw ( \d + 2*\l, \a - 2*\h) -- ( \d + 2*\l + \l, \a - 3*\h) node[xshift = 0.25cm, yshift = 0.1cm] {\color{spec} \footnotesize $\lambda$};
		
		\draw[fill = black] ( \d + 2*\l, \a) circle (\r) node[xshift = 0.45cm, yshift = 0.05cm] {\color{spec} \footnotesize $- 2 \lambda$};
		\draw[fill = black] ( \d + 2*\l, \a - 2*\h) node[xshift = 0.45cm, yshift = 0cm] {\color{spec} \footnotesize $- 2\lambda$} circle (\r);
		
		\draw[fill = black] (\d +  \l, \a -\h) circle (\r) node[xshift = 0.75cm] {\color{spec} \footnotesize $\lambda + \rho$};

		
		\draw[arrow, gray!80] (\l + 0.5*\d - \dd, - 1.5*\h) -- (\l + 0.5*\d + \dd, - 1.5*\h);
		
		\draw[arrow, gray!80] (\d + \l, \a + 1.55*\hh + \dd) -- (\d + \l, \a + 1.55*\hh - \dd);
	\end{tikzpicture}
	\vspace{0.5cm}
	\caption{Proof of identity~\eqref{LtL} (begins in Figure~\ref{fig:LtL-1})} \label{fig:LtL-2}
\end{figure}
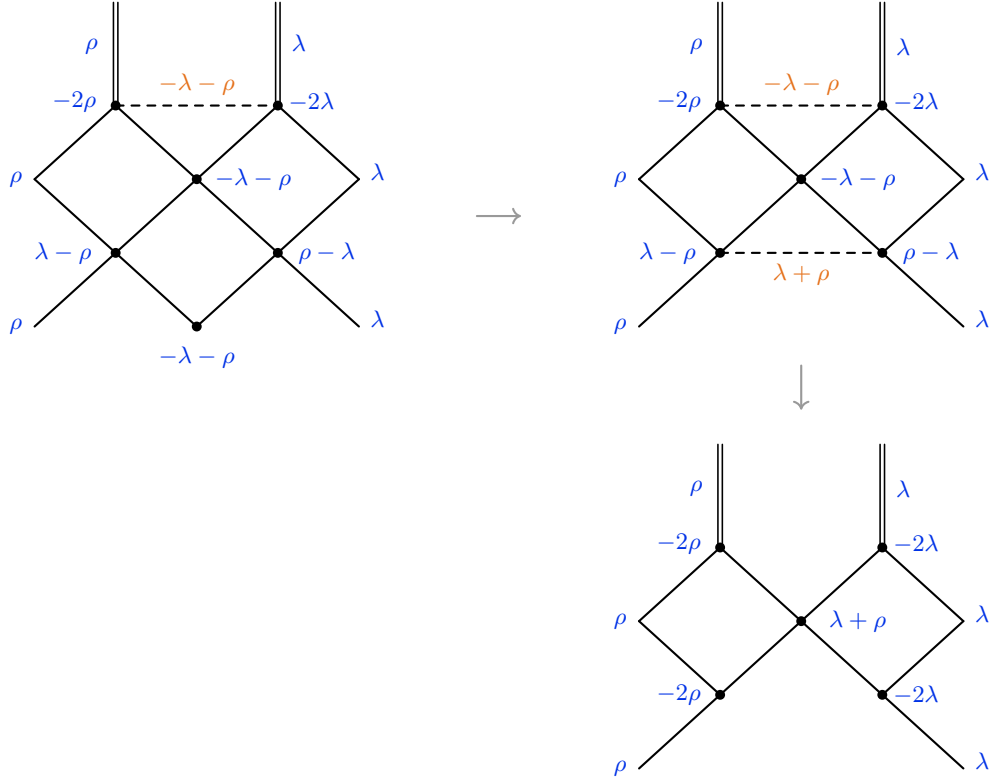

\section{Dual system}

In the previous sections we dealt with operators acting in the space of spatial variables $x_j$. On the other hand, $GL$ and $BC$ Toda wave functions can be also studied as functions of spectral variables $\lambda_j$, which involves analogous objects, such as Hamiltonians and integral operators, acting in the spectral space.

\subsection{$GL$ system} 

In this section we recall the known facts about $GL$ Toda chain: dual Hamiltonians, dual Baxter and raising operators, Mellin--Barnes representation of wave function. 

\subsubsection{Dual Hamiltonians, raising and Baxter operators}
Dual Hamiltonians acting on functions of spectral variables $\phi(\bm{\lambda}_{n})$ are given by~\cite{B}

\begin{align}\label{Hdual}
	\Hdual_s = \sum_{\substack{J \subset \{1, \dots, n\} \\[1pt] |J| = s}} \, \prod_{ \substack{j \in J \\ k \not\in J } } \frac{1}{\imath \lambda_j - \imath \lambda_k} \, \prod_{j \in J} e^{- \imath \partial_{\lambda_j}},
\end{align}
the simplest ones are
\begin{align}
	\Hdual_1 = \sum_{j = 1}^n \, \prod_{ \substack{k = 1 \\ k \not= j} }^n \frac{1}{\imath \lambda_j - \imath \lambda_k} \, e^{- \imath \partial_{\lambda_j} }, \qquad \Hdual_n = e^{- \imath (\partial_{\lambda_1} + \ldots + \partial_{\lambda_n})}.
\end{align}
Here $e^{-\imath \partial_{\lambda}} \phi(\lambda) = \phi(\lambda - \imath)$, so these operators are well defined on the functions analytic in $\lambda_j$ in the strips $\Im \lambda_j \in [-1, 0]$.
The Hamiltonians are formally self-adjoint with respect to the scalar product
\begin{align}\label{dual-sp}
	\langle \psi | \phi \rangle_{\hat{\mu}} = \int_{\mathbb{R}^n} d\bm{\lambda}_n \; \hat{\mu}(\bm{\lambda}_n) \, \overline{\psi(\bm{\lambda}_n)} \, \phi(\bm{\lambda}_n)
\end{align}
with the measure
\begin{align} \label{gl-measure}
	\hat{\mu}(\bm{\lambda}_n) = \frac{1}{n! \, (2\pi)^n} \prod_{1 \leq j \not= k \leq n} \frac{1}{\Gamma(\imath \lambda_j - \imath \lambda_k)}.
\end{align}
Note that this function appears in the orthogonality relation for $GL$ wave functions~\eqref{gl-orth}.

Dual raising operator $\hat{\Lambda}_n(x)$ acts on functions $\phi(\bm{\lambda}_{n - 1})$ by the formula
\begin{align}
	\bigl[ \hat{\Lambda}_n(x) \, \phi \bigr] (\bm{\lambda}_n) = \int_{(\mathbb{R} + \imath 0)^{n - 1}} \! d\bm{\gamma}_{n - 1} \; \hat{\mu}(\bm{\gamma}_{n - 1}) \, e^{ \imath (\underline{\bm{\lambda}}_n - \underline{\bm{\gamma}}_{n - 1} ) x} \, \prod_{j = 1}^n \prod_{k = 1}^{n - 1} \Gamma( \imath \lambda_j - \imath \gamma_k) \, \phi(\bm{\gamma}_{n - 1}).
\end{align}
Dual Baxter operator $\hat{Q}_n(x)$ acts on functions $\phi(\bm{\lambda}_n)$ by the formula
\begin{align} 
	\bigl[ \hat{Q}_n(x) \, \phi \bigr] (\bm{\lambda}_n) = \int_{(\mathbb{R} + \imath 0)^{n}} \! d\bm{\gamma}_{n} \; \hat{\mu}(\bm{\gamma}_{n}) \, e^{ \imath (\underline{\bm{\lambda}}_n - \underline{\bm{\gamma}}_{n} ) x} \, \prod_{j, k = 1}^n \Gamma( \imath \lambda_j - \imath \gamma_k) \, \phi(\bm{\gamma}_{n}).
\end{align}
These operators are defined in~\cite[Section 3]{GLO1}, although essentially the raising operator appears in the earlier work~\cite[(4.3)]{KL2}. 

Denote the reflection operator
\begin{align}
	\hat{\mathcal{I}}_n \colon \; \phi(\bm{\lambda}_n) \; \mapsto \; \phi(-\bm{\lambda}_n).
\end{align}
The family of Hamiltonians is preserved under its action
\begin{align}
	\hat{\mathcal{I}}_n \, \hat{H}_s \, \hat{\mathcal{I}}_n = \hat{H}_n^{-1} \, \hat{H}_{n - s}.
\end{align}
Define reflected raising and Baxter operators
\begin{align}
	\hat{\Lambda}'(x) = \hat{\mathcal{I}}_n \, \hat{\Lambda}_n(-x) \, \hat{\mathcal{I}}_n, \qquad \hat{Q}'_n(x) = \hat{\mathcal{I}}_n \, \hat{Q}_n(-x) \, \hat{\mathcal{I}}_n,
\end{align}
so that in particular,
\begin{align} \label{Qprime} 
	\bigl[ \hat{Q}'_n(x) \, \phi \bigr] (\bm{\lambda}_n) = \int_{(\mathbb{R} - \imath 0)^{n}} \! d\bm{\gamma}_{n} \; \hat{\mu}(\bm{\gamma}_{n}) \, e^{ \imath (\underline{\bm{\lambda}}_n - \underline{\bm{\gamma}}_{n} ) x} \, \prod_{j, k = 1}^n \Gamma( \imath \gamma_k - \imath \lambda_j) \, \phi(\bm{\gamma}_{n}).
\end{align}

\subsubsection{Local relations and eigenfunctions}

The above operators satisfy local relations
\begin{align}
	& \Hdual_s \, \hat{Q}_n(x) = \hat{Q}_n(x) \, \Hdual_s, \\[6pt]
	& \Hdual_s \, \hat{\Lambda}_n(x) = e^x \, \hat{\Lambda}_n(x) \, \Hdual_{s - 1}, \\[6pt] \label{QQh}
	& \hat{Q}_n(x) \, \hat{Q}_n(y) = \hat{Q}_n\bigl( - \ln(e^{-x} + e^{-y}) \bigr) = \hat{Q}_n(y) \, \hat{Q}_n(x) , \\[6pt]
	& \hat{Q}_n(x) \, \hat{\Lambda}_n(y) = \exp(-e^{y - x}) \, \hat{\Lambda}_n(y),
\end{align}
where in the second identity for $s = 1$ we denote $\Hdual_0 = \mathrm{Id}$.

The first two can be proved using symmetry of Hamiltonians with respect to the scalar product~\eqref{dual-sp} and simple kernel function type of identity~\cite{B}
\beq  \sum_{ \substack{ I \subset \{1, \dots, n\} \\ |I| = s }}\prod_{ \substack {i\in I \\ j\not\in I}}\frac{1}{x_i-x_j}
\prod_{ \substack {i\in I \\ a=1,\ldots,n}}(x_i-y_a)=\sum_{ \substack{ I \subset \{1, \dots, n\} \\ |I| = s }}\prod_{ \substack {a\in I \\ b\not\in I}}\frac{1}{y_b-y_a}
\prod_{ \substack {a\in I \\ i=1,\ldots,n}}(x_i-y_a),\eeq
or its degeneration as $y_n \to \infty$. The last two relations follow from degenerations of Gustafson identity~\eqref{Gust-int}.

In fact, in the present paper we won't need the above relations, but only some of their corollaries. Notice that the Mellin--Barnes representation of $GL$ Toda wave functions, mentioned in the introduction~\eqref{Phi-MB-0}, is given by iterative formula
\begin{align}\label{MBA}
	\Phi_{\bm{\lambda}_n}(\bm{x}_n) = \hat{\Lambda}_n(x_n) \cdots \hat{\Lambda}_1(x_1) \cdot 1.
\end{align}
Moreover, using Gauss--Givental representation~\eqref{Phi-GG-0} one can easily prove the symmetry
\begin{align}\label{GG24}
	\Phi_{\bm{\lambda}_n}(\bm{x}_n) =  \Phi_{-\bm{\lambda}_n}(-x_n, \dots, -x_1),
\end{align}
which in terms of dual operators reads
\begin{align}
	\Phi_{\bm{\lambda}_n}(\bm{x}_n) = \hat{\Lambda}'_n(x_1) \cdots \hat{\Lambda}_1'(x_n) \cdot 1.
\end{align}
Thus, from the above relations and definitions $GL$ Toda wave functions diagonalize dual Hamiltonians and Baxter operators
\begin{align} \label{Hdual-Phi}
	& \Hdual_s \, \Phi_{\bm{\lambda}_n}(\bm{x}_n) = e^{x_{n - s + 1} + \ldots + x_n} \, \Phi_{\bm{\lambda}_n}(\bm{x}_n), \\[6pt]
	& \hat{Q}_n(x) \, \Phi_{\bm{\lambda}_n}(\bm{x}_n) = \exp(- e^{x_n - x}) \, \Phi_{\bm{\lambda}_n}(\bm{x}_n), \\[6pt]
	& \hat{Q}'_n(x) \, \Phi_{\bm{\lambda}_n}(\bm{x}_n) = \exp(- e^{x - x_1}) \, \Phi_{\bm{\lambda}_n}(\bm{x}_n). \label{Qprimec}
\end{align}
Diagonalization of dual Hamiltonians is proven in~\cite[Section 4]{B}, while diagonalization of Baxter operators is shown in~\cite[Proposition 3.2]{GLO1}. Note that by Corollary~\ref{cor:Phi-an} the Mellin--Barnes representation~\eqref{MBA} can be analytically continued to $\bm{\lambda}_n \in \mathbb{C}^n$, so that the action of dual Hamiltonians~\eqref{Hdual} is well defined.

Orthogonality relation for wave functions in the space of spectral parameters is equivalent to completeness relation in the space of spatial parameters
\begin{align} \label{gl-compl-2}
	\int_{\mathbb{R}^n} d\bm{\lambda}_n \; \hat{\mu}(\bm{\lambda}_n) \, \overline{\Phi_{\bm{\lambda}_n}(\bm{x}_n)} \, \Phi_{\bm{\lambda}_n}(
	\bm{y}_n) = \delta(x_1 - y_1) \cdots \delta(x_n - y_n).
\end{align}
It is proved by different methods in~\cite{W},~\cite[Section 3]{Kozl},~\cite[Section 2]{DKM2}.

\subsection{$BC$ system}

\subsubsection{Dual Hamiltonians}

In this section we describe the dual Hamiltonians derived in~\cite[Theorem 3]{DE}. Introduce rational functions 
\begin{align}
	\begin{aligned}\label{MB01}
		& 
		B(u_1, u_2) = \frac{1}{u_1^2 - u_2^2},\qquad C(u_1, u_2) = \frac{1}{(u_1 + u_2)(u_1 + u_2 + 1)}
	\end{aligned}
\end{align} 
and their products for any subsets of indices $I, J \subset \{1,\ldots,n\}$ (such that $I \cap J = \varnothing$)
\begin{align}\label{MB02}
	\begin{aligned}
		& B_{I, J}(\bm{u}_n) = \prod_{ \substack{i \in I \\ j \in J} } B(u_i, u_j), \\
		& C_I(\bm{u}_n) = \prod_{i \in I}(u_i + g) \prod_{\substack{i,j\in I\\ i\leq j}}C(u_i,u_j),\\ & C'_I(\bm{u}_n) = \prod_{i \in I}(u_i + g) \prod_{\substack{i,j\in I\\ i\leq j}}(-C(u_i,u_j)).		
	\end{aligned}
\end{align}
Besides, for any subset $I$ define symmetrizer
\begin{equation}
	\mathrm{Refl}_I = \prod_{i \in I} \mathrm{Refl}_{i}, \qquad \mathrm{Refl}_i \bigl[ \phi(u_i) \bigr]= \phi(u_i) + \phi(-u_i).
\end{equation}
Dual van Diejen--Emsiz Hamiltonians are given by
\begin{align} \label{HBC}
	\HBCdual_s = (2\beta)^s\sum_{ \substack{ I,J,K:\ |K|=n-s, \\ I\sqcup J \sqcup K=\{1, \dots, n\}  }}\!\!\! \mathrm{Refl}_{I\sqcup J} \biggl[B_{I,J}( \imath \bm{\lambda}_n) B_{I,K}( \imath \bm{\lambda}_n)  B_{J,K}( \imath \bm{\lambda}_n)\,
	C_I( \imath \bm{\lambda}_n ) \, C'_J( \imath \bm{\lambda}_n ) \, \, \prod_{i \in I} e^{- \imath\partial_{\lambda_i}} \biggr].
\end{align}
In particular, for $s = 1$
\begin{align}
	\HBCdual_1 = 2\beta\sum_{ i = 1}^n \,\prod_{ \substack{j = 1 \\ j\not=i} }^n \frac{1}{\lambda_i^2 - \lambda_j^2} \Biggl[ \, \frac{\imath \lambda_i +g }{2 \lambda_i (2 \lambda_i - \imath)} \,  \bigl( e^{-\imath \partial_{\lambda_i}} - 1 \bigr) +  \frac{-\imath \lambda_i + g }{2 \lambda_i (2 \lambda_i + \imath)}\,  \bigl( e^{\imath \partial_{\lambda_i}} - 1 \bigr) \Biggr] .
\end{align}
Dual Hamiltonians are formally self-adjoint with respect to the scalar product
\begin{align}
	\langle \psi | \phi \rangle_{\bcdmu} = \int_{\mathbb{R}^n} d\bm{\lambda}_n \; \bcdmu(\bm{\lambda}_n) \, \overline{\psi(\bm{\lambda}_n)} \, \phi(\bm{\lambda}_n)
\end{align}
with the measure~\eqref{bc-measure}
\begin{align}
	\bcdmu(\bm{\lambda}_n) = \frac{1}{n! \, (4\pi)^n } \prod_{1 \leq j < k \leq n} \frac{1}{\Gamma(\pm \imath \lambda_j \pm \imath \lambda_k) } \, \prod_{j = 1}^n \frac{ \Gamma(g \pm \imath \lambda_j ) }{ \Gamma(\pm2 \imath \lambda_j) }.
\end{align}
Moreover, each summand of the operator \rf{HBC} is formally self-adjoint. Due to factorized form of coefficients, it is sufficient to check the identities
\begin{align}
	\frac{m(\l+\imath)}{m(\l)}=\frac{\overline{c(\l)}}{c(\l+\imath)},\qquad
	\frac{m(\l+\imath,\rho+\imath)}{m(\l,\rho)}=\frac{\overline{c(\l,\rho)}}{c(\l+\imath,\rho+\imath)}
\end{align}
where for all real $\l,\rho$ and $a$ 
\begin{align}
	& c(\l)=  \frac{\imath \lambda +g }{2 \lambda (2 \lambda - \imath)(\lambda^2 - a^2)}, && \hspace{-1cm} m(\l)= \frac{ \Gamma(g \pm \imath \lambda ) }{ \Gamma(\pm2 \imath \lambda) \Gamma(\pm \imath \l \pm \imath a)}, \\[6pt]
	& {c}(\l,\rho)=\frac{1}{(\l + \rho)(\l + \rho - \imath)}, &&  \hspace{-1cm} m(\l,\rho)=\frac{1}{
		\Gamma(\pm \imath \l \pm \imath \rho)}.
\end{align} 

\subsubsection{$GL$--$BC$ intertwiners and Mellin--Barnes integrals} \label{sec:MB-int}

In Section~\ref{sec:gg-mb-equiv} we derived Mellin--Barnes representation for the $BC$ wave functions. The kernel of this representation is given by the function~\eqref{MB0}
\begin{align}
	K_1( \bm{\lambda}_m,  \bm{\gamma}_n)  = \frac{(2\beta)^{-\imath \bgg_n} \prod\limits_{j=1}^n \prod\limits_{k = 1}^m \Gamma(\imath \gamma_j \pm \imath \lambda_k) }{ \prod\limits_{1\leq j < k\leq n} \Gamma(\imath \gamma_j + \imath \gamma_k) \, \prod\limits_{j = 1}^n \Gamma (g + \imath \gamma_j ) }
\end{align}
with $m = n$. Consider the similar kernel function
\begin{align} \label{MB1}
	K_2( \bm{\lambda}_m,  \bm{\gamma}_n)  = \frac{(2\beta)^{-\imath \bgg_n} \prod\limits_{j = 1}^n \prod\limits_{k = 1}^m \Gamma(\imath \gamma_j \pm \imath \lambda_k)\prod\limits_{j = 1}^n \Gamma (g - \imath \gamma_j ) }{ \prod\limits_{1 \leq j < k \leq n} \Gamma(\imath \gamma_j + \imath \gamma_k) \, \prod\limits_{j = 1}^m \Gamma (g \pm \imath \lambda_j ) }.
\end{align}
To prove that $BC$ wave functions diagonalize dual Hamiltonians~\eqref{HBC} we need the following identity.

\begin{proposition} \label{propK} For $n\leq m$
	kernel functions $K_1(\bl_m,\bg_n)$ and $K_2(\bl_m,\bg_n)$ intertwine dual Hamiltonians of $GL$ and $BC$ types
	\begin{align}\label{MB2}
		\HBCdual_s(\bm{\lambda}_m) \, K_i( \bm{\lambda}_m,  \bm{\gamma}_n) = \Hdual_s(-\bm{\gamma}_n) \, K_i( \bm{\lambda}_m, \bm{\gamma}_n),\qquad i=1,2,\ \ s=1,\ldots,n.
	\end{align}
\end{proposition}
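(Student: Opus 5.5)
We prove \eqref{MB2} directly as an identity of meromorphic functions: divide both sides by $K_i(\bm{\lambda}_m,\bm{\gamma}_n)$ and turn it into a rational identity. Every shift operator acts on $K_i$ multiplicatively by a rational factor.

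On the $GL$ side, by \eqref{Hdual} the operator $\Hdual_s(-\bm{\gamma}_n)$ applies $\prod_{j\in J}e^{\imath\partial_{\gamma_j}}$, i.e.\ $\gamma_j\to\gamma_j+\imath$, so $\imath\gamma_j\to\imath\gamma_j-1$. Using $\Gamma(z-1)=\Gamma(z)/(z-1)$, the factor $(2\beta)^{-\imath\bgg_n}$ produces $(2\beta)^{|J|}$. Each $\Gamma(\imath\gamma_j\pm\imath\lambda_k)$ produces $1/\bigl((\imath\gamma_j-1)^2+\lambda_k^2\bigr)$. For $K_1$, the factor $1/\Gamma(g+\imath\gamma_j)$ gives $(g+\imath\gamma_j-1)$, and the factors $1/\Gamma(\imath\gamma_j+\imath\gamma_k)$ give linear factors, quadratic ones when both indices lie in $J$. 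For $K_2$, $\Gamma(g-\imath\gamma_j)$ gives $(g-\imath\gamma_j)$ instead.

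On the $BC$ side, \eqref{HBC} shifts $\lambda_i\to\lambda_i\mp\imath$ for $i\in I$, with the sign fixed by $\mathrm{Refl}$, and only the factors $\Gamma(\imath\gamma_j\pm\imath\lambda_i)$ change. For $K_2$ the factors $1/\Gamma(g\pm\imath\lambda_i)$ change as well, contributing $(g\mp\imath\lambda_i-1)/\ldots$, and these cancel against the $(u_i+g)$ in $C_I$ up to a sign reorganization. After this, \eqref{MB2} becomes, for each $s$, an equality of two rational functions in $\bm{\gamma}_n$ and in $u_k=\imath\lambda_k$, where $k=1,\dots,m$.

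To prove the rational identity, we treat both sides as functions of one variable, say $\gamma_1$, and compare residues. This is the main obstacle. The left side is a sum over triples $(I,J,K)$ with reflections. Its apparent poles lie at $u_i=\pm u_j$, $u_i+u_j\in\{0,-1\}$ and $u_i=0,\pm\frac12$ (from $C_I$ and $C'_J$, including the diagonal terms $i=j$). These poles are in the $\lambda$ variables only, and they cancel pairwise by the reflection symmetrization and by the $I\leftrightarrow J$ exchange in the same spirit as Babelon's identity quoted above. So we first show that the left side, divided by $K_i$, is a symmetric polynomial-free rational function whose only poles are at $u_k=\pm(\imath\gamma_j-1)$ or $u_k=\pm\imath\gamma_j$.

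The plan is then an induction on $n$ via partial fractions in $\gamma_n$. Take residues at $\imath\gamma_n=1\pm u_k$ on both sides. The residue on the $GL$ side reduces to the $(n-1,m-1)$ instance of the identity with index $s-1$: the shifted $\gamma_n$ pairs with $\lambda_k$, and the factor $\Gamma(\imath\gamma_j+\imath\gamma_n)$ becomes $\Gamma(\imath\gamma_j\pm u_k+1)$, which matches the removed $\Gamma(\imath\gamma_j\pm\imath\lambda_k)$ after the shift. On the $BC$ side, only terms with $k\in I$, shifted in the matching direction, contribute, again yielding the $(n-1,m-1)$ identity. The asymptotics $\gamma_n\to\infty$ then fix the remaining polynomial part by the $(n-1,m)$ identity, for both $s$ and $s-1$; this is where $n\le m$ is used.

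The base case is $n=0$ or $s=0$ (trivial) together with $s=n=1$, which we check by hand as a one-variable partial-fraction identity, for instance via $\sum_k\mathrm{Res}$ of $\prod_k\frac{(z-\imath\gamma+1)\ldots}{(z^2-u_k^2)}\cdot\frac{z+g}{z(2z+1)}$ over a large contour. The delicate bookkeeping is the cancellation of the $C'_J$ (unshifted) terms. We would try first to absorb them by writing $\mathrm{Refl}_J$ of $C'_J$ as a contour integral in the auxiliary variables, reducing the $J$-sum to an $s'$-free identity.
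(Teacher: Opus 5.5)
Your reduction to a rational identity is right, and so is your residue step. Write $u_k=\imath\lambda_k$ and $y_j=\imath\gamma_j-1$. The residue at $y_n=u_k$ of either side equals $\frac{u_k+g}{2u_k}\prod_{l\neq k}\frac{1}{u_k^2-u_l^2}\prod_{a<n}\frac{y_a+u_k+1}{y_a-u_k}$ times the corresponding side of the $(m-1,n-1)$ identity with index $s-1$. This is the paper's computation at $x_m=y_n$, with the residue taken in the other variable.

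\textbf{The gap: the constant term as $\gamma_n\to\infty$.}
\begin{itemize}
\item On the left, every factor $\frac{y_n+u_i+1}{y_n-u_i}$ tends to $1$. So the left side tends to the $(m,n-1)$ left side with the \emph{same} index $s$.
\item On the right, summands with $n\in L$ decay like $y_n^{2s-1-2m}$. So index $s-1$ never enters, contrary to your ``both $s$ and $s-1$''.
\item For $s<n$ your induction covers this limit.
\item For $s=n$ the right-hand limit is $0$, and you need the $(m,n-1)$ left side with index $n>n-1$ to vanish. That is not in your induction hypothesis.
\end{itemize}
Iterating your own recursion on this vanishing statement reduces it to the $\bm\gamma$-free identity
\[
\sum_{\substack{I\sqcup J\sqcup K=\{1,\dots,m\}\\ |K|=m-s}}\mathrm{Refl}_{I\sqcup J}\bigl[B_{I,J}B_{I,K}B_{J,K}\,C_I\,C'_J\bigr](\bm{u}_m)=0,\qquad 1\le s\le m .
\]
This is your ``trivial'' base case $n=0$. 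It is not trivial, and it cannot be avoided: it also follows from the rational form of \eqref{MB2} with $n=s$ by letting all $y_j\to\infty$. Since no $\gamma$ remains in it, no partial-fraction argument in the $\gamma$'s can prove it.

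\textbf{What proves it, and why your named mechanism fails.} It needs the $\lambda$-variable argument you only assert.
\begin{itemize}
\item Swapping $I\leftrightarrow J$ does not work. In a $\bm\gamma$-free summand it multiplies by $(-1)^{|I||J|+|I|(|I|+1)/2+|J|(|J|+1)/2}=(-1)^{s(s+1)/2}$. This kills the sum for $s\equiv 1,2\pmod 4$, but for $s\equiv 0,3\pmod 4$ (e.g.\ $s=m=3$) paired summands are equal and nothing cancels.
\item For $n\ge 1$ the swap does not even relate summands, because the factor $\prod_{i\in I,\,a}\frac{y_a+u_i+1}{y_a-u_i}$ changes.
\item The correct pairing moves a single index. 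At $u_m=-\tfrac12$, the residue of a summand with $m\in I$ cancels that of $(I\setminus\{m\},J\sqcup\{m\},K)$, because:
\begin{itemize}
\item the diagonal factors $\pm(u_m+g)C(u_m,u_m)$ in $C_I$ and $C'_{J\sqcup\{m\}}$ have opposite residues;
\item at this point $C(u_i,u_m)=B(u_i,u_m)$ for $i\in I\setminus\{m\}$, and $-C(u_j,u_m)=B(u_m,u_j)$ for $j\in J$;
\item $\prod_a\frac{y_a+u_m+1}{y_a-u_m}=1$ there.
\end{itemize}
\item At $u_1=-u_2-1$ one moves the pair $\{1,2\}$ from $I$ to $J$ in the same way, and the $\gamma$-dependent factor is again $1$.
\end{itemize}
This is precisely the role of the $C'_J$ terms that your last paragraph leaves open.

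Once you have this cancellation, the paper's route is shorter than yours. For $s\ge 1$ both sides have negative degree in each $u_i$. The poles at $u_i=0$ and $u_i=\pm u_j$ drop by symmetry. The only remaining poles, $u_i=\pm y_a$, are handled by your residue computation plus induction on $n$. No $\gamma_n\to\infty$ analysis and no separate $s=n=1$ check are needed.
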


\begin{proof}
By straightforward calculation, the relation \rf{MB2} for the function $K_1( \bm{\lambda}_m,  \bm{\gamma}_n)$ is equivalent to the following identity on rational functions
\begin{multline}\label{MB5}
		\sum_{ \substack{ I,J,K:\ |K|=m-s, \\ I\sqcup J \sqcup K=\{1, \dots, m\}  }}\!\!\! \mathrm{Refl}_{I\sqcup J} \biggl[B_{I,J}(  \bm{x}_m) B_{I,K}( \bm{x}_m)  B_{J,K}(\bm{x}_m)\,
		C_I( \bm{x}_m ) \, C'_J(  \bm{x}_m )D_I^n(  \bm{x}_m,\bm{y}_n)\biggr]\\ 
		=\sum_{\substack{L,M:\ |L|=s,\\ L\sqcup M =\{1, \dots, n\} }}\tilde{B}_{L,M}(\bm{y}_n)\tilde{C}_L(\bm{y}_n) \tilde{D}_L^m(\bm{x}_m,\bm{y}_n),
\end{multline}
where we denote
\begin{align} \label{MB6}
	\begin{aligned}
		& \tilde{B}_{L,M}(\by_n)= \prod_{\substack{a\in L\\ b\in M}}\frac{y_a+y_b+1}{y_b-y_a}, && \qquad 
		\tilde{C}_L(\by_n)=\prod_{a\in L}(y_a+g)\prod_{\substack{a,b\in L\\ a<b}}(y_a+y_b)(y_a+y_b+1), \\[6pt]
		& \tilde{D}^m_L(\bx_m,\by_n)=\prod_{\substack{a\in L,\\ i=1,\ldots, m}}\frac{1}{y_a^2-x_i^2}, && \qquad  D_I^n( \bm{x}_m,\bm{y}_n)=
		\prod_{\substack{i\in I,\\ a=1,\ldots ,n}} \frac{y_a+x_i+1}{y_a-x_i}.
	\end{aligned}
\end{align} 
Let us prove the identity \rf{MB5}. Its both sides are rational functions of $\bx_n$ and $\by_n$, symmetric with respect to permutations of the variables $\by_n$ and with respect to permutations and sign changes of variables $\bx_m$. Besides, they are of degree less then zero with respect to each variable~$x_i$. Thus, it is sufficient to prove that the difference between two sides has no poles with respect to each $x_i$. All the poles are simple. Symmetry implies that there are no poles of the form $x_i=x_j$ and $x_i=0$. The only poles to check are 
$$x_i=\pm\frac{1}{2}, \qquad x_i=\pm x_j\pm 1, \qquad x_i = \pm y_j.$$
Again, due to symmetry of our functions it is sufficient to check three possibilities
\beqq x_m= - \frac{1}{2},\qquad x_1=-x_2-1, \qquad x_m=y_n. \eeqq
Let us prove that residues at these points from different terms cancel each other.

\paragraph{Pole at $x_m = -1/2$.}
Clearly, the right hand side of \rf{MB5} does not have such a pole. Denote by $b_{I,J,K}$ the summand of the left hand side of \rf{MB5}, which corresponds to the decomposition $\{1,\ldots,m\}=I\sqcup J\sqcup K$, and assume that $I$ contains $m$. Denote $I'=I\setminus\{m\}$ and  $J'=J\sqcup \{m\}$. Then 
\begin{multline}
	\mathop{\mathrm{Res}}_{x_m = -1/2} b_{I,J,K}=- \mathop{\mathrm{Res}}_{x_m = -1/2} b_{I',J',K} \\[6pt]
	= -\frac{1}{2} \biggl(g-\frac{1}{2}\biggr) \prod_{i\in I'}\frac{1}{x_i^2-1/4}\prod_{j\in J}\frac{1}{1/4-x_j^2}\prod_{k\in K}\frac{1}{1/4-x_k^2}B_{I',J}B_{I',K} B_{J,K} C_{I'}C'_J D_{I'}^n.
\end{multline} 
Thus, two residues cancel each other, which verifies that the total residue of the left hand side vanishes.

\paragraph{Pole at $x_1 = -x_2 - 1$.}
Again, the right hand side of \rf{MB5} does note have such a pole. In the left hand side there are contributions from the summands $b_{I,J,K}$ when either $I$ or $J$ contains both $1$ and $2$. Let $1,2 \in I$. Denote  $I''=I\setminus \{1,2\}$, $J''=J\sqcup\{1,2\}$. Then
\begin{align}
	\begin{aligned}
		& \mathop{\mathrm{Res}}_{x_1=-x_2-1} b_{I,J,K} = - \mathop{\mathrm{Res}}_{x_1=-x_2-1} b_{I'',J'',K} \\[6pt]
		& \quad\quad= \prod_{i\in I''} \frac{1}{(x_i^2-x_1^2)(x_i^2-x_2^2)}\prod_{j\in J} \frac{1}{(x_1^2-x_j^2)(x_2^2-x_j^2)}\prod_{k\in K} \frac{1}{(x_1^2-x_k^2)(x_2^2-x_k^2)}\\[6pt]
		& \quad\quad  \times \frac{1}{x_1+x_2} \frac{x_1+g}{2x_1(2x_1+1)} \frac{x_2+g}{2x_2(2x_2+1)}B_{I'',J}B_{I'',K} B_{J,K} C_{I''}C'_J D_{I''}^n\Big|_{x_1=-x_2-1}.
	\end{aligned}
\end{align}
Again, two residues cancel each other.

\paragraph{Pole at $x_m = y_n$.}
In the right hand side of \rf{MB5} we count residues of the summands $\tilde{b}_{L,M}$ for $L$ that contain~$n$. Denote $L'=L\setminus\{n\}$. Then
\beq\label{MB7} \mathop{\mathrm{Res}}_{x_m=y_n}\tilde{b}_{L,M}=-\frac{y_n+g}{2y_n}\prod_{l = 1}^{m - 1}\frac{1}{y_n^2-x_l^2}\prod_{c = 1}^{n - 1}\frac{y_c+y_n+1}{y_c-y_n}\tilde{B}_{L',M} \tilde{C}_{L'}\tilde{D}_{L'}^{m-1}.
\eeq
Similarly, in the left hand side of \rf{MB5} we count residues of the summands $b_{I,J,K}$ for $I$ containing~$m$. Again denote $I'=I\setminus \{m\}$. Then 
\beq\label{MB8} \mathop{\mathrm{Res}}_{x_m=y_n}{b}_{I,J,K}=-\frac{y_n+g}{2y_n}\prod_{l = 1}^{m - 1}\frac{1}{y_n^2-x_l^2}\prod_{c = 1}^{n - 1}\frac{y_c+y_n+1}{y_c-y_n} {B}_{I',J}B_{I',K}B_{J.K}C_{I'}C'_J {D}_{I'}^{n-1}.
\eeq
Summing up \rf{MB8} over all partitions appearing in \rf{MB5} we conclude that the residue $\Res_{x_m=y_n}$ of the left hand side of \rf{MB5} equals to the left hand side of \rf{MB5} evaluated in variables $\bx_{m-1}$, $\by_{n-1}$ and multiplied by the function
\beqq -\frac{y_n+g}{2y_n}\prod_{l = 1}^{m - 1}\frac{1}{y_n^2-x_l^2}\prod_{c = 1}^{n - 1}\frac{y_c+y_n+1}{y_c-y_n}.\eeqq
The same holds for the right hand side of \rf{MB5}. This proves that the difference of two sides residues vanishes by induction over $n$. Note that in the base case $n=0$ the right hand side vanishes and the left hand side is a rational function without poles of degree less then zero, so it also equals zero. 
\smallskip

The relation \rf{MB2} for the kernel $	K_2( \bm{\lambda}_m,  \bm{\gamma}_n)$ can be also rewritten in a form of identity \rf{MB5}, once we substitute $g-1-u_i$ and $g-1-y_a$ instead of $g+u_i$ and $g+y_a$ in the definitions \rf{MB02} and \rf{MB6} of the functions $C_I(\bm{u}_m)$ and $\tilde{C}_L(\by_n)$:
\begin{align*}  
	& C_I(\bm{u}_m) = \prod_{i \in I}( g-1 - u_i) \prod_{\substack{i,j\in I\\ i\leq j}}C(u_i,u_j),\\[6pt]
	& \tilde{C}_L(\by_n)=\prod_{a\in L}(g-1-y_a)\prod_{\substack{a,b\in L\\ a<b}}(y_a+y_b)(y_a+y_b+1).
\end{align*} 
The proof of the corresponding identity is the same.
\end{proof}

By Theorem \ref{theorem2.3}, $BC$ wave functions admit Mellin--Barnes representation
\begin{align} \label{Psi-MB5}
	\Psi_{\bm{\lambda}_n}(\bm{x}_n)  = e^{\beta e^{-x_1}}  \int_{(\mathbb{R} - \imath \epsilon)^n}  d\bm{\gamma}_n \; \hat{\mu}(\bm{\gamma}_n) \, K_1(\bl_n,\bg_n) \, \Phi_{\bm{\gamma}_n}(\bm{x}_n),
\end{align}
where $\epsilon > 0$. It is absolutely convergent for $|\Im \lambda_j| < \epsilon$ and analytically continues to $\bm{\lambda}_n \in \mathbb{C}^n$ by shifting contours (i.e. increasing $\epsilon$), see Proposition~\ref{prop:bc-mb-bound} and Corollary~\ref{cor:Psi-an}. Thus, the action of dual Hamiltonians~\eqref{HBC} on these functions is well defined, and with the help of Proposition~\ref{propK} we can prove the following statement.

\begin{theorem}  \label{theoremMB4} 
	The wave function $	\Psi_{\bm{\lambda}_n}(\bm{x}_n)$ satisfies van Diejen--Emsiz equations
	\begin{align} \label{Hdual-Psi}
		\HBCdual_s \,	\Psi_{\bm{\lambda}_n}(\bm{x}_n) = e^{x_{n - s + 1} + \ldots + x_n} \, \Psi_{\bm{\lambda}_n}(\bm{x}_n)
	\end{align}   
	for $s = 1, \dots, n$.
\end{theorem}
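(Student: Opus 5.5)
We apply $\HBCdual_s$ (acting on $\bm{\lambda}_n$) directly to the Mellin--Barnes representation \eqref{Psi-MB5}. Since the prefactor $e^{\beta e^{-x_1}}$ does not depend on $\bm{\lambda}_n$, we get
\begin{align*}
	\HBCdual_s \, \Psi_{\bm{\lambda}_n}(\bm{x}_n) = e^{\beta e^{-x_1}} \int_{(\mathbb{R} - \imath \epsilon)^n} d\bm{\gamma}_n \; \hat{\mu}(\bm{\gamma}_n) \, \bigl[ \HBCdual_s(\bm{\lambda}_n) \, K_1(\bm{\lambda}_n, \bm{\gamma}_n) \bigr] \, \Phi_{\bm{\gamma}_n}(\bm{x}_n).
\end{align*}
To justify moving the difference operator inside the integral, we fix $\bm{\lambda}_n$ and choose $\epsilon$ larger than $|\Im \lambda_j| + 1$. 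Then all shifted arguments $\lambda_j \pm \imath$ stay inside the region where, by Proposition~\ref{prop:bc-mb-bound}, the integral converges absolutely and locally uniformly. The rational coefficients of $\HBCdual_s$ have poles at $\lambda_j = \pm\lambda_k$, $\lambda_j=0$ and $\lambda_j=\pm\imath/2$. These poles cancel in the total because the result is analytic and $\Psi$ is entire and $BC_n$-symmetric (Corollary~\ref{cor:Psi-an}, Theorem~\ref{thm:Psi-sym}), so it suffices to work with generic $\bm{\lambda}_n$ and extend by continuity.

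Next we invoke Proposition~\ref{propK} with $m=n$, $i=1$. It replaces $\HBCdual_s(\bm{\lambda}_n) K_1$ by $\Hdual_s(-\bm{\gamma}_n) K_1$, where $\Hdual_s(-\bm{\gamma}_n)$ consists of shifts $\gamma_j \to \gamma_j + \imath$ with rational coefficients $\prod_{j\in J,k\notin J} (\imath\gamma_k-\imath\gamma_j)^{-1}$. We then move this operator onto the remaining factor by formal self-adjointness with respect to $\hat{\mu}$, i.e. by the identity
\begin{align*}
	\int d\bm{\gamma}_n \, \hat{\mu}(\bm{\gamma}_n) \, [\Hdual_s(-\bm{\gamma}_n) f](\bm{\gamma}_n) \, h(\bm{\gamma}_n) = \int d\bm{\gamma}_n \, \hat{\mu}(\bm{\gamma}_n) \, f(\bm{\gamma}_n) \, [\Hdual_s(\bm{\gamma}_n) h](\bm{\gamma}_n).
\end{align*}
In concrete terms, for each subset $J$ we substitute $\gamma_j \to \gamma_j - \imath$ for $j\in J$. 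This shifts the contours of those variables from $\mathbb{R}-\imath\epsilon$ to $\mathbb{R}-\imath(\epsilon+1)$ in the term containing $\Phi$, and we use the difference equation
\begin{align*}
	\frac{\hat{\mu}(\bm{\gamma}_n - \imath \bm{e}_J)}{\hat{\mu}(\bm{\gamma}_n)} = \prod_{j\in J,\,k\notin J} \frac{(\imath\gamma_j-\imath\gamma_k)(\imath\gamma_k-\imath\gamma_j+1)}{\dots},
\end{align*}
which follows from $\Gamma(z+1)=z\Gamma(z)$ and turns the coefficients of $\Hdual_s(-\bm{\gamma})$ into those of $\Hdual_s(\bm{\gamma})$. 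After this step $\Hdual_s(\bm{\gamma}_n)$ acts on $\Phi_{\bm{\gamma}_n}(\bm{x}_n)$, and by \eqref{Hdual-Phi} it produces the factor $e^{x_{n-s+1}+\dots+x_n}$. Pulling this factor out and returning the contours to $\mathbb{R}-\imath\epsilon$ recovers \eqref{Psi-MB5}, which gives \eqref{Hdual-Psi}.

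The main obstacle is the contour shift. When we move the $\gamma_j$, $j\in J$, contours by $-\imath$, we must check two things. First, no poles are crossed: the poles of $K_1$ lie at $\gamma_j = \pm\lambda_k + \imath m$, and these are avoided once $\epsilon > |\Im\lambda_k|$. The zeros of $\hat{\mu}$ cancel the apparent poles of the coefficients at $\gamma_j=\gamma_k$, and a shift by $\imath$ crosses $\gamma_j=\gamma_k+\ldots$ only through factors that $\hat{\mu}$ kills. Second, the integrand must decay along horizontal directions so that the shifted integrals remain absolutely convergent. For this we rely on the Stirling-type bounds of Appendix~\ref{sec:MB-bounds}.

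If handling the individual $J$-terms proves messy, there is an alternative route. We can use the independence of \eqref{Psi-MB5} from $\epsilon$ and apply the $GL$ result in integrated form: since $e^{-\beta e^{-x_1}}\Psi$ is the $GL$-transform of $K_1$, the intertwining property transfers to the multiplication operator through the completeness relation \eqref{gl-compl-2}. The heart of the argument remains the single algebraic step provided by Proposition~\ref{propK}.
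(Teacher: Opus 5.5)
Your proposal is correct and is essentially the paper's proof: interchange $\HBCdual_s$ with the Mellin--Barnes integral \eqref{Psi-MB5} for $\epsilon$ large enough, replace $\HBCdual_s(\bm{\lambda}_n)K_1$ by $\Hdual_s(-\bm{\gamma}_n)K_1$ via Proposition~\ref{propK}, move the $GL$ dual Hamiltonian onto $\Phi_{\bm{\gamma}_n}(\bm{x}_n)$ using its symmetry with respect to $\hat{\mu}$, and finish with \eqref{Hdual-Phi}. The only difference is cosmetic. You use the bilinear form of that symmetry and spell out the contour shift, which needs $\epsilon-1>|\Im\lambda_k|$ as you arranged. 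The paper instead substitutes $\bm{\gamma}_n=\bm{\rho}_n-\imath\epsilon\bm{e}_n$, writes $\Phi_{\bm{\rho}_n}=\overline{\Phi_{-\bm{\rho}_n}}$, and invokes sesquilinear self-adjointness.
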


\begin{proof}
	By Proposition~\ref{prop:bc-mb-bound}, the integral~\eqref{Psi-MB5} converges uniformly in $\lambda_j$ from compact subsets of the strips $|\Im \lambda_j| < \epsilon$. Hence, taking $\epsilon > 1$ we can interchange difference operators $\HBCdual_s$ with integration in~\eqref{Hdual-Psi} and then use Proposition~\ref{propK}
	\begin{align}
		\begin{aligned}
			\HBCdual_s(\bm{\lambda}_n) \,	\Psi_{\bm{\lambda}_n}(\bm{x}_n) & = e^{\beta e^{-x_1}}  \int_{(\mathbb{R} - \imath \epsilon)^n}  d\bm{\gamma}_n \; \hat{\mu}(\bm{\gamma}_n) \, \Bigl[ \HBCdual_s(\bm{\lambda}_n) K_1(\bl_n,\bg_n) \Bigr ]\, \Phi_{\bm{\gamma}_n}(\bm{x}_n) \\[6pt]
			& = e^{\beta e^{-x_1}}  \int_{(\mathbb{R} - \imath \epsilon)^n}  d\bm{\gamma}_n \; \hat{\mu}(\bm{\gamma}_n) \, \Bigl[ \Hdual_s(-\bm{\gamma}_n) K_1(\bl_n,\bg_n) \Bigr ] \, \Phi_{\bm{\gamma}_n}(\bm{x}_n).
		\end{aligned}
	\end{align}
	Next shift integration variables $\gamma_j = \rho_j - \imath \epsilon$, so that $\rho_j \in \mathbb{R}$ and
	\begin{align}
		\Phi_{\bm{\gamma}_n}(\bm{x}_n) = e^{\epsilon \underline{\bm{x}}_n} \, \Phi_{\bm{\rho}_n}(\bm{x}_n) = e^{\epsilon \underline{\bm{x}}_n} \, \overline{\Phi_{-\bm{\rho}_n}(\bm{x}_n)},
	\end{align}
	see~\eqref{Lker-expl} and~\eqref{Phi-GG-0}. As a result,
	\begin{align}
		\HBCdual_s(\bm{\lambda}_n) \,	\Psi_{\bm{\lambda}_n}(\bm{x}_n) = e^{\beta e^{-x_1} + \epsilon \underline{\bm{x}}_n }  \int_{\mathbb{R}^n}  d\bm{\rho}_n \; \hat{\mu}(\bm{\rho}_n) \, \Bigl[ \Hdual_s(-\bm{\rho}_n) K_1(\bl_n,\bm{\rho}_n - \imath \epsilon \bm{e}_n) \Bigr ] \, \overline{\Phi_{-\bm{\rho}_n}(\bm{x}_n)}.
	\end{align}
	Now recall that $GL$ dual Hamiltonians are symmetric with respect to the scalar product~\eqref{dual-sp} with the measure $\hat{\mu}(\bm{\rho}_n) = \hat{\mu}(-\bm{\rho}_n)$
	\begin{align}
		\HBCdual_s(\bm{\lambda}_n) \,	\Psi_{\bm{\lambda}_n}(\bm{x}_n) = e^{\beta e^{-x_1} + \epsilon \underline{\bm{x}}_n }  \int_{\mathbb{R}^n}  d\bm{\rho}_n \; \hat{\mu}(\bm{\rho}_n) \,  K_1(\bl_n,\bm{\rho}_n - \imath \epsilon \bm{e}_n) \, \overline{ \Bigl[ \Hdual_s(-\bm{\rho}_n) \, \Phi_{-\bm{\rho}_n}(\bm{x}_n) \Bigr] }.
	\end{align}
	It is left to use the fact that $\Phi_{-\bm{\rho}_n}(\bm{x}_n)$ diagonalize Hamiltonians $\Hdual_s$~\eqref{Hdual-Phi}.
\end{proof}

In a similar way Proposition~\ref{propK} implies that the function  
\beq\label{MB10}	\tilde{\Psi}_{\bm{\lambda}_n}(\bm{x}_n)= e^{-\beta e^{-x_1}} \hspace{-0.3cm} \int\limits_{(\mathbb{R} - \imath 0)^n} \!\! d\bm{\gamma}_n \; \mu(\bm{\gamma}_n)K_2(\bl_n,\bg_n)\Phi_{\bm{\gamma}_n}(\bm{x}_n)\eeq
also solves van Diejen--Emsiz equations 
\beqq \label{MB9} 	\HBCdual_s 	\, \tilde{\Psi}_{\bm{\lambda}_n}(\bm{x}_n) =
e^{x_{n - s + 1} + \ldots + x_n} \, \Psi_{\bm{\lambda}_n}(\bm{x}_n).\eeqq
By Proposition~\ref{prop:bc-mb2-bound} the integral~\eqref{MB10} is absolutely convergent for $| \Im \lambda_j | < \epsilon < g$, so that for the last property we in addition assume $g > 1$. 

As explained in Section~\ref{sec:results}, in the case $n = 1$ the Mellin--Barnes integrals~\eqref{Psi-MB5},~\eqref{MB10} represent two known expressions for Whittaker function, see Examples~\ref{ex:MB1-1},~\ref{ex:MB2-1}. Let us prove that they also coincide in the general case.

\begin{proposition}\label{propK2} 
	For $\bm{\lambda}_n \in \mathbb{R}^n$ and $\epsilon \in (0, g)$
	\begin{align} 
		\Psi_{\bm{\lambda}_n}(\bm{x}_n)  = e^{-\beta e^{-x_1}}  \int_{(\mathbb{R} - \imath \epsilon)^n}  d\bm{\gamma}_n \; \hat{\mu}(\bm{\gamma}_n) \, K_2(\bm{\lambda}_n, \bm{\gamma}_n) \, \Phi_{\bm{\gamma}_n}(\bm{x}_n).
	\end{align}
\end{proposition}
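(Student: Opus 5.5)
Starting from Theorem~\ref{theorem2.3}, I will show that the $K_1$ Mellin--Barnes integral multiplied by $e^{\beta e^{-x_1}}$ coincides with the $K_2$ integral multiplied by $e^{-\beta e^{-x_1}}$. Equivalently,
\begin{align*}
	e^{2\beta e^{-x_1}} \int_{(\mathbb{R} - \imath \epsilon)^n} d\bm{\gamma}_n \; \hat{\mu}(\bm{\gamma}_n) \, K_1(\bm{\lambda}_n, \bm{\gamma}_n) \, \Phi_{\bm{\gamma}_n}(\bm{x}_n) = \int_{(\mathbb{R} - \imath \epsilon)^n} d\bm{\gamma}_n \; \hat{\mu}(\bm{\gamma}_n) \, K_2(\bm{\lambda}_n, \bm{\gamma}_n) \, \Phi_{\bm{\gamma}_n}(\bm{x}_n).
\end{align*}
As in the proof of Theorem~\ref{theorem2.3}, I will work in the spectral picture rather than the spatial one. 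The function $\Phi_{\bm{\gamma}_n}$ is an eigenfunction of the reflected dual Baxter operator $\hat{Q}'_n(x)$ with eigenvalue $\exp(-e^{x-x_1})$, see~\eqref{Qprimec}. Choosing $e^{x} = \mp 2\beta$ formally, by analytic continuation in the parameter $e^x$, the factor $e^{\pm 2\beta e^{-x_1}}$ becomes multiplication by an eigenvalue of such an integral operator.

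Rather than rely on that continuation, I plan to compute directly. Expand $e^{2\beta e^{-x_1}} = \sum_k (2\beta)^k e^{-kx_1}/k!$. A cleaner alternative is to use the inversion formula~\eqref{inv-f}: it suffices to show that the $GL$ transform of $e^{-\epsilon\underline{\bm{x}}_n+\beta e^{-x_1}}\Psi_{\bm{\lambda}_n}$ equals $K_2(\bm{\lambda}_n,\bm{\gamma}_n-\imath\epsilon\bm{e}_n)$. Substituting the $K_1$ representation and exchanging integrals, this reduces to evaluating
\begin{align*}
	\int_{\mathbb{R}^n} d\bm{x}_n \; \Phi_{-\bm{\rho}_n}(\bm{x}_n) \, \Phi_{\bm{\gamma}_n}(\bm{x}_n) \, e^{2\beta e^{-x_1}}
\end{align*}
as a kernel in $(\bm{\rho}_n,\bm{\gamma}_n)$. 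This integral diverges, so I would instead pass to the reverse direction: substitute the $K_2$ representation into the $GL$ transform of $e^{-\beta e^{-x_1}}\Psi$, i.e.\ the left side of~\eqref{GG20}. This requires the convergent integral
\begin{align*}
	\int d\bm{x}_n\,\Phi_{-\bm{\rho}_n}\Phi_{\bm{\gamma}_n}e^{-2\beta e^{-x_1}}.
\end{align*}
That integral can be evaluated via the Mellin--Barnes form $\hat{\Lambda}'$ and degenerate Gustafson integrals (the $GL$ analogue of Proposition~\ref{prop:gl-bc-prod} with $\Psi$ replaced by $\Phi$). I expect it to equal a product of Gamma functions, $(2\beta)^{\imath(\underline{\bm{\gamma}}-\underline{\bm{\rho}})}\prod_{j,k}\Gamma(\imath\rho_j-\imath\gamma_k)$ up to normalization. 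Then Proposition~\ref{prop:gl-bc-prod} reduces the claim to the integral identity
\begin{align*}
	\int d\bm{\gamma}_n\,\hat{\mu}(\bm{\gamma}_n)\,K_2(\bm{\lambda}_n,\bm{\gamma}_n)\,(2\beta)^{\imath\underline{\bm{\gamma}}_n}\prod_{j,k}\Gamma(\imath\rho_j-\imath\gamma_k) = (2\beta)^{\imath\underline{\bm{\rho}}_n}K_1(\bm{\lambda}_n,\bm{\rho}_n).
\end{align*}
The left side is free of $\beta$, as is the right side after cancellation of the powers of $2\beta$.

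This is a Gustafson type integral, and the paper indicates it is the one calculated in~\cite{DM}. The integrand contains
\begin{align*}
	\frac{\prod\Gamma(\imath\gamma_j\pm\imath\lambda_k)\,\Gamma(g-\imath\gamma_j)\,\Gamma(\imath\rho_j-\imath\gamma_k)}{\prod_{j\ne k}\Gamma(\imath\gamma_j-\imath\gamma_k)\,\prod_{j<k}\Gamma(\imath\gamma_j+\imath\gamma_k)},
\end{align*}
and it should evaluate to
\begin{align*}
	\prod\Gamma(g\pm\imath\lambda_k)\,\frac{\prod\Gamma(\imath\rho_j\pm\imath\lambda_k)}{\prod_{j<k}\Gamma(\imath\rho_j+\imath\rho_k)\prod_j\Gamma(g+\imath\rho_j)}.
\end{align*}
Matching to~\cite{DM} (a $BC$--$GL$ mixed Gustafson integral with $2n+1$ upper and $n$ lower parameters, satisfying the balancing condition $\underline{\bm{\rho}}-\underline{\bm{\gamma}}$ appropriately) is the main obstacle. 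This includes verifying the contour separation for $0<\epsilon<g$, where the poles of $\Gamma(g-\imath\gamma)$ lie above the contour and those of $\Gamma(\imath\gamma\pm\imath\lambda)$ below. The final step is justifying the Fubini exchanges via the bounds of Propositions~\ref{prop:bc-mb-bound} and~\ref{prop:bc-mb2-bound}. The case $n=1$ (Examples~\ref{ex:MB1-1},~\ref{ex:MB2-1}) serves as a check via Barnes' second lemma.
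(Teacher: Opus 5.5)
Your final route is viable and uses the paper's two ingredients, but it reaches them by a detour through $x$-space.

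\textbf{The paper's direct argument.} The paper works on the $K_2$ integral itself.
\begin{itemize}
\item It writes $e^{-\beta e^{-x_1}} = e^{\beta e^{-x_1}}\, e^{-e^{\ln(2\beta) - x_1}}$.
\item It replaces $e^{-2\beta e^{-x_1}}\,\Phi_{\bm{\gamma}_n}(\bm{x}_n)$ by the dual Baxter integral \eqref{Qhr-expl}, with the \emph{real} parameter $x = \ln(2\beta)$ and contour $\Im \rho_j = r < -\epsilon$.
\item It swaps the $\bm{\gamma}_n$ and $\bm{\rho}_n$ integrations.
\item It evaluates the $\bm{\gamma}_n$ integral by \eqref{Gust} with $a_j = \imath\rho_j$ ($j \leq n$), $a_{n+1} = g$, $b_k = \imath\lambda_k$.
\end{itemize}
What remains is the $K_1$ representation, which is $\Psi_{\bm{\lambda}_n}$ by Theorem~\ref{theorem2.3}.

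So the analytic continuation in $e^x$ that made you drop your first idea is never needed. It arose only because you put the exponential on the $K_1$ side as $e^{+2\beta e^{-x_1}}$; on the $K_2$ side it is a genuine eigenvalue. Your overlap
\[
\int d\bm{x}_n \, \Phi_{-\bm{\rho}_n}(\bm{x}_n)\, \Phi_{\bm{\gamma}_n}(\bm{x}_n)\, e^{-2\beta e^{-x_1}} = (2\beta)^{\imath(\underline{\bm{\gamma}}_n - \underline{\bm{\rho}}_n)} \prod_{j,k} \Gamma(\imath\rho_j - \imath\gamma_k)
\]
is \eqref{Qhr-expl} read through the $GL$ transform. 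It holds for $\Im\rho_j < \Im\gamma_k$, with normalization exactly one. Your final integral identity is literally the paper's Gustafson evaluation.

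\textbf{What the detour costs.} It buys nothing and adds two steps you have not supplied.
\begin{itemize}
\item The overlap formula needs its own proof. Deriving it from \eqref{Qhr-expl} requires $GL$ orthogonality; otherwise you need a Gauss--Givental induction as in Proposition~\ref{prop:gl-bc-prod}.
\item More seriously, you must conclude that the $K_2$ integral $\tilde{\Psi}_{\bm{\lambda}_n}$ equals $\Psi_{\bm{\lambda}_n}$ because their weighted $GL$ transforms agree. That needs injectivity of the transform on a class containing $e^{-\epsilon' \underline{\bm{x}}_n - \beta e^{-x_1}}\, \tilde{\Psi}_{\bm{\lambda}_n}$. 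The inversion formula \eqref{inv-f} is only available on $\mathcal{T}$, and nothing you cite puts the $K_2$ integral there. You would have to derive a decay estimate for it, for example from $|\Phi_{\bm{\gamma}_n}(\bm{x}_n)| \leq e^{\epsilon \underline{\bm{x}}_n}\, \Phi_{0,\dots,0}(\bm{x}_n)$ on the contour. You would then also need an $L^2$ Plancherel-type uniqueness result.
\end{itemize}
The paper's argument needs only absolute convergence for a single Fubini exchange.

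\textbf{Smaller slips.}
\begin{itemize}
\item Your contour separation is reversed. The poles $\gamma = -\imath(g+m)$ of $\Gamma(g - \imath\gamma)$, like $\gamma_k = \rho_j - \imath m$, lie below $\Im\gamma = -\epsilon$. The poles $\gamma = \mp\lambda_k + \imath m$ lie above it.
\item \eqref{Gust} has $n+1$ parameters $a_j$ and $n$ parameters $b_k$, with no balancing condition.
\item The case $n=1$ is Barnes' first lemma, not the second.
\end{itemize}
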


\begin{proof}
	We have to prove the equality
	\begin{align} \label{MB-1}
		& e^{\beta e^{-x_1}} \hspace{-0.3cm} \int\limits_{(\mathbb{R} - \imath \epsilon)^n} \!\! d\bm{\gamma}_n \; \hat{\mu}(\bm{\gamma}_n) \, \frac{ \prod\limits_{j,k = 1}^n \Gamma(\imath \gamma_j \pm \imath \lambda_k) }{ \prod\limits_{1 \leq j < k \leq n} \Gamma(\imath \gamma_j + \imath \gamma_k) \, \prod\limits_{j = 1}^n \Gamma (g + \imath \gamma_j ) } \, (2\beta)^{-\imath \underline{\bm{\gamma}}_n } \, \Phi_{\bm{\gamma}_n}(\bm{x}_n) \\[6pt] \label{MB-2}
		& = e^{-\beta e^{-x_1}} \hspace{-0.3cm} \int\limits_{(\mathbb{R} - \imath \epsilon)^n} \!\! d\bm{\gamma}_n \; \hat{\mu}(\bm{\gamma}_n) \, \frac{ \prod\limits_{j,k = 1}^n \Gamma(\imath \gamma_j \pm \imath \lambda_k) \, \prod\limits_{j = 1}^n \Gamma (g - \imath \gamma_j )}{ \prod\limits_{1 \leq j < k \leq n} \Gamma(\imath \gamma_j + \imath \gamma_k) \, \prod\limits_{j = 1}^n \Gamma (g \pm \imath \lambda_j ) } \, (2\beta)^{-\imath \underline{\bm{\gamma}}_n } \, \Phi_{\bm{\gamma}_n}(\bm{x}_n).
	\end{align}
	For that we need two ingredients. The first one is the action of $GL$~Toda dual (reflected) Baxter operator $\hat{Q}'_n(x)$  on $\Phi_{\bm{\gamma}_n}(\bm{x}_n)$, see \rf{Qprime} and \rf{Qprimec}, 
	which explicitly reads
	\begin{align} \label{Qhr-expl}
		\int\limits_{(\mathbb{R} + \imath r)^n} \!\! d\bm{\rho}_n \; \hat{\mu}(\bm{\rho}_n) \, e^{\imath (\underline{\bm{\gamma}}_n - \underline{\bm{\rho}}_n) x} \, \prod_{j, k = 1}^n \Gamma(\imath \rho_j - \imath \gamma_k) \, \Phi_{\bm{\rho}_n}(\bm{x}_n) =  e^{-e^{x - x_1}} \, \Phi_{\bm{\gamma}_n}(\bm{x}_n).
	\end{align}
	Here $r < \Im \gamma_k$ for all $k = 1, \dots, n$. 
	
	The second ingredient is the Gustafson type of integral 
	\begin{align}\label{Gust}
		\int d\bm{\gamma}_n \; \hat{\mu}(\bm{\gamma}_n) \; \frac{ \prod\limits_{j = 1}^{n + 1} \prod\limits_{k = 1}^{n} \Gamma(a_j- \imath \gamma_k) \;  \prod\limits_{j, k = 1}^n \Gamma(\imath \gamma_j \pm b_k) \,}{\prod\limits_{1 \leq j < k \leq n} \Gamma(\imath \gamma_j + \imath \gamma_k)} = \frac{\prod\limits_{j = 1}^{n + 1} \prod\limits_{k = 1}^{n} \Gamma(a_j \pm b_k)}{ \prod\limits_{1 \leq j < k \leq n + 1} \Gamma( a_j + a_k) },
	\end{align}
	where we assume that integration contours separate series of poles from gamma functions
	\begin{align}
		\gamma_j = - \imath a_k - \imath n, \qquad \gamma_j = \pm \imath b_k + \imath n, \qquad n \in \mathbb{N}_0,
	\end{align}
	which go downwards and upwards correspondingly. The above integral was calculated in \cite[(3)]{DM}. Let us remark that this calculation was based on the assumption of orthogonality and completeness of various spin chain eigenfunctions, which has been proven later in \cite{DKM}. 
	
	Now consider the integral~\eqref{MB-2} and rewrite the exponent behind the integral
	\begin{align}
		e^{- \beta e^{-x_1}} = e^{\beta e^{x_1}} \, e^{- e^{\ln (2\beta) - x_1}}.
	\end{align}
	Notice that the second factor coincides with the eigenvalue of Baxter operator~\eqref{Qhr-expl} with the parameter $x = \ln(2\beta)$. Therefore, we can rewrite the second representation~\eqref{MB-2} using the integral~\eqref{Qhr-expl}
	\begin{multline} \label{MB-2-2}
		e^{\beta e^{-x_1}} \hspace{-0.3cm} \int\limits_{(\mathbb{R} - \imath \epsilon)^n} \!\! d\bm{\gamma}_n \; \hat{\mu}(\bm{\gamma}_n) \, \frac{ \prod\limits_{j,k = 1}^n \Gamma(\imath \gamma_j \pm \imath \lambda_k) \, \prod\limits_{j = 1}^n \Gamma (g - \imath \gamma_j )}{ \prod\limits_{1 \leq j < k \leq n} \Gamma(\imath \gamma_j + \imath \gamma_k) \, \prod\limits_{j = 1}^n \Gamma (g \pm \imath \lambda_j ) } \, (2\beta)^{-\imath \underline{\bm{\gamma}}_n } \\[6pt]
		\times \int\limits_{(\mathbb{R} + \imath r)^n} \!\! d\bm{\rho}_n \; \hat{\mu}(\bm{\rho}_n) \, (2\beta)^{\imath \underline{\bm{\gamma}}_n - \imath \underline{\bm{\rho}}_n} \, \prod_{j, k = 1}^n \Gamma(\imath \rho_j - \imath \gamma_k) \, \Phi_{\bm{\rho}_n}(\bm{x}_n) .
	\end{multline}
	Since $\Im \gamma_j = - \epsilon$, we should take $r < - \epsilon$. Next let us change order of integrals and integrate over $\bm{\gamma}_n$ first. The corresponding integral coincides with Gustafson integral~\eqref{Gust}, so it can be calculated explicitly
	\begin{multline}
		\int\limits_{(\mathbb{R} - \imath \epsilon)^n} \!\! d\bm{\gamma}_n \; \hat{\mu}(\bm{\gamma}_n) \; \frac{ \prod\limits_{j,k = 1}^n \Gamma(\imath \gamma_j \pm \imath \lambda_k) \, \prod\limits_{j, k = 1}^n \Gamma(\imath \rho_j - \imath \gamma_k) \, \prod\limits_{j = 1}^n \Gamma (g - \imath \gamma_j )}{ \prod\limits_{1 \leq j < k \leq n} \Gamma(\imath \gamma_j + \imath \gamma_k) } \\[6pt]
		= \frac{ \prod\limits_{j,k = 1}^n \Gamma(\imath \rho_j \pm \imath \lambda_k) \, \prod\limits_{j = 1}^n \Gamma(g \pm \imath \lambda_j) }{ \prod\limits_{1 \leq j < k \leq n} \Gamma(\imath \rho_j + \imath \rho_k) \, \prod\limits_{j = 1}^n \Gamma(g + \imath \rho_j)}.
	\end{multline}
	Inserting the right hand side into expression~\eqref{MB-2-2} one obtains the first Mellin--Barnes representation~\eqref{MB-1}.
\end{proof}

\subsubsection{QISM technique} \label{sec:QISM}
Recall the monodromy matrices of $GL$ and $BC$ types~\eqref{QI3},~\rf{QI4}
\begin{align}  \label{QI3-2}
	& T_n(u) = L_n(u) \cdots L_1(u) = 
	\begin{pmatrix}
		A_n(u) & B_n(u) \\[4pt]
		C_n(u) & D_n(u)
	\end{pmatrix}, \\[6pt] \label{QI4-2}
	& \T_n(u) 
	= T_n(u) \, K(u) \, \sigma_2 \, T_n^t(-u) \, \sigma_2 =
	\begin{pmatrix}
		\A_n(u) & \B_n(u) \\[4pt]
		\C_n(u) & \D_n(u)
	\end{pmatrix},
\end{align}
where
\begin{align}
	 L_j(u) = 
 	\begin{pmatrix}
 		u + \imath \partial_{x_j} & e^{-x_j} \\[4pt]
 		-e^{x_j} & 0
 	\end{pmatrix}, \qquad 
	 K(u) =  
	 	\begin{pmatrix}
	 		-\alpha & u - \frac{\imath}{2} \\[6pt]
	 		- \beta^2 \bigl(u - \frac{\imath}{2} \bigr) & -\alpha
	 	\end{pmatrix}, \qquad 
	 	\sigma_2=\begin{pmatrix}
		0& -\imath \\[4pt]\imath&0\end{pmatrix}.
\end{align}
In this section we follow the approach of the works~\cite{KL, KL2, IS} to prove the relations 
\begin{align}
	\label{QI10}
	& \B_n(u) \, \Psi_{\bm{\lambda}_n}(\bm{x}_n) = (-1)^n \biggl(u - \frac{\imath}{2} \biggr) \prod_{j = 1}^n (u^2 - \lambda_j^2) \, \Psi_{\bm{\lambda}_n}(\bm{x}_n), \\[6pt]
	& \D(\pm \lambda_k) \, \Psi_{\bm{\lambda}_n}(\bx_n) = - \beta (g \pm \imath \lambda_k) \, \Psi_{\l_1,\ldots, \l_k\mp \imath,\ldots \l_n}(\bx_n), \qquad k = 1, \dots, n. \label{QI11}
\end{align}
These relations were derived in \cite{BDK} using Gauss--Givental representation. 
Nevertheless, here we present their derivation using Mellin--Barnes representation and classical QISM technique.

Note that the formulas \rf{QI11}, together with the relation
\begin{align}
	\D_n( \imath/2) = -\alpha, 
\end{align} 
which follows from definition, give $2n + 1$ points to interpolate the action of polynomial~$\D_n(u)$ of degree $2n$. The differential-difference system of equations \rf{QI10}, \rf{QI11} fixes the wave functions $\Psi_{\bm{\lambda}_n}(\bm{x}_n)$ up to multiplication by $\imath$-periodic function of $\lambda_k$. Hence, it is natural to we expect that one can derive dual system of van Diejen--Emsiz equations from them. 

\paragraph{Action of $\B_n(u)$.} Due to \rf{QI4-2} the generating function of $BC$ Hamiltonians can be expressed in terms of $GL$ monodromy matrix elements
 \begin{align}\label{QI5} 
 	\begin{aligned}
 		\B_n(u)& =\alpha\left( A_n(u)B_n(-u)-B_n(u)A_n(-u)\right)\\[6pt] 
 		& + \left(u-\frac{\imath}{2}\right)\left(A_n(u)A_n(-u)+\beta^2B_n(u)B_n(-u)\right) .
 	\end{aligned}
 \end{align}
By Theorem~\ref{theorem2.3},
\beq\label{Kn6} \Psi_{\bl_n}(\bx_n)=e^{\beta e^{-x_1}}\int_{(\R-\imath \epsilon)^n}d\bt_n \;  Q(\bl_n,\bt_n) \, \Phi_{\bt_n}(\bx_n) \eeq
where $\epsilon > 0$ and we denote
\beq \label{Kn7} Q(\bl_n,\bt_n)=\frac{(2\b)^{-\imath \underline{\bm{t}}_n} \, \prod\limits_{a,i=1}^n\Gamma\left(\imath(\pm\l_i+t_a)\right)}
{\prod\limits_{a\not= b} \Gamma\left(\imath(t_a-t_b)\right)\prod\limits_{a<b}\Gamma\left(\imath(t_a+t_b)\right) \, \prod\limits_{a=1}^n\Gamma\left(g +\imath t_a\right)}.
\eeq
So, to prove~\eqref{QI10} we need to calculate the action of $\B_n(u)$ on $e^{\beta e^{-x_1}} \, \Phi_{\bm{t}_n}(\bm{x}_n)$. The action of operators $A_n(u)$ and $B_n(u)$ on $GL$ wave function is known~\cite[(7), (8)]{IS}
\begin{align}\label{Kn4} 
	& A_n(u) \, \Phi_{\bt_n}(\bx_n)=\prod_{l=1}^n(u-t_l) \, \Phi_{\bt_n}(\bx_n),\\
	& B_n(u) \, \Phi_{\bt_n}(\bx_n)=\imath^{n-1}\sum_{p=1}^n\Biggl(\prod_{l\not=p}\frac{u-t_l}{t_p
		-t_l}\Biggr) \, \Phi_{t_1,\ldots, t_p+\imath,\ldots, t_n}(\bx_n). \label{Kn5}
\end{align}
From definition~\eqref{QI3-2} we have
$$A_n(u)=\tilde{A}_{n-1}(u) \, \imath \partial_{x_1} + a_n(u), \qquad B_n(u)=\tilde{A}_{n-1}(u) \, e^{-x_1},$$
where $a_n(u)$ does not contain $\partial_{x_1}$ and $\tilde{A}_{n-1}(u)$
is the $(1,1)$ element of the matrix
$$L_n(u)\ldots L_2(u).$$ 
Thus, 
\beq\label{Kn9} A_n(u)=B_n(u) \, e^{x_1} \, \imath \partial_{x_1}+a_n(u).\eeq 
The operators $A_n(u)$ and $B_n(u)$ when acting on the function~\eqref{Kn6} may act either on the function $\Phi_{\bt_n}(\bx_n)$ or on the prefactor $e^{\beta e^{-x_1}}$. Denote by $A^0_n(u)$ and $B^0_n(u)$ their parts which act only on the function $\Phi_{\bt_n}(\bx_n)$.
Since
\beq\label{Kn10} e^{x_1}\, \imath \partial_{x_1} \, e^{\beta e^{-x_1}}= -\imath\beta \, e^{\beta e^{-x_1}}\eeq
we have the relations
\begin{align} \label{AB0}
	A_n(u) = A_n^0(u) - \imath \beta B_n^0(u), \qquad B_n(u)=B^0_n(u).
\end{align}
For future reference, note also that similarly
\begin{align} \label{CD0}
	C_n(u) = C_n^0(u) - \imath \beta D_n^0(u), \qquad D_n(u) = D_n^0(u).
\end{align}
Inserting~\eqref{AB0} into~\eqref{QI5} we obtain
\beq\label{Kn12} \begin{split}\B_n(u)& =\a\left( A^0_n(u)B^0_n(-u)-B_n^0(u)A_n^0(-u)\right)+ \left(u-\frac{\imath}{2}\right)A^0_n(u)A^0_n(-u)\\ &- \imath\beta\left(u-\frac{\imath}{2}\right)\left( A^0_n(u)B^0_n(-u)+B_n^0(u)A_n^0(-u)\right). \end{split}
\eeq 
Calculate each term separately using \rf{Kn4} and \rf{Kn5}. We have
\begin{align*} 
	& A^0_n(u)A^0_n(-u) \, \Phi_{\bt_n}(\bx_n)=\prod_{l=1}^n(t_l^2-u^2) \, \Phi_{\bt_n}(\bx_n),\\[6pt]
	& A^0_n(u)B^0_n(-u) \, \Phi_{\bt_n}(\bx_n)=\imath^{n-1}\sum_{p=1}^n(u-t_p-\imath)\prod_{l\not=p}\frac{t_l^2-u^2}{t_p-t_l}
	\, \Phi_{t_1,\ldots, t_p+\imath,\ldots,t_n}(\bx_n),\\[6pt]
	& B^0_n(u)A^0_n(-u) \, \Phi_{\bt_n}(\bx_n)=\imath^{n-1}\sum_{p=1}^n(-u-t_p)\prod_{l\not=p}\frac{t_l^2-u^2}{t_p-t_l}
	\, \Phi_{t_1,\ldots, t_p+\imath,\ldots,t_n}(\bx_n).
\end{align*}
Consequently,
\beqq \begin{split}
	& \left(A^0_n(u)B^0_n(-u)-	B^0_n(u)A^0_n(-u)\right) \Phi_{\bt_n}(\bx_n) \\[6pt]
	& = 2\imath^{n-1}\left(u-\frac{\imath}{2}\right)\sum_{p=1}^n\prod_{l\not=p}\frac{t_l^2-u^2}{t_p-t_l}
	 \, \Phi_{t_1,\ldots, t_p+\imath,\ldots,t_n}(\bx_n) \end{split}
\eeqq
and
\beqq \begin{split}
	& \left(A^0_n(u)B^0_n(-u)+	B^0_n(u)A^0_n(-u)\right) \Phi_{\bt_n}(\bx_n)\\[6pt]
	& = -2\imath^{n-1}\left(t_p+\frac{\imath}{2}\right)\sum_{p=1}^n\prod_{l\not=p}\frac{t_l^2-u^2}{t_p-t_l}
	\, \Phi_{t_1,\ldots, t_p+\imath,\ldots,t_n}(\bx_n). \end{split}
\eeqq
Thus,
\begin{multline} \label{Kn14}
	\biggl(\a\left( A^0_n(u)B^0_n(-u)-B_n^0(u)A_n^0(-u)\right) - \imath\beta\left(u-\frac{\imath}{2}\right)\left( A^0_n(u)B^0_n(-u)+B_n^0(u)A_n^0(-u)\right)\biggr)\Phi_{\bt_n}(\bx_n)\\[6pt]
	= 2\beta \, \imath^{n-1}\left(u-\frac{\imath}{2}\right)\sum_{p=1}^n\left( \imath t_p + g - 1\right)\prod_{l\not=p}\frac{t_l^2-u^2}{t_p-t_l}
	\, \Phi_{t_1,\ldots, t_p+\imath,\ldots,t_n}(\bx_n),
\end{multline}
where, as before, $g = 1/2 + \alpha/\beta$. Substitute the expression \rf{Kn14} into the integral \rf{Kn6} and make a change of variable
in each summand $t_p\to t_p-\imath$. We then get precisely the same expression, as in \cite[Appendix B]{IS},
\begin{multline}\label{Kn15}
	\biggl(\a\left( A^0_n(u)B^0_n(-u)-B_n^0(u)A_n^0(-u)\right) - \imath \beta\left(u-\frac{\imath}{2}\right)\left( A^0_n(u)B^0_n(-u)+B_n^0(u)A_n^0(-u)\right)\biggr)\Psi_{\bt_n}(\bx_n)\\[6pt]
	= \left(u-\frac{\imath}{2}\right)\sum_{p=1}^ne^{\beta e^{-x_1}}\int_{(\R - \imath \epsilon)^n}d\bt_n \;  Q(\bl_n,\bt_n)\prod_{l\not=p}\frac{t_l^2-u^2}{t_l^2-t_p^2}\prod_{a=1}^n(\l_a^2-t_p^2) \, \Phi_{\bt_n}(\bx_n).
\end{multline}
Here we also shifted integration contour for $t_p$, which is justified by bounds from Appendix~\ref{sec:MB-bounds-bc}.
The rest is the same, as in \cite[Appendix B]{IS}: using the relation 
\beq\label{Kn16} A^0_n(u)A^0_n(-u) \, \Psi_{\bt_n}(\bx_n)=e^{\beta e^{-x_1}}\int_{(\R - \imath \epsilon)^n}d\bt_n \; Q(\bl_n,\bt_n)\prod_{a=1}^n(\l_a^2-u^2) \, \Phi_{\bt_n}(\bx_n) \eeq
and the interpolation type identity \cite[(46)]{IS}
\beq\label{Kn17}\sum_{p=1}^n\prod_{l\not=p}\frac{t_l^2-u^2}{t_l^2-t_p^2}\prod_{a=1}^n(\l_a^2-t_p^2)+
\prod_{l=1}^n(t_l^2-u^2)=\prod_{a=1}^n(\l_a^2-u^2)\eeq
we arrive at the desired formula \rf{QI10}. 

\paragraph{Action of $\D_n(u)$.} By definition~\eqref{QI4-2},
\begin{align}
	\begin{aligned}
		\D_n(u) & = \alpha \bigl(C_n(u) B_n(-u) - D_n(u) A_n(-u)\bigr) \\[6pt]
		& + \biggl( u - \frac{\imath}{2} \biggr) \bigl( C_n(u) A_n(-u) + \beta^2 D_n(u) B_n(-u) \bigr).
	\end{aligned}
\end{align}
Inserting the relations~\eqref{AB0},~\eqref{CD0} into this we obtain
\begin{align}
	\begin{aligned}
		{\D}_n(u) &= \alpha \bigl(C_n^0(u) B_n^0(-u) - D_n^0(u) A_n^0(-u)\bigr) + \biggl(u - \frac{\imath}{2} \biggr) C_n^0(u) A_n^0(-u)\\[4pt]
		& - \imath \beta  \biggl(u - \frac{\imath}{2} \biggr)  \bigl( C_n^0(u) B_n^0(-u) + D_n^0(u) A_n^0(-u) \bigr).
	\end{aligned}
\end{align}
For brevity, denote
\begin{align}
	\bm{t} \equiv \bm{t}_n, \qquad \bm{t}^{\pm p} = (t_1, \dots, t_p \pm \imath, \ldots, t_n),\qquad t_{pq}=t_p-t_q.
\end{align}
To act with the operator $\D_n(u)$ on the Mellin--Barnes representation~\eqref{Kn6} we need the following known relations~\cite[(7), (8), (9), (11)]{IS}
\begin{align}\label{ABCD}
	\begin{aligned}
		& A_n^0(u) \, \Phi_{\bm{t}}= \prod_{l = 1}^n (u - t_l) \, \Phi_{\bm{t}}, \\[6pt]
		& B_n^0(u) \, \Phi_{\bm{t}} = \imath^{n - 1} \sum_{p = 1}^n \Biggl( \prod_{l \not=p}\frac{u - t_l}{t_{pl}} \Biggr) \, \Phi_{\bm{t}^{+p}}, \\[6pt]
		& C_n^0(u) \, \Phi_{\bm{t}} = \imath^{-n - 1} \sum_{p = 1}^n \Biggl( \prod_{l \not=p}\frac{u - t_l}{t_{pl}} \Biggr) \, \Phi_{\bm{t}^{-p}}, \\[8pt]
		& D_n^0(u) \,  \Phi_{\bm{t}}= \imath \sum_{p = 1}^n \Biggl( \prod_{l \not=p}\frac{u - t_l}{t_{pl}} \Biggr) \Biggl( \frac{1}{\prod\limits_{l \not=p} (t_{pl} - \imath)} - \frac{1}{\prod\limits_{l \not=p} (t_{pl} + \imath)} \Biggr) \, \Phi_{\bm{t}}\\[6pt]
		& \hspace{1.5cm} - \sum_{ \substack{p, q \\ p \not= q}} \frac{1}{(t_{qp} - \imath) \prod\limits_{l \not= p } t_{pl} } \Biggl( \prod_{l \not=p,q}\frac{u - t_l}{t_{ql}} \Biggr) \, \Phi_{\bm{t}^{+p, -q}}.
	\end{aligned}
\end{align}
From them we deduce the formulas, which already appear in \cite{IS},
\begin{align}
	& \begin{aligned}
		C_n^0(u) B_n^0(-u) \, \Phi_{\bm{t}} & = - \sum_{q} \Biggl( \prod_{l \not= q} \frac{t_l^2 - u^2}{t_{ql} (t_{ql} + \imath)} \Biggr) \, \Phi_{\bm{t}} \\[6pt]
		& + \sum_{\substack{p,q \\ p\not=q}} \frac{(u - t_p - \imath)(u + t_q)}{(t_{qp} - \imath)t_{pq}} \Biggl(\prod_{l \not=p,q} \frac{t_l^2 - u^2}{ t_{pl} t_{ql} } \Biggr) \, \Phi_{\bm{t}^{+p, - q}},
	\end{aligned} \\[10pt]
	& \begin{aligned}
		D_n^0(u) A_n^0(-u) \, \Phi_{\bm{t}} &= \sum_q \Biggl( \prod_{l \not= q} \frac{t_l^2 - u^2}{t_{ql} } \Biggr) \Biggl[ \frac{\imath (u + t_q)}{ \prod\limits_{ l \not= q}(t_{ql} + \imath)} -  \frac{\imath (u + t_q)}{ \prod\limits_{ l \not= q}(t_{ql} - \imath)}  \Biggr] \, \Phi_{\bm{t}}\\[6pt]
		& - \sum_{\substack{p,q \\ p\not=q}} \frac{(u + t_p )(u + t_q)}{(t_{qp} - \imath)t_{pq}} \Biggl(\prod_{l \not=p,q} \frac{t_l^2 - u^2}{ t_{pl} t_{ql} } \Biggr) \, \Phi_{\bm{t}^{+p, - q}},
	\end{aligned} \\[10pt]
	& C_n^0(u) A_n^0(-u) \, \Phi_{\bm{t}} = - \imath^{-n - 1} \sum_{q} (u + t_q)  \Biggl( \prod_{l \not= q} \frac{t_l^2 - u^2}{t_{ql} } \Biggr) \, \Phi_{\bm{t}^{-q}}.
\end{align}
Acting with ${\D}(u)$ on the eigenfunction \eqref{Kn6}, as always, we change integration variables and shift contours to write the result in the form
\begin{align}
	{\D}(u) \, \Psi_{\bm{\lambda}} = e^{\beta e^{-x_1}} \int_{(\mathbb{R} - \imath \epsilon)^n} d\bm{t} \; Q (\bm{t}) \, \Phi_{\bm{t}} \, R(u),
\end{align}
where for brevity we denote $Q(\bm{t}) \equiv Q(\bm{\lambda}_n, \bm{t}_n)$. The functions appearing after all shifts are divided into several parts
\begin{align}
	R(u) = \sum_{q = 1}^n \bigl[ R^1_q(u) + R^2_q(u) + R^3_q(u) + R^4_q(u) \bigr],
\end{align}
namely,
\begin{align} \label{R1}
	& R^1_q(u)  = -\imath^{-n - 1} \biggl(u - \frac{\imath}{2} \biggr) (u + t_q + \imath) \Biggl( \prod_{l \not= q} \frac{t_l^2 - u^2}{t_{ql} + \imath} \Biggr) \, \frac{Q(\bm{t}^{+q})}{Q(\bm{t})}, \\[6pt]  \label{R2}
	& \begin{aligned}
		R^2_q(u) = \sum^n_{ \substack{p = 1 \\ p \not= q} } \, & 2\beta \biggl( u - \frac{\imath }{2} \biggr)  \frac{(u + t_q + \imath)\bigl( \imath t_p + \frac{\alpha}{\beta } + \frac{1}{2} \bigr)}{(t_{qp} + \imath ) ( t_{pq} - 2\imath )}  \\[2pt] 
		& \times \Biggl(  \prod_{l \not= p,q} \frac{t_l^2 - u^2}{(t_{pl} - \imath)(t_{ql} + \imath)}\Biggr) \frac{Q(\bm{t}^{-p, +q})}{Q(\bm{t})},
	\end{aligned} 
\end{align}
\begin{align}\label{R3}
	& R_q^3(u) = \biggl[ \beta  \biggl(u - \frac{\imath}{2} \biggr) (u + t_q + \imath) - \imath \alpha (u + t_q - \imath) \biggr] \Biggl( \prod_{l \not= q} \frac{t_l^2 - u^2}{t_{ql} (t_{ql} + \imath)} \Biggr) \\[6pt]  \label{R4}
	& R_q^4(u) = (u + t_q) \biggl[ - \beta  \biggl(u - \frac{\imath}{2} \biggr) + \imath \alpha  \biggr] \Biggl( \prod_{l \not= q} \frac{t_l^2 - u^2}{t_{ql} (t_{ql} - \imath)} \Biggr) .
\end{align}
Due to the explicit form of the kernel $Q(\bm{t})$ \eqref{Kn7} we have
\begin{align}
	& \frac{Q(\bm{t}^{+q})}{Q(\bm{t})} = 2\beta  \, \imath^{n + 1} \frac{ \bigl( \imath t_q + \frac{\alpha}{\beta } - \frac{1}{2} \bigr) }{ \prod\limits_{l = 1}^n \bigl[(t_q + \imath)^2 - \lambda_l^2 \bigr] } \, \Biggl( \prod_{l \not= q} \frac{(t_{ql} + \imath)(t_q + t_l + \imath)}{t_{ql}} \Biggr), \\[6pt]
	& \begin{aligned}
		\frac{Q(\bm{t}^{-p, +q})}{Q(\bm{t})} & = \frac{(t_{pq} - 2\imath) \bigl( \imath t_q + \frac{\alpha}{\beta } - \frac{1}{2} \bigr)}{t_{pq}  \bigl( \imath t_p + \frac{\alpha}{\beta } + \frac{1}{2} \bigr)} \Biggl( \prod_{l = 1}^n \frac{t_p^2 - \lambda_l^2}{ (t_q + \imath)^2 - \lambda_l^2 } \Biggr) \\[6pt]
		& \times \Biggl( \prod_{l \not= p,q} \frac{(t_{pl} - \imath)(t_{ql} + \imath)(t_q + t_l + \imath)}{t_{pl} t_{ql} (t_p + t_l)} \Biggr).
	\end{aligned}
\end{align}
Using the above formulas we want to show that
\begin{align}\label{R}
	R(\lambda_k) = - \beta  \Biggl( \imath \lambda_k + \frac{\alpha}{\beta } + \frac{1}{2} \Biggr) \, \prod_{l = 1}^n \frac{t_l + \lambda_k}{t_l - \lambda_k + \imath}
\end{align}
for any $k \in \{1, \dots, n\}$. Then it is easy to see that
\begin{align}
	Q(\bm{\lambda}, \bm{t}) \prod_{l = 1}^n \frac{t_l + \lambda_k}{t_l - \lambda_k + \imath} = Q (\bm{\lambda}^{-k}, \bm{t})
\end{align}
and thus we prove the desired relation \eqref{QI11} when $u = \lambda_k$. The case $u = -\lambda_k$ follows from the reflection symmetry $\lambda_k \to -\lambda_k$ of eigenfunctions $\Psi_{\bm{\lambda}}$.

Due to symmetry with respect to permutations of $\lambda_k$, it is sufficient to consider $R(\lambda_n)$. Consider the sum of the first two terms~\eqref{R1},~\eqref{R2}
\begin{align}
	\begin{aligned}
		R^1_q(u) + R^2_q(u) & = - 2\beta  \, \frac{ \bigl(u - \frac{\imath}{2}\bigr) (u + t_q + \imath) \bigl( \imath t_q + \frac{\alpha}{\beta } - \frac{1}{2} \bigr) }{ \prod\limits_{l = 1}^n \bigl[ (t_q + \imath)^2 - \lambda_l^2 \bigr]  } \, \Biggl( \prod_{l \not= q} \frac{(t_l^2 - u^2)(t_q + t_l + \imath)}{t_{ql}} \Biggr)  \\[6pt]
		& \times \left[ 1 + \sum_{p \not= q} \frac{ \prod\limits_{l  =1}^n (t_p^2 - \lambda_l^2) }{ (t_p^2 - u^2) \bigl[(t_q + \imath)^2 - t_p^2\bigr]  \prod\limits_{l \not= p,q} (t_p^2 - t_l^2) } \right].
	\end{aligned}
\end{align}
In the case $u = \lambda_n$ the expression in square brackets drastically simplifies
\begin{align}
	1 + \sum_{p \not= q} \frac{ \prod\limits_{l  =1}^{n - 1} (t_p^2 - \lambda_l^2) }{  \bigl[(t_q + \imath)^2 - t_p^2\bigr]  \prod\limits_{l \not= p,q} (t_p^2 - t_l^2) }  = \frac{ \prod\limits_{l  =1}^{n - 1} \bigl[ (t_q + \imath)^2 - \lambda_l^2 \bigr]  }{ \prod_{l \not= q} \bigl[ (t_q + \imath)^2 - t_l^2 \bigr] }.
\end{align}
As mentioned in \cite[p. 13]{IS}, this is a consequence of the interpolation identity formula
\begin{align} \label{QI13}
	\sum_{p = 1}^n \frac{ \prod_{l = 1}^{n - 1}(a_p - b_l) }{ \prod_{l \not= p}(a_p - a_l)} = 1.
\end{align}
Thus, we have
\begin{align}
	R^1_q(\lambda_n) + R^2_q(\lambda_n) = - 2\beta  \, \frac{ \bigl( \lambda_n - \frac{\imath}{2} \bigr) \bigl( \imath t_q + \frac{\alpha}{\beta } - \frac{1}{2} \bigr)}{t_q - \lambda_n + \imath} \Biggl( \prod_{l \not=q} \frac{t_l^2 - \lambda_n^2}{t_{ql} (t_{ql} + \imath)} \Biggr).
\end{align}
The next step is to sum up the third term \eqref{R3}
\begin{align}
	R_q^1(\lambda_n) + R_q^2(\lambda_n) + R_q^3(\lambda_n) = \beta  \, \frac{ (t_q^2 - \lambda_n^2) \bigl(\imath \lambda_n + \frac{\alpha}{\beta } + \frac{1}{2} \bigr) }{ \imath t_q - \imath \lambda_n - 1 } \Biggl( \prod_{l \not=q} \frac{t_l^2 - \lambda_n^2}{t_{ql} (t_{ql} + \imath)}  \Biggr).
\end{align}
Finally, if we add the forth term \eqref{R4} and sum over $q$ we recover the whole function
\begin{multline}
		R(\lambda_n) = \sum_{q = 1}^n \bigl[ R_q^1(\lambda_n) + R_q^2(\lambda_n) + R_q^3(\lambda_n) + R_q^4(\lambda_n) \bigr] \\[6pt]
		= \beta  \biggl( \imath \lambda_n + \frac{\alpha}{\beta } + \frac{1}{2} \biggr) \, \sum_{q = 1}^n \Biggl( \prod_{l \not= q} \frac{t_l^2 - \lambda_n^2}{ t_{ql} } \Biggr) \Biggl[ \frac{t_q^2 - \lambda_n^2}{ (\imath t_q - \imath \lambda_n - 1 ) \prod\limits_{l \not= q} (t_{ql} + \imath) } + \frac{\imath t_q + \imath \lambda_n}{ \prod\limits_{l \not= q} (t_{ql} - \imath) }\Biggr].
\end{multline}
Again, as it is proved in~\cite[p. 13]{IS}, the sum in the last line can be rewritten as
\begin{align}
	\sum_{q = 1}^n \Biggl( \prod_{l \not= q} \frac{t_l^2 - \lambda_n^2}{ t_{ql} } \Biggr) \Biggl[ \frac{t_q^2 - \lambda_n^2}{ (\imath t_q - \imath \lambda_n - 1 ) \prod\limits_{l \not= q} (t_{ql} + \imath) } + \frac{\imath t_q + \imath \lambda_n}{ \prod\limits_{l \not= q} (t_{ql} - \imath) }\Biggr] = - \prod_{l = 1}^n \frac{t_l + \lambda_n}{t_l - \lambda_n + \imath},
\end{align} 
and this gives us the result \eqref{R}. The last formula is again a consequence of interpolation identity \rf{QI13}.

\subsubsection{Asymptotics and hyperoctahedral Whittaker function} \label{sec:asymp}

In this section we calculate Mellin--Barnes integrals for $GL$ and $BC$ wave functions by residues. This gives us Harish-Chandra series representations and asymptotics of wave functions.

\paragraph{$GL$ system.}
It is known that the wave function $\Phi_{\bl_n}(\bx_n)$ fast decreases in all asymptotical zones, except
\beq\label{AS1} x_1<x_2<\ldots< x_n,\eeq
e.g. see the bound given in~\cite[Proposition 4.1.3]{BC}. In the latter zone it can be represented by a convergent series, which comes from residue calculation of Mellin--Barnes integral~\rf{MBA}. Note that for $x_1 \ll \ldots \ll x_n$ the interaction between particles disappears~\eqref{HA}.

Let $\mathcal{A}$ be the set of all lower triangular matrices with zeroes on diagonal, whose elements are nonnegative integers
\beq\label{AS2} \mathcal{A}=\{A = (a_{ij}) \in M_n(\mathbb{N}_0) \, |\, a_{ij} = 0 \text{ for all } i \leq j \}.\eeq
Each matrix $A\in \mathcal{A}$ determines another lower triangular matrix $P=(p_{ij})$ by the rule
\beq\label{AS3} p_{ij}=a_{ij}+a_{i+1,j}+\ldots+ a_{nj},\qquad 1\leq j<i\leq n.\eeq
For a given $A\in \mathcal{A}$ define the meromorphic function
 \beq\label{AS4} c_A(\bl_n)=\prod\limits_{k=2}^n\prod\limits_{i=1}^{k-1}\frac{(-1)^{a_{ki}}}{a_{ki}!}
 	\frac{\prod\limits_{j=1, j\not=i}^{k}\Gamma(\imath(\l_j-\l_i)- p_{ki}+p_{k+1,j})}{\prod\limits_{j=1,\, j\not=i}^{k-1}\Gamma(\imath(\l_j-\l_i)- p_{ki}+p_{kj})},\eeq
where we set $p_{n+1,i}=0.$
For example, for $n=3$
\begin{multline}
	c_A(\l_1,\l_2,\l_3) = \frac{(-1)^{a_{31}+a_{32}+a_{21}}}{a_{31}!a_{32}!a_{21}!}	\Gamma (\imath(\l_2-\l_1)-a_{21}-a_{31}+a_{32})\\[6pt]
	\times \frac{\Gamma\left(\imath(\l_3-\l_1)-a_{31}\right)\Gamma\left(\imath(\l_2-\l_1)-a_{31}\right)\Gamma\left(\imath(\l_3-\l_2)-a_{32}\right)\Gamma\left(\imath(\l_1-\l_2)-a_{32}\right)}{\Gamma\left(\imath(\l_2-\l_1)-a_{31}+a_{32}\right)\Gamma\left(\imath(\l_1-\l_2)-a_{32}+a_{31}\right)}.
\end{multline}
 Denote by $\vf_{\bl_n}(\bx_n)$ the series
 \beq\label{AS5} \vf_{\bl_n}(\bx_n)=e^{\imath\sum_{i=1}^n \l_ix_i}\sum_{A\in \mathcal{A}}c_A(\bl_n) \, e^{-\sum_{j<i}a_{ij}(x_i-x_j)}. \eeq
 Then the wave function $\Phi_{\bl_n}(\bx_n)$ can be represented (actually for any $\bx_n\in\C^n$~\cite[Proposition~1]{DE}) as the symmetrization of $\vf_{\bl_n}(\bx_n)$ over the spectral parameters
 	\beq\label{AS6}\Phi_{\bl_n}(\bx_n)=\sum_{\sigma\in S_n}\vf_{\l_{\sigma(1)},\ldots, \l_{\sigma(n)}}(\bx_n), \eeq 
 which follows from straightforward calculation of Mellin--Barnes integral~\eqref{MBA} by residues.
 
 \begin{remark}
 	For $n=2$ the decomposition \rf{AS6} expresses Macdonald function via modified Bessel functions \cite[\href{http://dlmf.nist.gov/10.27.E4}{(10.27.4)}]{DLMF},
 	\beqq K_\nu(z)=\frac{\pi}{2} \frac{I_{-\nu}(z)-I_
 		\nu(z)}{\sin \pi \nu}.\eeqq
 \end{remark}

The nearest poles ($a_{ij}=0$) determine the asymptotics of the Mellin--Barnes integral for $x_1 \ll \ldots \ll x_n$. Set
\beq \label{AS7}\Phi^{as}_{\bl_n}(\bx_n)=\sum_{\sigma\in S_n}\prod_{i<j}\Gamma\left(\imath(\l_{\sigma(j)}-\l_{\sigma(i)})\right)e^{\imath\sum_{k = 1}^n\l_{\sigma(k)}x_k}\eeq
Following the arguments presented in \cite{HR3} (see also \cite[Section 5.1]{BCDK}) one can show that in the region \rf{AS1} under condition  
 \beq \min_{i = 1, \dots, n-1} (x_{i+1}-x_i)=\ve>0 \eeq
 we have a bound
 \beq\label{AS8} |\Phi_{\bl_n}(\bx_n)-\Phi^{as}_{\bl_n}(\bx_n)|<C(\bl_n)
 \exp (-\ve/2)\eeq
 for any $\bl_n\in\R^n$, such that $\l_i\not=\l_j$.

The eigenfunction of $GL$ Toda Hamiltonians analytical in 
the region $x_1 < \ldots < x_n$ with plane wave asymptotic 
$e^{\imath\sum_k\l_k x_k}$ is usually called the Harish-Chandra function $\phi^{HC}_{\bl_n}(\bx_n)$~\cite{HC}.
It can be represented in a form of convergent series over 
variables $e^{-(x_k-x_{k+1})}$~\cite{DE0, DE,O}, and after 
the symmetrization over spectral parameters it gives the eigenfunction 
symmetric and analytical over spectral parameters.

Thus,
\beq\label{AS8a} \vf_{\bl_n}(\bx_n)= \prod_{i<j} \Gamma\left(\imath (\l_j-\l_i)\right)\phi^{HC}_{\bl_n}(\bx_n).\eeq
and the relation \rf{AS6} is precisely Harish-Chandra decomposition of the wave function $\Phi_{\bl_n}(\bx_n)$.

\paragraph{$BC$ system.}
Denote by $W_n=S_n \ltimes Z_2^n$ the Weyl group of $BC_n$ root system. Each element $w\in W_n$ is parametrized by permutation $\sigma(w)\in S_n$ and the collection of signs $\ve_k(w) \in \{1, -1\}$, so that its right action on $\R^n$ is given by the relation
\beqq w(x_k) =\ve_k(w)x_{\sigma(w)(k)}.\eeqq
Let $D_0$ be the asymptotical  zone
\beq \label{AS9}D_0=\{\bx_n\in\R^n\ |\ 0<x_1<x_2<\ldots <x_n\}.\eeq
Then the complement of $\R^n$ to the union of hyperplanes $x_i=\pm x_j$ and $x_k=0$ is a disjoint union of asymptotical domains
\beq\label{AS10} D_w =w(D_0),\qquad w\in W_n.\eeq
The wave function $\Psi_{\bl_n}(\bx_n)$ rapidly decays in all asymptotical domains except $D_0$, see Proposition~\ref{prop:bc-bound}. The Mellin--Barnes representation allows to represent it as a series analogous to \rf{AS5} and \rf{AS6}, which conjecturally converges for all $\bx_n\in\R^n$.

Let $\mathcal{B}$ be the set of all tuples of nonnegative integers (one can visualize them as a collection of a lower triangular matrix and a vector)
\beq\label{AS11} \mathcal{B}=\{B=(b_{ij},\ b_k) \in M_n(\mathbb{N}_0) \times \mathbb{N}_0^n \,|\, b_{ij} = 0 \text{ for all } i \leq j \}.\eeq
Each array $B\in \mathcal{B}$ determines a lower triangular matrix $Q=(q_{ij})$ by the rule
\beq\label{AS12} q_{ij}=b_{ij}+b_{i+1,j}+\ldots+ b_{nj}- b_j,\qquad 1\leq j<i\leq n, \qquad q_{n+1,j}=-b_j.\eeq
 For a given $B\in \mathcal{B}$ define the meromorphic functions $c_B^k(\bl_n)$ by the rules
\begin{align}\label{AS13} c_B^k(\bl_n)&=\prod\limits_{i=1}^{k-1}\frac{(-1)^{b_{ki}}}{b_{ki}!}
\frac{\prod\limits_{j=1, j\not=i}^{k}\Gamma(\imath(\l_j-\l_i)- q_{ki}+q_{k+1,j})}{\prod\limits_{j=1,\, j\not=i}^{k-1}\Gamma(\imath(\l_j-\l_i)- q_{ki}+q_{kj})},\qquad 2\leq k\leq n,\\
\notag c_B^1(\bl_n)&= \prod\limits_{k=1}^n\frac{(-1)^{b_k}\Gamma\left(2\imath\l_k-b_k\right)}{b_k!\Gamma\left(\imath\l_k+g-b_k\right)}\prod\limits_{1\leq j\not=i\leq n}\frac{\Gamma(\imath(\l_i\pm\l_j)-b_i) }{
\Gamma(\imath(\l_i-\l_j)-b_i+ b_j)}\prod\limits_{1\leq j<i\leq n}\frac{1}{\Gamma(\imath(\l_i+\l_j) - b_i - b_j)}
\end{align}
and set \beqq c_B(\bl_n)=\prod_{k=1}^nc_B^k(\bl_n). \eeqq 
Denote by $\psi_{\bl_n}(\bx_n)$ the series
\beq\label{AS15} \psi_{\bl_n}(\bx_n)=e^{\ \imath\sum\limits_{i=1}^n \l_i\tilde{x}_i + \beta e^{-x_1}}\sum_{B\in \mathcal{B}}c_B(\bl_n) \, e^{\ \sum\limits_{j<i}b_{ij}(x_j-x_i)-\sum\limits_{k=1}^nb_k \tilde{x}_k} \eeq
where
\beq  \tilde{x}_k=x_k-\ln 2\beta. \eeq
The straightforward calculation of the Mellin--Barnes integral~\eqref{Kn6} by residues together with the series representation of $GL$ wave function~\eqref{AS6} give the following statement.

\begin{theorem}\label{theoremAS} The wave function $\Psi_{\bl_n}(\bx_n)$ is the symmetrization of the series \rf{AS15}
	\beq\label{AS16} \Psi_{\bl_n}(\bx_n)=\sum_{w\in W_n}\psi_{w(\l_1),\ldots, w(\l_n)}(\bx_n).\eeq
	\end{theorem}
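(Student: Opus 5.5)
The plan is to evaluate the Mellin--Barnes integral \eqref{Kn6} by residues, closing every $t_a$-contour in the upper half-plane, and to feed in the Harish-Chandra expansion \eqref{AS6} of $\Phi_{\bm{t}_n}(\bm{x}_n)$. Then \eqref{Kn6} becomes
\begin{align*}
	\Psi_{\bl_n}(\bx_n) = e^{\beta e^{-x_1}} \sum_{\sigma \in S_n} \int_{(\R - \imath\epsilon)^n} d\bm{t}_n \; Q(\bl_n, \bm{t}_n) \, \vf_{t_{\sigma(1)}, \dots, t_{\sigma(n)}}(\bx_n).
\end{align*}
Since $Q(\bl_n,\bm{t}_n)$ and the measure are symmetric in $\bm{t}_n$, all $n!$ terms are equal. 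We therefore keep $n!\,\vf_{\bm{t}_n}(\bx_n)$ and insert the series \eqref{AS5} term by term. Each term carries the factor $e^{\imath\sum_a t_a x_a}$ times $e^{-\sum a_{ij}(x_i-x_j)}$. In $D_0$ this factor decays as $\Im t_a \to +\infty$, because $e^{\imath t_a x_a}$ with $x_a>0$ is small in the upper half-plane. This justifies closing the contours upwards, at least for the leading asymptotics; for the full series one can argue by analytic continuation in $\bx_n$.

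Next we locate the poles of the integrand. The poles in the upper half-plane come from $\Gamma(\imath t_a \pm \imath\lambda_k)$, at $t_a = \pm\lambda_k + \imath m$. They also come from the $\Gamma$-functions inside $c_A(\bm{t}_n)$ in \eqref{AS4}, which can partially cancel the factor $1/\prod_{a\ne b}\Gamma(\imath(t_a-t_b))$ from $\hat\mu$. The reduction is iterative: take the residue in $t_n$ first, then $t_{n-1}$, and so on. For the residues to be nonzero, the $t_a$ must be assigned distinct values $w(\lambda_a)+\imath b_a$ for some $w\in W_n$. Coincident assignments $t_a = t_b$ are killed by the vanishing of $1/\Gamma(\imath(t_a-t_b))$, and the same factor cancels the poles in $c_A$ that would otherwise collide.

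The sum over sign-and-permutation choices produces $\sum_{w\in W_n}$, and the factor $1/n!$ in $\hat\mu$ cancels against the $n!$ from the first step. The integers $b_k$ are the depths of the poles. The entries $b_{ij}$ come from $a_{ij}$ after the shift $t_a \to \lambda_a + \imath b_a$, which turns $p_{ij}$ into $q_{ij}=p_{ij}-b_j$ as in \eqref{AS12}. The residues contribute the following pieces of $c_B^1$:
\begin{itemize}
	\item $\Gamma(\imath t \pm \imath\lambda)$ gives $(-1)^{b_k}\Gamma(2\imath\lambda_k-b_k)/b_k!$ and $\prod_{j\neq i}\Gamma(\imath(\lambda_i\pm\lambda_j)-b_i)$;
	\item $1/\Gamma(g+\imath t_a)$ gives $1/\Gamma(\imath\lambda_k+g-b_k)$;
	\item the denominators $\Gamma(\imath(t_a+t_b))$ and $\Gamma(\imath(t_a-t_b))$ give the remaining factors of $c_B^1$;
	\item $(2\beta)^{-\imath\underline{\bm t}}$ combines with $e^{\imath\sum t_a x_a}$ into $e^{\imath\sum\lambda_i\tilde x_i-\sum b_k\tilde x_k}$.
\end{itemize}
The factors $c^k_B$ for $k\ge 2$ are simply $c_A$ evaluated at the shifted arguments. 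Assembling these pieces gives exactly \eqref{AS15}, summed over $w$, which is \eqref{AS16}.

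The main obstacle is the bookkeeping of poles. One must show that the poles of $c_A(\bm t_n)$ in the upper half-plane either cancel against zeros of $\hat\mu(\bm t_n)/\prod_{a<b}\Gamma(\imath(t_a+t_b))$ or lie deeper than the enumerated poles, so that they contribute nothing new. One must also check that multiple poles do not arise for generic $\bl_n$. I would handle this by first fixing generic $\bl_n$ (with $\lambda_i\pm\lambda_j\notin\imath\mathbb{Z}$ and $2\lambda_k\notin\imath\mathbb{Z}$) and proving the identity there, then extending to all $\bl_n$ by analyticity (Corollary~\ref{cor:Psi-an}). Convergence of the resulting series, and interchanging it with the integral, I would take in the same heuristic sense as in the text, controlled by Appendix~\ref{sec:MB-bounds} estimates in $D_0$.
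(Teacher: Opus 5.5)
Your overall plan is the computation the paper has in mind: close the contours in \eqref{Kn6} upwards, pick up the poles $t_a=\pm\lambda_k+\imath b$, and identify the residue factors with $c^1_B$ and the shifted $c_A$ with $c^k_B$. Your factor-by-factor identification is correct, including $q_{ij}=p_{ij}-b_j$ and the $\tilde x$-exponent.

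The gap is the order of operations. You replace $\Phi_{\bm t_n}$ by $n!\,\varphi_{\bm t_n}$ \emph{before} taking residues. You then ask the zeros of $\prod_{a\neq b}1/\Gamma(\imath(t_a-t_b))$ to cancel the poles of $c_A(\bm t_n)$ and also to kill coincident labels. They cannot do both.

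For $n=2$, with $z=\imath(t_2-t_1)$ and $a=a_{21}$, the relevant factor is
\[
\frac{\Gamma(z-a)}{\Gamma(z)\,\Gamma(-z)}=\frac{1}{\Gamma(-z)\,(z-1)\cdots(z-a)} ,
\]
which is entire but nonzero at $z=1,\dots,a$ and at every negative integer.
\begin{itemize}
\item For $a=0$, the iterated residue at $t_1=\lambda_k$, $t_2=\lambda_k+\imath$ (same label, different depths) is therefore nonzero.
\item It is cancelled only by the $a=1$ term at the swapped point $t_1=\lambda_k+\imath$, $t_2=\lambda_k$. That term carries the same exponential $e^{\imath\lambda_k(x_1+x_2)-x_2}$ with the opposite sign.
\end{itemize}
For $n\geq 3$ some poles of $c_A$ are not cancelled at all. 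Take $a_{31}=2$, $a_{21}=a_{32}=0$ and the same $z=\imath(t_2-t_1)$. The $z$-dependent part of $\hat\mu\,c_A$ is then $1/\bigl((z-1)(z-2)\Gamma(2-z)\bigr)$. This leaves a pole on $t_1=t_2+\imath$, which the $t_1$-contour picks up.

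So what you set out to show is false term by term: the poles of $c_A$ do not all cancel or ``contribute nothing new'', and coincident labels do not vanish individually. The extra residues cancel only after resumming over $A$, which is exactly the $S_n$-symmetry you discarded in your first step.

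The fix is to reverse the order.
\begin{enumerate}
\item Compute residues with $\Phi_{\bm t_n}(\bm x_n)$ intact. It is entire in $\bm t_n$ (Corollary~\ref{cor:Phi-an}), so the only poles are $t_a=\pm\lambda_k+\imath m$.
\item Once $t_b=\pm\lambda_k+\imath m_b$ is fixed, $1/\bigl(\Gamma(\imath(t_a-t_b))\,\Gamma(\imath(t_b-t_a))\bigr)$ vanishes on all of $t_a\in t_b+\imath\mathbb{Z}$. This kills same label, same sign, any depth.
\item $1/\Gamma(\imath(t_a+t_b))$ vanishes on $t_a\in -t_b+\imath\mathbb{N}_0$. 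This kills same label, opposite sign, a case you did not address.
\item The surviving points are exactly $\bm t_n=w(\bm\lambda_n)+\imath\bm b$ with $w\in W_n$ and $\bm b\in\mathbb{N}_0^n$.
\item Only now insert \eqref{AS6} at each such point; the $\varphi$'s are finite there for generic $\bm\lambda_n$.
\item The residue prefactor is invariant under permuting labels and depths simultaneously. Use this to reindex the double sum over $\sigma\in S_n$ and $w\in W_n$. The resulting factor $n!$ cancels the $1/n!$ in $\hat\mu$ and gives \eqref{AS16}, with your identification of $c_B$ unchanged.
\end{enumerate}
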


\begin{example}
	For $n=1$ the relation \rf{AS16} is the expression of Whittaker function
	$W_{k,\mu}(z)$ via Whittaker functions $M_{ k,\pm \mu}(z)$~\cite[\href{http://dlmf.nist.gov/13.14.E33}{(13.14.33)}]{DLMF}
	\beqq W_{k,\mu}(z)=\frac{\Gamma(-2\mu)}{\Gamma \bigl(\frac{1}{2} -\mu-k \bigr)}M_{k,\mu}(z)+
	\frac{\Gamma(2\mu)}{\Gamma \bigl(\frac{1}{2} +\mu-k \bigr)}M_{k,-\mu}(z).\eeqq
\end{example}

Again, the nearest poles $(b_{ij}=b_i=0)$ determine the asymptotics of the Mellin--Barnes integral in the domain $0 \ll x_1 \ll \ldots \ll x_n$. Set
\beq \label{AS17}\Psi^{as}_{\bl_n}(\bx_n)=\sum_{w\in W_n}\prod_{k = 1}^n\frac{\Gamma(2\imath w(\l_k))}{\Gamma(\imath w(\l_k)+g)}\prod_{1 \leq i<j\leq n}\Gamma(\imath w(\l_{j})\pm \imath w(\l_{i})) \, e^{\imath\sum_{k = 1}^n w(\l_{k})\tilde{x}_k}.\eeq
Following the approach of~\cite{HR3, BCDK} it is possible to show that in the region \rf{AS1} under condition  
\beq\label{AS18} \min_i \{(x_{i+1}-x_i), x_i\}=\ve>0 \eeq
we have the bound
\beq\label{AS19} |\Psi_{\bl_n}(\bx_n)-\Psi^{as}_{\bl_n}(\bx_n)|<C(\bl_n)
\exp (-\ve/2)\eeq
for any $\bl_n\in\R^n$, such that $\l_i\not=\pm\l_j$ and $\l_j\not=0$.
This is precisely the asymptotics of the hyperoctahedral Whittaker function~\cite[(3.5b)]{DE} (to compare set $2\beta = 1$). We thus conclude that $\Psi_{\bl_n}(\bx_n)$ coincides with hyperoctahedral Whittaker function due to uniqueness of the latter, and the relation \rf{AS16} is its Harish-Chandra decomposition. As shown in~\cite{DE}, hyperoctahedral Whittaker function is entire in spectral parameters, which is in accordance with our Corollary~\ref{cor:Psi-an}.

\subsubsection{Completeness} \label{sec:compl}

As we already discussed, the wave functions of $GL$ Toda chain satisfy completeness relation~\eqref{gl-compl-2}
\begin{align} \label{compl-gl}
	\int_{\mathbb{R}^n} d\bm{\lambda}_n \; \hat{\mu}(\bm{\lambda}_n) \, \overline{\Phi_{\bm{\lambda}_n}(\bm{x}_n)} \, \Phi_{\bm{\lambda}_n}(\bm{y}_n) = \delta(x_1 - y_1) \cdots \delta(x_n - y_n),
\end{align}
where $\bm{x}_n, \bm{y}_n \in \mathbb{R}^n$ and the measure is given by~\eqref{gl-measure}. In this section we present heuristic derivation of the completeness relation for $BC$ wave functions
\begin{align} \label{compl-bc}
	\int_{\mathbb{R}^n} d\bm{\lambda}_n \; \bcdmu(\bm{\lambda}_n) \, \overline{\Psi_{\bm{\lambda}_n}(\bm{x}_n)} \, \Psi_{\bm{\lambda}_n}(\bm{y}_n) = \delta(x_1 - y_1) \cdots \delta(x_n - y_n),
\end{align}
where the corresponding measure is determined by orthogonality relation
\begin{align}
	\bcdmu(\bm{\lambda}_n) = \frac{1}{n! \, (4\pi)^n } \prod_{1 \leq j < k \leq n} \frac{1}{\Gamma(\pm \imath \lambda_j \pm \imath \lambda_k) } \, \prod_{j = 1}^n \frac{ \Gamma(g \pm \imath \lambda_j ) }{ \Gamma(\pm2 \imath \lambda_j) },
\end{align}
see Section~\ref{sec:orth}. 

By Corollary~\ref{cor:Psi-real}, $\overline{\Psi_{\bm{\lambda}_n}(\bm{x}_n)} = \Psi_{\bm{\lambda}_n}(\bm{x}_n)$. Using it and the Mellin--Barnes representation~\eqref{Psi-MB} we rewrite the integral~\eqref{compl-bc} as
\begin{multline} \label{compl2}
	e^{\beta (e^{-x_1} + e^{-y_1})} \, \int\limits_{\mathbb{R}^n} d\bm{\lambda}_n \int\limits_{(\mathbb{R} - \imath \epsilon)^n} \!\! d\bm{\gamma}_n \int\limits_{(\mathbb{R} - \imath \epsilon)^n} \!\! d\bm{\rho}_n  \; \bcdmu(\bm{\lambda}_n) \, \hat{\mu}(\bm{\gamma}_n) \, \hat{\mu}(\bm{\rho}_n)  \\[6pt]
	\times \frac{ (2\beta)^{-\imath \underline{\bm{\gamma}}_n -\imath \underline{\bm{\rho}}_n} \prod\limits_{j,k = 1}^n \Gamma(\imath \gamma_j \pm \imath \lambda_k) \, \Gamma(\imath \rho_j \pm \imath \lambda_k) }{ \prod\limits_{1 \leq j < k \leq n} \Gamma(\imath \gamma_j + \imath \gamma_k) \, \Gamma(\imath \rho_j + \imath \rho_k) \, \prod\limits_{j = 1}^n \Gamma(g + \imath \gamma_j) \, \Gamma(g + \imath \rho_j)} \, \Phi_{\bm{\gamma}_n}(\bm{x}_n) \, \Phi_{\bm{\rho}_n}(\bm{y}_n) .
\end{multline}
First, let us calculate the integral over $\bm{\lambda}_n$. Note that it coincides with the reduction of Gustafson integral~\eqref{Gust-red} derived in Appendix~\ref{sec:gust-red}, so that
\begin{multline}
	\frac{1}{n! \, (4\pi)^n } \int\limits_{\mathbb{R}^n} d\bm{\lambda}_n \; \frac{ \prod\limits_{j,k = 1}^n \Gamma(\imath \gamma_j \pm \imath \lambda_k) \, \Gamma(\imath \rho_j \pm \imath \lambda_k) \prod\limits_{j = 1}^n \Gamma(g + \imath \gamma_j)}{ \prod\limits_{1 \leq j < k \leq n} \Gamma(\pm \imath \lambda_j \pm \imath \lambda_k)  \prod\limits_{j = 1}^n \Gamma(\pm 2\imath \lambda_j)} \\
	= \prod_{1 \leq j < k \leq n} \Gamma(\imath \gamma_j + \imath \gamma_k) \, \Gamma(\imath \rho_j + \imath \rho_k) \, \prod_{j = 1}^n \Gamma(g + \imath \gamma_j) \, \Gamma(g + \imath \rho_j) \prod_{j, k = 1}^n \Gamma(\imath \gamma_j + \imath \rho_k).
\end{multline}
Hence, the whole expression~\eqref{compl2} becomes
\begin{multline} \label{compl3}
	e^{\beta (e^{-x_1} + e^{-y_1})} \, \int\limits_{(\mathbb{R} - \imath \epsilon)^n} \!\! d\bm{\gamma}_n \int\limits_{(\mathbb{R} - \imath \epsilon)^n} \!\! d\bm{\rho}_n  \;  \hat{\mu}(\bm{\gamma}_n) \, \hat{\mu}(\bm{\rho}_n)  \\[-6pt]
	\times (2\beta)^{-\imath \underline{\bm{\gamma}}_n -\imath \underline{\bm{\rho}}_n} \prod_{j,k =1}^n \Gamma(\imath \gamma_j + \imath \rho_k) \, \Phi_{\bm{\gamma}_n}(\bm{x}_n) \, \Phi_{\bm{\rho}_n}(\bm{y}_n) .
\end{multline}
Now the integral over $\bm{\rho}_n$ coincides with the action of $GL$ dual Baxter operator $\hat{Q}'_n(\ln 2\beta)$ on its eigenfunction, see~\rf{Qprime} and~\rf{Qprimec},
\begin{align}
	\int\limits_{(\mathbb{R} - \imath \epsilon)^n} \!\! d\bm{\rho}_n  \;  \hat{\mu}(\bm{\rho}_n) \, (2\beta)^{-\imath \underline{\bm{\gamma}}_n -\imath \underline{\bm{\rho}}_n} \prod_{j,k =1}^n \Gamma(\imath \gamma_j + \imath \rho_k) \, \Phi_{\bm{\rho}_n}(\bm{y}_n) = e^{- 2\beta e^{-y_1}} \, \Phi_{-\bm{\gamma}_n}(\bm{y}_n).
\end{align}
Consequently, the integral~\eqref{compl3} reduces to
\begin{align} \label{compl4}
	e^{\beta (e^{-x_1} - e^{-y_1})} \, \int\limits_{(\mathbb{R} - \imath \epsilon)^n} \!\! d\bm{\gamma}_n \; \hat{\mu}(\bm{\gamma}_n)  \, \Phi_{\bm{\gamma}_n}(\bm{x}_n) \, \Phi_{-\bm{\gamma}_n}(\bm{y}_n) .
\end{align}
Finally, changing integration variables $\gamma_j = \nu_j - \imath \epsilon$ (so that $\nu_j \in \mathbb{R}$) and using properties of $GL$ wave functions
\begin{align}
	\Phi_{\bm{\gamma}_n}(\bm{x}_n) = e^{\epsilon \underline{\bm{x}}_n } \, \Phi_{\bm{\nu}_n}(\bm{x}_n), \qquad \Phi_{-\bm{\gamma}_n}(\bm{y}_n) = e^{- \epsilon \underline{\bm{y}}_n } \, \Phi_{-\bm{\nu}_n}(\bm{y}_n) = e^{- \epsilon \underline{\bm{y}}_n } \, \overline{\Phi_{\bm{\nu}_n}(\bm{y}_n)}
\end{align}
we obtain
\begin{multline} 
	e^{\beta (e^{-x_1} - e^{-y_1})} \, \int\limits_{(\mathbb{R} - \imath \epsilon)^n} \!\! d\bm{\gamma}_n \; \hat{\mu}(\bm{\gamma}_n)  \, \Phi_{\bm{\gamma}_n}(\bm{x}_n) \, \Phi_{-\bm{\gamma}_n}(\bm{y}_n) \\
	= e^{\beta (e^{-x_1} - e^{-y_1}) + \epsilon (\underline{\bm{x}}_n - \underline{\bm{y}}_n)} \, \int\limits_{\mathbb{R}^n} d\bm{\nu}_n \; \hat{\mu}(\bm{\nu}_n) \, \Phi_{\bm{\nu}_n}(\bm{x}_n) \, \overline{\Phi_{\bm{\nu}_n}(\bm{y}_n)} = \delta(x_1 - y_1) \cdots \delta(x_n - y_n),
\end{multline}
where the last equality follows from the completeness relation for $GL$ wave functions~\eqref{compl-gl}. This concludes the derivation of~\eqref{compl-bc}. 

\section*{Acknowledgments} 
The second and the third authors thank BIMSA for hospitality. A big part of the work was done during their visit to BIMSA. The work of S. Derkachov (Section 2) was supported by RNF grant 23-11-00311. The work of S. Khoroshkin (Section 3) was supported by RNF grant 23-11-00150. 

\appendix

\section{Integral identities} \label{appA}
\subsection{Proof of flip relation} \label{app:flip}
Recall that flip relation pictured in Figure~\ref{fig:flip} is equivalent to the symmetry
\begin{align}
	F(x,y) = F(y, x)
\end{align}
of the function~\eqref{Fflip}
\begin{align}\label{Fflip-def}
	\begin{aligned}
		& F(x, y) = e^{\imath (\lambda - \rho) x -2\imath \rho y} \, (1 + \beta e^{-y})^{- \imath \rho - g} \; (1 - \beta e^{-y})^{- \imath \rho + g - 1} \\[6pt]
		& \times \int_{\ln \beta}^x dz \;\, e^{-2\imath \lambda z} \, (1 + \beta e^{-z})^{- \imath \lambda - g} \; (1 - \beta e^{-z})^{- \imath \lambda + g - 1} \, (e^z + e^y)^{\imath(\lambda + \rho)} \; (1 - e^{z - x})^{\imath (\lambda - \rho) - 1},
	\end{aligned}
\end{align}
where we assume $x > \ln \beta$, $y > \ln \beta$ and also for convergence $\Re(\imath \rho) < \Re(\imath \lambda) < g$.
To~prove this symmetry it is convenient to pass to the exponential variables
\begin{align}
	X = e^x > \beta, \qquad Y = e^y > \beta, \qquad Z = e^z.
\end{align}
Then the function can be rewritten in the form
\begin{align} \label{Fflip2}
	\begin{aligned}
		& F = X^{\imath(\lambda - \rho)} \, Y^{-2\imath \rho} \, (1 + \beta /Y )^{- \imath \rho - g} \; (1 - \beta / Y)^{- \imath \rho + g  - 1} \\[6pt]
		& \times \int_\beta^X dZ \;\, Z^{-2\imath \lambda - 1} \, (1 + \beta / Z )^{- \imath \lambda - g} \; (1 - \beta / Z )^{- \imath \lambda + g - 1}  \, (Z + Y)^{\imath(\lambda + \rho)} \; (1 - Z/X)^{\imath (\lambda - \rho) - 1}.
	\end{aligned}
\end{align}
Next to remove dependence on external parameters from the bounds of integral we change integration variable
\begin{align}
	w = \frac{X - Z}{Z - \beta}.
\end{align}
As a result, we arrive at the expression
\begin{align} 
	\begin{aligned}
		F & = X Y \, (X - \beta)^{-\imath \rho + g - 1} \, (Y - \beta)^{-\imath \rho + g - 1} \, (Y + \beta)^{- \imath \rho - g} \\[6pt]
		& \times \int_0^\infty dw \;\, w^{\imath(\lambda - \rho) - 1} \, (X + \beta + 2\beta w)^{-\imath \lambda - g} \; \bigl( X + Y + (Y + \beta) w \bigr)^{\imath(\lambda+ \rho)}.
	\end{aligned}
\end{align}
It is left to rescale integration variable 
\begin{align}
	w = (X + \beta) s
\end{align}
to obtain the formula
\begin{align}\label{Fflip3}
	\begin{aligned}
		F & = X Y \, (X - \beta)^{-\imath \rho + g - 1} \, (Y - \beta)^{-\imath \rho + g - 1} \, (X + \beta)^{- \imath \rho - g} \, (Y + \beta)^{- \imath \rho - g} \\[6pt]
		& \times \int_0^\infty ds \;\, s^{\imath(\lambda - \rho) - 1} \, (1 + 2\beta s)^{-\imath \lambda - g } \; \bigl( X + Y + (X + \beta) (Y + \beta) s \bigr)^{\imath(\lambda+ \rho)},
	\end{aligned}
\end{align}
which is clearly symmetric in $X, Y$. 

\subsection{Proof of reduced flip relation} \label{app:flip-lim}
The reduced flip relation pictured in Figure~\ref{fig:flip-lim} represents the following identity
\begin{multline}\label{flip-lim}
	e^{\imath (\lambda - \rho) x} \, \int_{\ln \beta}^x dz \;\, e^{-2\imath \lambda z} \, (1 + \beta e^{-z})^{- \imath \lambda - g} \; (1 - \beta e^{-z})^{- \imath \lambda + g - 1} \\[6pt]
	\times (e^z + \beta)^{\imath(\lambda + \rho)} \; (1 - e^{z - x})^{\imath (\lambda - \rho) - 1} = (2\beta)^{\imath (\rho -\lambda)} \; \frac{\Gamma(\imath \lambda - \imath \rho) \, \Gamma ( g - \imath \lambda )}{\Gamma ( g - \imath \rho )} \\[10pt]
	\times e^{-2 \imath \rho x} \; (1 + \beta e^{-x})^{- \imath \rho - g} \; (1 - \beta e^{-x})^{- \imath \rho + g - 1} \; (e^x + \beta)^{\imath(\lambda + \rho)},
\end{multline}
where we assume $x > \ln \beta$. Note also that the integral is absolutely convergent under condition $\Re(\imath \rho) < \Re(\imath \lambda) < g$.

To prove the above identity consider asymptotics of the function $F(x, y)$ defined by~\eqref{Fflip-def} as $y \to \ln\beta^+$. From its definition we have
\begin{multline}\label{Flim}
	F(x, y) \underset{y \to \ln \beta^+}{=} (y - \ln\beta)^{-\imath \rho + g - 1} \; 2^{-\imath \rho - g}  \; \beta^{-2\imath \rho} \\[6pt]
	\times e^{\imath (\lambda - \rho) x} \,  \int_{\ln \beta}^x dz \;\, e^{-2\imath \lambda z} \, (1 + \beta e^{-z})^{- \imath \lambda - g} \; (1 - \beta e^{-z})^{- \imath \lambda + g - 1} \\[6pt]
	\times (e^z + \beta)^{\imath(\lambda + \rho)} \; (1 - e^{z - x})^{\imath (\lambda - \rho) - 1} .
\end{multline}
Except the first line, this expression coincides with the left hand side of the identity~\eqref{flip-lim}. 

Next consider the same limit for the formula~\eqref{Fflip3} (which we rewrite in original variables)
\begin{multline}
	F(x, y)  \underset{y \to \ln \beta^+}{=} (y - \ln\beta)^{-\imath \rho + g - 1} \; 2^{-\imath \rho - g}  \; \beta^{-2\imath \rho} \\[6pt]
	\times e^{-2 \imath \rho x} \; (1 + \beta e^{-x})^{- \imath \rho - g} \; (1 - \beta e^{-x})^{- \imath \rho + g - 1} \; (e^x + \beta)^{\imath(\lambda + \rho)} \\[6pt]
	\times \int_0^\infty ds \;\, s^{\imath(\lambda - \rho) - 1} \, (1 + 2\beta s)^{\imath \rho - g } .
\end{multline}
The remaining integral after rescaling $s \to s/2\beta$ becomes Euler's beta integral
\begin{align}
	\int_0^\infty ds \;\, s^{\imath(\lambda - \rho) - 1} \, (1 + 2\beta s)^{\imath \rho - g } = (2\beta)^{\imath (\rho -\lambda)} \; \frac{\Gamma(\imath \lambda - \imath \rho) \, \Gamma ( g - \imath \lambda )}{\Gamma ( g - \imath \rho )},
\end{align}
which coincides with the coefficient from the right~\eqref{flip-lim}. Collecting all together we arrive at the claimed formula.

One subtle point is that to obtain~\eqref{Flim} we interchanged limit and integration. To justify it we use dominated convergence theorem. Namely, in addition to previous assumptions suppose $\Re(\imath \lambda), \Re(\imath \rho) < 0$, this restriction can be removed at the end by analytic continuation. Since $e^y \geq \beta$, we then can use inequality
\begin{align}
	\bigl| (e^z + e^y)^{\imath(\lambda + \rho)} \bigr| = (e^z + e^y)^{\Re(\imath\lambda) + \Re(\imath \rho)} \leq (e^z + \beta)^{\Re(\imath\lambda) + \Re(\imath \rho)}
\end{align}
to bound the integrand in~\eqref{Fflip2} by integrable function, which doesn't depend on $y$. Analogous arguments should be applied to the limit of the second formula~\eqref{Fflip3}.

\section{Gauss--Givental bounds} \label{sec:GG-bounds}

Denote by $\mathcal{P}_n$ the space of continuous polynomially bounded functions of $n$ variables
\begin{align} \label{Pspace}
	\mathcal{P}_n = \bigl\{ \phi \in C(\mathbb{R}^n) \colon \quad | \phi(\bm{x}_n) | \leq P(|x_1|, \dots, |x_n|), \quad P \text{ --- polynomial} \bigr\}.
\end{align}
In this appendix we show that Baxter and raising operators of $GL$ and $BC$ Toda systems act invariantly on this space.

For this we need the following three bounds. The last two are proved in~\cite{BDK}, although here we state their slightly weaker versions, which are sufficient in the present context.

\begin{lemma} \label{lem:line}
	Let $m \in \mathbb{N}_0$, $\kappa > 0$ and $x \in \mathbb{R}$. Then
	\begin{align}
		\int_{\mathbb{R}} dy \; |y|^m \, \exp \bigl( \kappa(x - y) - e^{x - y} \bigr) \leq P(|x|),
	\end{align}
	where $P(|x|) \equiv P(|x|; m, \kappa)$ is polynomial in $|x|$, and convergence of this integral is uniform in $x$ from compact subsets of $\mathbb{R}$.
\end{lemma}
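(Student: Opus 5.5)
The plan is to remove the dependence on $x$ from the exponential weight by substituting $t = y - x$. The integral then becomes
\begin{align*}
	\int_{\mathbb{R}} dt \; |x + t|^m \, \exp\bigl( -\kappa t - e^{-t} \bigr).
\end{align*}
Using the elementary inequality $|x+t|^m \leq 2^{m}\bigl(|x|^m + |t|^m\bigr)$, we bound this by
\begin{align*}
	2^m |x|^m \, c_0 + 2^m c_m, \qquad c_k = \int_{\mathbb{R}} dt \; |t|^k \, e^{-\kappa t - e^{-t}} .
\end{align*}
The right-hand side is a polynomial $P(|x|; m, \kappa)$, provided each $c_k$ is finite.

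Finiteness of $c_k$ reduces to the two tails of the weight. As $t \to +\infty$ we have $e^{-t} \to 0$, so the weight behaves like $e^{-\kappa t}$. This decays exponentially precisely because $\kappa > 0$, which is where that hypothesis enters. As $t \to -\infty$, the term $e^{-e^{-t}} = e^{-e^{|t|}}$ decays doubly exponentially. It therefore overwhelms both $e^{\kappa|t|}$ and $|t|^k$.

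Alternatively, the substitution $u = e^{-t}$ turns $c_0$ into $\Gamma(\kappa)$. More generally, $c_k$ becomes $\int_0^\infty |\ln u|^k u^{\kappa-1} e^{-u}\,du < \infty$. Either way all constants are finite.

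For uniform convergence in $x$ on a compact set $|x| \leq R$, the integrand in the $t$ variable is dominated by
\begin{align*}
	2^m\bigl(R^m + |t|^m\bigr) e^{-\kappa t - e^{-t}},
\end{align*}
which is integrable and independent of $x$. Hence the tails $\int_{|t|>T}$ tend to zero uniformly in $x$. Since $|y| \leq |t| + R$, tails in $y$ correspond to tails in $t$, so the same conclusion holds for the original integral.

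No step is a real obstacle. The only point needing care is the tail $t \to +\infty$, where positivity of $\kappa$ is essential.
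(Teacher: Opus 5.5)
Your proposal is correct and follows the paper's proof essentially verbatim. Both arguments shift the variable to remove $x$ from the weight and bound $|x+t|^m$ by powers of $|x|$ and $|t|$: the paper expands $(|x|+|z|)^m$, while you use the cruder $2^m(|x|^m+|t|^m)$, which is equally valid. Both then use finiteness of $\int |t|^k e^{-\kappa t - e^{-t}}\,dt$ for $\kappa>0$, and get uniformity on compacts from an $x$-independent integrable dominating function.
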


\begin{lemma} \cite[Corollary 1]{BDK} \label{lem:2lines}
	Let $m \in \mathbb{N}_0$, $\kappa_1, \kappa_2 \geq 0$ and $x_1, x_2 \in \mathbb{R}$. Then
	\begin{align}
		\int_{\mathbb{R}} dy\; |y|^m \, \exp\bigl(\kappa_1 (x_1 - y) + \kappa_2 (y - x_2) - e^{x_1 - y} - e^{y - x_2} \bigr) \leq P(|x_1|, |x_2|),
	\end{align}
	where $P(|x_1|, |x_2|) \equiv P(|x_1|, |x_2|; m, \kappa_1, \kappa_2)$ is polynomial in $|x_j|$, and convergence of this integral is uniform in $x_1, x_2$ from compact subsets of $\mathbb{R}$.
\end{lemma}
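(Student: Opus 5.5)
The statement to prove is Lemma~\ref{lem:2lines}, the bound on
\[
\int_{\mathbb{R}} dy\; |y|^m \, \exp\bigl(\kappa_1 (x_1 - y) + \kappa_2 (y - x_2) - e^{x_1 - y} - e^{y - x_2} \bigr).
\]
The plan is to split the line $\mathbb{R}$ at the midpoint $c = (x_1 + x_2)/2$ and handle each half-line separately. On each half we keep the one double exponential that decays there, which reduces the problem to the one-line estimate of Lemma~\ref{lem:line}. The remaining double exponential is then bounded using that the two exponents are comparable.

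On $y \geq c$ we write the exponent as $\kappa_2 (y - x_2) - e^{y - x_2}$ plus $\kappa_1(x_1 - y) - e^{x_1 - y}$. The function $t \mapsto \kappa_1 t - e^{t}$ is bounded above by the constant $\kappa_1 \ln \kappa_1 - \kappa_1$ (interpreted as $0$ when $\kappa_1 = 0$). So the second factor is bounded by a constant $C(\kappa_1)$ uniformly in all variables.

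The case $\kappa_2 > 0$ needs a little care, and also the degenerate case $\kappa_2 = 0$. In the latter the factor $e^{-e^{y - x_2}}$ still decays double exponentially as $y \to +\infty$. Thus
\[
\int_c^\infty |y|^m e^{-e^{y - x_2}}\, dy \leq \int_{\mathbb{R}} |y|^m \exp\bigl(\tfrac12 (y - x_2) - e^{y - x_2}\bigr)\, e^{-\frac12 (y - x_2)}\, dy,
\]
with $y - x_2 \geq (x_1 - x_2)/2$ on the region. This gives a prefactor $e^{|x_1 - x_2|/4}$, which is not polynomial.

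To fix this we instead use the other term on the region near $c$. We substitute $u = y - x_2$ and bound $|y|^m \leq 2^m(|x_2|^m + |u|^m)$. The integrals $\int_{\mathbb{R}} |u|^k e^{\kappa u - e^u}\, du$ are finite for $\kappa > 0$; this is exactly Lemma~\ref{lem:line} in reflected form. That handles $\kappa_2 > 0$ with a polynomial in $|x_2|$.

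When $\kappa_2 = 0$ we combine both exponentials on the region $c \leq y \leq x_2 + 1$. There $e^{x_1 - y}$ and $e^{y - x_2}$ are not both small unless $|x_1 - x_2|$ is bounded. The key observation is that
\[
e^{x_1 - y} + e^{y - x_2} \geq 2 e^{(x_1 - x_2)/2},
\]
so when $x_1 > x_2$ the integrand is super-exponentially small, while when $x_1 \leq x_2$ the interval has length $O(|x_1| + |x_2| + 1)$ and the integrand is at most $|y|^m$. Either way we get a polynomial bound, and the tail $y > x_2 + 1$ is treated by the double-exponential decay directly.

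The half-line $y \leq c$ is symmetric, with the roles of the two lines and of $\kappa_1, \kappa_2$ exchanged. Uniform convergence on compacts follows from the same estimates, since all bounds are continuous in $x_1, x_2$. The main obstacle is the degenerate case $\kappa_j = 0$, where no single line gives integrability; it is handled by the case split $x_1 > x_2$ versus $x_1 \leq x_2$ using the AM--GM bound above.
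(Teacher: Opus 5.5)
Your argument is correct. There is no in-paper proof to compare it with: the paper quotes the lemma from \cite[Corollary~1]{BDK}. The closest in-paper argument is the proof of Lemma~\ref{lem:line}, and its substitute-then-bound-moments step is exactly your treatment of $\kappa_2>0$. You are right that $\kappa_j=0$ needs separate care, and this case is genuinely used: for $\Lambda_n(\lambda)$ with $\lambda\in\mathbb{R}$ in Proposition~\ref{prop:QL-A-space}, both $\kappa_j$ vanish. Your route is more roundabout than necessary, though. Since $\sup_t(\kappa t-e^t)<\infty$ for every $\kappa\ge 0$, if either $\kappa_j>0$ you can bound the other line's factor by a constant on all of $\mathbb{R}$. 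That reduces the integral directly to Lemma~\ref{lem:line} or its reflection, with no midpoint split. If $\kappa_1=\kappa_2=0$, split at $x_1-1$ and $x_2+1$. The outer pieces are bounded by $\int_1^\infty(|x_j|+u)^m e^{-e^u}\,du$. The middle piece, when nonempty, has length at most $|x_1-x_2|+2$ and integrand at most $(|x_1|+|x_2|+1)^m$. So the AM--GM bound $e^{x_1-y}+e^{y-x_2}\ge 2e^{(x_1-x_2)/2}$ is not needed for a polynomial bound. It only buys sharper decay as $x_1-x_2\to+\infty$, the kind of decay recorded in Proposition~\ref{prop:bc-bound}. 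In your write-up it also reuses $e^{-e^{x_1-y}}$ after you already spent it absorbing $e^{\kappa_1(x_1-y)}$. That is harmless only because the region $[c,x_2+1]$ has length $<1$ when $x_1>x_2$. In a clean version, drop the exploratory detour that ends in the non-polynomial prefactor $e^{|x_1-x_2|/4}$. Also justify uniformity for $|x_1|,|x_2|\le A$ by the single dominating function $|y|^m\exp\bigl(\kappa_1(A-y)+\kappa_2(y+A)-e^{-A-y}-e^{y-A}\bigr)$, rather than by continuity of the bounds.
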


\begin{lemma} \cite[Lemma 3]{BDK} \label{lem:boundary}
	Let $m \in \mathbb{N}_0$, $g > 0$, $\kappa \geq 0$ and $x \in \mathbb{R}$. Then
	\begin{align}\label{lem3}
		\int_{0}^\infty dy \; y^m  \, (1 + e^{-y})^{- g} \, (1 - e^{-y})^{g - 1} \, \exp\bigl(\kappa (y + x) - e^{y + x}\bigr) \leq P(|x|),
	\end{align}
	where  $P(|x|) \equiv P(|x|; m, g, \kappa)$ is polynomial in $|x|$, and convergence of this integral is uniform in~$x$ from compact subsets of $\mathbb{R}$.
\end{lemma}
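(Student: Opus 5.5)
The statement to prove is Lemma~\ref{lem:boundary}. The plan is to split the integration region at a point depending on $x$ and use crude polynomial bounds on each part.

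Start by bounding the boundary factors. For $y>0$ we have $(1+e^{-y})^{-g}\le 1$. The factor $(1-e^{-y})^{g-1}$ is bounded by $1$ when $g\ge1$. When $g<1$ it behaves like $y^{g-1}$ near $y=0$ and is integrable there; for $y\ge1$ it is bounded by the constant $(1-e^{-1})^{g-1}$. Hence on $(0,\infty)$
\begin{align*}
(1+e^{-y})^{-g}(1-e^{-y})^{g-1}\le C_g\bigl(1+y^{g-1}\theta(1-y)\bigr).
\end{align*}
It therefore suffices to bound two integrals: $J_1=\int_0^1 y^{m+g-1}e^{\kappa(y+x)-e^{y+x}}\,dy$, and $J_2=\int_0^\infty y^m e^{\kappa(y+x)-e^{y+x}}\,dy$.

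For $J_1$, the function $t\mapsto e^{\kappa t-e^{t}}$ is bounded on $\mathbb{R}$ because $\kappa\ge0$. Its supremum is a constant $M_\kappa$, so $J_1\le M_\kappa/(m+g)$, which is independent of $x$.

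For $J_2$, substitute $t=y+x$ to get $\int_x^\infty (t-x)^m e^{\kappa t-e^t}\,dt$. Use $(t-x)^m\le 2^m(|t|^m+|x|^m)$ and extend the integration to all of $\mathbb{R}$. This gives
\begin{align*}
J_2\le 2^m\int_{\mathbb{R}}(|t|^m+|x|^m)e^{\kappa t-e^t}\,dt.
\end{align*}
This is finite when $\kappa>0$: there is double-exponential decay as $t\to+\infty$ and decay like $e^{\kappa t}$ as $t\to-\infty$. The bound is then of the form $A+B|x|^m$, a polynomial in $|x|$.

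The main obstacle is the case $\kappa=0$, where the integrand does not decay as $t\to-\infty$. Here I would not extend the integral. Instead, for $x\ge0$ the contribution is double-exponentially small, since $e^{-e^{t}}\le e^{-e^{x}}$ once $t\ge x$. For $x<0$ split the range at $t=0$. On $[x,0]$ the integrand is at most $|x|^m$, so that piece is at most $|x|^{m+1}$. On $[0,\infty)$ bound $(t-x)^m$ by $2^m(t^m+|x|^m)$, which is integrable against $e^{-e^t}$. Together these give a polynomial bound of degree $m+1$.

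Uniform convergence on compact sets in $x$ follows from the same estimates. The integrable majorants used above are uniform for $x$ in a compact set, so dominated convergence (or the Weierstrass test) applies.
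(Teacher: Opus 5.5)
Your proof is correct. The paper does not prove this lemma itself; it is quoted from \cite[Lemma~3]{BDK}, so there is no in-paper argument to match.

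The closest thing in the paper is its proof of Lemma~\ref{lem:line}, which is exactly your treatment of $J_2$: shift the variable, bound $|t-x|^m\le 2^m(|t|^m+|x|^m)$, and extend the integral to all of $\mathbb{R}$. That trick only works for $\kappa>0$, which is why Lemma~\ref{lem:line} assumes it.

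Your proof adds the two things the boundary lemma needs beyond that trick:
\begin{itemize}
\item the bound $(1+e^{-y})^{-g}(1-e^{-y})^{g-1}\le C_g\bigl(1+y^{g-1}\theta(1-y)\bigr)$, which confines the endpoint singularity for $g<1$ to an $x$-independent term $J_1$;
\item a separate argument for $\kappa=0$.
\end{itemize}
The $\kappa=0$ case is the one that actually matters in the paper. In the proof of Lemma~\ref{lem:VW-space} the lemma is applied with $\kappa=\Im\lambda$, and $\kappa=0$ is the case of the raising operator $\LLambda_n(\lambda)$ with $\lambda\in\mathbb{R}$. Your degree $m+1$ there cannot be avoided: for $\kappa=0$ the integral behaves like $|x|^{m+1}/(m+1)$ as $x\to-\infty$.

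One step should be tightened. For $\kappa=0$ and $x\ge 0$, the bound $e^{-e^{t}}\le e^{-e^{x}}$ for $t\ge x$ is constant in $t$. By itself it does not make $\int_x^\infty (t-x)^m e^{-e^{t}}\,dt$ finite. Put $s=t-x$ and use $e^{x+s}\ge e^{x}(1+s)\ge e^{x}+s$ for $x,s\ge 0$. This gives $\int_x^\infty (t-x)^m e^{-e^{t}}\,dt\le m!\,e^{-e^{x}}$, which is the double-exponential smallness you asserted.

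For the uniformity claim, you can make the majorant explicit. For $|x|\le R$, the function $C_g\,y^m\bigl(1+y^{g-1}\theta(1-y)\bigr)e^{\kappa(y+R)-e^{y-R}}$ dominates the integrand, does not depend on $x$, and is integrable on $(0,\infty)$.
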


\begin{proof}[Proof of Lemma~\ref{lem:line}]
	Change variable in the integral in question to $z = x - y$
	\begin{align}
		\int_{\mathbb{R}} dy \; |y|^m \, e^{ \kappa(x - y) - e^{x - y} } = \int_{\mathbb{R}} dz \; |x - z|^m \, e^{ \kappa z - e^{z} } \leq \int_{\mathbb{R}} dz \; (|x| + |z|)^m \, e^{ \kappa z - e^{z} }.
	\end{align}
	Then the desired bound follows from expanding brackets, since (recall that $\kappa > 0$)
	\begin{align}
		\int_{\mathbb{R}} dz \; |z|^\ell \, e^{\kappa z - e^{z}} < \infty.
	\end{align}
	The above bound is clearly uniform in $x$ from compact subsets.
\end{proof}

\subsection{$GL$ system}

Recall $GL$ Toda raising and Baxter operators
\begin{align} \label{LQdef}
	\begin{aligned}
		& \bigl[ \Lambda_n(\lambda) \, \phi \bigr] (\bm{x}_n) = \int_{\mathbb{R}^{n - 1}} d\bm{y}_{n - 1} \; \exp \biggl( \imath \lambda \bigl( \underline{\bm{x}}_n - \underline{\bm{y}}_{n - 1} \bigr) - \sum_{j = 1}^{n - 1} ( e^{x_j - y_j} + e^{y_j - x_{j + 1} } ) \biggr) \, \phi(\bm{y}_{n - 1}), \\[6pt]
		& \bigl[ Q_n(\lambda) \, \phi \bigr] (\bm{x}_n) = \int_{\mathbb{R}^n} d\bm{y}_n \; \exp \biggl( \imath \lambda \bigl( \underline{\bm{x}}_n - \underline{\bm{y}}_{n} \bigr) - \sum_{j = 1}^{n - 1} ( e^{x_j - y_j} + e^{y_j - x_{j + 1} } ) - e^{x_n - y_n} \biggr) \, \phi(\bm{y}_{n}),
	\end{aligned}
\end{align}
where we always assume $\bm{x}_n \in \mathbb{R}^n$. The following axillary operator
\begin{align} \label{L'def}
	\bigl[ \Lambda'_n(\lambda) \, \phi \bigr](\bm{x}_n) = e^{-\imath \lambda x_1} \, \bigl[ \Lambda_n(\lambda) \, \phi \bigr](\bm{x}_n) 
\end{align}
will be also useful when we pass to the case of $BC$ system.

\begin{proposition} \label{prop:QL-A-space}
	We have
	\begin{align} \label{L'space}
		& \Lambda'_n(\lambda) \colon \; \mathcal{P}_{n - 1} \; \to \; \mathcal{P}_n, && \hspace{-2cm} \Im \lambda \geq 0, \\[6pt] \label{Lspace}
		& \Lambda_n(\lambda) \colon \; \mathcal{P}_{n - 1} \; \to \; \mathcal{P}_n, && \hspace{-2cm} \lambda \in \mathbb{R}, \\[6pt] \label{Qspace}
		& Q_n(\lambda) \colon \; \mathcal{P}_n \; \to \; \mathcal{P}_n, && \hspace{-2cm} \Im \lambda < 0.
	\end{align}	
\end{proposition}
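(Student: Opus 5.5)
The plan is to bound the kernels directly and integrate out the variables one at a time, each time applying one of Lemmas~\ref{lem:line} or~\ref{lem:2lines}. Polynomial boundedness of the output is preserved at every stage. Continuity of the output then follows from the uniform convergence on compact sets asserted in those lemmas, by dominated convergence. Let $\phi\in\mathcal{P}_{n-1}$ with $|\phi(\bm{y})|\le P(|y_1|,\dots,|y_{n-1}|)$. Expanding $P$ into monomials, it suffices to treat $|\phi|\le\prod_j|y_j|^{m_j}$. Such a monomial bound factorizes over the integration variables, so each $y_j$ can be integrated separately.

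\textbf{The operator $\Lambda'_n(\lambda)$.} The kernel of $\Lambda'_n(\lambda)$ has modulus
\[
\exp\Bigl(-\kappa\,x_1+\kappa(\underline{\bm{x}}_n-\underline{\bm{y}}_{n-1})-\sum_j(e^{x_j-y_j}+e^{y_j-x_{j+1}})\Bigr),\qquad \kappa=-\Im\lambda\le0 .
\]
Regroup the linear part as $\kappa\sum_{j=1}^{n-1}(x_{j+1}-y_j)$. This regrouping is exactly why the prefactor $e^{-\imath\lambda x_1}$ was introduced. The integral then factorizes into one-dimensional integrals
\[
\int dy_j\,|y_j|^{m_j}\exp\bigl(|\kappa|(y_j-x_{j+1})-e^{x_j-y_j}-e^{y_j-x_{j+1}}\bigr).
\]
Each of these is bounded by Lemma~\ref{lem:2lines} with $\kappa_1=0$ and $\kappa_2=|\kappa|\ge0$. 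The product of the resulting polynomials gives a polynomial bound in $|x_j|$, which proves \eqref{L'space}.

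\textbf{The operator $\Lambda_n(\lambda)$.} For $\lambda\in\mathbb{R}$, we have $|e^{-\imath\lambda x_1}|=1$. Hence \eqref{Lspace} follows immediately from the case $\Im\lambda=0$ of \eqref{L'space}.

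\textbf{The operator $Q_n(\lambda)$.} Here $\kappa=-\Im\lambda>0$, and the linear part $\kappa(\underline{\bm{x}}_n-\underline{\bm{y}}_n)$ is regrouped differently, as $\kappa\sum_{j=1}^n(x_j-y_j)$. The integrals over $y_1,\dots,y_{n-1}$ are of the form of Lemma~\ref{lem:2lines} with $\kappa_1=\kappa$ and $\kappa_2=0$, with the two exponentials $e^{x_j-y_j}$ and $e^{y_j-x_{j+1}}$. The last integral over $y_n$ contains only $e^{x_n-y_n}$ and the factor $e^{\kappa(x_n-y_n)}$. Without the line $e^{y_n-x_{n+1}}$, convergence as $y_n\to+\infty$ relies entirely on $\kappa>0$; this is precisely the content of Lemma~\ref{lem:line}, and is why the strict condition $\Im\lambda<0$ is required. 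Multiplying the bounds gives \eqref{Qspace}.

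\textbf{Main obstacle.} The only real subtlety is the correct regrouping of the linear exponent, so that every one-dimensional integral has nonnegative coefficients as required by the lemmas. The rest is bookkeeping, together with checking the uniformity needed for continuity of the output.
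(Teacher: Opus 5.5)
Your proposal is correct and follows the paper's proof essentially step for step. It uses the same regrouping of the linear exponent: for $\Lambda'_n(\lambda)$, Lemma~\ref{lem:2lines} with $\kappa_1=0$, $\kappa_2=\Im\lambda\ge 0$; for $Q_n(\lambda)$, Lemma~\ref{lem:2lines} with $\kappa_1=-\Im\lambda>0$, $\kappa_2=0$, together with Lemma~\ref{lem:line} for the last variable. It also obtains the $\Lambda_n$ case from the $\Lambda'_n$ case in the same way, and derives continuity from the locally uniform convergence the lemmas provide.
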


\begin{proof}
	First, notice that the second statement~\eqref{Lspace} follows from the first one~\eqref{L'space} due to definition~\eqref{L'def}. For the first statement by definition we have 
	\begin{align}
		\Bigl| \bigl[ \Lambda'_n(\lambda) \, \phi \bigr](\bm{x}_n) \Bigr| \leq \int_{\mathbb{R}^{n - 1}} d\bm{y}_{n - 1} \; P(|y_1|, \dots, |y_{n - 1}|) \, \prod_{j = 1}^{n - 1} \, e^{ \Im \lambda \, (y_j - x_{j + 1}) - e^{x_j - y_j} - e^{y_j - x_{j + 1}} } .
	\end{align}
	Expanding polynomial in terms of monomials $|y_1|^{m_1} \cdots |y_{n - 1}|^{m_{n - 1}}$ we obtain the sum with terms factorised into one dimensional integrals
	\begin{align}
		\int_{\mathbb{R}} dy_j \; |y_j|^{m_j} \, e^{ \Im \lambda \, (y_j - x_{j + 1}) - e^{x_j - y_j} - e^{y_j - x_{j + 1}} },
	\end{align}
	where $\Im \lambda \geq 0$. By Lemma~\ref{lem:2lines} these integrals are bounded by polynomials $P(|x_{j}|, |x_{j + 1}|)$ and converge uniformly in $x_j, x_{j + 1}$. This gives the needed polynomial estimate
	\begin{align}
		\Bigl| \bigl[ \Lambda'_n(\lambda) \, \phi \bigr](\bm{x}_n) \Bigr| \leq P(|x_1|, \dots, |x_n|),
	\end{align}
	while uniformity ensures that $\bigl[ \Lambda'_n(\lambda) \, \phi \bigr](\bm{x}_n)$ is continuous in $\bm{x}_n$.
	
	Now consider the third statement~\eqref{Qspace}. Due to definition 
	\begin{multline}
		\Bigl| \bigl[ Q_n(\lambda) \, \phi \bigr] (\bm{x}_n) \Bigr| \leq \int_{\mathbb{R}^n} d\bm{y}_n \; P(|y_1|, \dots, |y_n|) \, e^{- \Im \lambda \, (x_n - y_n) - e^{x_n - y_n}} \\
		\times \prod_{j = 1}^{n - 1} e^{ -\Im \lambda \, (x_j - y_j) - e^{x_j - y_j} - e^{y_j - x_{j + 1}} } .
	\end{multline}
	Again writing polynomial in terms of monomials we obtain the sum with terms factorised into integrals of two types
	\begin{align}
		& \int_{\mathbb{R}} dy_j \; |y_j|^{m_j} \, e^{ -\Im \lambda \, (x_j - y_j) - e^{x_j - y_j} - e^{y_j - x_{j + 1}} }  \qquad (j = 1, \dots, n - 1), \\[6pt]
		& \int_{\mathbb{R}} dy_n \; |y_n|^{m_n} \, e^{- \Im \lambda \, (x_n - y_n) - e^{x_n - y_n}},
	\end{align}
	where $\Im \lambda < 0$. All of them are polynomially bounded and converge uniformly in $x_j$ by Lemmas~\ref{lem:line},~\ref{lem:2lines}. 
\end{proof}

As a by-product, from the above proposition we get the following estimate for $GL$ Toda wave function
\begin{align}
	\Phi_{\bm{\lambda}_n}(\bm{x}_n) = \Lambda_n(\lambda_n) \cdots \Lambda_1(\lambda_1) \cdot 1.
\end{align}

\begin{corollary} \label{cor:Phi-bound}
	Let $\bm{\lambda}_n \in \mathbb{R}^n$. Then $\Phi_{\bm{\lambda}_n}(\bm{x}_n) \in \mathcal{P}_n$, and the corresponding integral is absolutely convergent.
\end{corollary}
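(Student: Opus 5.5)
The plan is an induction on $n$ using the iterative Gauss--Givental representation
\[
\Phi_{\bm{\lambda}_n}(\bm{x}_n) = \Lambda_n(\lambda_n)\,\Phi_{\bm{\lambda}_{n-1}}(\bm{x}_{n-1}), \qquad \Phi_{\lambda_1}(x_1)=e^{\imath\lambda_1 x_1},
\]
together with statement~\eqref{Lspace} of Proposition~\ref{prop:QL-A-space}.

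For the base case $n=1$, the function $e^{\imath\lambda_1 x_1}$ is continuous and bounded by $1$ when $\lambda_1\in\mathbb{R}$. Hence it lies in $\mathcal{P}_1$, and there is no integral to check.

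For the induction step, assume $\Phi_{\bm{\lambda}_{n-1}}\in\mathcal{P}_{n-1}$ and that its integral representation converges absolutely. Applying~\eqref{Lspace} with $\lambda=\lambda_n\in\mathbb{R}$ gives $\Lambda_n(\lambda_n)\Phi_{\bm{\lambda}_{n-1}}\in\mathcal{P}_n$.

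The remaining point is absolute convergence of the whole iterated integral, as a multiple integral over all intermediate variables simultaneously, and not merely of each successive one-dimensional integration. For this I bound the modulus of the full integrand. Since all $\lambda_k$ are real, the phases $e^{\imath\lambda_k(\dots)}$ have modulus one, so the modulus of the integrand equals the same integrand with every $\lambda_k$ set to zero. That is, it is the integrand of $\Lambda_n(0)\cdots\Lambda_1(0)\cdot 1$, whose entries are positive.

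By Tonelli's theorem, the multiple integral of this positive function equals the iterated integral $\Phi_{0,\dots,0}(\bm{x}_n)$. That quantity is finite, since it belongs to $\mathcal{P}_n$ by the first part of the argument applied with $\bm{\lambda}_n=0$. Therefore the original integral converges absolutely, and Fubini's theorem lets us integrate in any order.

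The only mildly delicate step is this passage from "each operator preserves $\mathcal{P}$" to joint absolute convergence. The Tonelli comparison with the $\lambda=0$ integrand handles it directly, because Proposition~\ref{prop:QL-A-space} was proved through exactly such positive bounds, via Lemma~\ref{lem:2lines} with $\kappa_1=\kappa_2=0$. Continuity of $\Phi_{\bm{\lambda}_n}$ is already part of the conclusion $\Lambda_n(\lambda)\colon\mathcal{P}_{n-1}\to\mathcal{P}_n$, which comes from the local uniformity of convergence in Lemma~\ref{lem:2lines}.
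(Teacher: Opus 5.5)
Your proposal is correct and follows the paper's own argument. Membership in $\mathcal{P}_n$ comes from iterating \eqref{Lspace}, and absolute convergence comes from the kernel bound $\bigl|\Lambda_{\lambda_k}(\bm{x}_k | \bm{y}_{k-1})\bigr| \leq \Lambda_0(\bm{x}_k | \bm{y}_{k-1})$ for real $\lambda_k$; the only difference is that you write out the Tonelli comparison with $\Phi_{0,\dots,0}$, which the paper leaves implicit.
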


\begin{proof}
	The fact that wave function belongs to $\mathcal{P}_n$ is clear from~\eqref{Lspace}. The absolute convergence follows from the property of the raising operators kernels
	\begin{align}
		\bigl| \Lambda_{\lambda_k}(\bm{x}_k | \bm{y}_{k - 1}) \bigr| \leq \Lambda_{0}(\bm{x}_k | \bm{y}_{k - 1}) , \qquad \lambda_k \in \mathbb{R},
	\end{align}
	 see~\eqref{LQdef}.
\end{proof}

\subsection{$BC$ system}

Introduce two axillary operators
\begin{align} \nonumber
	& \bigl[ V_n(\lambda) \, \phi \bigr] (\bm{x}_n) = \frac{(2\beta)^{\imath \lambda}}{\Gamma(g - \imath \lambda)} \, \int_{\mathbb{R}^n} d\bm{y}_n \; (1 + \beta e^{-y_1})^{- \imath \lambda - g} \, (1 - \beta e^{-y_1})^{- \imath \lambda + g - 1} \, \theta(y_1 - \ln \beta)\\[6pt]  \label{Vdef}
	& \hspace{2.35cm} \times \exp \biggl( \imath \lambda \bigl( \underline{\bm{x}}_n - \underline{\bm{y}}_n  \bigr) - \sum_{j = 1}^{n - 1} (e^{y_j - x_{j}} + e^{x_j - y_{j + 1}}) - e^{y_n - x_n} \biggr)  \; \phi(\bm{y}_n), \\[6pt] \nonumber
	& \bigl[ W_n(\lambda) \, \phi \bigr] (\bm{x}_n) = \frac{(2\beta)^{\imath \lambda}}{\Gamma(g - \imath \lambda)} \, \int_{\mathbb{R}^{n + 1}} d\bm{y}_{n + 1} \; (1 + \beta e^{-y_1})^{- \imath \lambda - g} \, (1 - \beta e^{-y_1})^{- \imath \lambda + g - 1} \, \theta(y_1 - \ln \beta)  \\[6pt] \label{Wdef}
	& \hspace{2.5cm}  \times  \exp \biggl( \imath \lambda \bigl( \underline{\bm{x}}_n - \underline{\bm{y}}_{n + 1} - y_1 \bigr) - \sum_{j = 1}^{n} (e^{y_j - x_{j}} + e^{x_j - y_{j + 1}}) \biggr) \; \phi(\bm{y}_{n + 1}).
\end{align}
Then $BC$ Toda raising and Baxter operators from Sections~\ref{sec:rais-op} and~\ref{sec:baxt-op} can be written as the products of $GL$ Toda operators with axillary ones
\begin{align} \label{BC-op-split}
	\begin{aligned}
		& \LLambda_n(\lambda) = V_n(\lambda) \, \Lambda_n(-\lambda), \\[6pt]
		& \QQ_n(\lambda) = W_n(\lambda) \, \Lambda'_{n + 1}(-\lambda), \\[6pt]
		& \QQr_n(\lambda) = V_n(-\lambda) \, Q_n(\lambda).
	\end{aligned}
\end{align}
This splitting is easily visualized in the language of diagrams from Section~\ref{sec:diagrams}.

By Proposition~\ref{prop:QL-A-space} $GL$ Toda operators act within the space of polynomially bounded continuous functions, so it is left to establish the same thing about axillary operators. 

\begin{lemma} \label{lem:VW-space}
	We have
	\begin{align}  \label{Vspace}
		& V_n(\lambda) \colon \; \mathcal{P}_{n} \; \to \; \mathcal{P}_n, && \hspace{-2cm} \Im \lambda \geq 0, \\[6pt]
		& W_n(\lambda) \colon \; \mathcal{P}_{n + 1} \; \to \; \mathcal{P}_n, && \hspace{-2cm} \Im \lambda \in (-g, 0).
	\end{align}
\end{lemma}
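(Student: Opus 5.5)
We follow the proof of Proposition~\ref{prop:QL-A-space}. Given $\phi \in \mathcal{P}_n$ (resp.\ $\mathcal{P}_{n+1}$), bound $|\phi|$ by a polynomial and expand it into monomials $\prod_j |y_j|^{m_j}$. We then show that each resulting integral factorises into one-dimensional integrals of the types treated in Lemmas~\ref{lem:line}, \ref{lem:2lines}, \ref{lem:boundary}. Each factor is then polynomially bounded in the external variables and converges uniformly on compacts, which gives both the polynomial bound and continuity. The prefactor $(2\beta)^{\imath\lambda}/\Gamma(g-\imath\lambda)$ is just a constant.

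\emph{Operator $V_n(\lambda)$, $\Im\lambda \ge 0$.} Put $\kappa = \Im \lambda \ge 0$. The modulus of the exponential factor is $e^{\kappa(\underline{\bm{y}}_n - \underline{\bm{x}}_n)}$. Group $e^{\kappa(y_1-x_1)}$ with the boundary weight and $e^{\kappa(y_{j}-x_{j})}$, $j\ge2$, with the lines $e^{x_{j-1}-y_j}$ and $e^{y_j-x_j}$. Then the variable $y_j$, $2\le j\le n$, appears only in
\[
|y_j|^{m_j}\exp\bigl(\kappa(y_j - x_j) - e^{x_{j-1}-y_j} - e^{y_j-x_j}\bigr),
\]
which is Lemma~\ref{lem:2lines} with $\kappa_1=0$, $\kappa_2=\kappa$. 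The variable $y_1$ appears in
\[
|y_1|^{m_1}(1+\beta e^{-y_1})^{-\kappa'-g}(1-\beta e^{-y_1})^{-\kappa'+g-1}\theta(y_1-\ln\beta)\, e^{\kappa(y_1-x_1) - e^{y_1-x_1}},
\]
where $\kappa' = \Re(\imath\lambda) = -\Im\lambda = -\kappa$. The boundary exponents are therefore $\kappa-g$ and $\kappa+g-1$. Substitute $y_1 = t+\ln\beta$, $t>0$. Since $(1+e^{-t})^{\kappa}$ is bounded and $(1-e^{-t})^{\kappa}\le1$, the factor is dominated by the integrand of Lemma~\ref{lem:boundary} with shifted argument $x = \ln\beta - x_1$. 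The factor $|t+\ln\beta|^{m_1}$ is bounded by a polynomial in $t$. This settles $V_n$.

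\emph{Operator $W_n(\lambda)$, $\Im\lambda\in(-g,0)$.} Now $\kappa=\Im\lambda<0$. The modulus of the exponential is $e^{\kappa(\underline{\bm{x}}_n - \underline{\bm{y}}_{n+1}-y_1)}$. Write $-\kappa = |\kappa| > 0$. For $2\le j\le n$ the variable $y_j$ appears in
\[
e^{|\kappa|(y_j - x_{j-1}) - e^{x_{j-1}-y_j} - e^{y_j-x_j}}
\]
after pairing $x_{j-1}$ with $y_j$, which is Lemma~\ref{lem:2lines}. The variable $y_{n+1}$ appears in $e^{|\kappa|(y_{n+1}-x_n) - e^{x_n-y_{n+1}}}$, i.e.\ $e^{-|\kappa|(x_n-y_{n+1})-e^{x_n-y_{n+1}}}$. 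This is of the form in Lemma~\ref{lem:line} only after the change $y\to -y$ direction, and the decay required there is exactly $|\kappa|>0$. That is why $\Im\lambda<0$ is needed. The leftover variable $y_1$ appears in
\[
(1+\beta e^{-y_1})^{\kappa-g}(1-\beta e^{-y_1})^{\kappa+g-1}\theta(y_1-\ln\beta)\, e^{2|\kappa| y_1 - e^{y_1-x_1}}.
\]
Here we write the $y_1$ weight as $e^{2|\kappa|(y_1-x_1)}$ times the constant-in-$y$ factor $e^{2|\kappa|x_1}$, which is polynomial-free. Hence $x_1$ enters only through $e^{2|\kappa| x_1}$.

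The main obstacle is this last $x_1$-dependence. The factor $e^{2|\kappa|x_1}$ combined with the rest is not obviously polynomial, so the grouping must be chosen so that no bare exponentials in $x_j$ remain. We would instead pair $e^{|\kappa|(y_1 - x_1)}$ into the boundary factor (Lemma~\ref{lem:boundary}, with exponent $\kappa+g-1>-1$ ensured by $\Im\lambda>-g$) and distribute the remaining $e^{-|\kappa| x_j}$ and $e^{|\kappa|y_j}$ along the chain. The total count of $x$'s ($n$) against $y$'s ($n+1$, plus the extra $y_1$) must balance exactly. We would track this bookkeeping carefully, checking that each $y_j$ receives precisely one $+|\kappa|$ paired with a neighbouring $x$, and that $g-|\kappa|>0$ keeps the boundary singularity $(1-\beta e^{-y_1})^{\kappa+g-1}$ integrable at $y_1=\ln\beta$.

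Once each factor is handled, uniform convergence on compacts yields continuity, and $V_n$, $W_n$ map into $\mathcal{P}_n$ as claimed.
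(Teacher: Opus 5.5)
Your argument for $V_n(\lambda)$ is correct and is essentially the paper's. The only difference is cosmetic: you discard $(1+e^{-t})^{\kappa}\le 2^{\kappa}$ and $(1-e^{-t})^{\kappa}\le 1$ and apply Lemma~\ref{lem:boundary} with the original $g$, while the paper rewrites the boundary factor with $g'=g+\Im\lambda$.

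The part on $W_n(\lambda)$ has a genuine gap, and it starts with a sign error. By the same rule you used for $V_n$, the modulus of $\exp\bigl(\imath\lambda(\underline{\bm{x}}_n-\underline{\bm{y}}_{n+1}-y_1)\bigr)$ is $\exp\bigl(\Im\lambda\,(\underline{\bm{y}}_{n+1}+y_1-\underline{\bm{x}}_n)\bigr)$, not $e^{\kappa(\underline{\bm{x}}_n-\underline{\bm{y}}_{n+1}-y_1)}$. With your sign, the $y_{n+1}$-factor $e^{|\kappa|(y_{n+1}-x_n)-e^{x_n-y_{n+1}}}$ is not integrable as $y_{n+1}\to+\infty$, and no change of variables repairs this. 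The $y_1$-weight also grows, which produces the factor $e^{2|\kappa|x_1}$ you ran into.

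The remedy you sketch cannot work. The exponent contains $n$ weights in $x$ against $n+2$ in $y$ ($y_1$ appears twice), so no exact pairing exists and an unpaired $y$-weight necessarily survives. Moving $e^{|\kappa|(y_1-x_1)}$ into the boundary factor only shifts the problem. What is missing is the mechanism that absorbs the surplus.

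With the correct sign, pair $x_{j-1}$ with $y_j$ for $j=2,\dots,n+1$:
\begin{itemize}
\item the factors $e^{|\kappa|(x_{j-1}-y_j)-e^{x_{j-1}-y_j}-e^{y_j-x_j}}$ are covered by Lemma~\ref{lem:2lines} with $\kappa_1=|\kappa|$, $\kappa_2=0$;
\item the factor $e^{|\kappa|(x_n-y_{n+1})-e^{x_n-y_{n+1}}}$ is exactly Lemma~\ref{lem:line}, and this is where $\Im\lambda<0$ is used.
\end{itemize}
The leftover is $e^{2\Im\lambda\, y_1}$, which \emph{decays}. On the support of $\theta(y_1-\ln\beta)$ it is bounded by the constant $\beta^{2\Im\lambda}$.

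After that, the $y_1$-integral is of the $V_n$ type with zero exponential weight. Since $(1+\beta e^{-y_1})^{2\Im\lambda}\le 1$, the boundary factor is dominated by $(1+e^{-t})^{-g'}(1-e^{-t})^{g'-1}$ with $g'=g+\Im\lambda>0$; this is where $\Im\lambda>-g$ enters. Lemma~\ref{lem:boundary} with $\kappa=0$ then gives a polynomial bound in $|x_1|$. So it is the step function, not a rebalancing of the chain, that handles the extra $y_1$.
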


Before we prove this lemma let us remark that together with Proposition~\ref{prop:QL-A-space} and formulas~\eqref{BC-op-split} it gives the following statement. 

\begin{proposition} \label{prop:LQQr-space}
	We have
	\begin{align} \label{LLspace}
		& \LLambda_n(\lambda) \colon \; \mathcal{P}_{n - 1} \; \to \; \mathcal{P}_n, && \hspace{-2cm} \lambda \in \mathbb{R}, \\[6pt]
		& \QQ_n(\lambda) \colon \; \mathcal{P}_{n} \; \to \; \mathcal{P}_n, && \hspace{-2cm} \Im \lambda \in (-g, 0), \\[6pt]
		& \QQr_n(\lambda) \colon \; \mathcal{P}_{n} \; \to \; \mathcal{P}_n, && \hspace{-2cm} \Im \lambda < 0.
	\end{align}
\end{proposition}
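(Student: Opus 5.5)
The plan is to deduce Proposition~\ref{prop:LQQr-space} directly from the factorizations~\eqref{BC-op-split}, combining Proposition~\ref{prop:QL-A-space} for the $GL$ factors with Lemma~\ref{lem:VW-space} for the auxiliary factors. Before that, I would check the factorizations at the level of kernels. In each case one integrates out the intermediate variables of the right-hand product and compares with the explicit kernels~\eqref{Lker},~\eqref{GG13},~\eqref{Qr-ker}, keeping track of the signs of the spectral parameters. For example, in $V_n(\lambda)\Lambda_n(-\lambda)$ the $\bm{z}_n$ variables of~\eqref{Lker} play the role of the $\bm{y}_n$ variables of~\eqref{Vdef}. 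The exponents combine as $\imath\lambda(\underline{\bm{x}}_n-\underline{\bm{z}}_n)-\imath\lambda(\underline{\bm{z}}_n-\underline{\bm{y}}_{n-1})$, which reproduces the kernel of $\LLambda_n(\lambda)$. The analogous bookkeeping, including the extra $-y_1$ in~\eqref{Wdef} that compensates the factor $e^{-\imath\lambda x_1}$ in $\Lambda'_{n+1}$, gives the other two identities.

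With the factorizations in hand, the mapping properties follow by composition.
\begin{itemize}
\item For $\lambda\in\mathbb{R}$: $\Lambda_n(-\lambda)$ maps $\mathcal{P}_{n-1}\to\mathcal{P}_n$ by~\eqref{Lspace}, and $V_n(\lambda)$ maps $\mathcal{P}_n\to\mathcal{P}_n$ since $\Im\lambda=0\ge 0$.
\item For $\Im\lambda\in(-g,0)$: $\Lambda'_{n+1}(-\lambda)$ maps $\mathcal{P}_n\to\mathcal{P}_{n+1}$ because $\Im(-\lambda)>0$, and then $W_n(\lambda)$ maps $\mathcal{P}_{n+1}\to\mathcal{P}_n$.
\item For $\Im\lambda<0$: $Q_n(\lambda)$ preserves $\mathcal{P}_n$, and $V_n(-\lambda)$ preserves $\mathcal{P}_n$ since $\Im(-\lambda)>0$.
\end{itemize}

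One must also confirm that the operator defined by the full kernel coincides with the composition applied to $\phi$. This requires absolute convergence of the iterated integral, so that Fubini--Tonelli can be used. I would obtain it by repeating the proof of Lemma~\ref{lem:VW-space} and Proposition~\ref{prop:QL-A-space} with $|\phi|$ in place of $\phi$ and with absolute values of all kernel factors. These bounds are exactly what those proofs control.

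The real content therefore lies in Lemma~\ref{lem:VW-space}, which must be proved next. For $V_n(\lambda)$ I would bound $|\phi|$ by a polynomial and expand it into monomials $|y_1|^{m_1}\cdots|y_n|^{m_n}$, so that the integral factorizes along the chain. The variables $y_2,\dots,y_{n}$ are two-line integrals of the form in Lemma~\ref{lem:2lines}, after distributing the weight $e^{\Im\lambda\,(y_j-x_j)}$ with nonnegative exponents. The boundary variable $y_1$, with the factor $(1+\beta e^{-y_1})^{-\imath\lambda-g}(1-\beta e^{-y_1})^{-\imath\lambda+g-1}$, reduces after the shift $y_1\to y_1+\ln\beta$ to Lemma~\ref{lem:boundary}. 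The $|\cdot|$ of the complex powers produces extra factors $(1\pm\beta e^{-y_1})^{\Im\lambda}$, which are harmless since they are bounded above or absorbed.

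For $W_n(\lambda)$ the same scheme applies, but the boundary variable now carries $e^{-2\imath\lambda y_1}$. This contributes $e^{2\Im\lambda\, y_1}$ with $\Im\lambda<0$, giving decay at $+\infty$, while the factor $(1-\beta e^{-y_1})^{\Im\lambda+g-1}$ is integrable at $y_1=\ln\beta$ exactly when $\Im\lambda>-g$. I expect the main obstacle to be this step: allocating the exponential weights among the $n+1$ chain variables so that every one-dimensional integral falls under Lemmas~\ref{lem:line}--\ref{lem:boundary} with admissible signs of $\kappa$. The hardest part is the last variable $y_{n+1}$, which is bounded only by $e^{-e^{x_n-y_{n+1}}}$ from one side and must receive the decay from $e^{-\imath\lambda y_{n+1}}$, which holds since $\Im\lambda<0$. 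Continuity follows from the uniformity statements in those lemmas.
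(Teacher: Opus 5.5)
Your proposal is correct and follows the paper's route: the proposition is obtained from the factorizations \eqref{BC-op-split} combined with Proposition~\ref{prop:QL-A-space} and Lemma~\ref{lem:VW-space}, and the lemma is proved by expanding into monomials and applying Lemmas~\ref{lem:line}--\ref{lem:boundary}, just as you plan. Your explicit kernel check of the factorizations and your Fubini--Tonelli remark add detail the paper leaves implicit (it is partly treated in Section~\ref{sec:ker-prod}), and your treatment of the boundary factors differs only cosmetically from the paper's substitution $g'=g+\Im\lambda$.
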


\begin{proof}[Proof of Lemma~\ref{lem:VW-space}]
	First, consider the operator $V_n(\lambda)$. By definition~\eqref{Vdef} we have
	\begin{multline} \label{Vphi-b}
		\Bigl| \bigl[ V_n(\lambda) \, \phi \bigr] (\bm{x}_n) \Bigr| \leq \frac{(2\beta)^{-\Im \lambda}}{|\Gamma(g - \imath \lambda)| } \, \int_{\mathbb{R}^n} d\bm{y}_n \; P(|y_1|, \dots, |y_n|) \, \prod_{j = 2}^n \, e^{\Im \lambda\,  (y_j - x_j) - e^{x_{j - 1} - y_j} - e^{y_j - x_{j}}} \\[6pt]
		\times e^{\Im \lambda \, (y_1 - x_1) - e^{y_1 - x_1}} \, (1 + \beta e^{-y_1})^{\Im \lambda - g} \, (1 - \beta e^{-y_1})^{\Im \lambda + g - 1} \, \theta(y_1 - \ln \beta).
	\end{multline}
	Expanding polynomial in terms of monomials $|y_1|^{m_1} \cdots |y_n|^{m_n}$ we obtain sum with terms factorised into integrals of two types
	\begin{align}
		& \int_{\mathbb{R}} dy_j \; |y_j|^{m_j} \, e^{\Im \lambda\,  (y_j - x_j) - e^{x_{j - 1} - y_j} - e^{y_j - x_{j}}} \qquad (j = 2, \dots, n), \\[6pt] \label{int-b1}
		& \int_{\ln \beta}^\infty dy_1 \; |y_1|^{m_1} \, e^{\Im \lambda \, (y_1 - x_1) - e^{y_1 - x_1}} \, (1 + \beta e^{-y_1})^{\Im \lambda - g} \, (1 - \beta e^{-y_1})^{\Im \lambda + g - 1},
	\end{align}
	where $\Im \lambda \geq 0$. Integrals from the first line are polynomially bounded and converge uniformly in $x_{j - 1}, x_j$ due to Lemma~\ref{lem:2lines}. In the integral from the second line we can use inequality
	\begin{align}
		(1 + \beta e^{-y_1})^{\Im \lambda - g} =  (1 + \beta e^{-y_1})^{2\Im \lambda} \, (1 + \beta e^{-y_1})^{-\Im \lambda - g} \leq 4^{\Im \lambda} \, (1 + \beta e^{-y_1})^{- \Im \lambda - g}, 
	\end{align}
	and change integration variable to $z = y_1 - \ln\beta$, so that 
	\begin{multline}
		\int_{\ln \beta}^\infty dy_1 \; |y_1|^{m_1} \, e^{\Im \lambda \, (y_1 - x_1) - e^{y_1 - x_1}} \, (1 + \beta e^{-y_1})^{\Im \lambda - g} \, (1 - \beta e^{-y_1})^{\Im \lambda + g - 1} \\[6pt]
		\leq 4^{\Im \lambda} \, \int_{0}^\infty dz \; (|z| + |\ln \beta|)^{m_1} \, e^{\Im \lambda \, (z + \ln \beta - x_1) - e^{z + \ln \beta - x_1}} \, (1 +  e^{-z})^{-g'} \, (1 -  e^{-z})^{g' - 1},
	\end{multline}
	where we also denoted $g' = \Im \lambda + g > 0$. Then Lemma~\ref{lem:boundary} says that the last integral is bounded polynomially and converges uniformly in $x$. Hence, the whole right hand side~\eqref{Vphi-b} is bounded by polynomial in $\bm{x}_n$ and $\bigl[ V_n(\lambda) \, \phi \bigr] (\bm{x}_n)$ is continuous.
	
	Now consider the operator $W_n(\lambda)$. By definition~\eqref{Wdef} we have
	\begin{align} \label{Wphi-b}
		\begin{aligned}
			\Bigl| \bigl[ W_n(\lambda) \, \phi \bigr] (\bm{x}_n) & \Bigr| \leq \frac{(2\beta)^{-\Im \lambda}}{|\Gamma(g - \imath \lambda)| } \, \int_{\mathbb{R}^{n + 1}} d\bm{y}_{n + 1} \; P(|y_1|, \dots, |y_{n + 1}|) \\[6pt]
			&  \times e^{- \Im \lambda \, (x_n - y_{n + 1}) - e^{x_n - y_{n + 1}} } \, \prod_{j = 2}^n \, e^{ -\Im \lambda\,  (x_{j - 1} - y_j) - e^{x_{j - 1} - y_j} - e^{y_j - x_{j}}}  \\[6pt]
			& \times e^{2\Im \lambda \, y_1- e^{y_1 - x_1}} \, (1 + \beta e^{-y_1})^{\Im \lambda - g} \, (1 - \beta e^{-y_1})^{\Im \lambda + g - 1} \, \theta(y_1 - \ln \beta) .
		\end{aligned}
	\end{align}
	Again writing polynomial in terms of monomials we obtain sum with terms factorised into integrals of three types
	\begin{align}
		& \int_{\mathbb{R}} dy_{n + 1} \; |y_{n+1}|^{m_{n + 1}} \, e^{- \Im \lambda \, (x_n - y_{n + 1}) - e^{x_n - y_{n + 1}} }, \\[6pt]
		& \int_{\mathbb{R}} dy_j \; |y_j|^{m_j} \, e^{ -\Im \lambda\,  (x_{j - 1} - y_j) - e^{x_{j - 1} - y_j} - e^{y_j - x_{j}}} \qquad (j = 2, \dots, n), \\[6pt]
		& \int_{\ln\beta}^\infty dy_1 \; |y_1|^{m_1} \, e^{2\Im \lambda \, y_1- e^{y_1 - x_1}} \, (1 + \beta e^{-y_1})^{\Im \lambda - g} \, (1 - \beta e^{-y_1})^{\Im \lambda + g - 1},
	\end{align}
	where $\Im \lambda < 0$. The first two are polynomially bounded due to Lemmas~\ref{lem:line} and~\ref{lem:2lines} correspondingly. In the last integral we can use inequality 
	\begin{align}
		e^{2\Im \lambda \, y_1} = \beta^{2\Im \lambda} \, e^{2\Im \lambda \, (y_1 - \ln \beta)} \leq \beta^{2\Im \lambda}
	\end{align}
	to bound it by the integral of the same type, as we already encountered~\eqref{int-b1}. The remaining steps are the same, as before, so that we obtain polynomial bound for the whole expression~\eqref{Wphi-b}.
\end{proof}

\subsection{Kernels of operator products} \label{sec:ker-prod}

Raising and Baxter operators of $BC$ Toda split into products of two simpler operators~\eqref{BC-op-split}. Besides, in the paper we consider local relations between various products of the above operators, such as
\begin{align}
	Q_n(\lambda) \, Q_n(\rho) = Q_n(\rho) \, Q_n(\lambda), \qquad \QQ_n(\lambda) \, \QQ_n(\rho) = \QQ_n(\rho) \, \QQ_n(\lambda),
\end{align}
and others. Let us prove that the kernels of all these products are given by absolutely convergent integrals.

For example, consider the $BC$ Baxter operator~\eqref{BC-op-split}
\begin{align}
	\QQ_n(\lambda) = W_n(\lambda) \, \Lambda'_{n + 1}(-\lambda),
\end{align}
with $\Im \lambda \in (-g, 0)$. By Proposition~\ref{prop:LQQr-space} the following integral is convergent for $\phi \in \mathcal{P}_n$
\begin{align}
	\bigl[ \QQ_n(\lambda) \, \phi \bigr] (\bm{x}_n) = \int_{\mathbb{R}^{n + 1}} d\bm{y}_{n + 1} \int_{\mathbb{R}^n} d\bm{z}_n \; W_\lambda(\bm{x}_n | \bm{y}_{n + 1}) \,\Lambda'_{-\lambda}(\bm{y}_{n + 1} | \bm{z}_n) \, \phi(\bm{z}_n),
\end{align}
where in the integrand we have kernels of operators $W_n(\lambda)$, $\Lambda'_{n + 1}(-\lambda)$. Moreover, since
\begin{align} \label{WL-abs}
	\begin{aligned}
		& \bigl| W_\lambda(\bm{x}_n | \bm{y}_{n + 1}) \bigr| =  \biggl| \frac{\Gamma(g + \Im \lambda)}{\Gamma(g - \imath \lambda)} \biggr| \; W_{\imath \Im \lambda}(\bm{x}_n | \bm{y}_{n + 1}), \\[6pt]
		& \bigl| \Lambda'_{-\lambda}(\bm{y}_{n + 1} | \bm{z}_n)  \bigr| = \Lambda'_{-\imath \Im \lambda}(\bm{y}_{n + 1} | \bm{z}_n)
	\end{aligned}
\end{align}
and $| \phi |$ is still in $\mathcal{P}_n$, the above integral is absolutely convergent. By Fubini--Tonelli theorem this means that we can interchange the order of integrals and write
\begin{align}
	\bigl[ \QQ_n(\lambda) \, \phi \bigr] (\bm{x}_n) = \int_{\mathbb{R}^n} d\bm{z}_n \; \QQ_\lambda(\bm{x}_n | \bm{z}_n) \, \phi(\bm{z}_n),
\end{align}
where
\begin{align} \label{QQkerWL}
	\QQ_\lambda(\bm{x}_n | \bm{z}_n) = \int_{\mathbb{R}^{n + 1}} d\bm{y}_{n + 1} \; W_\lambda(\bm{x}_n | \bm{y}_{n + 1}) \,\Lambda'_{-\lambda}(\bm{y}_{n + 1} | \bm{z}_n).
\end{align}
Now notice that the kernel 
\begin{align}
	\Lambda'_{-\lambda}(\bm{y}_{n + 1} | \bm{z}_n) = \prod_{j = 1}^n \exp \bigl( \imath \lambda(z_j - y_{j + 1}) - e^{y_j - z_j} - e^{z_j - y_{j + 1}} \bigr),
\end{align}
as a function of $\bm{y}_{n + 1}$, belongs to $\mathcal{P}_{n + 1}$. Indeed, it is continuous and bounded by constant
\begin{align}
	\bigl| \Lambda'_{-\lambda}(\bm{y}_{n + 1} | \bm{z}_n) \bigr| \leq \prod_{j = 1}^n \exp \bigl( - \Im \lambda (z_j - y_{j + 1}) - e^{z_j - y_{j + 1}} \bigr) \leq C(\lambda),
\end{align}
since $\Im \lambda < 0$. By Lemma~\ref{lem:VW-space}, the operator $W_n(\lambda)$ is well-defined on $\mathcal{P}_{n + 1}$, hence, the integral~\eqref{QQkerWL} is convergent for $\Im \lambda \in (-g, 0)$. Furthermore, because of~\eqref{WL-abs} it is absolutely convergent. 

The same arguments can be applied to all other kernels of operator products appearing in the paper. For example, the kernel of $Q$-operators' product
\begin{align}
	\QQ_n(\lambda) \, \QQ_n(\rho) = W_n(\lambda) \, \Lambda'_{n + 1}(-\lambda) \, W_n(\rho) \, \Lambda'_{n + 1}(-\rho)
\end{align}
is given by convergent integral, since again the kernel of the last operator $\Lambda'_{-\rho}(\bm{y}_{n + 1} | \bm{z}_n)$ belongs to $\mathcal{P}_{n + 1}$ (as a function of $\bm{y}_{n + 1}$).

Let us also consider the operator product from Section~\ref{sec:gl-bc-rel}
\begin{multline} 
	Q^t_{n - 1}(\lambda) \, \Lambda^t_{n}(- \lambda) \, \exp(- \beta e^{-x_1}) \, \LLambda_n(\rho)  \\[6pt]
	= Q^t_{n - 1}(\lambda) \, \Lambda^t_{n}(- \lambda) \, \exp(- \beta e^{-x_1}) \, V_n(\rho) \, \Lambda_n(-\rho), \qquad \Im \lambda < 0, \quad \rho \in \mathbb{R},
\end{multline}
since it is slightly trickier. Here the transposed operators are defined as follows
\begin{align}
	& \bigl[ Q^t_{n - 1}(\lambda) \, \phi \bigr] (\bm{x}_{n - 1}) = \int_{\mathbb{R}^{n - 1}} d\bm{y}_{n - 1} \; Q_{\lambda}(\bm{y}_{n - 1} | \bm{x}_{n - 1}) \, \phi(\bm{y}_{n - 1}), \\[6pt]
	& \bigl[ \Lambda^t_n(-\lambda) \, \phi \bigr] (\bm{x}_{n - 1}) = \int_{\mathbb{R}^n} d\bm{y}_n \; \Lambda_{-\lambda}(\bm{y}_n | \bm{x}_{n - 1}) \, \phi(\bm{y}_n).
\end{align}
The kernel of the above operator product equals
\begin{multline}
	\int_{\mathbb{R}^{n - 1}} d\bm{y}_{n - 1} \int_{\mathbb{R}^n} d\bm{z}_n \int_{\mathbb{R}^n} d\bm{s}_n \; Q_{\lambda}(\bm{y}_{n - 1} | \bm{x}_{n - 1}) \, \Lambda_{-\lambda}(\bm{z}_n | \bm{y}_{n - 1}) \, \exp(-\beta e^{-z_1}) \\[6pt]
	\times  V_\rho(\bm{z}_n | \bm{s}_n) \, \Lambda_{-\rho}(\bm{s}_n | \bm{t}_{n - 1}).
\end{multline}
Note that $\Lambda_{-\rho}(\bm{s}_n | \bm{t}_{n - 1})$, as a function of $\bm{s}_n$, belongs to $\mathcal{P}_n$. So, again to prove that the last integral is absolutely convergent it is enough to show that the product 
\begin{align} \label{QtLteV}
	Q^t_{n - 1}(\lambda) \, \Lambda^t_{n}(- \lambda) \, \exp(- \beta e^{-x_1}) \, V_n(\rho)
\end{align}
is well defined on the space $\mathcal{P}_{n}$. 

From the explicit formulas~\eqref{LQdef} we have
\begin{align}
	\Lambda^t_n(-\lambda) \, \exp(-\beta e^{-x_1}) \colon \; \phi(\bm{x}_n) \; \mapsto \; \bigl[ Q_n(\lambda) \, \phi \bigr](\ln \beta, \bm{x}_{n - 1}).
\end{align}
Besides,
\begin{align}
	Q_{n - 1}^t(\lambda) = \mathcal{I}_{n - 1} \, Q_{n - 1}(\lambda) \, \mathcal{I}_{n - 1}, \qquad \mathcal{I}_{n - 1} \colon \; \phi(x_1, \dots, x_{n - 1}) \mapsto \phi(-x_{n - 1}, \dots, -x_1).
\end{align}
Therefore by Proposition~\ref{prop:QL-A-space} and Lemma~\ref{lem:VW-space} the product~\eqref{QtLteV} is well defined on $\mathcal{P}_{n}$.

\subsection{From equality of kernels to equality of operators} \label{sec:ker-op-eq}

All local relations between $GL$ and $BC$ Toda operators hold on the space of polynomially bounded continuous functions. In this section we demonstrate that to establish the equality of operators on this space it is sufficient to prove the equality of kernels.

For example, consider the commutativity
\begin{align} \label{QQcomm-op}
	\QQ_n(\lambda) \, \QQ_n(\rho) = \QQ_n(\rho) \, \QQ_n(\lambda), \qquad \Im \lambda, \, \Im \rho \in (-g, 0).
\end{align}
By Proposition~\ref{prop:LQQr-space} both sides of this relation are well defined on $\mathcal{P}_n$ and, as we argued in Section~\ref{sec:ker-prod}, the kernels of both products are given by absolutely convergent integrals.
In Section~\ref{sec:op-rel} using diagram technique we prove that these kernels are equal to each other
\begin{align}\label{QQker-eq}
	\int_{\mathbb{R}^n} d\bm{y}_n \; \QQ_{\lambda}(\bm{x}_n | \bm{y}_n) \, \QQ_{\rho}(\bm{y}_n | \bm{z}_n) =	\int_{\mathbb{R}^n} d\bm{y}_n \; \QQ_{\rho}(\bm{x}_n | \bm{y}_n) \, \QQ_{\lambda}(\bm{y}_n | \bm{z}_n).
\end{align}
Let us show that, as a consequence, the identity~\eqref{QQcomm-op} holds on $\mathcal{P}_n$. 

Acting on $\phi \in \mathcal{P}_n$ from the left we have
\begin{align} \label{QQphi}
	\bigl[ \QQ_n(\lambda) \, \QQ_n(\rho) \, \phi \bigr] (\bm{x}_n) = \int_{\mathbb{R}^n} d\bm{y}_n \int_{\mathbb{R}^n} d\bm{z}_n \; \QQ_{\lambda}(\bm{x}_n | \bm{y}_n) \, \QQ_{\rho}(\bm{y}_n | \bm{z}_n) \, \phi(\bm{z}_n).
\end{align}
From definition
\begin{align}
	\bigl| \QQ_\lambda(\bm{x}_n | \bm{y}_n) \bigr| = \biggl| \frac{\Gamma(g + \Im \lambda)}{\Gamma(g - \imath \lambda)} \biggr| \; \QQ_{\imath \Im \lambda}(\bm{x}_n | \bm{y}_n),
\end{align}
and clearly $| \phi | \in \mathcal{P}_n$. Hence, by Proposition~\ref{prop:LQQr-space} the integral~\eqref{QQphi} is absolutely convergent (in the initial order). Therefore, by Fubini--Tonelli theorem we can interchange the order of integrals
\begin{align}
	\bigl[ \QQ_n(\lambda) \, \QQ_n(\rho) \, \phi \bigr] (\bm{x}_n) = \int_{\mathbb{R}^n} d\bm{z}_n  \biggl[ \int_{\mathbb{R}^n} d\bm{y}_n  \; \QQ_{\lambda}(\bm{x}_n | \bm{y}_n) \, \QQ_{\rho}(\bm{y}_n | \bm{z}_n) \biggr] \, \phi(\bm{z}_n).
\end{align}
The same steps are applied to the right hand side of~\eqref{QQcomm-op}. As a result, this identity follows from the equality of kernels~\eqref{QQker-eq} in square brackets. 

All other local identities considered in the paper hold on the space of polynomially bounded continuous functions by the same arguments.

\subsection{Bounds on $BC$ Toda wave function} \label{sec:bc-bound}

For $\bm{k}_n \in \mathbb{N}_0^n$ denote standardly
\begin{align}
	\partial_{\bm{x}_n}^{\bm{k}_n} = \partial_{x_1}^{k_1} \cdots \partial_{x_n}^{k_n}.
\end{align}
Also, for brevity, we introduce
\begin{align}
	C(\bm{\lambda}_n) = \prod_{j = 1}^n \frac{1}{| \Gamma(g - \imath \lambda_j)| }.
\end{align}
In previous paper~\cite{BDK} we proved the following bound for $BC$ Toda wave function. 

\begin{proposition} \cite[Proposition 1]{BDK} \label{prop:bc-bound}
	Let $\bm{k}_n \in \mathbb{N}_0^n$ and $\bm{x}_n, \bm{\lambda}_n \in \mathbb{R}^n$. Then $\Psi_{\bm{\lambda}_n}(\bm{x}_n)$ is smooth in $\bm{x}_n$ and admits the bound
	\begin{multline} \label{dPsi-b}
		\Bigl| \partial_{\bm{x}_n}^{\bm{k}_n} \, \Psi_{\bm{\lambda}_n} (\bm{x}_n) \Bigr| \leq C(\bm{\lambda}_n) \, P(|x_1|, \dots, |x_n|, |\lambda_n|) \, \exp\Biggl( \bigl[ k_1 (\ln \beta - x_1) - \beta e^{-x_1} \bigr] \, \theta(\ln \beta - x_1) \\[2pt]
		+ \sum_{j = 1}^{n - 1} \biggl[ (k_j + k_{j + 1}) \, \frac{x_j - x_{j + 1}}{2} -  e^{ \frac{x_j - x_{j + 1} }{2} } \biggr] \, \theta(x_j - x_{j + 1} ) \Biggr),
	\end{multline}
	where $P$ is polynomial, whose coefficients depend on $\bm{k}_n, \beta, g$. In particular, for $\bm{k}_n = (0, \dots, 0)$ it doesn't depend on $\lambda_n$
	\begin{align} \label{Psin-b}
		\begin{aligned}
			\bigl| \Psi_{\bm{\lambda}_n}(\bm{x}_n) \bigr| & \leq C(\bm{\lambda}_n) \,  P(|x_1|, \dots, |x_n|) \\[6pt]
			& \times \exp\Biggl( - \beta e^{-x_1} \, \theta(\ln \beta - x_1) - \sum_{j = 1}^{n - 1}  e^{ \frac{x_j - x_{j + 1} }{2} }  \, \theta(x_j - x_{j + 1} ) \Biggr).
		\end{aligned}
	\end{align}
\end{proposition}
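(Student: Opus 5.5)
The bound is proved by induction on $n$, working directly with the Gauss--Givental representation $\Psi_{\bm{\lambda}_n} = \LLambda_n(\lambda_n)\,\Psi_{\bm{\lambda}_{n-1}}$ and the splitting $\LLambda_n(\lambda) = V_n(\lambda)\,\Lambda_n(-\lambda)$ from~\eqref{BC-op-split}. Absolute values are taken inside all integrals, so every kernel is replaced by its value at $\Im$-parameters. Since $\lambda_j\in\mathbb{R}$, the factor $|(2\beta)^{\imath\lambda}/\Gamma(g-\imath\lambda)|$ gives exactly $C(\bm{\lambda}_n)$ and the remaining integrand no longer depends on $\bm\lambda_n$. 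This is why the $k=0$ bound~\eqref{Psin-b} has a $\lambda$-independent polynomial.

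For derivatives, $\partial_{x_j}$ is applied to the explicit kernel~\eqref{Lker}. Because of the factor $e^{\imath\lambda\underline{\bm{x}}_n}$ and the exponentials $e^{z_j-x_j}$, $e^{x_j-z_{j+1}}$, each derivative produces polynomials in $\lambda_n$ and in the terms $e^{z_j-x_j}$, $e^{x_j-z_{j+1}}$. Smoothness and the interchange of derivative and integral follow from uniform convergence on compacts, as in Lemmas~\ref{lem:line}--\ref{lem:boundary}.

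The decay factors come from localizing the integration variables. For $x_j>x_{j+1}$, the chain $x_j\to z_{j+1}\to x_{j+1}$ contains the factors $e^{-e^{x_j-z_{j+1}}}$ and $e^{-e^{z_{j+1}-x_{j+1}}}$ from $V_n$. An analogous pair of factors comes from $\Lambda_n$ and the inductive bound on $\Psi_{\bm\lambda_{n-1}}$. I will use the elementary inequality $e^{a-z}+e^{z-b}\ge 2e^{(a-b)/2}\ge e^{(a-b)/2}$ to extract $\exp(-e^{(x_j-x_{j+1})/2})$, and keep a fraction of each exponent to make the leftover integral converge polynomially, as in Lemma~\ref{lem:2lines}. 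Each derivative hitting such a pair brings at most $e^{(x_j-x_{j+1})/2}$ per unit of $k_j$ or $k_{j+1}$, which gives the factor $\exp((k_j+k_{j+1})(x_j-x_{j+1})/2)$.

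The boundary term $-\beta e^{-x_1}$ for $x_1<\ln\beta$ comes from the constraint $z_1\ge\ln\beta$ together with $e^{-e^{z_1-x_1}}\le e^{-\beta e^{-x_1}}$. Each $\partial_{x_1}$ acting on $e^{-e^{z_1-x_1}}$ produces $e^{z_1-x_1}$; this factor is controlled against the decay, giving $e^{k_1(\ln\beta-x_1)}$ up to polynomial factors.

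The main obstacle is organizing the induction so that the decay factors from the $(n-1)$-particle bound, which are evaluated at the integration variables $\bm{y}_{n-1}$, transfer to the factors in $\bm{x}_n$ without losing more than polynomial factors. For this I will use the interlacing structure of the kernel ($z_j\lesssim x_j$, $y_j$ and $x_j,y_j\lesssim z_{j+1}$ in the effective region) and split each $y$-integral according to whether $y_j$ lies between the adjacent $x$'s. Since this is exactly \cite[Proposition~1]{BDK}, the argument can in the end be cited from there.
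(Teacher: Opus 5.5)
The paper does not prove this statement. It is imported from \cite[Proposition~1]{BDK}, so your final fallback coincides with what the paper does. Your route is the natural one: absolute values inside the Gauss--Givental integral, derivatives falling only on the top kernel $\LLambda_{\lambda_n}(\bm{x}_n|\bm{y}_{n-1})$, and AM--GM on the chains $x_j\to z_{j+1}\to x_{j+1}$. However, the boundary step fails as you state it.

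\textbf{The boundary step.} Using $e^{-e^{z_1-x_1}}\le e^{-\beta e^{-x_1}}$ consumes the whole factor. For $n=1$ the remaining $z_1$-integral then diverges. For $n\ge 2$, the factors $e^{k_1(z_1-x_1)}$ produced by $\partial_{x_1}^{k_1}$ are controlled only through $e^{-e^{z_1-y_1}}$ and the chain $y_1\lesssim z_2\lesssim x_2$. That costs roughly $e^{k_1x_2}$, not a polynomial. If instead you keep a fraction of the exponent, as you do for the pairs, you only get $-(1-\delta)\beta e^{-x_1}$ with $\delta>0$, not the stated $-\beta e^{-x_1}$.

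What works is the additive splitting $e^{z_1-x_1}=\beta e^{-x_1}+(e^{z_1}-\beta)e^{-x_1}$. The first term gives exactly $e^{-\beta e^{-x_1}}$. For $x_1\le\ln\beta$ the remainder obeys $\exp\bigl(-(e^{z_1}-\beta)e^{-x_1}\bigr)\le\exp\bigl(1-e^{z_1-\ln\beta}\bigr)$. This keeps $z_1$ within $O(1)$ of $\ln\beta$ and absorbs $e^{k_1(z_1-\ln\beta)}$ in $e^{k_1(z_1-x_1)}=e^{k_1(\ln\beta-x_1)}\,e^{k_1(z_1-\ln\beta)}$. That is exactly where the factor $e^{k_1(\ln\beta-x_1)}$ in~\eqref{dPsi-b} comes from.

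\textbf{The induction.} The ``main obstacle'' you describe is not there. Every exponential factor in~\eqref{dPsi-b} already comes from the top kernel~\eqref{Lker}. It contains $e^{-e^{x_j-z_{j+1}}-e^{z_{j+1}-x_{j+1}}}$ for every $j=1,\dots,n-1$, together with $e^{-e^{z_1-x_1}}$, and each $y_j$ is confined between $z_j$ and $z_{j+1}$ by the kernel itself. From the lower levels you need only $|\Psi_{\bm{\lambda}_{n-1}}|\le C(\bm{\lambda}_{n-1})\,P(|y_1|,\dots,|y_{n-1}|)$ with $P$ independent of $\bm{\lambda}$. This follows from Proposition~\ref{prop:LQQr-space} applied to the integrand at $\bm{\lambda}=0$. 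No transfer of decay from $\bm{y}_{n-1}$ to $\bm{x}_n$ and no splitting of the $y$-integrals is needed.

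For a pair, put $D=x_j-x_{j+1}\ge0$ and $w=z_{j+1}-\tfrac{x_j+x_{j+1}}{2}$. Then $e^{x_j-z_{j+1}}+e^{z_{j+1}-x_{j+1}}=2e^{D/2}\cosh w$, so $e^{m(x_j-z_{j+1})-2e^{D/2}\cosh w}\le e^{mD/2-e^{D/2}}\,e^{-mw-(2\cosh w-1)}$, and likewise for $e^{m'(z_{j+1}-x_{j+1})}$. This gives the factor $e^{(k_j+k_{j+1})D/2-e^{D/2}}$ and leaves an integrable weight that localizes $z_{j+1}$. Your claim that each derivative brings at most $e^{(x_j-x_{j+1})/2}$ holds only after this localization, not pointwise in $z_{j+1}$.
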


In Section~\ref{sec:gg-mb-equiv} to prove the equivalence of Gauss--Givental and Mellin--Barnes representations for $BC$ Toda wave function we use the inversion formula for $GL$ Toda wave function. 
According to Wallach~\cite{W} (see also~\cite[Section 7]{W2}), it holds for the functions from \textit{Whittaker Schwartz space}~$\mathcal{T}$, which consists of $\psi(\bm{x}_n) \in C^\infty(\mathbb{R}^n)$ such that
\begin{align}\label{sup}
	\sup_{\bm{x}_n \in \mathbb{R}^n} \exp\Biggl( \, \sum_{j = 1}^{n - 1} m_j \, \frac{x_j - x_{j + 1}}{2} \Biggr) \, (1 + |\bm{x}_n|)^d \, | \mathcal{D} \, \psi(\bm{x}_n) | < \infty,
\end{align}
for all $m_j, d \in \mathbb{N}_0$ and constant coefficient differential operators $\mathcal{D}$. From the above proposition we have the following statement.

\begin{corollary} \label{cor:Psi-Tspace}
	Let $\bm{\lambda}_n \in \mathbb{R}^n$ and $\epsilon > 0$. Then
	\begin{align}
		e^{- \epsilon \underline{\bm{x}}_n - \beta e^{-x_1}} \, \Psi_{\bm{\lambda}_n}(\bm{x}_n) \in \mathcal{T}.
	\end{align}
\end{corollary}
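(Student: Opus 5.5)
We have to show that $f(\bm{x}_n) := e^{-\epsilon \underline{\bm{x}}_n - \beta e^{-x_1}} \, \Psi_{\bm{\lambda}_n}(\bm{x}_n)$ satisfies \eqref{sup} for every $m_j, d \in \mathbb{N}_0$ and every constant coefficient operator $\mathcal{D}$. The only input is the derivative bound \eqref{dPsi-b} of Proposition~\ref{prop:bc-bound}, combined with the elementary behaviour of the prefactor.

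\textbf{Step 1 (reduction to monomials).} By linearity it suffices to take $\mathcal{D} = \partial^{\bm{k}_n}_{\bm{x}_n}$. Expand $\partial^{\bm{k}_n}_{\bm{x}_n} f$ by the Leibniz rule. Every derivative of $e^{-\epsilon \underline{\bm{x}}_n - \beta e^{-x_1}}$ equals this exponential times a polynomial in $e^{-x_1}$, and it depends on $x_1$ only through that variable. Hence $\partial^{\bm{k}_n} f$ is a finite sum of terms
\[
p(e^{-x_1}) \, e^{-\epsilon \underline{\bm{x}}_n - \beta e^{-x_1}} \, \partial^{\bm{l}_n} \Psi_{\bm{\lambda}_n}(\bm{x}_n), \qquad \bm{l}_n \leq \bm{k}_n .
\]

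\textbf{Step 2 (applying \eqref{dPsi-b}).} Insert \eqref{dPsi-b} into each term and then bound by regions.

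Region $x_1 < \ln\beta$. The prefactor supplies $e^{-\beta e^{-x_1}}$, and \eqref{dPsi-b} supplies a further $e^{-\beta e^{-x_1}}$. This gives double exponential decay in $-x_1$, which absorbs $p(e^{-x_1})$, the factor $e^{k_1(\ln\beta - x_1)}$, $e^{-\epsilon x_1}$, and any polynomial weight.

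Region $x_j > x_{j+1}$. The factor $e^{-e^{(x_j-x_{j+1})/2}}$ absorbs $e^{(k_j+k_{j+1})(x_j-x_{j+1})/2}$ together with the weight $e^{m_j (x_j-x_{j+1})/2}$. Where $x_j \le x_{j+1}$, the weight $e^{m_j(x_j-x_{j+1})/2}$ is at most $1$.

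It remains to control the polynomial factors $(1+|\bm{x}_n|)^d P(|\bm{x}_n|)$ in the directions left over, which is done with $e^{-\epsilon \underline{\bm{x}}_n}$.

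\textbf{Step 3 (the main obstacle).} Here one must check that the available decay dominates polynomial growth in $|\bm{x}_n|$ in every direction. Consider the cone where $x_1 \ge \ln\beta$ and all the differences $x_j - x_{j+1}$ are bounded above, so that no double exponential is active. There
\[
x_j \ge x_1 - C' \ge \ln\beta - C' ,
\]
so $\underline{\bm{x}}_n \to +\infty$ whenever $|\bm{x}_n| \to \infty$, and $e^{-\epsilon \underline{\bm{x}}_n}$ gives exponential decay in $|\bm{x}_n|$.

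In general, split $\bm{x}_n$ into two parts. Let $u_1 = x_1$ and $u_{j+1} = x_{j+1} - x_j$. Whenever some $u_{j+1}$ is very negative, or $x_1$ is very negative, the corresponding double exponential decays faster than any exponential of $|\bm{x}_n|$. Otherwise $\underline{\bm{x}}_n \ge c|\bm{x}_n| - C$ for some $c > 0$.

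The concrete estimate to use is $\underline{\bm{x}}_n = \sum_j (n-j+1) u_j$ together with $|\bm{x}_n| \le C\sum_j |u_j|$. Each negative part $u_j^-$ is controlled by a double exponential bound $e^{-e^{u_j^-/2}}$, or by $e^{-\beta e^{u_1^-}}$ for $j = 1$, which dominates $e^{A u_j^-}$ for any $A$. The positive parts are controlled by $e^{-\epsilon (n-j+1) u_j^+}$.

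Putting these together bounds the supremum in \eqref{sup} by a finite constant, and smoothness of $f$ follows from Proposition~\ref{prop:bc-bound}.
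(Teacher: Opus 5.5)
Your proof is correct and is essentially the paper's argument: a Leibniz expansion that produces powers of $e^{-x_1}$, insertion of the derivative bound of Proposition~\ref{prop:bc-bound} (so that for $x_1<\ln\beta$ the factor $e^{-\beta e^{-x_1}}$ appears twice), and a factorized estimate in the coordinates given by $x_1$ and the consecutive differences. The paper uses $y_1=\ln\beta-x_1$, $y_j=(x_{j-1}-x_j)/2$, which are your $u_j$ up to shift, sign and scaling, and then bounds each one-dimensional factor exactly as you do.
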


\begin{proof}
	For brevity, denote function in question
	\begin{align}
		\psi(\bm{x}_n) = e^{- \epsilon \underline{\bm{x}}_n - \beta e^{-x_1}} \, \Psi_{\bm{\lambda}_n}(\bm{x}_n).
	\end{align}
	First, by Proposition~\ref{prop:bc-bound} it is smooth. Second, let us calculate its derivatives. For any $\bm{k}_n \in \mathbb{N}_0^n$ we have
	\begin{align}
		\partial_{\bm{x}_n}^{\bm{k}_n} \, \psi(\bm{x}_n) 
		= \sum_i c_i(\epsilon) \, e^{\ell^{(i)} (\ln \beta - x_1) - \epsilon \underline{\bm{x}}_n - \beta e^{-x_1}} \, \partial_{\bm{x}_n}^{\bm{s}_n^{(i)}} \,  \Psi_{\bm{\lambda}_n}(\bm{x}_n),
	\end{align}
	with some coefficients $c_i$ and integers $\ell^{(i)} \in \mathbb{N}_0$, $\bm{s}_n^{(i)} \in \mathbb{N}_0^n$, such that $ \ell^{(i)} \leq k_1$, $s_j^{(i)} \leq k_j$. Since $\ell^{(i)} \geq 0$,
	\begin{align}
		e^{\ell^{(i)} (\ln \beta - x_1) - \beta e^{-x_1}}  \leq e^{ [\ell^{(i)} (\ln \beta - x_1) - \beta e^{-x_1}] \, \theta(\ln \beta - x_1)}.
	\end{align}
	Hence, using Proposition~\ref{prop:bc-bound} and the fact that $\ell^{(i)} \leq k_1$, $s_j^{(i)} \leq k_j$ we derive estimate
	\begin{align}\label{psi-b}
		\begin{aligned}
			\bigl| \partial_{\bm{x}_n}^{\bm{k}_n} \, \psi(\bm{x}_n) \bigr| & \leq P(|x_1|, \dots, |x_n|) \, \exp\Biggl( - \epsilon \underline{\bm{x}}_n + 2\bigl[ k_1 (\ln \beta - x_1) - \beta e^{-x_1} \bigr] \, \theta(\ln \beta - x_1) \\[2pt]
			& + \sum_{j = 1}^{n - 1} \biggl[ (k_j + k_{j + 1}) \, \frac{x_j - x_{j + 1}}{2} -  e^{ \frac{x_j - x_{j + 1} }{2} } \biggr] \, \theta(x_j - x_{j + 1} ) \Biggr),
		\end{aligned}
	\end{align}
	which is sufficient to prove~\eqref{sup}. Notice that function $(1 + |\bm{x}_n|)^d$ from~\eqref{sup} can be absorbed into polynomial $P$
	\begin{align}
		(1 + |\bm{x}_n|)^d \, P(|x_1|, \dots, |x_n|) \leq (1 + | \bm{x}_n|^2)^d \, P(|x_1|, \dots, |x_n|),
	\end{align}
	while in the exponent from~\eqref{sup} we can add step functions
	\begin{align}
		\exp\Biggl( \, \sum_{j = 1}^{n - 1} m_j \, \frac{x_j - x_{j + 1}}{2} \Biggr) \leq \exp\Biggl( \, \sum_{j = 1}^{n - 1} m_j \, \frac{x_j - x_{j + 1}}{2} \, \theta(x_j - x_{j + 1}) \Biggr),
	\end{align}
	so that it has the same form, as exponents from the third line~\eqref{psi-b}. So, these two factors doesn't essentially change estimate from the right hand side~\eqref{psi-b}. To see that this estimate is bounded for all $\bm{x}_n \in \mathbb{R}^n$, we pass to the variables
	\begin{align}
		y_1 = \ln \beta - x_1, \qquad y_2 = \frac{x_1 - x_2}{2}, \qquad \dots \qquad y_n = \frac{x_{n - 1} - x_n}{2}.
	\end{align}
	Expanding polynomial $P$ in monomials $|y_1|^{d_1} \cdots |y_n|^{d_n}$ we can rewrite the estimate~\eqref{psi-b} as the sum of factorised functions
	\begin{align}
		|y_1|^{d_1} \, e^{n \epsilon \, y_1 + 2[k_1 y_1 - e^{y_1}] \, \theta(y_1)} \, \prod_{j = 2}^n |y_j|^{d_j} \, e^{ 2(n + 1 - j) \epsilon y_j + [(k_j + k_{j + 1}) y_j - e^{y_j}] \, \theta(y_j)},
	\end{align}
	Since $\epsilon > 0$, these functions are bounded for all $\bm{y}_n \in \mathbb{R}^n$.
\end{proof}

\section{Mellin--Barnes bounds and analyticity} \label{sec:MB-bounds}

In this section we prove that Mellin--Barnes representations for $GL$ and $BC$ wave functions are absolutely convergent and entire in spectral parameters. For this we use bounds on gamma functions 
\begin{align} \label{gamma-b}
	& | \Gamma(a + \imath b) | \leq \sqrt{2\pi} \, e^{\frac{1}{6a}} \, |a + \imath b|^{a - \frac{1}{2}} \, e^{- \frac{\pi}{2} |b|}, \\[6pt] \label{inv-gamma-b}
	& \frac{1}{| \Gamma(a + \imath b) |} \leq \frac{e^{a + \frac{1}{6a}} }{\sqrt{2\pi}} \, |a + \imath b|^{-a +\frac{1}{2}} \, e^{\frac{\pi}{2} |b|},
\end{align}
where $a > 0$, $b \in \mathbb{R}$. These bounds follow from \cite[p. 34]{PK}.

\subsection{$GL$ system}

The Mellin--Barnes representation for $GL$ Toda wave function is defined by recursive formula
\begin{align} \label{Phi-MB}
	\Phi_{\bm{\lambda}_n}(\bm{x}_n) = \int\limits_{(\mathbb{R} + \imath r)^{n - 1}} d\bm{\gamma}_{n - 1} \; \hat{\mu}(\bm{\gamma}_{n - 1}) \, e^{\imath (\underline{\bm{\lambda}}_n - \underline{\bm{\gamma}}_{n - 1}) x_n } \, \prod_{j = 1}^n \prod_{k = 1}^{n - 1} \Gamma( \imath \lambda_j - \imath \gamma_k) \, \Phi_{\bm{\gamma}_{n - 1}}(\bm{x}_{n - 1})
\end{align}
with one particle wave function $\Phi_{\lambda_1}(x_1) = e^{\imath \lambda_1 x_1}$ and measure
\begin{align} \label{mu-gl}
	\hat{\mu}(\bm{\lambda}_n) = \frac{1}{n! \, (2\pi)^n} \prod_{\substack{j,k =1 \\ j\not= k}}^n \frac{1}{\Gamma(\imath \lambda_j - \imath \lambda_k)} = \frac{1}{n! \, (2\pi)^n} \prod_{1 \leq j < k \leq n} \frac{(\lambda_j - \lambda_k) \, \sh \pi(\lambda_j - \lambda_k)}{\pi}.
\end{align}
The parameter $r \in \mathbb{R}$ is chosen so that poles of gamma functions
\begin{align}
	\gamma_k = \lambda_j - \imath m, \qquad m \in \mathbb{N}_0
\end{align}
lie below the integration contours, that is $r > \Im \lambda_j$ for all $j = 1, \dots, n$. 

The analytic continuation of wave function in spectral variables is achieved by shifting integration contours. To prove that the above integral is absolutely convergent and, as a consequence, justify such shifts, we need the following inequality.

\begin{lemma} \cite[Lemma 2]{BDKK2}
	Let $(y_{m1}, \dots, y_{mm}) \in \mathbb{R}^m$ for $m = 1, \dots n$. Then for any $\delta > 0$ there exists $\epsilon > 0$ such that
	\begin{multline} \label{abs-val-ineq}
		2 \sum_{m =2 }^{n - 1} \, \sum_{1 \leq j < k \leq m} | y_{mj} - y_{mk} | - \sum_{m = 2}^{n} \sum_{j = 1}^m \sum_{k = 1}^{m - 1} |y_{mj} - y_{m - 1, k}| \\
		\leq - \sum_{1 \leq j < k \leq n} | y_{nj} - y_{nk} | + \delta \sum_{j = 1}^n | y_{nj} | - \epsilon \sum_{m = 1}^{n - 1} \sum_{j = 1}^m |y_{mj}|.
	\end{multline}
\end{lemma}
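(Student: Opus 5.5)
The plan is to prove a sharper, $\epsilon$-free inequality by a layer-cake argument, and then extract the $\epsilon$-gain from homogeneity and compactness.

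Denote $P_m=\sum_{1\le j<k\le m}|y_{mj}-y_{mk}|$ (so $P_1=0$) and $C_m=\sum_{j=1}^m\sum_{k=1}^{m-1}|y_{mj}-y_{m-1,k}|$. The claimed inequality~\eqref{abs-val-ineq} can then be rewritten as
\begin{align*}
	F(\bm{y}) := \sum_{m=2}^{n} C_m - 2\sum_{m=2}^{n-1}P_m - P_n + \delta\sum_{j=1}^n|y_{nj}| \;\geq\; \epsilon \sum_{m=1}^{n-1}\sum_{j=1}^m |y_{mj}| .
\end{align*}

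\emph{Step 1: the local inequality $C_m\ge P_m+P_{m-1}$.} I use $|a-b|=\int_{\mathbb{R}}dt\,|\theta(t-a)-\theta(t-b)|$. Fix $t$ and let $p=\#\{j: y_{mj}<t\}$ and $q=\#\{k: y_{m-1,k}<t\}$. The integrands at $t$ are then
\begin{align*}
	C_m &: \; p(m-1-q)+q(m-p), & P_m &: \; p(m-p), & P_{m-1} &: \; q(m-1-q).
\end{align*}
A direct computation shows that the difference equals $(p-q)(p-q-1)$, which is $\ge 0$ for integers $p,q$. Integrating gives the defect
\begin{align*}
	D_m = C_m-P_m-P_{m-1} = \int_{\mathbb{R}} dt\,(p-q)(p-q-1)\ge 0 .
\end{align*}
Summing over $m=2,\dots,n$ and using $P_1=0$ yields
\begin{align*}
	F = \sum_{m=2}^n D_m+\delta\sum_j|y_{nj}| \;\ge\; 0,
\end{align*}
i.e. the statement with $\epsilon=0$.

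\emph{Step 2: the zero set of $F$.} $F$ is continuous, piecewise linear and positively homogeneous of degree one in all variables $\bm{y}$. Suppose $F(\bm{y})=0$. Then $y_{nj}=0$ for all $j$, and every $D_m=0$. Since $D_m=0$ forces $p-q\in\{0,1\}$ for almost every $t$, consecutive rows weakly interlace: $y_{m,(1)}\le y_{m-1,(1)}\le y_{m,(2)}\le\dots\le y_{m-1,(m-1)}\le y_{m,(m)}$ after sorting. Because row $n$ is identically zero, row $n-1$ is squeezed to zero. Descending induction then forces every row to vanish, so $\bm{y}=0$. The one point to treat carefully is ties among the $y$'s; I would handle it by reading off the interlacing from the a.e.\ condition on the counting functions, which is insensitive to ties.

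\emph{Step 3: compactness.} On the compact set $\{\sum_{m,j}|y_{mj}|=1\}$ the function $F$ is continuous and strictly positive by Step 2, so it attains a minimum $\epsilon>0$, which depends on $\delta$ and $n$. On this set $\sum_{m<n}\sum_j|y_{mj}|\le 1$, hence $F\ge\epsilon\sum_{m<n}\sum_j|y_{mj}|$ there. By homogeneity the same inequality holds on all of $\mathbb{R}^{n(n+1)/2}$, which is~\eqref{abs-val-ineq}.

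The main obstacle is Step 2: showing that the only common zero is the origin, including the degenerate configurations with coinciding coordinates. Steps 1 and 3 are routine once the identity $(p-q)(p-q-1)$ for the layer-cake integrands is verified.
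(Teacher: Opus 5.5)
Your proof is correct. The paper gives no proof of this lemma; it is quoted from \cite{BDKK2}. So there is no in-paper argument to match, and your three steps form a complete, self-contained proof.

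The key computation checks out. The layer-cake integrands give $p(m-1-q)+q(m-p)-p(m-p)-q(m-1-q)=(p-q)(p-q-1)$, so $F=\sum_{m=2}^n D_m+\delta\sum_j|y_{nj}|$ holds exactly. The homogeneity and compactness step is sound, since $F$ is a linear combination of absolute values of linear forms.

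The point you flag as the main obstacle is harmless. $p$ and $q$ are integer step functions with finitely many jumps, so $D_m=0$ gives $q\le p\le q+1$ on every open interval between jumps. Once row $m$ vanishes, this reads $q(t)=0$ for $t<0$ and $q(t)=m-1$ for $t>0$, i.e.\ row $m-1$ vanishes. No sorting or tie-breaking is needed.

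What your route buys is an exact identity with a manifestly nonnegative defect. It also shows that the $\delta=\epsilon=0$ inequality is an equality precisely on weakly interlacing configurations. The price is a non-explicit $\epsilon$ coming from compactness. That is all Proposition~\ref{prop:gl-mb-bound} uses, since there only $\epsilon>0$ matters.

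If you wanted an explicit constant, the same integrand provides it:
\begin{itemize}
\item For integers, $(p-q)(p-q-1)\ge (q-p)_+ + (p-q-1)_+$.
\item Integrating, $D_m\ge\sum_k \mathrm{dist}\bigl(y_{m-1,(k)},[y_{m,(k)},y_{m,(k+1)}]\bigr)$, and hence $\sum_k|y_{m-1,k}|\le 2\sum_j|y_{mj}|+D_m$.
\item Iterating this down the rows gives $\epsilon=\min(1,\delta)/c_n$ with $c_n$ depending only on $n$.
\end{itemize}
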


\begin{proposition} \label{prop:gl-mb-bound}
	Let $\Im \lambda_j < r$ for $j = 1, \dots, n$. Then the integral~\eqref{Phi-MB} is absolutely convergent uniformly in $\lambda_j$ from compact subsets. In particular, for $\bm{\lambda}_n \in \mathbb{R}^n$ and any $\delta > 0$ we have
	\begin{align} \label{Phi-MB-b}
		\bigl| \Phi_{\bm{\lambda}_n}(\bm{x}_n) \bigr| \leq P_\delta(\bm{\lambda}_n) \, \exp \biggl( - \frac{\pi}{2} \sum_{1 \leq j < k \leq n} |  \lambda_j -  \lambda_k | + \delta \sum_{j = 1}^n |  \lambda_j| \biggr),
	\end{align}
	where $P_\delta$ is polynomial.  
\end{proposition}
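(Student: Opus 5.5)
The natural route is to unfold the recursion \eqref{Phi-MB} completely and bound the resulting $\binom{n}{2}$-fold integral. Write the integration variables as $\gamma_{mk}$ for $m=1,\dots,n-1$, $k=1,\dots,m$, with $\gamma_{nk}=\lambda_k$. Take the contour for level $m$ to be $\Im\gamma_{mk}=r_m$, with $r_{n-1}<r_{n-2}<\dots<r_1$ chosen so that at every level the poles $\gamma_{m-1,k}=\gamma_{mj}-\imath\ell$ lie below the contour. Equivalently, the contour used at level $m-1$ stays strictly above the contour at level $m$, so all gamma arguments $\imath\gamma_{mj}-\imath\gamma_{m-1,k}$ have real part bounded below by a positive constant. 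Then $|e^{\imath(\dots)x_m}|$ is a constant depending only on the imaginary parts and on $\bm{x}_n$.

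Set $y_{mk}=\Re\gamma_{mk}$ and $y_{nk}=\Re\lambda_k$. Two bounds control the integrand.
\begin{itemize}
\item The measures satisfy $|\hat\mu(\bm\gamma_m)|\le P\cdot\exp\bigl(\pi\sum_{j<k}|y_{mj}-y_{mk}|\bigr)$, using the explicit $\sh$ form \eqref{mu-gl} (the shift in imaginary part only changes constants).
\item Each gamma factor satisfies $|\Gamma(\imath\gamma_{mj}-\imath\gamma_{m-1,k})|\le P\cdot\exp\bigl(-\tfrac{\pi}{2}|y_{mj}-y_{m-1,k}|\bigr)$ by \eqref{gamma-b}, with a polynomial prefactor uniform for the real parts in compact sets.
\end{itemize}
Collecting the exponents, the total is $\tfrac{\pi}{2}$ times the left-hand side of \eqref{abs-val-ineq}. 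Here the measures at levels $2,\dots,n-1$ contribute $\pi=\tfrac{\pi}{2}\cdot 2$, level $1$ has no Vandermonde factor, and the interlacing gamma factors contribute the negative double sum.

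Now apply Lemma~\cite[Lemma 2]{BDKK2}. For given $\delta$ it yields $\epsilon>0$ such that the integrand is bounded by a polynomial in all variables times
\begin{align*}
\exp\Bigl(-\tfrac{\pi}{2}\sum_{j<k}|y_{nj}-y_{nk}|+\tfrac{\pi}{2}\delta\sum_j|y_{nj}|-\tfrac{\pi}{2}\epsilon\sum_{m<n}\sum_j|y_{mj}|\Bigr).
\end{align*}
The factor $e^{-\epsilon'\sum|y_{mj}|}$ beats any polynomial in the integration variables, which gives absolute convergence. Integrating it out leaves a polynomial in $\bm\lambda_n$ times the claimed exponential \eqref{Phi-MB-b}, after renaming $\tfrac{\pi}{2}\delta\to\delta$.

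Uniformity on compact sets of $\lambda_j$ with $\Im\lambda_j<r$ follows because all constants in \eqref{gamma-b} depend continuously on the real parts of the gamma arguments. These stay in a compact subset of $(0,\infty)$, and the polynomial prefactors are uniform for $\lambda$ in compacts. Dominated convergence then gives analyticity.

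The main obstacle is the bookkeeping in the unfolding step. One must check that the iterated integral really is absolutely convergent as a joint integral, so that Fubini applies, and that the exponents exactly match the combinatorial structure of \eqref{abs-val-ineq}, in particular the factor $2$ on the intermediate Vandermonde terms. One must also handle the inner levels' dependence on the contour choice. The first thing I would try is to fix equally spaced $r_m$, which makes the real parts of all gamma arguments equal to the same positive constant and so avoids the problem.
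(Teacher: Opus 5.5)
Your proposal is correct and matches the paper's proof. The paper likewise writes out the fully unfolded integral with nested contours $r_1>\dots>r_{n-1}=r>\Im\lambda_j$, bounds $\hat\mu(\bm\gamma_m)$ via its $\sh$ form and the interlacing gamma factors via \eqref{gamma-b}, identifies the exponent as $\tfrac{\pi}{2}$ times the left side of \eqref{abs-val-ineq}, and uses the extra decay $e^{-\epsilon\sum|\Re\gamma_{mj}|}$ to absorb all polynomial factors. The only difference is cosmetic: for real $\bm{\lambda}_n$ the paper pushes the contours toward the real axis before reading off \eqref{Phi-MB-b}, whereas your fixed-contour bound already gives it, with a polynomial whose coefficients depend on $\bm{x}_n$ and the chosen $r_m$.
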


\begin{corollary} \label{cor:Phi-an}
	The function $\Phi_{\bm{\lambda}_n}(\bm{x}_n)$ can be analytically continued to $\bm{\lambda}_n \in \mathbb{C}^n$.
\end{corollary}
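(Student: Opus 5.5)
The plan is to argue by induction on $n$, using the recursive Mellin--Barnes formula~\eqref{Phi-MB} and the uniform bound of Proposition~\ref{prop:gl-mb-bound}. For $n=1$ the function $\Phi_{\lambda_1}(x_1)=e^{\imath\lambda_1 x_1}$ is entire, so there is nothing to prove. Now assume $\Phi_{\bm{\gamma}_{n-1}}(\bm{x}_{n-1})$ is entire in $\bm{\gamma}_{n-1}$. For fixed $r\in\mathbb{R}$, set
\begin{align*}
	U_r=\{\bm{\lambda}_n\in\mathbb{C}^n \colon \Im\lambda_j<r,\ j=1,\dots,n\}.
\end{align*}
On $U_r$, define $\Phi^{(r)}_{\bm{\lambda}_n}(\bm{x}_n)$ by the integral~\eqref{Phi-MB} with contours $(\mathbb{R}+\imath r)^{n-1}$.

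First I show that $\Phi^{(r)}$ is holomorphic on $U_r$. The integrand is holomorphic in $\bm{\lambda}_n$ on $U_r$: the gamma functions $\Gamma(\imath\lambda_j-\imath\gamma_k)$ have no poles there, since $\Im(\lambda_j-\gamma_k)<0$. By Proposition~\ref{prop:gl-mb-bound}, the integral converges absolutely and uniformly for $\bm{\lambda}_n$ in compact subsets of $U_r$. Holomorphy of the integral then follows from Morera's theorem combined with Fubini, or equivalently from differentiating under the integral sign with a locally uniform dominating function.

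Next I check that these definitions agree on overlaps. For $r<r'$ and $\bm{\lambda}_n\in U_r\subset U_{r'}$, I want $\Phi^{(r)}=\Phi^{(r')}$. Shifting each $\gamma_k$ contour from $\Im\gamma_k=r$ up to $\Im\gamma_k=r'$ crosses no poles of the integrand, for three reasons.
\begin{itemize}
	\item The poles $\gamma_k=\lambda_j-\imath m$ lie below $\Im\gamma_k=r$.
	\item The measure $\hat\mu(\bm{\gamma}_{n-1})$ is entire by~\eqref{mu-gl}.
	\item $\Phi_{\bm{\gamma}_{n-1}}$ is entire by the induction hypothesis.
\end{itemize}
The contributions of the vertical segments at $\Re\gamma_k\to\pm\infty$ vanish. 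This is where the work lies: I need the uniform decay bound from the proof of Proposition~\ref{prop:gl-mb-bound} to hold for $\gamma_k$ in the whole strip $r\le\Im\gamma_k\le r'$, not only on a single horizontal line. To obtain it, I would combine the gamma bounds~\eqref{gamma-b}, \eqref{inv-gamma-b} with the exponential decay~\eqref{Phi-MB-b} of $\Phi_{\bm{\gamma}_{n-1}}$, taken uniformly for $\Im\bm{\gamma}_{n-1}$ in compacts, together with inequality~\eqref{abs-val-ineq}. Along any such path the integrand decays like $e^{-\epsilon|\Re\gamma|}$. The shift is then justified by Cauchy's theorem applied one variable at a time, with Fubini.

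Finally, since $\bigcup_r U_r=\mathbb{C}^n$ and the $\Phi^{(r)}$ agree on overlaps, they glue to an entire function. This function extends $\Phi_{\bm{\lambda}_n}(\bm{x}_n)$, which for real $\bm{\lambda}_n$ is given by~\eqref{Phi-MB} and coincides with the Gauss--Givental definition. The main obstacle is the strip-uniform estimate in the contour-shift step. I expect it to follow from the same estimates that prove Proposition~\ref{prop:gl-mb-bound}, once the imaginary parts are treated as bounded perturbations that only affect polynomial prefactors.
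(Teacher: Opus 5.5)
Your proposal is correct and follows essentially the paper's route. The corollary is read off from Proposition~\ref{prop:gl-mb-bound}: uniform absolute convergence gives holomorphy on each region $\{\Im\lambda_j<r\}$, and contour shifts glue these regions together. The only difference is organizational. The paper works with the unfolded nested integral (contours $r_1>\dots>r_{n-1}$) rather than inducting on $n$. Its bound there depends on the imaginary parts only through polynomial prefactors and the $x$-dependent exponential, which is exactly the strip-uniform estimate you identify as the main obstacle.
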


\begin{proof}[Proof of Proposition~\ref{prop:gl-mb-bound}]
	Denote $\bm{\gamma}_m = (\gamma_{m1}, \dots, \gamma_{mm})$ with identification $\bm{\gamma}_0 \equiv 0$ and $\bm{\gamma}_n \equiv \bm{\lambda}_n$. Then Mellin--Barnes representation in its full form is given by
	\begin{align} \label{Phi-full}
		\begin{aligned}
			\Phi_{\bm{\lambda}_n}(\bm{x}_n) = \Biggl( \prod_{m = 1}^{n - 1}  \, \int\limits_{(\mathbb{R} + \imath r_{m})^m} d\bm{\gamma}_m \Biggr) & \; e^{ \imath \sum\limits_{m = 1}^n (\underline{\bm{\gamma}}_m - \underline{\bm{\gamma}}_{m - 1} ) x_m } \, \prod_{m = 2}^{n - 1} \hat{\mu}(\bm{\gamma}_m) \\ 
			& \times  \prod_{m = 2}^n \prod_{j = 1}^m \prod_{k =1 }^{m - 1} \Gamma(\imath \gamma_{mj} - \imath \gamma_{m - 1, k}),
		\end{aligned}
	\end{align}
	where $r_1 > r_2 > \ldots > r_{n - 1} \equiv r > \Im \lambda_j$.
	From definition~\eqref{mu-gl} we have bound on the measure
	\begin{align} \label{mu-b}
		\bigl| \hat{\mu}(\bm{\gamma}_m) \bigr| \leq P(\bm{\gamma}_m) \, \exp \biggl( \pi \sum_{1 \leq j < k \leq m} | \Re \gamma_{mj} - \Re \gamma_{mk} | \biggr).
	\end{align}
	Here and in what follows by $P$ we denote polynomials. Combining this with inequality~\eqref{gamma-b} we bound absolute value of the whole integrand~\eqref{Phi-full} by the function
	\begin{multline}
		P( \Re\bm{\gamma}_1, \dots, \Re\bm{\gamma}_{n - 1} , \Re\bm{\lambda}_n) \, \exp \biggl( \, \sum_{m = 1}^{n - 1} (x_{m + 1} - x_m) m r_m - x_n \sum_{j = 1}^n \Im \lambda_j \\
		+ \pi \sum_{m =2 }^{n - 1} \, \sum_{1 \leq j < k \leq m} | \Re \gamma_{mj} - \Re \gamma_{mk} | - \frac{\pi}{2} \sum_{m = 2}^{n} \sum_{j = 1}^m \sum_{k = 1}^{m - 1} |\Re \gamma_{mj} - \Re \gamma_{m - 1, k}| \biggr)
	\end{multline}
	Due to~\eqref{abs-val-ineq} for any $\delta > 0$ this function is bounded by
	\begin{multline}
		P(\Re\bm{\gamma}_1, \dots, \Re\bm{\gamma}_{n - 1} , \Re\bm{\lambda}_n) \, \exp \biggl( \, \sum_{m = 1}^{n - 1} (x_{m + 1} - x_m) m r_m - x_n \sum_{j = 1}^n \Im \lambda_j \\
		 - \frac{\pi}{2} \sum_{1 \leq j < k \leq n} | \Re \lambda_{j} - \Re \lambda_{k} | + \delta \sum_{j = 1}^n | \Re \lambda_{j} | - \epsilon \sum_{m = 1}^{n - 1} \sum_{j = 1}^m |\Re \gamma_{mj}|\biggr)
	\end{multline}
	where $\epsilon > 0$. This gives convergence uniform in $\lambda_j$ from compact subsets. Besides, this estimate ensures that we can shift integration contours. Hence, for $\bm{\lambda}_n \in \mathbb{R}^n$ we can send $r_m \to 0$ (one by one), which gives the claimed bound~\eqref{Phi-MB-b}.
\end{proof}

\subsection{$BC$ system} \label{sec:MB-bounds-bc}

The first Mellin--Barnes representation for $BC$ wave function is given by
\begin{align} \label{Psi-MB3}
	\Psi_{\bm{\lambda}_n}(\bm{x}_n) = e^{\beta e^{-x_1}} \hspace{-0.3cm} \int\limits_{(\mathbb{R} - \imath r)^n} \!\! d\bm{\gamma}_n \; \hat{\mu}(\bm{\gamma}_n) \, \frac{ (2\beta)^{-\imath \underline{\bm{\gamma}}_n } \prod\limits_{j,k = 1}^n \Gamma(\imath \gamma_j \pm \imath \lambda_k) }{ \prod\limits_{1 \leq j < k \leq n} \Gamma(\imath \gamma_j + \imath \gamma_k) \, \prod\limits_{j = 1}^n \Gamma(g + \imath \gamma_j) } \; \Phi_{\bm{\gamma}_n}(\bm{x}_n),
\end{align}
where $r > 0$ is chosen so that poles of the integrand
\begin{align}
	\gamma_k = \pm \lambda_k + \imath m, \qquad m \in \mathbb{N}_0
\end{align}
are above the integration contours, that is $r > | \Im \lambda_j|$ for $j = 1, \dots n$. As before, to analytically continue this function in spectral parameters we shift contours, which is justified by the following proposition.

\begin{proposition} \label{prop:bc-mb-bound}
	Let $| \Im \lambda_j | < r$ for $j = 1, \ldots, n$. Then the integral~\eqref{Psi-MB3} is absolutely convergent uniformly in $\lambda_j$ from compact subsets.
\end{proposition}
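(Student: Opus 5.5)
The plan is to reduce the claim to Proposition~\ref{prop:gl-mb-bound} and the Stirling-type bounds \eqref{gamma-b}, \eqref{inv-gamma-b}, and then sum exponents, as in the $GL$ case. Write $\gamma_j = s_j - \imath r$ with $s_j \in \mathbb{R}$. The factor $(2\beta)^{-\imath \underline{\bm{\gamma}}_n}$ has constant modulus $(2\beta)^{-nr}$, and $e^{\beta e^{-x_1}}$ does not depend on the integration variables. We have $\Phi_{\bm{\gamma}_n}(\bm{x}_n) = e^{r \underline{\bm{x}}_n}\Phi_{\bm{s}_n}(\bm{x}_n)$ by \eqref{Phi-shift}. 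Proposition~\ref{prop:gl-mb-bound} then gives, for any $\delta>0$,
\begin{align*}
	|\Phi_{\bm{\gamma}_n}(\bm{x}_n)| \leq e^{r\underline{\bm{x}}_n} P_\delta(\bm{s}_n) \exp\Bigl(-\frac{\pi}{2}\sum_{j<k}|s_j-s_k| + \delta\sum_j |s_j|\Bigr).
\end{align*}
The measure satisfies \eqref{mu-b}: $|\hat{\mu}(\bm{\gamma}_n)| \leq P(\bm{s}_n)\exp\bigl(\pi\sum_{j<k}|s_j-s_k|\bigr)$.

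Next come the gamma factors. Since $\Re(\imath\gamma_j \pm \imath\lambda_k) = r \mp \Im\lambda_k > 0$ by the hypothesis $|\Im\lambda_k|<r$, the bound \eqref{gamma-b} applies. It gives $|\Gamma(\imath\gamma_j\pm\imath\lambda_k)| \leq P\,\exp\bigl(-\frac{\pi}{2}|s_j \pm \Re\lambda_k|\bigr)$, with polynomial prefactors uniform for $\lambda$ in compacts. For the denominators, $\Re(\imath\gamma_j+\imath\gamma_k)=2r>0$ and $\Re(g+\imath\gamma_j)=g+r>0$, so \eqref{inv-gamma-b} gives
\begin{align*}
	\frac{1}{|\Gamma(\imath\gamma_j+\imath\gamma_k)|}\le P\,e^{\frac{\pi}{2}|s_j+s_k|},\qquad \frac{1}{|\Gamma(g+\imath\gamma_j)|}\le P\,e^{\frac{\pi}{2}|s_j|}.
\end{align*}

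Collecting everything, the integrand is bounded by a polynomial in $\bm{s}_n$ times $e^{\frac{\pi}{2}E}$, where
\begin{align*}
	E = \sum_{j<k}\bigl(|s_j-s_k| + |s_j+s_k|\bigr) + \sum_j |s_j| - \sum_{j,k}\bigl(|s_j-\Re\lambda_k|+|s_j+\Re\lambda_k|\bigr) + \frac{2\delta}{\pi}\sum_j|s_j|.
\end{align*}
The main step is an elementary inequality showing $E \le C(\bm{\lambda}_n) - \epsilon' \sum_j|s_j|$. Use $|s_j-s_k|+|s_j+s_k| = 2\max(|s_j|,|s_k|) \le 2(|s_j|+|s_k|) - 2\min(|s_j|,|s_k|)$. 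Sharper, order $|s|_{(1)}\ge\dots\ge|s|_{(n)}$; then $\sum_{j<k}2\max = \sum_i 2(n-i)|s|_{(i)}$. Meanwhile $|s_j-a|+|s_j+a|\ge 2|s_j|$, so the negative sum is at least $2n\sum_j|s_j|$. Hence
\begin{align*}
	E \le \sum_i \bigl(2(n-i)+1-2n\bigr)|s|_{(i)} + \frac{2\delta}{\pi}\sum_j|s_j| \le -\Bigl(1-\frac{2\delta}{\pi}\Bigr)\sum_j|s_j|.
\end{align*}
So taking $\delta<\pi/2$ gives exponential decay in every direction of $\bm{s}_n\in\mathbb{R}^n$, which dominates the polynomials, and the integral converges absolutely. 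All constants depend continuously on $\bm{\lambda}_n$ and on $\bm{x}_n$ only through the harmless factor $e^{r\underline{\bm{x}}_n+\beta e^{-x_1}}$. This makes the bound uniform on compact subsets of $\{|\Im\lambda_j|<r\}$.

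The expected obstacle is only the bookkeeping: making sure the bound from Proposition~\ref{prop:gl-mb-bound}, stated for real spectral parameters, is used at $\bm{s}_n\in\mathbb{R}^n$ after the shift, and checking that the Stirling prefactors $|a+\imath b|^{a-1/2}$ stay polynomial uniformly. Both are routine.
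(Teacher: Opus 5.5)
Your proposal is correct and matches the paper's proof: shift $\gamma_j=\rho_j-\imath r$, combine \eqref{gamma-b}, \eqref{inv-gamma-b}, \eqref{mu-b} and \eqref{Phi-MB-b}, and bound the exponent by $(\delta-\frac{\pi}{2})\sum_j|\rho_j|$ plus a constant. The only difference is cosmetic. You use $|s_j-s_k|+|s_j+s_k|=2\max(|s_j|,|s_k|)$ and $|s_j-a|+|s_j+a|\ge 2|s_j|$, which gives a $\bm{\lambda}_n$-independent bound. The paper instead uses cruder triangle inequalities, which leave an additive constant $n\sum_j|\Re\lambda_j|$ in the exponent.
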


\begin{corollary} \label{cor:Psi-an}
	The function $\Psi_{\bm{\lambda}_n}(\bm{x}_n)$ can be analytically continued to $\bm{\lambda}_n \in \mathbb{C}^n$.
\end{corollary}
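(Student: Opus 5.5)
The plan is to build the continuation as a union of strip-analytic functions given by the Mellin--Barnes representation~\eqref{Psi-MB3}, one for each contour height $r>0$. Consistency between different heights will come from Theorem~\ref{theorem2.3} together with the identity theorem, so no explicit contour shifting is needed for the gluing.

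\emph{Step 1 (analyticity on each strip).} For fixed $r>0$ and $\bm{x}_n$, let $F_r(\bm{\lambda}_n)$ denote the right hand side of~\eqref{Psi-MB3}, with contours $(\mathbb{R}-\imath r)^n$. Put $S_r=\{\bm{\lambda}_n\in\mathbb{C}^n:\ |\Im\lambda_j|<r\}$.
\begin{itemize}
\item On $S_r$ the integrand is holomorphic in $\bm{\lambda}_n$ for every $\bm{\gamma}_n$ on the contour. The factors $\Gamma(\imath\gamma_j\pm\imath\lambda_k)$ have their poles at $\gamma_j=\pm\lambda_k+\imath m$ with $m\in\mathbb{N}_0$. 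These poles satisfy $\Im\gamma_j\ge-|\Im\lambda_k|>-r$, so none lies on the contour. The remaining factors $\hat{\mu}(\bm{\gamma}_n)$, the reciprocal gamma functions, $(2\beta)^{-\imath\underline{\bm{\gamma}}_n}$ and $\Phi_{\bm{\gamma}_n}(\bm{x}_n)$ do not depend on $\bm{\lambda}_n$ at all.
\item By Proposition~\ref{prop:bc-mb-bound}, the integral converges absolutely and uniformly for $\bm{\lambda}_n$ in compact subsets of $S_r$.
\end{itemize}
Holomorphy of the integrand and locally uniform convergence together show, by the standard Weierstrass/Morera argument applied in each variable (Hartogs), that $F_r$ is holomorphic on $S_r$.

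\emph{Step 2 (agreement with $\Psi$ and with each other).} Theorem~\ref{theorem2.3} holds for every $\epsilon>0$. Hence $F_r(\bm{\lambda}_n)=\Psi_{\bm{\lambda}_n}(\bm{x}_n)$ for all $\bm{\lambda}_n\in\mathbb{R}^n$ and every $r>0$. Now take $0<r<r'$, so that $S_r\subset S_{r'}$.
\begin{itemize}
\item Both $F_r$ and $F_{r'}$ are holomorphic on the connected domain $S_r$ (a product of strips).
\item They coincide on $\mathbb{R}^n\subset S_r$.
\end{itemize}
A holomorphic function on a connected domain that vanishes on $\mathbb{R}^n$ vanishes identically; one sees this by expanding in a power series around a real point, where all derivatives are determined by restriction to $\mathbb{R}^n$. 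Therefore $F_{r'}|_{S_r}=F_r$.

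\emph{Step 3 (gluing).} Define $\Psi_{\bm{\lambda}_n}(\bm{x}_n):=F_r(\bm{\lambda}_n)$ for any $r>\max_j|\Im\lambda_j|$. By Step~2 this does not depend on $r$. It is holomorphic on $\bigcup_{r>0}S_r=\mathbb{C}^n$ and agrees with the Gauss--Givental wave function on $\mathbb{R}^n$, which is the claim.

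The main obstacle is Step~1, namely the uniform absolute convergence on compacts of $S_r$. This is exactly the content of Proposition~\ref{prop:bc-mb-bound}, which we take as given; after that the corollary is formal. If one prefers not to invoke the identity theorem, the alternative for Step~2 is to shift the contour from $\Im\gamma_j=-r$ down to $\Im\gamma_j=-r'$. This crosses no poles, by the pole locations above and the entireness of $\hat{\mu}$, $1/\Gamma$ and of $\Phi_{\bm{\gamma}_n}$ (Corollary~\ref{cor:Phi-an}). The shift would be justified by the same decay estimates in $\Re\bm{\gamma}_n$, holding uniformly for intermediate contour heights, that underlie Proposition~\ref{prop:bc-mb-bound}.
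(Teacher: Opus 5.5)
Your proof is correct. Its analytic core is the same as the paper's: holomorphy of the Mellin--Barnes integral~\eqref{Psi-MB3} on each strip $|\Im\lambda_j|<r$, which follows from the locally uniform absolute convergence of Proposition~\ref{prop:bc-mb-bound}.

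The two arguments differ in how the strips are glued. The paper continues by shifting the contours $(\mathbb{R}-\imath r)^n$ further down. No poles are crossed, because the poles $\gamma_j=\pm\lambda_k+\imath m$ stay above the contour, while $\hat\mu$, the reciprocal gamma functions and $\Phi_{\bm\gamma_n}$ are entire in $\bm\gamma_n$. The shift itself is justified by the decay bound in the proof of Proposition~\ref{prop:bc-mb-bound}. You instead use that Theorem~\ref{theorem2.3} holds for every $\epsilon>0$, so all $F_r$ coincide with $\Psi$ on $\mathbb{R}^n$. Since $\mathbb{R}^n$ is a uniqueness set for holomorphic functions on the connected product of strips, this forces $F_{r'}|_{S_r}=F_r$.

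Each route has its advantage. Yours needs Proposition~\ref{prop:bc-mb-bound} only for fixed contours, with no estimates uniform over intermediate contour heights. In exchange it relies on the full strength of Theorem~\ref{theorem2.3}, namely Wallach's inversion formula together with Proposition~\ref{prop:gl-bc-prod} for every value of $\Im\rho<0$. The paper's contour shift is internal to the Mellin--Barnes integral, so it would work even if the representation were known for a single $\epsilon$. It is also the same device the paper needs elsewhere anyway, for example the shifts $t_p\to t_p-\imath$ in Section~\ref{sec:QISM}.

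Both routes yield the same by-product: the continued function is given by~\eqref{Psi-MB3} with any contour height $r>\max_j|\Im\lambda_j|$.
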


\begin{proof}[Proof of Proposition~\ref{prop:bc-mb-bound}]
	First, change integration variables $\gamma_j = \rho_j - \imath r$, so that $\rho_j \in \mathbb{R}$, and use the relation 
	\begin{align}
		\Phi_{\bm{\gamma}_n}(\bm{x}_n) = e^{r \underline{\bm{x}}_n} \, \Phi_{\bm{\rho}_n}(\bm{x}_n).
	\end{align}
	Then with the help of inequalities~\eqref{gamma-b},~\eqref{inv-gamma-b},~\eqref{Phi-MB-b} and~\eqref{mu-b} we estimate the absolute value of the integrand in~\eqref{Psi-MB3} by the function
	\begin{multline}
		f(\bm{\lambda}_n, \bm{\rho}_n) \, \exp \biggl( \frac{\pi}{2} \sum_{1 \leq j < k \leq n} (| \rho_j - \rho_k| + |\rho_j + \rho_k|) \\
		- \frac{\pi}{2} \sum_{j, k = 1}^n( | \rho_j - \Re \lambda_k| +  | \rho_j + \Re \lambda_k|) + \Bigl(\frac{\pi}{2} + \delta \Bigr) \sum_{j = 1}^n | \rho_j| \biggr),
	\end{multline}
	where $f$ is regular (for $|\Im \lambda_j| < r$, $\rho_j \in \mathbb{R}^n$) and polynomially bounded. Since
	\begin{align}
		& \sum_{1 \leq j < k \leq n} | \rho_j \pm \rho_k| \leq \sum_{1 \leq j \not= k \leq n} |\rho_j| = (n - 1) \sum_{j = 1}^n |\rho_j|, \\[6pt]
		& - \sum_{j, k = 1}^n | \rho_j \pm \Re \lambda_k| \leq \sum_{j, k =1}^n (|\Re \lambda_k| - |\rho_j|) = n \sum_{j = 1}^n | \Re \lambda_j| - n \sum_{j = 1} |\rho_j|,
	\end{align}
	the exponential part is bounded by
	\begin{align}
		\exp \biggl( \Bigl(\delta - \frac{\pi}{2} \Bigr) \sum_{j = 1}^n | \rho_j| \biggr),
	\end{align}
	which gives the stated convergence. This bound also allows to shift integration contours and clearly uniform in $\lambda_j$ from compact sets.
\end{proof}

The second Mellin--Barnes representation is given by
\begin{align}\label{Psi-MB4}
	\Psi_{\bm{\lambda}_n}(\bm{x}_n) = e^{-\beta e^{-x_1}} \hspace{-0.3cm} \int\limits_{(\mathbb{R} - \imath r)^n} \!\! d\bm{\gamma}_n \; \hat{\mu}(\bm{\gamma}_n) \, \frac{ (2\beta)^{-\imath \underline{\bm{\gamma}}_n } \, \prod\limits_{j,k = 1}^n \Gamma(\imath \gamma_j \pm \imath \lambda_k) \, \prod\limits_{j = 1}^n \Gamma (g - \imath \gamma_j )}{ \prod\limits_{1 \leq j < k \leq n} \Gamma(\imath \gamma_j + \imath \gamma_k) \, \prod\limits_{j = 1}^n \Gamma (g \pm \imath \lambda_j ) } \,  \Phi_{\bm{\gamma}_n}(\bm{x}_n),
\end{align}
where $r$ is chosen so that integration contours pass between upward and downward series of integrand poles
\begin{align}
	\gamma_j = \pm \lambda_k + \imath m, \qquad \gamma_j = - \imath g - \imath m, \qquad m \in \mathbb{N}_0,
\end{align}
that is $r > | \Im \lambda_j|$ and $r < g$. Its convergence can be proven in analogous way, but this time we have analytic continuation only to the strip $| \Im \lambda_j | < g$.

\begin{proposition} \label{prop:bc-mb2-bound}
	Let $| \Im \lambda_j | < r < g$ for $j = 1, \ldots, n$. Then the integral~\eqref{Psi-MB4} is absolutely convergent uniformly in $\lambda_j$ from compact subsets.
\end{proposition}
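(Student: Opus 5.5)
The plan mirrors the proof of Proposition~\ref{prop:bc-mb-bound}: shift to real contours, bound every factor of the integrand with Stirling-type estimates, and show that the exponential part decays in all $\rho_j$. The new ingredients are the extra factor $\prod_j\Gamma(g-\imath\gamma_j)$ in the numerator and the constant factor $\prod_j\Gamma(g\pm\imath\lambda_j)^{-1}$. The constant factor is entire in $\bm{\lambda}_n$, so it is harmless on compact sets.

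\textbf{Step 1 (real contours and the large factors).} Substitute $\gamma_j=\rho_j-\imath r$ with $\rho_j\in\mathbb{R}$ and use $\Phi_{\bm{\gamma}_n}(\bm{x}_n)=e^{r\underline{\bm{x}}_n}\Phi_{\bm{\rho}_n}(\bm{x}_n)$. Then bound $|\Phi_{\bm{\rho}_n}(\bm{x}_n)|$ by~\eqref{Phi-MB-b} and $|\hat\mu(\bm{\gamma}_n)|$ by~\eqref{mu-b}.

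\textbf{Step 2 (the remaining gamma factors).} All gamma functions now have arguments with positive real part:
\begin{itemize}
\item in $\Gamma(\imath\gamma_j\pm\imath\lambda_k)$ the real part is $r\mp\Im\lambda_k>0$, since $r>|\Im\lambda_k|$;
\item in $\Gamma(g-\imath\gamma_j)=\Gamma(g-r-\imath\rho_j)$ the real part is $g-r>0$, since $r<g$;
\item in $1/\Gamma(\imath\gamma_j+\imath\gamma_k)$ the real part is $2r>0$.
\end{itemize}
Hence~\eqref{gamma-b} and~\eqref{inv-gamma-b} apply to each of them. Up to a function $f(\bm{\lambda}_n,\bm{\rho}_n)$ that is polynomially bounded and locally uniform in $\bm{\lambda}_n$, the exponential part of the bound becomes
\[
\frac{\pi}{2}\sum_{j<k}\bigl(|\rho_j-\rho_k|+|\rho_j+\rho_k|\bigr)-\frac{\pi}{2}\sum_{j,k}\bigl(|\rho_j-\Re\lambda_k|+|\rho_j+\Re\lambda_k|\bigr)+\Bigl(\frac{\pi}{2}+\delta\Bigr)\sum_j|\rho_j|-\frac{\pi}{2}\sum_j|\rho_j|.
\]
The last term is the only change from the earlier proof, and it comes from $\Gamma(g-\imath\gamma_j)$.

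\textbf{Step 3 (decay).} Use the same elementary inequalities as in the proof of Proposition~\ref{prop:bc-mb-bound}. The first sum is at most $\pi(n-1)\sum_j|\rho_j|$. The second is at most $\pi n\sum_j|\Re\lambda_j|-\pi n\sum_j|\rho_j|$. So the exponent is bounded by
\[
(\delta-\pi)\sum_j|\rho_j|+\pi n\sum_j|\Re\lambda_j|.
\]
This gives exponential decay in $\bm{\rho}_n$, uniformly for $\bm{\lambda}_n$ in compact subsets of the strip $|\Im\lambda_j|<r$. Absolute convergence follows, and uniformity on compacta gives analyticity in that strip.

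The main point to watch is the regularity of the polynomial prefactor. The factor $\Gamma(g-r-\imath\rho_j)$ produces $e^{1/(6(g-r))}$-type constants, which blow up only as $r\to g$, and $\Gamma(\imath\gamma_j\pm\imath\lambda_k)$ stays away from its poles exactly because $r>|\Im\lambda_k|$. With $r$ fixed strictly between these bounds, all constants are uniform on compacta. This is why the statement gives continuation only to the strip $|\Im\lambda_j|<g$.
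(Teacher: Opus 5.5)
Your argument is correct and follows the paper's route: the paper only says the proof is analogous to that of Proposition~\ref{prop:bc-mb-bound}, i.e.\ shift to $\gamma_j=\rho_j-\imath r$, apply \eqref{gamma-b}, \eqref{inv-gamma-b}, \eqref{Phi-MB-b}, \eqref{mu-b} and the elementary absolute-value inequalities, with $r<g$ keeping $\Gamma(g-\imath\gamma_j)$ away from its poles. One harmless bookkeeping slip: the term $+\frac{\pi}{2}\sum_j|\rho_j|$ in the earlier exponent came from $1/\Gamma(g+\imath\gamma_j)$, which is absent here, so the exponent is actually at most $(\delta-\frac{3\pi}{2})\sum_j|\rho_j|+\pi n\sum_j|\Re\lambda_j|$; your bound with $\delta-\pi$ is merely weaker, and the conclusion is unaffected.
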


\section{Gustafson integral reduction} \label{sec:gust-red}

The following identity is proved in \cite[Theorem 9.3]{G}
\begin{align} \label{Gust-int}
	\frac{1}{(4\pi)^n \, n!} \int_{\mathbb{R}^n} d\bm{\lambda}_n \; \frac{\prod\limits_{j = 1}^{2n + 2} \prod\limits_{k = 1}^n \Gamma(z_j \pm \imath \lambda_k)}{\prod\limits_{1 \leq j < k \leq n} \Gamma(\pm \imath \lambda_j \pm \imath \lambda_k) \, \prod\limits_{j = 1}^n \Gamma(\pm 2\imath \lambda_j)} = \frac{\prod\limits_{1 \leq j < k \leq 2n+2} \Gamma(z_j + z_k)}{\Gamma(z_1 + \ldots + z_{2n + 2})},
\end{align}
where it is assumed that $\Re z_j > 0$ for $j = 1, \dots, 2n+2$. Let us prove that in the limit $z_{2n+2} \to \infty$ it reduces to the formula
\begin{align} \label{Gust-red}
	\frac{1}{(4\pi)^n \, n!} \int_{\mathbb{R}^n} d\bm{\lambda}_n \; \frac{\prod\limits_{j = 1}^{2n + 1} \prod\limits_{k = 1}^n \Gamma(z_j \pm \imath \lambda_k)}{\prod\limits_{1 \leq j < k \leq n} \Gamma(\pm \imath \lambda_j \pm \imath \lambda_k) \, \prod\limits_{j = 1}^n \Gamma(\pm 2\imath \lambda_j)} = \prod\limits_{1 \leq j < k \leq 2n+1} \Gamma(z_j + z_k).
\end{align}
The absolute convergence of both integrals can be easily shown using inequality~\eqref{gamma-b} and reflection formula for the gamma function.

To take the limit we use the asymptotic formula~\cite[\href{http://dlmf.nist.gov/5.11.E12}{(5.11.12)}]{DLMF}
\begin{align}
	\frac{\Gamma(a + L)}{\Gamma(b + L)} \sim L^{a - b}, \qquad L \to \infty.
\end{align}
Namely, denote $z_{2n+2} = L$ and divide both sides of the identity~\eqref{Gust-int} by $\Gamma^{2n}(L)$. Then the right hand side tends to the desired expression
\begin{align}
	\lim_{L \to \infty} \frac{\prod\limits_{1 \leq j < k \leq 2n+1} \Gamma(z_j + z_k) \, \prod\limits_{j = 1}^{2n + 1} \Gamma(z_j + L)}{\Gamma(z_1 + \ldots + z_{2n + 1} + L) \, \Gamma^{2n}(L)} = \prod\limits_{1 \leq j < k \leq 2n+1} \Gamma(z_j + z_k).
\end{align}
The integrand of the left hand side also tends pointwise to the integrand in the reduced formula~\eqref{Gust-red}, because
\begin{align}
	\lim_{L \to \infty} \frac{\prod\limits_{k = 1}^n \Gamma(L \pm \imath \lambda_k)}{\Gamma^{2n}(L)} = 1.
\end{align}
Moreover, since $| \Gamma( z) | \leq |\Gamma(\Re z)|$ we have inequality
\begin{align}
	\Biggl| \frac{\prod\limits_{k = 1}^n \Gamma(L \pm \imath \lambda_k)}{\Gamma^{2n}(L)} \Biggr| \leq 1.
\end{align}
Hence, we can use dominated convergence theorem to interchange limit $L \to \infty$ and integration, which concludes the proof of~\eqref{Gust-red}.


\begin{thebibliography}{99}
	
	\bibitem[ADV1]{ADV} P. Antonenko, S. Derkachov, P. Valinevich, \textit{A-type open $SL(2, \mathbb{C})$ spin chain}, \href{https://doi.org/10.3842/SIGMA.2025.107}{SIGMA} \textbf{21} (2025) 107, \href{https://doi.org/10.48550/arXiv.2507.09568}{\tt [2507.09568]}.

	\bibitem[ADV2]{ADV2} P. Antonenko, S. Derkachov, P. Valinevich, \textit{BC-type open $SL(2, \mathbb{C})$ spin chain}, \href{https://doi.org/10.1007/s00023-025-01653-0}{Annales Henri Poincar\'e} (2026), \href{https://doi.org/10.48550/arXiv.2508.04972}{\tt [2508.04972]}.
	
	\bibitem[B]{B} O. Babelon, \textit{Equations in dual variables for Whittaker functions}, \href{https://doi.org/10.1023/B:MATH.0000010714.56215.2a}{Letters in Mathematical Physics} \textbf{65} (2003) 229--240, \href{https://doi.org/10.48550/arXiv.math-ph/0307037}{\tt [math-ph/0307037]}.
	
	\bibitem[BC]{BC} A. Borodin, I. Corwin, \textit{Macdonald processes}, \href{https://doi.org/10.1007/s00440-013-0482-3}{Probability Theory and Related Fields} \textbf{158} (2014) 225--400, \href{https://doi.org/10.48550/arXiv.1111.4408}{\tt [1111.4408]}.
	
	\bibitem[BCDK]{BCDK} N. Belousov, L. Cherepanov, S. Derkachov, S. Khoroshkin, \textit{Calogero-Sutherland hyperbolic system and Heckman-Opdam $\mathfrak{gl}_n$ hypergeometric function}, arXiv preprint \href{https://doi.org/10.48550/arXiv.2508.18864}{\tt [2508.18864]}.
	
	\bibitem[BDK]{BDK} N. Belousov, S. Derkachov, S. Khoroshkin, \textit{$BC$ Toda chain I: reflection operator and eigenfunctions}, to appear.
	
	\bibitem[BDKK]{BDKK2} N. Belousov, S. Derkachov, S. Kharchev, S. Khoroshkin, \textit{Baxter operators in Ruijsenaars hyperbolic system II: bispectral wave functions}, \href{https://doi.org/10.1007/s00023-023-01385-z}{Annales Henri Poincaré} \textbf{25} (2024) 3259--3296, \href{https://doi.org/10.48550/arXiv.2303.06382}{\tt [2303.06382]}.
	
	\bibitem[BK]{BK} N. Belousov, S. Khoroshkin, \textit{Ruijsenaars spectral transform}, \href{https://doi.org/10.1007/s11005-025-01957-6}{Letters in Mathematical Physics} \textbf{115} (2025), \href{https://doi.org/10.48550/arXiv.2411.19659}{\tt [2411.19659]}.
	
	\bibitem[BKP]{BKP} O. Babelon, K. K. Kozlowski, V. Pasquier, \textit{Baxter operator and Baxter equation for $q$-Toda and Toda$_2$ chains}, \href{https://doi.org/10.1142/S0129055X18400032}{Reviews in Mathematical Physics} \textbf{30}:6 (2018), \href{https://doi.org/10.48550/arXiv.1803.06196}{\tt [1803.06196]}. 
	
	\bibitem[DE1]{DE0} J. F. van Diejen, E. Emsiz, \textit{Difference equation for the Heckman-Opdam hypergeometric function and its confluent Whittaker limit}, \href{https://doi.org/10.1016/j.aim.2015.08.018}{Advances in Mathematics} \textbf{285} (2015) 1225--1240, \href{https://doi.org/10.48550/arXiv.1411.0463}{\tt [1411.0463]}.
	
	\bibitem[DE2]{DE} J. F. van Diejen, E. Emsiz, \textit{Bispectral dual difference equations for the quantum Toda chain with boundary perturbations}, \href{https://doi.org/10.1093/imrn/rnx219}{International Mathematics Research Notices} \textbf{2019}:12 (2019) 3740--3767, \href{https://doi.org/10.48550/arXiv.1903.01827}{\tt [1903.01827]}.
	
	\bibitem[DKM1]{DKM} S. \'E. Derkachov, K. K. Kozlowski, A. N. Manashov, \textit{On the separation of variables for the modular XXZ magnet and the lattice sinh-Gordon models}, \href{https://doi.org/10.1007/s00023-019-00806-2}{Annales Henri Poincaré} \textbf{20} (2019) 2623--2670, \href{https://doi.org/10.48550/arXiv.1806.04487}{\tt [1806.04487]}.
	
	\bibitem[DKM2]{DKM2} S. \'E. Derkachov, K. K. Kozlowski, A. N. Manashov, \textit{Completeness of SoV representation for $SL (2, \mathbb{R})$ spin chains}, \href{https://doi.org/10.3842/SIGMA.2021.063}{SIGMA} \textbf{17} (2021) 063, \href{https://doi.org/10.48550/arXiv.2102.13570}{\tt [2102.13570]}.
	
	\bibitem[DM]{DM} S. \'E. Derkachov, A. N. Manashov, \textit{Spin chains and Gustafson’s integrals}, \href{https://doi.org/10.1088/1751-8121/aa749a}{Journal of Physics A: Mathematical and Theoretical} \textbf{50} (2017), \href{https://doi.org/10.48550/arXiv.1611.09593}{\tt [1611.09593]}.
	
	\bibitem[DLMF]{DLMF} \textit{NIST Digital Library of Mathematical Functions}. \href{https://dlmf.nist.gov/}{https://dlmf.nist.gov/}, Release 1.2.4 of 2025-03-15. F. W. J. Olver, A. B. Olde Daalhuis, D. W. Lozier, B. I. Schneider, R. F. Boisvert, C. W. Clark, B. R. Miller, B. V. Saunders, H. S. Cohl, and M. A. McClain, eds.
	
	\bibitem[F]{F} L. D. Faddeev, \textit{Quantum completely integrable models in field theory}, Mathematical Physics Reviews \textbf{1} (1980) 107--155, reprinted in \textit{40 years in mathematical physics}, World Scientific (1995).
	
	\bibitem[GaKL]{GaKL} A.Galiullin,  S. Khoroshkin,  M. Lyachko, \textit{Zhelobenko–Stern formulas and $B_n$ Toda wave functions}, \href{https://doi.org/10.1007/s11005-024-01824-w}{Letters in Mathematical Physics} \textbf{114} (2024), \href{https://doi.org/10.48550/arXiv.2402.16120}{\tt [2402.16120]}.
	
	\bibitem[G]{Gaud} M. Gaudin, \textit{La fonction d'onde de Bethe} (Masson, Paris, 1983); English translation: \textit{The Bethe Wavefunction}, trans. Jean-Sébastien Caux (Cambridge University Press, Cambridge, 2014).
	
	\bibitem[GKL]{GKL} A. Gerasimov, S. Kharchev, D. Lebedev, \textit{Representation theory and quantum inverse scattering method: the open Toda chain and the hyperbolic Sutherland model}, \href{https://doi.org/10.1155/S1073792804132595}{International Mathematics Research Notices} {\bf 2004}:17 (2004) 823--854, \href{https://doi.org/10.48550/arXiv.math/0204206}{\tt [math/0204206]}.
	
	\bibitem[GLO1]{GLO1} A. Gerasimov, D. Lebedev, S. Oblezin, \textit{Baxter operator and Archimedean Hecke algebra}, \href{https://doi.org/10.1007/s00220-008-0547-9}{Communications in Mathematical Physics} \textbf{284} (2008) 867--896, \href{https://doi.org/10.48550/arXiv.0706.3476}{\tt [0706.3476]}.
	 
	\bibitem[GLO2]{GLO} A. A. Gerasimov, D. R. Lebedev, S. V. Oblezin, \textit{New integral representations of Whittaker functions for classical Lie groups}, \href{https://doi.org/10.1070/RM2012v067n01ABEH004776}{Russian Mathematical Surveys} \textbf{67}:1 (2012), \href{https://doi.org/10.48550/arXiv.0705.2886}{\tt [0705.2886]}.
	
	\bibitem[G]{G}  R. A. Gustafson, \textit{Some $q$-beta and Mellin-Barnes integrals on compact Lie groups and Lie algebras}, \href{https://www.jstor.org/stable/2154615}{Transactions of the American Mathematical Society} \textbf{341}:1 (1994) 69--119.
	
	\bibitem[HC]{HC} Harish-Chandra, \textit{Spherical functions on a semisimple Lie group, I}, \href{https://doi.org/10.2307/2372786}{American Journal of Mathematics} \textbf{80}:2 (1958) 241--310.
	
	\bibitem[HR]{HR3} M. Halln\"as, S. Ruijsenaars, \textit{Joint eigenfunctions for the relativistic Calogero–Moser Hamiltonians of hyperbolic type. III. Factorized asymptotics}, \href{https://doi.org/10.1093/imrn/rnaa193}{International Mathematics Research Notices} \textbf{2021}:6 (2021) 4679--4708, \href{https://doi.org/10.48550/arXiv.1905.12918}{\tt [1905.12918]}.
	
	\bibitem[I]{I} V. I. Inozemtsev, \textit{The finite Toda lattices}, \href{https://doi.org/10.1007/BF01218159}{Communications in Mathematical Physics} \textbf{121} (1989) 629--638.
	
	\bibitem[IS]{IS} N. Z. Iorgov, V. N. Shadura, \textit{Wave functions of the Toda chain with boundary interaction}, \href{https://doi.org/10.1007/s11232-005-0075-0}{Theoretical and mathematical physics} \textbf{142} (2005) 289--305, \href{https://doi.org/10.48550/arXiv.nlin/0411002}{\tt [nlin/0411002]}.
	
	\bibitem[K]{Kozl} K. K. Kozlowski, \textit{Unitarity of the SoV transform for the Toda chain}, \href{https://doi.org/10.1007/s00220-014-2134-6}{Communications in Mathematical Physics} \textbf{334} (2015) 223--273, \href{https://doi.org/10.48550/arXiv.1306.4967}{\tt [1306.4967]}. 
	
	\bibitem[KL1]{KL} S. Kharchev, D. Lebedev, \textit{Integral representations for the eigenfunctions of quantum open and periodic Toda chains from the QISM formalism}, \href{https://doi.org/10.1088/0305-4470/34/11/317}{Journal of Physics A: Mathematical and General} \textbf{34} (2001), \href{https://doi.org/10.48550/arXiv.hep-th/0007040}{\tt [hep-th/0007040]}.
	
	\bibitem[KL2]{KL2} S. Kharchev, D. Lebedev, \textit{Eigenfunctions of $GL(N, \mathbb{R})$ Toda chain: Mellin-Barnes representation}, \href{https://doi.org/10.1134/1.568323}{Journal of Experimental and Theoretical Physics Letters } \textbf{71} (2000) 235--238, \href{https://doi.org/10.48550/arXiv.hep-th/0004065}{\tt [hep-th/0004065]}.
	
	\bibitem[O]{O} E. M. Opdam, \textit{Root systems and hypergeometric functions. IV}, Compositio Mathematica \textbf{67} (1988) 191--209.

	\bibitem[PK]{PK} R. B. Paris, D. Kaminski, \textit{Asymptotics and Mellin-Barnes Integrals}, \href{https://doi.org/10.1017/CBO9780511546662}{Cambridge University Press} (2001).
	
	\bibitem[Sil]{Sil} A. V. Silantyev, \textit{Transition function for the Toda chain}, \href{https://doi.org/10.1007/s11232-007-0024-1}{Theoretical and Mathematical Physics} \textbf{150} (2007) 315--331, \href{https://doi.org/10.48550/arXiv.nlin/0603017}{\tt [nlin/0603017]}.

	\bibitem[Skl1]{S} E. K. Sklyanin, \textit{The quantum Toda chain}, In: \href{https://doi.org/10.1007/3-540-15213-X_80}{Non-Linear Equations in Classical and Quantum Field Theory. Lecture Notes in Physics} \textbf{226} (1985) 196--233.

	\bibitem[Skl2]{Skl} E. K. Sklyanin, \textit{Boundary conditions for integrable quantum systems}, \href{https://doi.org/10.1088/0305-4470/21/10/015}{Journal of Physics~A} \textbf{21} (1988) 2375--2389.
	
	\bibitem[Skl3]{Skl2} E. K. Sklyanin, \textit{B\"acklund transformations and Baxter’s Q-operator}, in: \textit{Integrable systems: from classical to quantum}, CRM Proceedings \& Lecture Notes \textbf{26}, \href{https://doi.org/10.48550/arXiv.nlin/0009009}{\tt [nlin/0009009]}.
	
	\bibitem[STS]{STS} M. A. Semenov-Tian-Shansky, \textit{Quantization of Open Toda Lattices}, in: \textit{Dynamical Systems VII}, \href{https://doi.org/10.1007/978-3-662-06796-3_8}{Encyclopaedia of Mathematical Sciences} \textbf{16} (1994) 226--259.
	
	\bibitem[W1]{W} N. R. Wallach, \textit{The Whittaker Plancherel theorem}, \href{https://doi.org/10.1007/s11537-023-2230-5}{Japanese Journal of Mathematics} \textbf{19} (2024) 1-65, \href{https://doi.org/10.48550/arXiv.1705.06787}{\tt [1705.06787]}.
	
	\bibitem[W2]{W2} N. R. Wallach, \textit{The spherical Whittaker Inversion Theorem and the quantum non-periodic Toda Lattice}, arXiv preprint \href{https://doi.org/10.48550/arXiv.2303.11256}{\tt [2303.11256]}.
	
\end{thebibliography}
\end{document}